\def\sp{\hskip -5pt}
\def\spa{\hskip -3pt}
\def\emptyset{\varnothing} 
\def\bearray{\begin{eqnarray}}
\def\earray{\end{eqnarray}}
\def\beq{\begin{equation}}
\def\eeq{\end{equation}}
\def\b0{{\bf 0}}
\def\mpasto{\mapsto}
\def\det{\mbox{det}}
\def\gor{{\bf g}}
\def\cB{\mathscr{B}}
\def\cC{\mathscr{C}}
\def\cD{{\cal D}}
\def\cF{\mathscr{F}}
\def\cK{{\cal K}}
\def\cL{\mathscr{L}} 
\def\cH{{\cal H}} 
\def\cM{\mathscr{M}}
\def\cP{\mathscr{P}}
\def\cS{\mathscr{S}}
\def\bC{{\mathbb C}}           
\def\bM{{\mathbb M}}
\def\bN{{\mathbb N}}
\def\bR{{\mathbb R}}
\def\sA{{\mathsf A}}
\def\sK{{\mathsf K}}
\def\sM{{\mathsf M}}
\def\sH{{\mathsf H}}
\def\sP{{\mathsf P}}
\def\sQ{{\mathsf Q}}
\def\sV{\mathsf{V}}
\def\sT{{\mathsf T}}
\def\sE{{\mathsf E}}
\def\gB{{\mathfrak B}}
\def\gF{{\mathfrak F}}
\theoremstyle{TheoremStyle}
\newtheorem{theorem}{Theorem}
\newtheorem{corollary}[theorem]{Corollary}
\newtheorem{proposition}[theorem]{Proposition}
\newtheorem{lemma}[theorem]{Lemma}
\newtheorem{definition}[theorem]{Definition}
\newtheorem{remark}[theorem]{Remark}
\begin{document} 
\hfill{\sl  April   2024 (revised version)} 
\par 
\bigskip 
\par 
\rm


\par
\bigskip
\large
\noindent
{\bf  Quantum particle localization observables  on Cauchy surfaces of Minkowski spacetime and their causal properties}
\bigskip
\par
\rm
\normalsize 


\noindent  {\bf Carmine De Rosa$^a$   and Valter Moretti$^{b}$ }\\
\par

\noindent 
  Department of  Mathematics, University of Trento, and INFN-TIFPA \\
 via Sommarive 14, I-38123  Povo (Trento), Italy.\\
  $^a$carmine.derosa@unitn.it\\
 $^b$valter.moretti@unitn.it\\

 \normalsize

\par

\rm\normalsize

\rm\normalsize


\par

\begin{abstract}
We introduce and study  a general notion of spatial localization on spacelike smooth Cauchy surfaces  of quantum systems in Minkowski spacetime. The notion is constructed in terms of a coherent family of normalized POVMs, one for each said Cauchy surface. We prove that a family of POVMs of this type automatically satisfies a causality condition which generalizes Castrigiano's one and implies it when restricting to flat spacelike Cauchy surfaces. As a consequence no  conflict with Hegerfeldt's theorem arises.
We furthermore prove that such families of POVMs do exist for massive  Klein-Gordon particles,
since some of them  are extensions of  already known spatial localization observables. These are constructed  out of positive definite  kernels or are defined in terms of  the stress-energy tensor operator. Some further features of these structures are investigated, in particular,  the relation with the triple of Newton-Wigner selfadjoint operators and a modified form of Heisenberg inequality in the rest $3$-spaces of Minkowski reference frames. 
\end{abstract}
\tableofcontents

\section{Introduction}

\subsection{The subtle issue of spatial localization of quantum relativistic systems}
The study of  notions of {\em spatial localization} at given time for a quantum relativistic particle can be traced back to the seminal paper by  Newton and Wigner \cite{NW}.
There,  spatial localization was referred to the  rest $3$-space $\Sigma$, at given time,  of an inertial (Minkowskian) reference frame.
Later, guided by  Mackey's imprimitivity theory,  Wightman established \cite{Wightman} an uniqueness theorem. He proved that 
 the  joint (projector-valued) spectral measure $\sQ_\Sigma$ on $\Sigma$ of the triple of {\em Newton-Wigner selfadjoint  operators} $N_\Sigma^1,N_\Sigma^2,N_\Sigma^3$
is the unique projector valued measure (PVM)   which is  covariant  with respect to the Euclidean group of isometries of $\Sigma$   and complies  with  some further technical  hypotheses. If the notion of spatial localization is described in terms of  selfadjoint observables, they must be the Newton-Wigner ones necessarily.

Unfortunately, these  operators and their common spectral measure (their joint PVM) resulted  to be plagued by a number of fundamental issues related to causality. The {\em Hegerfeldt theorem(s)} \cite{Hegerfeldt,Hegerfeldt2} and the {\em Malament theorem} \cite{Malament}, with several modern reformulation mainly due to  Busch \cite{Buschloc} and Halvorson and Clifton \cite{HC}, proved that some  causality requirements  and the request of energy positivity are definitely incompatible for quantum particles described as one-particle states of  a quantum relativistic field (Wigner particles).

Malament's result and its modern extensions and  reformulations are directly or indirectly  related with the description of post measurement states, in terms of {\em L\"uders-von Neuman projection postulate} or referring to a {\em Kraus  decomposition}  of the  effects of the  POVM describing the localization observable. This is a deep  and outstanding issue \cite{Beck} we shall not discuss in this paper.

Conversely, the various versions of the Hegerfeldt theorem only focus on the spatial detection probability of a quantum relativistic particle at given time. 
The most elementary version  is like this. Let us consider the rest 3-space $\Sigma$ of a Minkowski observer and a quantum  relativistic free particle defined according to Wigner classification, with every mass $m\geq 0$ and every permitted  spin $s$.  
 Suppose that  the probability to detect the  particle,  whose pure quantum state is represented by the normalized vector $\psi$,   vanishes outside  a {\em bounded  spatial region} $\Delta$ at time $t=0$. Then,  the same state $\psi$   gives rise to a   {\em strictly positive}  probability to find the particle {\em arbitrarily far} from $\Delta$ and at {\em arbitrarily small} time $t>0$.  In other words,  {\em superluminal propagation of  probability} shows up here.
The hypotheses leading to Hegerfeldt's  result are quite mild: the crucial one is {\em positivity of energy} (more precisely below boundedness)  which is embodied in the Wigner's definition of particle\footnote{The proof of Hegerfeldt's  theorem relies on  properties of analytic continuations of one-parameter  evolution semigroups and turns out quite involved. A quantitative analysis of the effects  of the Hegerfeldt theorem in the form discussed in \cite{Hegerfeldt} appears in \cite{FP}.}. In that case,  pure states are represented by  unit  vectors in the {\em one-particle Hilbert space} $\cH$ of the  Fock space of the associated quantum field.  In the proof of the theorem, the detection probability is supposed to be computed through a PVM or, more generally, by means of a POVM (see below) labelled by the sets of a given rest $3$-space $\Sigma$ of an inertial observer in Minkowski spacetime.

The impact on NW operators is evident. Probability distributions which vanish outside a bounded region $\Delta$ are trivially constructed  when adopting a spatial localization notion  in terms of  a PVM  on $\Sigma$.
The position observable are in this case a triple of mutually compatible selfadjoint operators
$$X^k :=  \int_{\Sigma} x^k d\sP_\Sigma(x)\:, \quad k=1,2,3$$
in the Hilbert space $\cH$. $\sP_\Sigma=\sP_\Sigma(\Delta)$ for  $\Delta \in \cB( \Sigma)$ (where henceforth $\cB(\Sigma)$ is the family of Borel sets on $\Sigma$)  is their joint spectral measure. A normalized vector in   $\psi \in \sP_\Sigma(\Delta)\cH$ immediately yields a probability distribution  
$\langle \psi |P(E) \psi \rangle$ 
 which vanishes if  $E \cap \Delta= \emptyset$,  fitting the hypothesis of Hegerfeldt's theorem.
 If we do not accept superluminal propagation of detection probability, on account of Hegerfeldt achievement, we are  committed to reject any description of the spatial localization in terms of PVM, i.e., selfadjoint operators. The first victim of this reasoning is  the very triple of Newton-Wigner operators  $X^k=N^k_\Sigma$, $k=1,2,3$\footnote{As already found in \cite{M2},  in the new picture arising from modern achievements as the present work, the Newton-Wigner observable is however recovered in terms of the  {\em  first moments} of permitted  localization  observables which are   described in terms of POVMs.}.

{\em Should we rule out spatial localization described by PVMs on the ground of the theoretical evidence of superluminal propagation of  probability?}

The answer needs a certain analysis. What we can control in laboratories  in practice are just  macroscopic objects and devices. At macroscopic level, superluminal propagation of information is forbidden. So, a better perspective to tackle this  issue is wondering {\em whether or not the  superluminal propagation of probability predicted by the Hegerfeldt theorem can be used to propagate superluminal  macroscopic information}. The answer is positive, in case of 
states whose probability distribution at $t=0$ vanishes outside a  bounded spatial region. A corresponding ideal experiment   was discussed  in \cite{M2} by one of the authors of this work.
The conclusion is that,  to describe spatial localization  on a  rest $3$-space $\Sigma$ at some time of an inertial observer,
the use of  a suitable {\em normalized positive-operator  valued  measure} (POVM) is compulsory.
A  $\cH$-POVM  \cite{Buschbook} on $\Sigma$ is a map
\beq \cB(\Sigma) \ni \Delta \mapsto \sA_\Sigma(A) \in \gB(\cH)\:\label{POVM}\eeq
such that
$0\leq \sA_\Sigma(\Delta) \leq I$ for every $\Delta \in \cB(\Sigma)$, where 
$\cB(\Sigma) \ni \Delta \mapsto \langle \psi|\sA_\Sigma(A)\psi'\rangle$ is a complex measure for every $\psi,\psi' \in \cH$.
$\sA_\Sigma$ is {\em normalized} if $\sA_\Sigma(\Sigma)=I$.
If $\psi \in \cH$ is normalized as well, the meaning of  $\langle \psi|\sA_\Sigma(\Delta) \psi \rangle$ is the probability to find the system in $\Delta$, when its state is $\psi$.

 There are POVMs which do {\em not} admit probability distributions $\cB(\Sigma) \ni E \mpasto \langle \psi| \sA_\Sigma(E) \psi \rangle$  localized in bounded regions for any conceivable pure  state $\psi \in \cH$. Contrarily to PVMs  which always  permits probability distributions localized in bounded sets as viewed above. Hegerfeldt's thoerem does not exclude POVMs  to describe the spatial localization of relativistic quantum particles. More precisely, it permits  the POVMs whose probability distributions to detect the particle  is not supported in a bounded region for every pure state $\psi \in \cH$.
It is obviously  necessary that  probability distributions localized in bounded regions can be  arbitrarily approximated by permitted  distributions. It in fact happens in concrete POVMs describing spatial localization \cite{C0,M2,C}.

Mathematically speaking, at this juncture, it is convenient to give the following definition, where $\bM$ denotes the Minkowski spacetime and $\cC^{sf}_\bM$ is the family of all possible rest $3$-spaces $\Sigma$ of inertial observers at any given instant of  their proper time.

\begin{definition}\label{SLO1} {\em Given a quantum system described in the Hilbert space $\cH$, a {\bf spatial localization observable}  of it in $\bM$ is a collection $\sA := \{\sA_\Sigma\}_{\Sigma \in \cC^{sf}_\bM}$,  where each $\sA_\Sigma$ is  a normalized $\cH$-POVM on $\cB(\Sigma)$
which is absolutely continuous with respect to the Lebesgue measure on $\Sigma$.
}\hfill $\blacksquare$
\end{definition}

\noindent  In principle, a {\em coherence condition} should be also imposed if one wants the probability to find the particle in $\Delta$ be independent of the rest space $\Sigma$ containing it. One should require that, if $\Delta \subset \Sigma\cap \Sigma'$, then $\sA_\Sigma(\Delta) = \sA_{\Sigma'}(\Delta)$.
 However, the POVMs $\sA_\Sigma$
 are absolutely continuous with respect to the Lebesgue measure of $\Sigma$ by hypothesis.
  Since  $\Delta \subset \Sigma\cap \Sigma'$ has zero measure if $\Sigma\neq \Sigma'$,
we have   $\sA_\Sigma(\Delta)= 0 = \sA_{\Sigma'}(\Delta)$ in any cases.
Actually,  all physically relevant  spatial localization observables  are also invariant under translations in $\Sigma$ due to general   Poincar\'e covariance requirements and this fact automatically implies Lebesgue absolute continuity (see (d) Theorem 2.20 in \cite{Rudin}).

Definition \ref{SLO1} is nothing but Definition 18 stated in \cite{M2} for a {\em relativistic spatial localization observable} without the requirement of  $\cP_+$-covariance.

 An important remark is the following one. Consider the case  $\Delta, \Delta' \in \Sigma$ and $\Delta \cap \Delta' = \emptyset$.
  Referring to quantum relativistic particles (with positive energy), operators $\sA_\Sigma(\Delta)$ and $\sA_\Sigma(\Delta')$ cannot commute, although $\Delta$ and $\Delta'$ are causally separated in the considered case. That is due to general no-go results\footnote{See Lemma 4 on p.13 of \cite{CL}  and its proof in particular.} \cite{Buschloc,HC,CL, Beck}. This is a delicate issue somehow related to the {\em Reeh-Schlieder property} if embedding all the theoretical description in the framework of local algebras of observables. Roughly speaking, the operators  $\sA_\Sigma(\Delta)$ cannot belong to a  local algebra in the sense of {\em Haag-Kastler}. Whether or not this non-commutativity  permits superluminal propagation of macroscopic information should be analyzed in a perspective (as in \cite{FewsterVerch,FJR}) where {\em measurement instruments} are explicitly studied. This analysis would concern post measurement states which are outside of the goals of this paper.

To get closer to the results achieved in  this paper, we observe that there is  a more sophisticated version of the Hegerfeldt theorem \cite{Hegerfeldt2}: If a state $\psi \in \cH$ produces a probability distribution which vanishes sufficiently rapidly at infinity at $t=0$, then a more specific type of   {\em superluminal propagation of  probability}  takes place again at later times.
 
The modern overall description  of the types of violation  was only recently formalized by Castrigiano. Let us quickly review the {\em causality condition}
and the {\em causal time evolution condition}, introduced in \cite{C0} for localization POVMs in Minkowski spacetime $\bM$, to reformulate the second  form of Hegerfeldt's theorem \cite{Hegerfeldt2}.

In the following,  $\cL(\Sigma)$ is the  family of Lebesgue sets of  $\Sigma \in \cC^{sf}_\bM$, and the considered POVMs 
are defined on this $\sigma$-algebra larger that the more  usual\footnote{The use of the Lebesgue $\sigma$-algebra $\cL(\Sigma)$ in place of the Borel one $\cB(\Sigma)$ is compulsory here. In fact, as discussed in \cite{C0}, if defining the relevant POVMs on $\cB(\Sigma)$ rather than on $\cL(\Sigma)$  it is not guaranteed that $\Delta' \in \cB(\Sigma')$ when $\Delta \in \cB(\Sigma)$, making meaningless (\ref{CCC}).
It is however true   that $\Delta' \in \cL(\Sigma')$ for every $\Delta \subset  \Sigma$ (even if $\Delta \not \in \cB(S)$!) as established in \cite{C0}. The use of $\cL(\Sigma)$ is not a serious issue, even if initially  dealing with POVMs defined on the Borel sets, since the standard completion procedure  uniquely extends them to POVMs on $\cL(\Sigma)$.} $\cB(\Sigma)$.
 Finally,  if $\Sigma,\Sigma'\in \cC^{sf}_\bM$  and $\Delta \in \cL(\Sigma)$, we define the {\bf region of influence of $\Delta$ on $\Sigma'$}
\beq
\Delta' := (J^+(\Delta) \cup J^-(\Delta)) \cap \Sigma' \label{Delta'old}\:.
\eeq 
($J^\pm(\Delta)$ are the standard {\em causal sets} emanated from $\Delta$ as defined in Section \ref{SECCAUS}.)
The physical meaning of $\Delta'$ should be obvious. {\em It is the largest spatial region on $\Sigma'$ which may have macroscopic causal relations with $\Delta\subset \Sigma$}, when assuming  the causal structure of Minkowski spacetime based on the bounded propagation velocity of macroscopic  physical signals.  

Let us  consider  a spatial localization observable $\sA := \{\sA_\Sigma\}_{\Sigma \in \cC^{sf}_\bM}$, where $A_\Sigma: \cL(\Sigma)\to \gB(\cH)$.

\begin{definition}  \label{DEFC0}
{\em  $\sA$
 satisfies  {\bf Castrigiano's causality condition} ({\bf CC}) \cite{C0} if
\beq
\sA_{\Sigma'}(\Delta') \geq \sA_\Sigma(\Delta)\quad \mbox{when   $\Delta \in \cL(\Sigma)$,} \label{CCC}
\eeq
 for every $\Sigma, \Sigma' \in \cC^{sf}_\bM$. \hfill $\blacksquare$}
 \end{definition}
 
 The physical meaning of the condition above is evident: detection probability cannot propagate faster than the light speed.
 A weaker condition which is implied by CC but does not imply CC, is the {\em causal time evolution}. This condition -- and not CC -- is actually  the specific subject of Hegerfeldt's investigation. It only considers the case where $\Sigma$ and $\Sigma'$ are rest spaces of a {\em common inertial observer} $n$, and thus they are related  by means of the time evolution proper of $n$. Geometrically speaking, that is equivalent to saying that the normal vectors $n_\Sigma$ and $n_{\Sigma'}$ coincide (with $n$).
\begin{definition}  \label{DEFC0C}
{\em  $\sA$
 satisfies the  {\bf causal time evolution condition} condition ({\bf CT}) \cite{C0} if
\beq
\sA_{\Sigma'}(\Delta') \geq \sA_\Sigma(\Delta)\quad \mbox{when   $\Delta \in \cL(\Sigma)$,}\label{CCCT}
\eeq
 for every $\Sigma, \Sigma' \in \cC^{sf}_\bM$ with $n_\Sigma =n_{\Sigma'}$.\hfill $\blacksquare$}
 \end{definition}

The afore-mentioned  second version of the Hegerfeld theorem \cite{Hegerfeldt2} (see also further formalizations in \cite{C}) can be stated as follows, for a free quantum relativistic particle of any spin and   mass $m\geq 0$. (The result was actually extended  \cite{Hegerfeldt2} to more  general relativistic systems in, also self-interacting, but satisfying energy positivity). 

\begin{theorem}[Hegerfeldt’s theorem on relativistic time
evolution \cite{Hegerfeldt2}]\label{HTeo}
Consider a spatial localization observable $\sA$ for a  Wigner particle of mass $m\geq 0$  described in the Hilbert space $\cH$.
If there is a unit vector $\psi \in \cH$  such that the associated  probability to find 
the particle outside a ball  of radius $r$  in a certain $\Sigma$  is bounded by  $K_1e^{-K_2r}$, for all $r>0$ and constants $K_1\geq 0$ and $K_2 > 2 \frac{mc}{\hbar}$,
then  $\sA$ fails to satisfy CT.
\end{theorem}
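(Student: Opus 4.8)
The plan is to argue by contradiction: assume $\sA$ satisfies CT while the hypothesized sharply localized unit vector $\psi$ exists, and then deduce $\psi=0$, a contradiction that proves CT must fail. Write $\Sigma$ for the distinguished Cauchy surface in the hypothesis, regard it as the rest space at proper time $0$ of the inertial observer $n$ with $n_\Sigma=n$, and let $\Sigma_t$ be the rest space of $n$ at proper time $t$, so that $\Sigma_0=\Sigma$ and $n_{\Sigma_t}=n$ for all $t$. Let $H\ge mc^2>0$ be the generator of the proper-time translations of $n$ on $\cH$ (positivity of the energy together with the mass gap is exactly the massive Wigner hypothesis) and $U_t=e^{-iHt/\hbar}$ the corresponding unitary group; the surfaces $\Sigma_t$ are related to $\Sigma$ precisely by this evolution, as recalled in the discussion preceding Definition \ref{DEFC0C}.

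First I would turn CT into a propagation estimate for the detection probabilities. Take $\Delta=B_r\subset\Sigma$, the ball of radius $r$ about a fixed centre. Its region of influence on $\Sigma_t$ in the sense of (\ref{Delta'old}) is the concentric ball $B_{r+c|t|}$, so CT gives the operator inequality $\sA_{\Sigma_t}(B_{r+c|t|})\ge\sA_\Sigma(B_r)$. Passing to orthogonal complements and taking the expectation in $\psi$, with $\rho:=r+c|t|$, yields
\[
\langle\psi\,|\,\sA_{\Sigma_t}(\Sigma_t\setminus B_{\rho})\,\psi\rangle\;\le\;\langle\psi\,|\,\sA_\Sigma(\Sigma\setminus B_r)\,\psi\rangle\;\le\;K_1e^{-K_2r}\;=\;K_1e^{K_2c|t|}\,e^{-K_2\rho}.
\]
Using the single-observer time-translation relation to identify the detection probability on $\Sigma_t$ with the one computed at time $0$ on the evolved state, i.e. $\langle\psi|\sA_{\Sigma_t}(E)\psi\rangle=\|A^{1/2}_{E}\,U_{-t}\psi\|^2$ with $A_E:=\sA_\Sigma(E)\in[0,I]$, and setting $A_\rho:=\sA_\Sigma(\Sigma\setminus B_\rho)$ and $g_\rho(t):=\|A_\rho^{1/2}e^{iHt/\hbar}\psi\|^2$, the last display becomes
\[
g_\rho(t)\;\le\;K_1\,e^{K_2c|t|}\,e^{-K_2\rho}\qquad(\rho\ge c|t|\ge0).
\]
Thus CT forces the ``outside a ball'' probability to decay exponentially in the radius, with the \emph{fixed} rate $K_2$, at every time.

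The contradiction comes from positivity of the energy through analyticity. Because $H\ge mc^2$, the vector-valued map $z\mapsto e^{iHz/\hbar}\psi$ extends holomorphically to the open upper half-plane $\{\mathrm{Im}\,z>0\}$, is continuous up to $\bR$, and satisfies $\|e^{iHz/\hbar}\psi\|\le e^{-mc^2\,\mathrm{Im}\,z/\hbar}$. Hence $G_\rho(z):=A_\rho^{1/2}e^{iHz/\hbar}\psi$ is $\cH$-valued holomorphic and bounded on the upper half-plane, with $\|G_\rho(t)\|^2=g_\rho(t)$ on the real axis. Pairing $G_\rho$ with an arbitrary fixed $\eta\in\cH$ produces a bounded scalar holomorphic function $F_{\rho,\eta}$ on the upper half-plane whose boundary modulus is controlled by $\|\eta\|\sqrt{g_\rho(t)}\le\|\eta\|\sqrt{K_1}\,e^{K_2c|t|/2}e^{-K_2\rho/2}$ on $|t|\le\rho/c$. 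The heuristic that fixes the constant is that the sharpest admissible positive-energy localization has amplitude $\sim e^{-mcr/\hbar}$ and hence probability $\sim e^{-2mcr/\hbar}$, so the threshold $K_2=2mc/\hbar$ is precisely the boundary beyond which no positive-energy state can be localized. Making this rigorous, a Phragm\'en--Lindel\"of estimate matching the temporal analyticity width (set by the mass-gap factor $e^{-mc^2\,\mathrm{Im}\,z/\hbar}$) against the spatial rate $K_2$ along the light cone $\rho=r+c|t|$, optimized over the evaluation point and then letting $\rho\to\infty$, forces $F_{\rho,\eta}\equiv0$, whence $\langle\eta|\psi\rangle=0$ for all $\eta$, i.e. $\psi=0$.

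I expect the analytic estimate to be the main obstacle, for two reasons. First, one must justify the boundary behaviour and domain of $z\mapsto e^{iHz/\hbar}\psi$ up to the real axis; the clean route is to replace $\psi$ by $E_H([mc^2,\Lambda])\psi$ using the spectral projection of $H$, run the argument with everything entire and bounded, and remove the cutoff $\Lambda\to\infty$ at the end. Second, and more seriously, the stated hypothesis only assumes $K_2\ge 2mc/\hbar$, i.e. it includes the borderline rate: a crude exponential bound only settles the strict inequality $K_2>2mc/\hbar$, whereas the equality case sits exactly at the edge permitted by the mass gap and requires the sharper $\log$-convexity/Phragm\'en--Lindel\"of refinement, which is the technical core carried out in \cite{Hegerfeldt2,C}.
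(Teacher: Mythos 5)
The paper itself contains no proof of Theorem \ref{HTeo}: the statement is imported verbatim from \cite{Hegerfeldt2} (with modern formalizations in \cite{C}), and only a footnote records that the argument rests on analytic continuation properties of the evolution semigroup. So your proposal can only be measured against that literature argument, whose skeleton you do reproduce correctly: the contrapositive set-up, the conversion of CT (via normalization and additivity of each POVM) into the propagation estimate $\langle\psi|\sA_{\Sigma_t}(\Sigma_t\setminus B_{r+c|t|})\psi\rangle\le K_1e^{-K_2r}$, and the holomorphy and contraction bound for $z\mapsto e^{iHz/\hbar}\psi$ on the upper half-plane. Note, however, that your identification $\langle\psi|\sA_{\Sigma_t}(E)\psi\rangle=\|\sA_{\Sigma}(E)^{1/2}U_{-t}\psi\|^{2}$ is an additional covariance hypothesis tying the family $\{\sA_{\Sigma_t}\}_{t}$ to the Wigner dynamics; it appears neither in Definition \ref{SLO1} nor in the statement of the theorem (the discussion before Definition \ref{DEFC0C} only says the \emph{surfaces} are related by time evolution), so it must be declared as an assumption.

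Beyond that, there are two genuine gaps. First, the analytic core is not merely ``deferred'': as sketched it cannot close. Each $F_{\rho,\eta}$ depends on $\rho$ through $A_\rho^{1/2}$, so ``letting $\rho\to\infty$'' compares \emph{different} holomorphic functions and proves nothing about any fixed one; and for fixed $\rho$ your boundary bound decays only on the finite window $|t|\le\rho/c$, so that $\int_{\bR}(1+t^{2})^{-1}\log|F_{\rho,\eta}(t)|\,dt>-\infty$ and no Hardy-class or Phragm\'en--Lindel\"of theorem can force vanishing. Indeed, nonzero functions compatible with \emph{all} of your constraints exist: take $F(z)=e^{imc^{2}z/\hbar}G(z)$, where $G$ is the outer function on the upper half-plane with boundary modulus $\min\{1,\sqrt{K_1}\,e^{(K_2/2)(c|t|-\rho)}\}$ (this modulus is log-integrable against $(1+t^{2})^{-1}$, so a nonzero such $G$ exists). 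Hence the key step of Hegerfeldt's proof must use more than $H\ge mc^{2}$ plus single-observer time covariance — essentially the full relativistic spectrum condition together with spatial-translation covariance of the localization — and this input is absent from your sketch. Second, even granting $F_{\rho,\eta}\equiv0$ for every $\eta$, what follows is $A_\rho^{1/2}e^{iHz/\hbar}\psi\equiv0$, i.e.\ the state remains strictly localized in $B_\rho$ at \emph{all} times; since $A_\rho^{1/2}$ is not injective, your conclusion ``$\langle\eta|\psi\rangle=0$ for all $\eta$, i.e.\ $\psi=0$'' is a non sequitur. What you have reached is precisely the second horn of Hegerfeldt's actual dichotomy (superluminal spreading \emph{or} permanent strict confinement), and that horn cannot be excluded from the stated hypotheses alone: the point-mass family $\sA_{\Sigma_t}(\Delta):=\delta_{x_0+tn}(\Delta)\,I$ along a timelike worldline is a normalized-POVM spatial localization observable in the sense of Definition \ref{SLO1} which satisfies CT, is covariant under the dynamics, and is perfectly (hence exponentially) localized. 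Eliminating it requires extra structure — e.g.\ spatial-translation covariance of the effects, or absolute continuity as in Definition \ref{GSL} — which your argument never invokes.
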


\begin{remark}{\em As consequence,  $\sA$  fails to satisfy CC as well. }\hfill $\blacksquare$\end{remark}

Castrigiano proved in \cite{C0} that spatial localization observables exist which satisfy CC for massive and massless spin $1/2$ fermions. 
Another   family of {\em causal} -- namely  satisfying CC-- spatial localization observable $\{\sM^n_\Sigma\}_{\Sigma\in \cC^{sf}_\bM}$ was obtained in \cite{M2} by one of the authors of this work for scalar, real, massive Klein-Gordon particles. That observable was constructed by generalizing a more specific  observable introduced by D. Terno \cite{T} and also studied in \cite{M2}, where it was rigorously established  that it satisfies CT.
More recently, D. Castrigiano rigorously proved in \cite{C} that the spatial localization observables for Klein-Gordon particles with causal kernels  $\{\sT^g_\Sigma\}_{\Sigma\in \cC^{sf}_\bM}$  introduced in  \cite{P,HW} 
satisfy CC. These three types of causal spatial  localization observables also enjoy natural covariance properties with respect to the orthochronous Poincar\'e group in $\bM$ as expected. In this sense, they  are {\em relativistic} spatial localization observables (we shall come back to those features later in the paper).

A natural issue, discussed in the first part of this work,  is  whether or not we can  physically admit  a spatial localization observable that  {\em violates} CC. In other words, {\em is it possible to transmit superluminal macroscopic information by exploiting  it?}

 It is not easy to re-adapt the ideal experiment presented in \cite{M2} to the case of a POVM and a state whose   probability distribution {\em is  not  supported in a bounded region} but  vanishes sufficiently fast at infinity to switch on Hegerfeldt’s theorem on relativistic time
evolution (Theorem \ref{HTeo}), thus violating CT and CC.  It seems to the author of \cite{M2} that a   re-adaptation of the reasoning in \cite{M2} should involve some choice about a post-measurement state prescription or about some quantum instrument.  As declared before, we do not want to pursue that route in this paper, since we do not intend  to deal with  more sophisticated theoretical notions than POVMs.   

As a matter of fact,  CC can be seen as a consequence of a more general causality condition we shall introduce in Section \ref{SECCP}.
The crucial  overall idea\footnote{D. Castrigiano communicated to V.M. that R.F. Werner independently had a similar idea to prove CC.} is to extend the notion of spatial localization to a broader class of $3$-dimensional  surfaces in Minkowski spacetime.
Minkowski spacetime is  in fact {\em globally hyperbolic}. In other words, it admits certain special $3$-dimensional  surfaces, called {\em Cauchy surfaces}, which can be used as the place where assigning initial data of causal  (hyperbolic)  PDEs. From the geometric perspective, each of these  surfaces are met by all (inextendible) causal curves, exactly once by timelike ones. No macroscopic physical information can be transmitted from a region on a {\em spacelike} Cauchy surface to another separated  region on the {\em same} Cauchy surface.  Rest spaces  of inertial reference frames at given times  are just a special flat case of smooth Cauchy surfaces. 
{\em  From a relativistic perspective, it seems natural to think of spacelike Cauchy surfaces as a generalization of the notion of space at a given time, where to localize quantum systems like particles\footnote{However, see \cite{RP} and below in the text for a different viewpoint on this issue.}.}
We therefore  extend the notion of spatial localization observable to a more general notion  where the possible regions $\Delta$, where a particle can be  detected,  are subsets of a generic spacelike Cauchy surface $S$. 
To this end, each (generally curved)  Cauchy surface $S$ is  equipped with a {\em normalized} POVM $\sA_S$.  

A {\em spacelike Cauchy localization observable} (Definition \ref{GSL}) is the family all these normalized POVMs, when $S$ varies in the collection of all spacelike Cauchy surfaces.
An important {\em coherence condition} is also imposed: if $\Delta \subset S\cap S'$, then $\sA_S(\Delta) = \sA_{S'}(\Delta)$. In other words,  the probability to find the particle in $\Delta$ is independent of the Cauchy surface, but it is a function of $\Delta$ and the state $\psi \in \cH$ only.
Finally we also require that the probability to find a particle in a zero measure set on $S$ vanishes.

At this juncture, a  {\em general causality condition} GCC  (Definition \ref{DEFC}) for  spacelike Cauchy localization observables  can be stated, in the spirit of Definition \ref{DEFC0}.  The general causality condition is just obtained by  replacing the flat spacelike Cauchy surfaces $\Sigma, \Sigma'$ in Definition \ref{DEFC0}
for generic spacelike Cauchy surfaces $S,S'$.
This general  causality condition evidently implies CC when restricting $\sA$ to the  POVMs $\sA_\Sigma$ defined on the subfamily  flat Cauchy surfaces $\Sigma$.  A natural issue pops out here:

{\em How is  selective the general causality condition?}

\noindent Quite surprisingly, the answer is that there is {\em no} selection at all: {\em every  spacelike Cauchy localization observable automatically satisfies the general  causality condition} (Theorem \ref{TONE1}). 
As a corollary, {\em if a spatial localization observable $\{\sA_\Sigma\}_{\Sigma\in \cC^{sf}_\bM}$ can be extended to a  spacelike Cauchy localization observable $\{\sA_S\}_{S\in \cC^s_\bM}$, then the former automatically  satisfies CC}.  

It seems quite remarkable that the  result  only uses a general, and very 
natural, notion of localization. The achieved result proves in fact that  the general causality relation should not be imposed as an independent postulate because  it is physics that asks for it. In summary, the presented result proves that mere localizability implies causality independently of  any kinematical argument.

The strategy to prove our main result stated above takes  advantage of some general technical facts in Lorentzian geometry on  existence of spacelike Cauchy surfaces  adapted to given submanifolds with boundary  \cite{BS,BS3}.  

Other approaches exist where,  also considering many particles, one renounces to energy positivity and defines a notion of spatial localization on generic  Cauchy surfaces in terms of PVMs \cite{ADD1,ADD2}, finding similar causality conditions.
A different general approach to the Born rule for the spatial localization of a particle formulated in a generic spacetime is discussed in \cite{RP}. There, a more general notion of $3$-space at given time is introduced and analyzed (also relying on technical results in \cite{MEHH}). That notion of space  is adapted (transverse)  to the considered  probability current and,  for this reason,  it is not necessarily spacelike nor a Cauchy surface.

The second  part of the work is mostly devoted to prove that spacelike Cauchy localization observables {\em do exist}. Some of them are (uniquely determined) extensions
 $\{\sT^g_S\}_{S\in \cC^{s}_\bM}$ and $\{\sM^n_S\}_{S\in \cC^{s}_\bM}$
 of the respective above mentioned spatial localization observables $\{\sT^g_\Sigma\}_{\Sigma\in \cC^{sf}_\bM}$ and $\{\sM^n_\Sigma\}_{\Sigma\in \cC^{sf}_\bM}$.
 The method to demonstrate the results in the second part  uses  ideas physically formulated in \cite{P,J,HW},  made rigorous in \cite{M2,C} and here further improved.

Some further results about the interplay of spatial localization observables, the NW operators, and the Heisenberg inequality  appear in several spots of the main text and in a cumulative proposition in the last section.

In details, the achievements of this work are as follows.
\begin{itemize}
\item[(1)]   {\bf Theorem \ref{TONE1}}: {\em Every spacelike Cauchy localization observable satisfies the general causality condition.}\\ {\bf Corollary \ref{TONE1COR}}: {\em If a spatial localization observable can be extended to a  spacelike Cauchy localization observable, then it satisfies Castrigiano's causality condition}.

\item[(2)]  {\bf  Theorems \ref{MAIN}} and {\bf \ref{MAIN3}}: {\em  Both $\{\sT^g_\Sigma\}_{\Sigma\in \cC^{sf}_\bM}$ and  $\{\sM^n_\Sigma\}_{\Sigma\in \cC^{sf}_\bM}$ uniquely  extend to corresponding spacelike Cauchy localization observables $\{\sT^g_S\}_{S\in \cC^{s}_\bM}$ and  $\{\sM^n_S\}_{S\in \cC^{s}_\bM}$. The former can be even  defined for smooth Cauchy surfaces $S$ which are not spacelike.}

 \item[(3)] {\bf Theorems \ref{1MNW1}}, {\bf \ref{1MNW2}},  {\bf Proposition \ref{PROPVAR}}: {\em Both $\{\sT^g_\Sigma\}_{\Sigma\in \cC^{sf}_\bM}$ and  $\{\sM^n_\Sigma\}_{\Sigma\in \cC^{sf}_\bM}$  give rise to the Newton Wigner operators as their first moments on every inertial rest space
$\Sigma$.  The NW operators are  the unique selfadjoint operators which satisfy this property}. {\em A generalized Heisenberg inequality holds in all cases}.
\end{itemize}

\subsection{Structure of the paper}
After a list of  notations and conventions adopted  in this work, Section \ref{SECCAUS} concerns a recap of the causal structure of globally hyperbolic spacetimes and Minkowski spacetime in particular. Section \ref{SECCP}, after introducing the relevant definitions, presents our main results about causality: every spacelike Cauchy localization observable is causal.  Section \ref{INTER} collects  some technical notions and results necessary to pass to the second set of achievements. These results are established in Sections \ref{SECPVMC}
and \ref{SECMMM}: the possibility to extend some relevant spatial localization observables defined in \cite{C} and \cite{M2} to corresponding spacelike Cauchy localization observables.
In the same section, we shall discuss some features of  spatial localization observables. 
Section \ref{NWApp} is devoted to  discuss some general facts about Newton Wigner observables, Heisenberg inequality and spatial localization observables.
After a final discussion in Section \ref{DISC}, the appendices contain the proofs of several technical lemmata and propositions  asserted  in the main text.

\subsection{General definitions, notations, and conventions}
Barring few changes (like the symbol for the flat Cauchy surfaces), we shall adopt the same notation as in \cite{M2}.

Throughout  $\bR_+ := [0,+\infty)$, $\overline{\bR_+}:= \bR_+ \cup \{+\infty\}$, and {\bf smooth} means $C^\infty$. The light speed is $c=1$. The normalized  Planck constant is $\hbar=1$.  Furthermore we shall take advantage of the following notation:  $$\vec{x} \equiv (x^1,x^2,x^3) \in \bR^3\quad  \mbox{and}\quad  
\nabla f = (\partial_{x^1} f, \partial_{x^2} f, \partial_{x^3} f)$$  if $f= f(\vec{x})$ is defined on a suitable open domain of $\bR^3$.

 The {\em Lebesgue measure} (also restricted to the Borel sets) on $\bR^n$ will be denoted by $d^nx$, when $x^1,\ldots, x^n$ are orthonormal Cartesian coordinates on $\bR^n$.  We use also the notation $|B| := \int_B 1 d^nx$. The family of {\em Borel sets} on a topological space $X$ will be denoted by $\cB(X)$.
The {\em Lebesgue $\sigma$-algebra} on $\bR^n$ will be indicated by $\cL(\bR^n)$.

We assume   the following  normalisation  convention concerning  volume $n$-forms in $\bR^n$
\beq
\int_A dx^1 \wedge \cdots \wedge dx^n := \int_A  dx^1\cdots dx^n \:. \label{VF} 
\eeq

The inner product $\cH \times \cH \ni (x,y) \mapsto \langle x|y\rangle \in \bC$ in a complex Hilbert space $\cH$ is assumed to be linear in the {\em right} entry.
 
If $A: D(\cH)\to \cH$ is an operator in the complex  Hilbert space $\cH$ with domain given by the linear subspace $D(A)\subset \cH$, saying that $A$ is {\bf positive}, written $A\geq 0$, means $\langle \phi|A \phi \rangle \geq 0$ for all $\phi\in D(A)$. Furthermore, if $D(A)=D(B) \subset H$ for a corresponding pair of operators $A,B$, then $A\geq B$  (also written $B\leq A$) means $A-B \geq 0$.

$\gB(\cH)$ throughout denotes the {\em unital $C^*$-algebra of bounded operators} \cite{M} $A: \cH\to \cH$, where  $\cH$ is a complex Hilbert space.

Let  $\cM(X)$ be a  $\sigma$-algebra  on $X$ and $\cH$ a complex Hilbert space. A {\bf $\cH$-POVM} (Positive-Operator Valued Measure) on  $\cM(X)$ \cite{Buschbook} is a map $$\sA :\cM(X)\ni \Delta \to \sA(\Delta) \in \gB(\cH)$$ such that
$0\leq \sA(\Delta)\leq I$ and $\Sigma(X)\ni \Delta \mapsto \langle \psi| \sA(\Delta) \phi\rangle \in \bC$ is a $\sigma$-additive  complex measure (with finite total variation) for every $\psi,\phi \in \cH$.  It turns out that a POVM is also $\sigma$-additive in the strong operator topology. $\sA$ is {\bf normalized} if $\sA(X)=I$.

A normalized  ${\cal H}$-POVM  $\sA$ on $\cM(X)$ is a ${\cal H}$-{\bf PVM} (Projector Valued Measure) on $\cM(X)$ if $\sA(\Delta)^2 = \sA(\Delta)$ (i.e., $\sA(\Delta)$ is an {\em orthogonal projector} in ${\cH}$) for every $\Delta \in \cM(X)$.  We shall use standard notions and constructions of operator theory and spectral theory (see, e.g., \cite{M}).

An ${\cal H}$-POVM  $\sA$ on $\cM(X)$   is {\bf absolutely continuous} with respect to a positive measure $\mu : \cM(X) \to \overline{\bR_+}$, written $\sA <\sp< \mu$,  if $\langle \psi|\sA(E) \psi \rangle =0$ for every $\psi \in \cH$ when $\mu(E)=0$, for $E\in \cM(X)$.

\section{Causal structures of spacetimes, Minkowski spacetime} \label{SECCAUS}
This section is devoted to quickly  introduce the basic causal structures we shall use in the rest of the paper (see, e.g., \cite{ON}). The reader who is already familiar with these notions and 
wants to achieve the main results  soon, may temporarily skip this part, coming back to it when necessary, or for understanding a specific notation or a  definition.

\subsection{Globally hyperbolic spacetimes and their causal structure}
A {\em spacetime} is the most general scenario where formulating any macroscopic physical theory according with standard notions of causality (see, e.g., \cite{ON}).  It is a continuous set of {\em events}, a smooth manifold,  equipped with geometric structures which account for causal relations. The crucial notion underpinning these notions is the  Lorentzian metric tensor field on the spacetime.

\begin{definition}{\em A $n$-dimensional {\bf  spacetime} $(M,g)$ is a connected  $n$-dimensional  ($n\geq 2$) smooth manifold $M$ equipped with a {\bf Lorentzian metric} tensor field $g$. It is a smooth assignment to every $T^*_pM\otimes T^*_pM$ of  nondegenate symmetric tensors $g_p$  with constant signature $-1,+1,\ldots, +1$.\\
If $p\in M$,  $v\in T_pM\setminus\{0\}$ is {\bf spacelike}, {\bf timelike}, {\bf lightlike} if  respectively $g_p(v,v) >0$, $g_p(v,v)<0$, $g_p(v,v)=0$.    {\bf Causal vectors} are both timelike and lightlike. The zero vector $0\in T_pM$ is spacelike {\em per definition}.
 \hfill $\blacksquare$}
 \end{definition}

We adopt the same terminology for  smooth curves $(a,b) \ni s \mapsto \gamma(s) \in M$: they are {\bf spacelike}, {\bf timelike}, {\bf lightlike}, {\bf causal}  according to the character of their tangent vector $\dot{\gamma}(s)$ {\em supposed to be uniform along $\gamma$}.

Co-vectors in $T_p^*M$ are classified as {\bf spacelike}, {\bf timelike}, {\bf lightlike}, {\bf causal}, according to the associated elements of $T_pM$  through the  natural isomorphism $T_pM \ni v\mapsto g_p(v, \cdot ) \in  T_p^*M$. The  indefinite inner product induced on $T^*_pM$ by that isomorphism  will be denoted by $g_p^\#$.

As usual, a $C^k$ {\bf submanifold} ($k=0,1,\ldots, \infty$) of a given smooth manifod $M$ is a subset  $N\subset M$ equipped with its own structure of $C^k$ manifold. The submanifold $N\subset M$ is said to be {\bf embedded}\footnote{We shall not consider {\em immersed submanifolds} in this work.} and of {\bf dimension} $m\leq n:=dim(M)$ if (a) for  every $p\in N$ there is a local chart $(U,\psi)$  {\em of $M$}, such that  $p\ni U$ and $\psi(U\cap S)$ is the intersection of $\psi(U)$ and $\bR^{m}$ (viewed as a standard  $m$-plane in  $\bR^n$); (b) 
the topology on $N$ is the one induced by $M$; and (c)  $(U\cap N, \psi|_{U\cap N}: U\cap N \to \bR^m)$ is a local $C^k$ chart on $N$. Evidently, $T_pN$ turns out to be a linear subspace of $T_pM$ when $k\geq 1$.

\begin{definition}{\em  In a $n$-dimensional  spacetime $(M,g)$, an $m$-dimensional ($m\leq n$) embedded smooth submanifold $S$ (possibly with boundary) is  {\bf spacelike} if the  tangent vectors at every point (including the ones tangent to the boundary if any) are spacelike.\\ If $m=n-1$  that is equivalent to saying that the co-normal vector  to $S$ is {\em timelike} everywhere. \hfill $\blacksquare$}
\end{definition}

 To go on, observe that the set of timelike vectors in $T_pM$ is made of two disjoint open cones $V_p$  and $V'_p$.
 
\begin{definition}\label{DEFST} {\em A spacetime $(M,g)$ is {\bf time oriented} if there exists a continuous timelike vector field $T$ on $M$.  If $p\in M$,
\begin{itemize}
\item[(a)] the
open cone $V^+_p\subset T_pM$ of  the pair  $V_p$  and $V'_p$ which contains $T_p$   is called
the  {\bf future open cone at $p$};
the causal vectors of $\overline{V^+_p}\setminus\{0\}$ are said to be {\bf future-directed};
\item[(b)] the remaining cone $V^-_p$ is the {\bf past open cone at $p$}; the causal vectors of $\overline{V^-_p}\setminus\{0\}$ are said to be {\bf past-directed}. 
\end{itemize}
A global  continuous choice of $V_p^+$  for every $p\in M$ as above is a {\bf temporal orientation} of $(M,g)$.\\
 A causal co-vector is {\bf future-directed} or {\bf past-directed} if it is the image of a, respectively, future-directed or past-directed causal vector through  $T_pM \ni v\mapsto g_p(v, \cdot ) \in  T_p^*M$. \hfill $\blacksquare$}
\end{definition}

\begin{remark}
{\em \begin{itemize}
\item[(1)] As $M$ is connected, there are  two possible temporal orientations or none. 

\item[(2)] If $t\in V^+_p$  and $u \in T_pM$ is causal, then
 $u$ future-directed $\Leftrightarrow  g_p(t,u)<0\:.$ \\
(The same result is valid  for  $t$   future-directed lightlike and non-parallel to $u$.) \hfill $\blacksquare$
\end{itemize}}
\end{remark}

\begin{definition} {\em  Let $(M,g)$ be a time oriented spacetime.
If $A\subset M$, 
\begin{itemize}
\item[(a)] its {\bf chronological future} $I^+(A) \subset M$ is the set of $q \in M$ such that there is a future-directed timelike  smooth curve $\gamma: (a,b)\to M$ with  $\gamma(t_1) \in A$ and $\gamma(t_2) =q$, for $t_1< t_2$;
 \item[(b)] its {\bf causal  future} $J^+(A) \subset M$ is the set  of $q \in M$ such that  either $q\in A$ or  there is a future-directed causal smooth curve $\gamma : (a,b)\to M$ with $\gamma(t_1) \in A$ and $\gamma(t_2) =q$, for $t_1< t_2$. 
\end{itemize} The {\bf chronological past} $I^-(A)$ and the {\bf causal past} $J^-(A)$ are defined similarly. 
\begin{itemize} 
\item[(c)] $A$ is {\bf achronal} if $A \cap I^+(A) = A \cap I^-(A) = \emptyset$\:,
\item[(d)] $A$ is {\bf acausal} if there is no causal smooth curve $\gamma : (a,b)\to M$ with $\gamma(t_1), \gamma(t_2) \in A$ and $t_1 \neq t_2$. \hfill $\blacksquare$
\end{itemize}
 }
 \end{definition}
 Physically speaking, macroscopic physical information is transported along causal (future-directed) curves. The existence of the sets $J^\pm(p)$ (and also $V_p^\pm$) is the mathematical description of the finite propagation speed of physical information exiting  (or entering) $p$.

\begin{definition} {\em Let $(M,g)$ be a time oriented spacetime. A  future-directed causal smooth curve $\gamma : (a,b)\to M$, with $a,b \in [-\infty,+\infty]$, is {\bf future}, resp. {\bf past}, {\bf inextendible} if there is no $p\in M$ such that, respectively, $\gamma(s)\to p$ for $s\to b$ or $s\to a$. \\$\gamma$ is {\bf inextendible} if it is both future and past inextendible.  \hfill $\blacksquare$}
\end{definition}

We shall now focus attention on the so called {\em globally hyperbolic spacetimes}. This  kind of spacetimes is of the  utmost physical interest for many reasons, in particular because a wide family of, roughly speaking,  {\em hyperbolic  PDEs} of great physical relevance --  as the Einstein equations, Klein-Gordon equations, Dirac equations  -- admit existence and uniqueness theorems.  Cauchy data are given on special subsets called {\em Cauchy surfaces}. Very interestingly, the definition of globally hyperbolic spacetime and Cauchy surface is not related to PDEs, but only relies on the above geometric  causal structures. {\em Spacelike} Cauchy surfaces are also the natural representation of instantaneous rest spaces of globally extended observers.

\begin{definition} {\em  A time oriented spacetime $(M,g)$ is said to be  {\bf globally hyperbolic} (e.g., see \cite{BS}) if it includes a {\bf Cauchy surface}. That is  a set $S\subset M$ which intersects every inextendible  timelike smooth curve exactly once.  
 \hfill $\blacksquare$}
\end{definition}

\begin{remark}\label{RCSu}
{\em There are  features of Cauchy surfaces in a $n$-dimensional  globally hyperbolic spacetime  $(M,g)$ which deserve mention also because they are technically important for this paper.
\begin{itemize}
 
\item[(a)] Every Cauchy surface is in particular achronal. It  is also met  by every {\em causal} inextendible curve, but not necessarily once \cite{ON}.

\item[(b)] As it was established by Geroch, a Cauchy surface $S$ is a  closed (in $M$) $C^0$ embedded  submanifold  of co-dimension $1$ of $M$ which, in turn, is homeomorphic to $\bR\times S$. All Cauchy surfaces are homeomorphic. \cite{ON}.

\item[(c)]  If a Cauchy surface $S$ is also a smooth 
embedded co-dimension $1$
 submanifold of $M$, then its tangent vectors at each point  must be either spacelike or likghtlike, since $S$ does not contain timelike curves.
  These Cauchy surfaces are said {\bf smooth Cauchy surfaces}.

\item[(d)] As established by Bernal and S\'anchez \cite{BS}, every globally hyperbolic spacetime  admits 
{\em spacelike} smooth Cauchy surfaces 
 (necessarily closed in $M$ for (b)). In turn, $M$ is diffeomorphic to $\bR \times S$. \\
The following further facts  are  of crucial interest to our work.
\begin{itemize}
\item[(d1)] (Proposition 5.18 in \cite{BS3}) {\em Let $S_-,S_+$ be disjoint  spacelike smooth Cauchy surfaces of $(M,g)$ with $S_- \subset I^-(S_+)$. If $S \subset I^+(S_-)\cap I^-(S_+)$ is a closed (in $M$) connected spacelike smooth  embedded $1$-codimensional submanifold of $M$, then $S$ is a Cauchy surface of $(M,g)$ as well.   }

\item[(d2)] (Theorem 1.1 and Remark 4.14 of \cite{BS3}) {\em If $K\subset M$ is a compact spacelike acausal smooth $p$-dimensional  ($p=1,2,\ldots, n-1$) (embedded) submanifold with
 boundary\footnote{We refer to \cite{Lee} for the notion and the use of manifold with boundary.} 
 of $(M,g)$ not necessarily connected, then there is a spacelike smooth Cauchy surface $S\supset K$.}
\end{itemize}

\item[(e)] A  
{\em spacelike} smooth 
Cauchy surface $S$ meets exactly once also every  inextendible causal curve \cite{BS}. In particular, $S$ is therefore  acausal. \hfill $\blacksquare$
\end{itemize}}
\end{remark}

\begin{remark}{\em $\cC_M$ will henceforth denote the family of all smooth Cauchy surfaces of $(M,g)$.\hfill $\blacksquare$}
\end{remark}

\subsection{Minkowski spacetime geometric and causal structures}\label{SECM}
 \begin{definition} {\em  {\bf Minkowski spacetime}  $(\bM, \sV, \gor)$ is a 4-dimensional real affine space whose space of translation --  denoted by  $\sV$ -- is equipped with  a  Lorentzian inner product with signature\footnote{Differently from the choice of \cite{C0,C}}   $(-1,+1,+1,+1)$ indicated  by $\gor: \sV\times \sV \ni (k,p) \mapsto \gor(k,p) =:  k\cdot p  \in  \bR$. \hfill $\blacksquare$}
 \end{definition}

\noindent This definition is consistent with Definition \ref{DEFST} because  $\bM$ is automatically a smooth 4-dimensional manifold with respect to the natural smooth structure induced by its affine structure. It is   defined by requiring that, upon the choice of an  origin $o\in \bM$, the bijective  map $ I^o_p :\bM \ni p \mapsto p-o \in \sV$
 is a diffeomorphism.  $\imath_p:= dI^o_p :T_p\bM \to \sV$ is a natural isomorphism independent of the choice of $o$.
 $\gor$  can therefore be  exported to each tangent space $T_p\bM$ as  $g_p(u,v) := {\bf g}(\imath_p u, \imath_p v)$. The construction  defines a smooth manifold diffeomorphic to $\bR^4$ endowed  with a Lorentzian metric tensor field --  called {\bf Minkowski metric} -- still denoted by $\gor$.  In the rest of the paper,  for $u,v \in T_p\bM$,  we use the notation  ${\bf g}(u,v):=u\cdot v :=g_p(u,v)$.

Minkowski spacetime $(\bM,\gor)$ is assumed  to be  {\em time oriented} and $d\imath_p(V_p^+)=: \sV_+\subset \sV$ (which does not depend on $p$) is   the  {\bf future open cone} in $\sV$ by definition.

The {\bf orthochronous Poincar\'e group}  $\cP_+$ is the group of time-orientation preserving  isometries of $(\bM, {\bf g})$. $\cP_+$ turns out to be  the semidirect product $\sV \rtimes  \cL_+$ of $\sV$ (the abelian group  of displacements of $\bM$)  and the  {\bf orthochronous Lorentz group} $\cL_+$ consisting of the $\sV_+$-preserving linear  isometries of $(\sV, {\bf g})$.  Upon the choice of an origin $o\in \bM$ and referring to the map $I^o_p$ above, the action of $(v,\Lambda) \in  \cP_+$ is 
\beq (v,\Lambda)  : \bM \ni p \mpasto o + v + \Lambda (p-o) \in \bM\:.\label{AAP}\eeq
This action also defines the structure of  semidirect product $\cP_+ = \sV \rtimes  \cL_+$.

A {\bf Minkowski reference frame}, physically corresponding to an {\em inertial observer}, is defined by a future-directed timelike {\em unit} vector $n$. The set of these unit vectors will be denoted by $\sT_+\subset \sV_+$.
Take $o\in \bM$ and a {\bf Minkowski basis}, i.e., $e_0,e_1,e_2,e_3 \in \sV$  such that $e_0\in \sT_+$ and $\gor(e_\mu,e_\nu)=\eta_{\mu\nu}$, where $[\eta_{\mu\nu}]_{\mu,\nu=0,1,2,3}:= \mbox{diag}(-1,1,1,1)$.
The global (bijective Cartesian)  chart $\bM \ni p \mapsto (x^0(p), x^1(p),x^2(p),x^3(p)) \in  \bR^4\quad \mbox{such that} \quad  p= o + \sum_{\mu=0}^3 x^{\mu}(p) e_\mu$
is a  {\bf Minkowski chart} on $\bM$ (with {\bf origin} $o$ and {\bf axes} $e_0,e_1,e_2,e_3$) by definition. 
The  vectors  $\partial_{x^\mu}|_p \in T_p\bM$ of the local bases associated to the coordinates are mapped to $e_\mu$ by $\imath_p$ and it holds both  $\gor(\partial_{x^\mu},\partial_{x^\nu}) = \eta_{\mu\nu}$ 
and $\gor^\#(dx^\mu,dx^\nu) =\eta^{\mu\nu}:=  \eta_{\mu\nu}$ constantly everywhere. These identities imply in particular  that the metric tensor field $\gor$ is {\em globally flat}.
A Minkowski chart $x^0,x^1,x^2,x^3$ is {\bf adapted} to a Minkowski reference frame $n\in \sT_+$ if $\partial_{x^0}=n$ everywhere (where again  the natural isomorphism $\imath_p$ is understood).
%
%
A {\bf rest space} $\Sigma_n$ of $n\in \sT_+$ is any  $3$-dimensional plane orthogonal to $n$.  Rest spaces are  smooth spacelike 3-dimensional embedded submanifolds. 
The surjective coordinate function $x^0: \bM \to \bR$ of a Minkowski chart adapted to $n$  defines a {\bf global time coordinate} of $n$. 
The possible global time coordinates of $n$ are defined up to an arbitrary additive constant.
 If a Minkowski chart $x^0,x^1,x^2,x^3$ is adapted to $n$, the {\bf time slices} $$ \bR^3_{x^0_0}:=\{(x^0_0,x^1,x^2,x^3)\:|\: (x^1,x^2,x^3)\in \bR^3\}$$  at constant time $x^0=x_0^0$ 
are the coordinate representation of the rest spaces $\Sigma_{n, x^0_0}$  of $n$ (adopting the notation used in \cite{M2}). 
The rest spaces of $n$ are, in fact,  bijectively labelled by the values of  $x^0$ itself  and  the remaining coordinates $x^1,x^2,x^3$ define an admissible  global chart on each submanifold  $\Sigma_{n,x^0}$. This chart is an orthogonal Cartesian coordinate system on  $\Sigma_{n,x^0}$ with respect to the (Euclidean) metric induced by ${\bf g}$ on $\Sigma_{n,x^0}$ and the affine structure induced by the one of $\bM$. So that, for instance, the {\em Lebesgue $\sigma$-algebra} $\cL(\Sigma_{n,x^0})$ and the {\em Lebesgue measure} are univocally defined on each $\Sigma_{n,x^0}$ independently of the Minkowski chart adapted to $n$.  The Lebesgue measure  on  $\Sigma_{n,x^0}$ coincides with the completion of the Borel  measure canonically induced  by ${\bf g}$ on its embedded submanifolds of $\bM$ (see (\ref{muS}) below).

Given Minkowski chart $x^0,x^1,x^2,x^3$, the {\bf spatial components}  of $k\in T_p\bM \equiv \sV$ along the local basis 
$\partial_{x^0}|_p, \partial_{x^1}|_p,\partial_{x^2}|_p,\partial_{x^3}|_p \in T_pM$
(i.e., the Minkowski basis $e_0,e_1, e_2,e_3\in \sV$ associated to the said Minkowski chart) are  $\vec{k}:=(k^1,k^2,k^3)$ and  the {\bf temporal component} is $k^0$. 
 Referring to the spatial components of  $k,p \in \sV$, their inner product in  $\bR^3$ is again denoted  by the dot $\vec{k}\cdot \vec{h}$.

An important feature of Minkowski spacetime $\sM$ is that the structure  of $J^\pm(A)$ and $I^\pm(A)$ simplify because   $\bM$ is convex.   The following result is true, whose elementary proof is left to the reader.

\begin{proposition} \label{JJ} If $A\subset \sM$, the following holds.
\begin{itemize}
\item[(a)]
 $I^+(A)$ is made of the points $q \in \sM$ such that there is $p\in A$ for that the  $q-p\in \sV_+$.
 \item[(b)] $J^+(A) $ is made of the points $q \in \sM$ such that  there  is $p\in A$ for that  $q-p\in \overline{\sV_+}$ (so $p=q$ is admitted).
\end{itemize}
Analogous  facts are true for $J^-(A)$ and $I^-(A)$.
\end{proposition}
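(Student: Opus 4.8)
The plan is to exploit the affine structure of $\bM$ throughout: a difference of two events $q-p$ is a genuine element of the translation space $\sV$, and any two events are joined by the straight segment $s\mapsto p+s(q-p)$ whose velocity is the constant vector $q-p\in\sV$. I would establish each of the two set equalities by a double inclusion, with the ``sufficiency'' directions handled by these explicit segments and the ``necessity'' directions by an integral/convexity argument. The convexity of $\bM$ flagged in the text enters precisely through the availability of these segments.

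First I would treat the inclusions showing the cone conditions are \emph{sufficient}. Given $p\in A$ with $q-p\in\sV_+$, set $\gamma(s):=p+s(q-p)$ on an open interval containing $[0,1]$; then $\gamma(0)=p\in A$, $\gamma(1)=q$, and the constant (hence uniform) velocity $q-p$ is future-directed timelike, so $\gamma$ is a future-directed timelike curve and $q\in I^+(A)$. The same segment, now with $q-p\in\overline{\sV_+}\setminus\{0\}$, has future-directed causal velocity and yields $q\in J^+(A)$; the degenerate case $q-p=0$ is absorbed by the clause $q\in A$ in the definition of $J^+$.

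For the \emph{necessity} inclusions, the key observation is that if $\gamma:(a,b)\to\bM$ is smooth with $\gamma(t_1)=p$ and $\gamma(t_2)=q$ for $t_1<t_2$, then in the affine structure $q-p=\int_{t_1}^{t_2}\dot\gamma(s)\,ds\in\sV$, so $q-p$ equals $(t_2-t_1)$ times the mean of the velocity over $[t_1,t_2]$. Since $\gamma$ is smooth, the velocity image $K:=\dot\gamma([t_1,t_2])$ is compact, and a standard separating-hyperplane argument places the mean $\frac{1}{t_2-t_1}(q-p)$ in the closed convex hull $\overline{\mathrm{conv}}\,K$. For a causal curve $K\subset\overline{\sV_+}$, and since $\overline{\sV_+}$ is a closed convex cone this gives $q-p\in\overline{\sV_+}$, proving necessity for $J^+$ (the subcase $q\in A$ being covered by $p=q$, $q-p=0\in\overline{\sV_+}$).

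The hard part will be the corresponding sharpening for $I^+$, where I must land in the \emph{open} cone $\sV_+$ and not merely its closure, as the naive integral bound only yields $q-p\in\overline{\sV_+}$. Here I would use that $\sV_+$ is itself open and convex: for a timelike curve $K=\dot\gamma([t_1,t_2])$ is a compact subset of $\sV_+$, its convex hull $\mathrm{conv}\,K$ is compact in the finite-dimensional space $\sV$ hence closed, and it is contained in the convex set $\sV_+$; therefore $\overline{\mathrm{conv}}\,K=\mathrm{conv}\,K\subset\sV_+$, so the mean and thus $q-p$ lie in $\sV_+$. Alternatively one can combine the closure conclusion with a fixed $t\in\sV_+$ and the strict inequality $\gor(t,q-p)=\int_{t_1}^{t_2}\gor(t,\dot\gamma(s))\,ds<0$ to exclude the lightlike boundary directions. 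Finally, the statements for $I^-(A)$ and $J^-(A)$ follow verbatim upon replacing $\sV_+$ by $\sV_-=-\sV_+$ and future- by past-directed, so no separate argument is required.
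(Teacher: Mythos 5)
Your proof is correct, and there is nothing in the paper to compare it against: the authors explicitly leave the "elementary proof" of Proposition \ref{JJ} to the reader. Your route is surely the intended one — straight affine segments $s\mapsto p+s(q-p)$ give the sufficiency of the cone conditions, and the integral mean-value/convex-hull argument gives necessity. In particular you handle the one genuinely delicate point correctly: for $I^+$ one must land in the \emph{open} cone $\sV_+$, and your argument via compactness of $\dot\gamma([t_1,t_2])$, the fact that the convex hull of a compact set in a finite-dimensional space is compact (hence closed), and convexity of $\sV_+$, is sound; together with the cone property under positive scaling this puts $q-p$ in $\sV_+$ as required.

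One caveat: the "alternative" you sketch for the $I^+$ sharpening does not work as stated. For a fixed future-directed timelike $t$, the strict inequality $\gor(t,u)<0$ holds for \emph{every} nonzero future-directed causal $u$, lightlike ones included — this is exactly Remark (2) in the paper — so knowing $q-p\in\overline{\sV_+}$ together with $\gor(t,q-p)<0$ excludes only $q-p=0$, not the lightlike boundary directions. A fixable variant: if $q-p$ were lightlike, pair against $q-p$ itself; since each $\dot\gamma(s)$ is timelike it is never parallel to the lightlike vector $q-p$, so $\gor(q-p,\dot\gamma(s))<0$ for all $s\in[t_1,t_2]$, whence $0=\gor(q-p,q-p)=\int_{t_1}^{t_2}\gor(q-p,\dot\gamma(s))\,ds<0$, a contradiction. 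Since you present this only as an aside and your primary convex-hull argument is complete, the proof stands as written.
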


\begin{proposition}[\cite{ON}] {Minkowski spacetime is globally hyperbolic since  the rest spaces   of every  Minkowski reference frame  are (spacelike) Cauchy surfaces. }
\end{proposition}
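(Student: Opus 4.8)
The plan is to verify directly that a single rest space $\Sigma_n$ of a given Minkowski reference frame $n \in \sT_+$ is a Cauchy surface, since by definition the existence of one Cauchy surface already makes $(\bM,\gor)$ globally hyperbolic. Fixing a Minkowski chart $x^0,x^1,x^2,x^3$ adapted to $n$ (so that $\partial_{x^0}=n$), I may take without loss of generality the rest space to be the time slice $\Sigma := \{x^0 = 0\}$. That $\Sigma$ is a spacelike embedded $3$-submanifold has already been recorded in the text: its tangent space is $n^\perp$, which is spacelike because $n$ is timelike. It then remains to show that every inextendible timelike smooth curve meets $\Sigma$ exactly once.

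So let $\gamma:(a,b)\to\bM$ be inextendible and timelike, and, reversing the parameter if necessary, assume it is future-directed. Set $t(s):=x^0(\gamma(s))$. The first step is \emph{monotonicity}, which yields ``at most once''. Since $n=\partial_{x^0}\in V^+_{\gamma(s)}$ and $\dot\gamma(s)$ is future-directed causal, Remark (2) gives $g(n,\dot\gamma)<0$; but $g(\partial_{x^0},\dot\gamma)=\eta_{00}\dot\gamma^0=-\dot t$, so $\dot t(s)>0$ for all $s$ and $t$ is strictly increasing. Hence $\gamma$ can cross the level $t=0$ at most once.

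The second step uses inextendibility to force $t$ to be surjective onto $\bR$, giving ``at least once''. The timelike condition $g(\dot\gamma,\dot\gamma)=-(\dot t)^2+|\dot{\vec\gamma}|^2<0$ reads $|\dot{\vec\gamma}(s)|<\dot t(s)$, so for $s_1<s_2$ one has the estimate $|\vec\gamma(s_2)-\vec\gamma(s_1)|\le \int_{s_1}^{s_2}|\dot{\vec\gamma}|\,ds\le t(s_2)-t(s_1)$. If $t$ stayed bounded above, say $t(s)\to L<+\infty$ as $s\to b$, the right-hand side would tend to $0$, forcing $\vec\gamma(s)$ to be Cauchy and hence $\gamma(s)\to (L,\vec q)\in\bM$, which contradicts future-inextendibility. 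Therefore $t(s)\to+\infty$ as $s\to b$, and symmetrically $t(s)\to-\infty$ as $s\to a$ by past-inextendibility. By continuity and strict monotonicity $t$ is a bijection of $(a,b)$ onto $\bR$, so the value $0$ is attained exactly once; that is, $\gamma$ meets $\Sigma$ exactly once.

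I expect the only genuinely non-routine point to be this last convergence argument: deducing from a bounded time coordinate that the whole curve converges to a spacetime point, which is precisely what contradicts inextendibility. Everything else — the signs in the monotonicity computation and the identification of $n^\perp$ as spacelike — is immediate from the conventions fixed above. Once $\Sigma$ is shown to be a Cauchy surface, global hyperbolicity of $\bM$ follows at once from the definition.
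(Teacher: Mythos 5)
Your proof is correct and complete: the strict monotonicity of $x^0\circ\gamma$ (giving ``at most once'') combined with the estimate $|\vec\gamma(s_2)-\vec\gamma(s_1)|\le t(s_2)-t(s_1)$ and inextendibility (giving ``exactly once'') is precisely the standard argument for this fact. The paper itself offers no proof, delegating the statement to the cited reference \cite{ON}, and your argument is essentially the one found there, so there is nothing to reconcile.
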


\begin{definition} {\em A {\bf spacelike Cauchy surface} in $\bM$ is a  spacelike {\em smooth} Cauchy surface of $(\bM, {\bf g})$.   Their family is denoted by $\cC^s_\bM$.\\
A  {\bf spacelike flat  Cauchy surfaces} in $\bM$ is  a rest space of any Minkowski reference frame. Their  family will be  denoted by $\cC^{sf}_\bM$.  \hfill $\blacksquare$}
\end{definition}

\begin{remark} {\em In the following, $S$ will denote a smooth Cauchy surface of $\bM$, i.e., an element of $\cC_\bM$.  However,  {\em in case $S$ is flat and spacelike}, i.e. $S\in \cC^{sf}_\bM$, we shall  very often use the symbol $\Sigma$ in place of $S$, especially when viewing them as $3$-rest spaces of inertial observers  according to the notation of \cite{M2}. \hfill $\blacksquare$}
\end{remark}


In addition to  spacelike flat Cauchy surfaces, there are many other types of smooth Cauchy surfaces in $\bM$. The following proposition concern their description.

\begin{proposition}\label{propS}  Consider a given Minkowski chart $x^0,x^1,x^2,x^3$ and  $S\in \cC_\bM$.
\begin{itemize}
\item[(a)] $S$ is  determined by a smooth map $x^0 = t_S(\vec{x})$, $\vec{x}\in \bR^3$ with \beq
|\nabla t_S(\vec{x})| \leq 1\quad \mbox{for every  $\vec{x}\in \bR^3$.  Here,  $<$ replaces $\leq$ if $S$ is spacelike.} \label{DIS2}\eeq

\item[(b)] $S$ is diffeomorphic to $\bR^3$ as the coordinates $(x^1,x^2,x^3)\in \bR^3$ define a  global chart on it.
\end{itemize}
\end{proposition}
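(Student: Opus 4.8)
The plan is to use the defining property of a Cauchy surface against the timelike lines of the given Minkowski chart, and then to extract the graph representation, the global chart, and the gradient bound one after another. First I would fix the chart, with origin $o$ and axes $e_0,\dots,e_3$, and for each $\vec{x}=(x^1,x^2,x^3)\in\bR^3$ consider the vertical line $\gamma_{\vec{x}}(s):=o+s\,e_0+\sum_{i=1}^3 x^i e_i$. Since $e_0=\partial_{x^0}\in\sT_+$ is future-directed unit timelike, $\gamma_{\vec{x}}$ is an inextendible timelike smooth curve (it has no limit point in $\bM\cong\bR^4$ as $s\to\pm\infty$). By the definition of Cauchy surface, $\gamma_{\vec{x}}$ meets $S$ in exactly one point, and its time coordinate defines $t_S(\vec{x})$; uniqueness of the intersection gives at once that $S=\{(t_S(\vec{x}),\vec{x}):\vec{x}\in\bR^3\}$ is the graph of a function $t_S:\bR^3\to\bR$, which settles the graph representation in (a) up to smoothness.

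Next, for part (b), I would consider the smooth projection $\pi:\bM\to\bR^3$, $(x^0,\vec{x})\mapsto\vec{x}$, and its restriction $\pi|_S$, which the first step already shows to be a bijection. As $S$ and $\bR^3$ are both three-dimensional, to promote $\pi|_S$ to a diffeomorphism it suffices to check that $d(\pi|_S)_p$ is injective for every $p\in S$. Its kernel is $T_pS\cap\ker d\pi_p=T_pS\cap\mathrm{span}(\partial_{x^0})$; since $\partial_{x^0}$ is timelike whereas $S$, being a smooth Cauchy surface, has only spacelike or lightlike tangent vectors by Remark \ref{RCSu}(c), this intersection reduces to $\{0\}$. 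Hence $\pi|_S$ is a bijective local diffeomorphism, thus a global diffeomorphism, so $(x^1,x^2,x^3)$ is a global chart and $S$ is diffeomorphic to $\bR^3$. Writing $\sigma:=(\pi|_S)^{-1}$, which equals $\vec{x}\mapsto(t_S(\vec{x}),\vec{x})$, the smoothness of $t_S=x^0\circ\sigma$ follows, completing (a) except for the gradient estimate.

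Finally I would obtain the bound from the induced metric in the graph parametrization $\sigma$. The pushed-forward coordinate frame $v_i=\partial_{x^i}+(\partial_{x^i}t_S)\,\partial_{x^0}$, $i=1,2,3$, is a basis of $T_pS$, and using $\gor(\partial_{x^\mu},\partial_{x^\nu})=\eta_{\mu\nu}$ the induced metric components are
\[
h_{ij}:=\gor(v_i,v_j)=\delta_{ij}-(\partial_{x^i}t_S)(\partial_{x^j}t_S),
\]
so the symmetric matrix $[h_{ij}]=I-(\nabla t_S)(\nabla t_S)^{\!\top}$ has eigenvalues $1-|\nabla t_S|^2$ (simple) and $1$ (double). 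Remark \ref{RCSu}(c) says every tangent vector is spacelike or lightlike, i.e. $[h_{ij}]\ge 0$, which forces $|\nabla t_S|\le 1$; if moreover $S$ is spacelike then all nonzero tangent vectors are spacelike, $[h_{ij}]>0$, and $|\nabla t_S|<1$.

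The inverse-function-theorem step and the metric computation are routine; the one point deserving care is the clean dictionary between the geometric character of $S$ and the definiteness of $[h_{ij}]$ --- namely that ``no timelike tangent vector'' is exactly positive semidefiniteness (giving $|\nabla t_S|\le1$) while spacelikeness is strict positivity (giving the strict bound). This is where Remark \ref{RCSu}(c) together with the eigenvalue computation does the real work, and it is the only genuinely substantive step.
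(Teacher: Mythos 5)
Your proof is correct and follows essentially the same route as the paper's: the Cauchy property tested against the vertical timelike coordinate lines yields the graph representation, and the non-timelike character of the tangent vectors of $S$ (Remark \ref{RCSu}(c)) delivers both smoothness and the gradient bound. The only differences are cosmetic --- you get smoothness via the inverse function theorem applied to the projection $\pi|_S$ (the paper uses the implicit function theorem on a local defining function $f=0$), and you derive $|\nabla t_S|\leq 1$ from positive semidefiniteness of the induced metric $\delta_{ij}-\partial_{x^i}t_S\,\partial_{x^j}t_S$ rather than from the causal character of the conormal $dx^0-\sum_{k}\partial_{x^k}t_S\,dx^k$, which are dual formulations of the same fact.
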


\begin{proof} See Appendix \ref{APPENDIXA}. \end{proof}

Let us now focus on  the case where  $S\in \cC^s_\bM$ , i.e., the smooth Cauchy surface $S$ is spacelike.
\begin{itemize}
\item[(1)] The future-directed unit normal  vector and co-vector  to $S$ at $(t_S(\vec{x}), \vec{x})$ are respectively 
\beq n_S(\vec{x}) := \frac{\partial_{x^0} + \sum_{k=1}^3  \frac{\partial t_S}{\partial x^k} \partial_{x^k}}{\sqrt{1 -  |\nabla t_S(\vec{x})|^2}}\:,
\quad \gor(n_S, \cdot)(\vec{x}) := \frac{-dx^0 + \sum_{k=1}^3  \frac{\partial t_S}{\partial x^k} dx^k}{\sqrt{1 -  |\nabla t_S(\vec{x})|^2}}\:.\eeq
\item[(2)] The Riemannian metric $h^S$ induced on $S$ and represented in terms of the local coordinates $\vec{x} \in \bR^3$ is the pullback of $g$ through the embedding function. It reads
\beq h^S_{ab} := \sum_{\alpha \beta=0}^3 \eta_{\alpha\beta} \frac{\partial x^\alpha}{\partial x^a} \frac{\partial x^\beta}{\partial x^b} = \delta_{ab} - \frac{\partial t_S}{\partial x^a} \frac{\partial t_S}{\partial x^b}\eeq
where $a,b =1,2,3$ and $x^0= t_S(\vec{x})$.
\item[(c)] Correspondingly,  the Borel measure on $S$  induced by the metric is equivalent to the Lebesgue measure $d^3x$ on $\bR^3$ (restricted to the Borel sets) and reads\footnote{We take advantage of $\det(I + cd^t) = 1+ d^tc$ for $c,d \in \bR^n$ and $I$ the identity of $M(n,\bR)$. }
\beq \label{muS} \nu_S(A) := \int_{A} \sqrt{\det h^S}\:  d^3 x  =  \int_{A} \sqrt{1- |\nabla t_S(\vec{x})|^2}\:  d^3 x \quad \mbox{for every  $A\in \cB(S)$},\eeq
where (and occasionally henceforth) the integration region  $A$ in the integrals actually  denotes the projection of $A\in \cB(S)$ onto $\bR^3$.
\end{itemize}

\noindent In case  the smooth Cauchy surface $S$ is generic, we can still define {\em non vanishing}, {\em causal} and {\em future-directed} normal  vectors and co-vectors  to $S$ at $(t_S(\vec{x}), \vec{x})$ but, generally speaking,  we cannot normalize them 
\beq v_S(\vec{x}) := \partial_{x^0} + \sum_{k=1}^3  \frac{\partial t_S}{\partial x^k} \partial_{x^k}\:,
\quad \gor(v_S, \cdot)(\vec{x}) := -dx^0 + \sum_{k=1}^3  \frac{\partial t_S}{\partial x^k} dx^k\:.\label{Vv}\eeq
The induced metric turns out to be degenerate where these vectors are lightlike.

\section{Spacelike Cauchy localization observables in $\bM$ and a general causality condition}\label{SECCP}
We pass to present our generalized notion of spatial localization  and a corresponding generalized causality condition. The latter   extends  CC stated in Definition \ref{DEFC0}. The general notion of spatial localization will be given in terms of families of POVMs on spacelike Cauchy surfaces. This notion  extends  the analogous 
Definition \ref{SLO1}, where only flat Cauchy surfaces were considered.

\subsection{A general notion of spatial  localization in terms of POVMs}

The POVMs we shall use will be defined on a {\em completion} of the Borel 
$\sigma$-algebra on $S\in \cC_\bM^s$. This is necessary because the {\em region of influence} $\Delta' \subset S'$ of a set $\Delta \in \cB(S)$ is not necessarily in $\cB(S')$, but it necessarily  stays in the  $\nu_{S'}$ completion of $\cB(S')$ (see below). This fact was already true \cite{C0,C,M2} when dealing with flat Cauchy surfaces $\Sigma$,  where the said completion of $\cB(\Sigma)$ was nothing but  the Lebesgue $\sigma$-algebra.
If  $S\in \cC_\bM^s$, we denote by    \beq \cM(S):= \overline{\cB(S)}^{\nu_S}\:, \quad \overline{\nu_S}\label{COMPBn}  \eeq the  completion  $\sigma$-algebra of $\cB(S)$ and the completion measure of $\nu_S$ with respect to the positive Borel measure $\nu_S$ (\ref{muS}) induced on $S$  by the spacetime metric.

\begin{definition}\label{GSL} {\em Consider a quantum system described in the Hilbert space $\cH$. A   {\bf  spacelike Cauchy localization observable}  (for short {\bf spacelike Cauchy localization}) of the system  in $\bM$ is a family of normalized $\cH$-POVMs  $\sA :=\{\sA_S\}_{S\in \cC_\bM^s}$
where $\sA_S : \cM(S) \to \gB(\cH)$, such that
\begin{itemize}
\item[(a)] $\sA$ satisfies the {\bf coherence condition}
\beq\label{COHERENCE}
\sA_S(\Delta) = \sA_{S'}(\Delta) \quad \mbox{if  $S,S'\in \cC^s_\bM$, $\Delta\subset S\cap S'$ and  $\Delta \in \cM(S) \cap \cM(S')$;}
\eeq 
\item[(b)] $\sA_S$ is absolutely continuous with respect to $\overline{\nu_S}$, for every $S\in \cC_\bM^s$ (in formulae $\sA_S <\sp < \overline{\nu_S}$).
\end{itemize}  \hfill $\blacksquare$}
\end{definition}

\noindent Physically speaking, requirement (b)  $\sA_S <\sp < \overline{\nu_S}$ means that there is no chance to find a particle in a spatial region  with zero measure. This condition will play a crucial technical role in establishing Theorem \ref{TONE1}. This property is closely related
to the analogous  property of measurs associated to {\em spatial} localization observables  as $\overline{\nu_S}$ is equivalent to the image of the Lebesgue
measure on $\bR^3$.

When restricting to flat Cauchy surfaces, the definition of spacelike Cauchy localization observable  boils down to the definition of spatial localization observable as in  Definition \ref{SLO1}.

\begin{remark} \label{REMREM2}
 {\em 
 \begin{itemize} 
 \item[(1)]  Definition \ref{GSL}, in principle,  is valid for a generic  quantum system and not necessarily for a Wigner particle.
 \item[(2)]  A different  approach \cite{C2} in defining our  localization observable would concern  an assignment of effects on a suitable family of acausal subsets of $\bM$, without declaring that they gives rise to families of  POVMs on the Cauchy surfaces of $\cC^s_\bM$, but recovering this fact at a more advanced stage  of the theory.   This physically deeper  approach would avoid to impose  the coherence condition (\ref{COHERENCE}), since it  would be encapsulated in the formalism automatically. On the other hand, this perspective would turn out technically involved when  proving Theorem \ref{TONE1} below, in view of used mathematical technology which relies upon features of  smooth  spacelike Cauchy surfaces.\hfill $\blacksquare$
 \end{itemize}
 }\end{remark}

Since the examples of  POVMs we shall consider later are initially  defined on $\cB(S)$, the following extension results are of relevance to our work.

\begin{proposition}\label{ACM} Let  $\sA_S$ be a normalized $\cH$-POVM defined on $\cB(S)$ for $S\in \cC_\bM^s$. If $\sA_S <\sp < \nu_S$, then there exists  
 a unique  normalized $\cH$-POVM $\tilde{\sA}_S$ on $\cM(S)$ which extends $\sA_S$ and such that 
$\tilde{\sA}_S<\sp < \overline{\nu_S}$.
\end{proposition}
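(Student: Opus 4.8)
The plan is to realize $\tilde{\sA}_S$ as the canonical lift of $\sA_S$ along the completion map, exactly as one completes an ordinary measure, and to read off every POVM axiom from those of $\sA_S$ together with absolute continuity. First I would record the standard description of the completion: a set $E\in\cM(S)$ iff there exist $B_1,B_2\in\cB(S)$ with $B_1\subset E\subset B_2$ and $\nu_S(B_2\setminus B_1)=0$; equivalently $E=B\cup N$ with $B\in\cB(S)$ and $N$ contained in some Borel $\nu_S$-null set $Z$. The second preliminary observation, which is the engine of the argument, is that absolute continuity upgrades to an operator statement: if $E\in\cB(S)$ and $\nu_S(E)=0$, then $\langle\psi|\sA_S(E)\psi\rangle=0$ for every $\psi$, and since $\sA_S(E)\geq 0$ a positive operator with identically vanishing quadratic form is the zero operator, whence $\sA_S(E)=0$.

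With these in hand I would set $\tilde{\sA}_S(E):=\sA_S(B)$ for any Borel representative $B$ of $E$ (say the inner one $B_1$). Well-definedness is the first check: if $B,B'\in\cB(S)$ are two representatives, then $B\,\triangle\,B'\in\cB(S)$ is contained in a Borel null set, so $\nu_S(B\,\triangle\,B')=0$, and the operator lemma gives $\sA_S(B\setminus B')=\sA_S(B'\setminus B)=0$; subtracting the two decompositions $\sA_S(B)=\sA_S(B\cap B')+\sA_S(B\setminus B')$ and the analogue for $B'$ yields $\sA_S(B)=\sA_S(B')$. Positivity, the bound $\tilde{\sA}_S(E)\leq I$, and normalization $\tilde{\sA}_S(S)=I$ (take $B=S$) are then immediate from the corresponding properties of $\sA_S$, and the extension property $\tilde{\sA}_S|_{\cB(S)}=\sA_S$ is the representative choice $B=E$.

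The one step deserving care is $\sigma$-additivity. Given a disjoint sequence $\{E_n\}\subset\cM(S)$ with union $E$, I would pass to inner Borel representatives $B_n\subset E_n$ with $E_n\setminus B_n$ null; because $B_n\subset E_n$ the $B_n$ are automatically pairwise disjoint, and $E\setminus\bigcup_n B_n=\bigcup_n(E_n\setminus B_n)$ is a countable union of null sets, hence null. Thus $\bigcup_n B_n$ is a Borel representative of $E$, and the strong-operator $\sigma$-additivity of $\sA_S$ on the disjoint Borel sets $B_n$ gives $\tilde{\sA}_S(E)=\sA_S(\bigcup_n B_n)=\sum_n\sA_S(B_n)=\sum_n\tilde{\sA}_S(E_n)$ in the strong topology; the complex-measure property of $\cM(S)\ni E\mapsto\langle\psi|\tilde{\sA}_S(E)\phi\rangle$ then follows. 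Absolute continuity $\tilde{\sA}_S<\sp<\overline{\nu_S}$ is read off by noting that $\overline{\nu_S}(E)=0$ forces $\nu_S(B)=0$ for the representative, whence $\tilde{\sA}_S(E)=\sA_S(B)=0$.

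Finally, for uniqueness I would suppose $\sB_S$ is another normalized POVM on $\cM(S)$ that extends $\sA_S$ and satisfies $\sB_S<\sp<\overline{\nu_S}$. Writing $E=B\sqcup(N\setminus B)$ with $B\in\cB(S)$ and $N\setminus B$ inside a null set, absolute continuity forces $\sB_S(N\setminus B)=0$, so additivity gives $\sB_S(E)=\sB_S(B)=\sA_S(B)=\tilde{\sA}_S(E)$. I expect no genuine obstacle here: the only subtleties are the operator-valued upgrade of absolute continuity (vanishing quadratic form $\Rightarrow$ zero operator) and the disjointness bookkeeping in the $\sigma$-additivity step, both of which are routine once the completion is described set-theoretically.
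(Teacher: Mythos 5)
Your proof is correct, and it constructs the same object as the paper --- the extension is evaluation of $\sA_S$ on a Borel representative --- but by a genuinely different verification route. The paper's proof is scalar-first: it proves a lemma stating that any positive measure $\nu \ll \mu$ on $\cB(S)$ extends uniquely to the $\mu$-completion, with the extension absolutely continuous with respect to $\overline{\mu}$; it applies this to each diagonal measure $\langle\psi|\sA_S(\cdot)\psi\rangle$, and then rebuilds the operators $\tilde{\sA}_S(\Delta)$ via polarization and the Riesz lemma, reading sesquilinearity and continuity off the identity $\widetilde{\mu^{\sA}_{\phi,\psi}}(\Delta)=\mu^{\sA}_{\phi,\psi}(\Delta')$ for a Borel representative $\Delta'$; $\sigma$-additivity is then inherited from the completed scalar measures. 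You instead work at the operator level from the start: the observation that a positive operator with vanishing quadratic form is zero upgrades absolute continuity to the operator statement $\sA_S(E)=0$ for Borel $\nu_S$-null $E$, which makes $\tilde{\sA}_S(E):=\sA_S(B)$ manifestly well defined, and your disjoint inner-representative bookkeeping yields $\sigma$-additivity directly in the strong operator topology. Your route is more elementary (no Riesz reconstruction, no polarization of completed measures); the paper's route isolates a reusable scalar-measure lemma whose uniqueness part is in fact slightly stronger than what either statement needs --- it shows that \emph{any} positive-measure extension of $\langle\psi|\sA_S(\cdot)\psi\rangle$ agrees with the completion, so absolute continuity of the extension is automatic rather than assumed. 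The same strengthening is available to you at essentially no cost: in your uniqueness step the hypothesis $\sB_S \ll \overline{\nu_S}$ can be dropped, since monotonicity of a POVM already gives $0\leq \sB_S(E\setminus B)\leq \sB_S(Z)=\sA_S(Z)=0$ for a Borel null set $Z\supset E\setminus B$.
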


\begin{proof} See Appendix \ref{APPENDIXA}. \end{proof}

\begin{proposition}\label{PROPCOEES} Suppose that the family of POVMs $\{\sA_S\}_{S\in \cC_\bM^s}$, where  $\sA_S : \cB(S) \to \gB(\cH)$,  satisfies 
\begin{itemize}
\item[(a)]$\sA_S(\Delta) = \sA_{S'}(\Delta)$ for every  $\Delta  \in \cB(S) \cap \cB(S')$;
\item[(b)]  $\sA_S <\sp < \nu_S$. 
\end{itemize}
 Then, extending each POVM according to Proposition \ref{ACM}, we obtain a spacelike Cauchy localization observable $\{\tilde{\sA}_S\}_{S\in \cC_\bM^s}$ . 
\end{proposition}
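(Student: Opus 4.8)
The plan is to derive the two defining properties of a spacelike Cauchy localization observable in Definition \ref{GSL} directly from the hypotheses. Property (b) of that definition is immediate: hypothesis (b) of the Proposition gives $\sA_S <\sp< \nu_S$, so Proposition \ref{ACM} applies and produces, for each $S\in\cC^s_\bM$, the unique normalized $\cH$-POVM $\tilde{\sA}_S$ on $\cM(S)$ extending $\sA_S$ with $\tilde{\sA}_S <\sp< \overline{\nu_S}$. Hence the whole task reduces to checking the coherence condition (\ref{COHERENCE}) for the extended family $\{\tilde{\sA}_S\}_{S\in\cC^s_\bM}$.

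First I would record a topological fact used throughout. Since Cauchy surfaces are closed in $\bM$ (Remark \ref{RCSu}(b)), the intersection $S\cap S'$ is closed in $\bM$, hence Borel in both $S$ and $S'$; consequently, for a set $B\subset S\cap S'$ one has $B\in\cB(S)$ if and only if $B\in\cB(S')$, both being equivalent to membership in $\cB(S\cap S')$ for the subspace topology. In particular every Borel subset of $S\cap S'$ lies in $\cB(S)\cap\cB(S')$, so hypothesis (a) of the Proposition applies to all such sets.

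Next I would exploit the structure of the completion. Fix $\Delta\subset S\cap S'$ with $\Delta\in\cM(S)\cap\cM(S')$. Since $\Delta\in\cM(S)$, there are $B_1,B_2\in\cB(S)$ with $B_1\subset\Delta\subset B_2$ and $\nu_S(B_2\setminus B_1)=0$; intersecting $B_2$ with $S\cap S'$ (which preserves $\Delta\subset B_2$ because $\Delta\subset S\cap S'$) I may assume $B_1,B_2\subset S\cap S'$. Likewise, since $\Delta\in\cM(S')$, there are $B_1',B_2'\in\cB(S')$ contained in $S\cap S'$ with $B_1'\subset\Delta\subset B_2'$ and $\nu_{S'}(B_2'\setminus B_1')=0$. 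The key device is to pass to the common approximants $\hat B_1:=B_1\cup B_1'$ and $\hat B_2:=B_2\cap B_2'$, which by the topological remark lie in $\cB(S)\cap\cB(S')$, satisfy $\hat B_1\subset\Delta\subset\hat B_2$, and obey \emph{simultaneously} $\nu_S(\hat B_2\setminus\hat B_1)=0$ and $\nu_{S'}(\hat B_2\setminus\hat B_1)=0$, since $\hat B_2\setminus\hat B_1$ is contained both in $B_2\setminus B_1$ and in $B_2'\setminus B_1'$.

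Finally I would run a squeezing argument on each surface. By absolute continuity $\tilde{\sA}_S(\hat B_2\setminus\hat B_1)=0$ (a positive bounded operator whose diagonal form vanishes is the null operator), so additivity and monotonicity of the POVM give $\tilde{\sA}_S(\hat B_1)\leq\tilde{\sA}_S(\Delta)\leq\tilde{\sA}_S(\hat B_2)=\tilde{\sA}_S(\hat B_1)$, whence $\tilde{\sA}_S(\Delta)=\sA_S(\hat B_1)$; the identical reasoning on $S'$ yields $\tilde{\sA}_{S'}(\Delta)=\sA_{S'}(\hat B_1)$. As $\hat B_1\in\cB(S)\cap\cB(S')$, hypothesis (a) gives $\sA_S(\hat B_1)=\sA_{S'}(\hat B_1)$, and therefore $\tilde{\sA}_S(\Delta)=\tilde{\sA}_{S'}(\Delta)$, which is precisely (\ref{COHERENCE}). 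I expect the main obstacle to be exactly the mismatch between $\nu_S$ and $\nu_{S'}$ on $S\cap S'$: these measures, and hence the completions $\cM(S)$ and $\cM(S')$, need not coincide there, so a single Borel approximant cannot be transported between the two surfaces. Constructing the common pair $\hat B_1,\hat B_2$ that is null-negligible for both measures at once is the crux that lets coherence survive the completion procedure.
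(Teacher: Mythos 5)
Your proof is correct, but it handles the crux by a genuinely different device from the paper's own proof. The paper fixes only the $\cM(S)$-side decomposition $\Delta=\Delta_0\cup N$, $\Delta_0\in\cB(S)$, $N\subset N_0\in\cB(S)$ with $\nu_S(N_0)=0$ (all arranged inside $S\cap S'$), and transports that \emph{same} decomposition to $S'$: hypothesis (a) gives $\sA_{S'}(N_0)=\sA_S(N_0)=0$, so $N_0$---though possibly not $\nu_{S'}$-null---is still $\sA_{S'}$-null, whence $\langle\psi|\tilde{\sA}_{S'}(\Delta)\psi\rangle=\langle\psi|\sA_{S'}(\Delta_0)\psi\rangle=\langle\psi|\sA_S(\Delta_0)\psi\rangle=\langle\psi|\tilde{\sA}_S(\Delta)\psi\rangle$; coherence is thus invoked twice, on the core $\Delta_0$ and on the null cover $N_0$. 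You instead symmetrize: you combine decompositions from \emph{both} completions into a common sandwich $\hat{B}_1\subset\Delta\subset\hat{B}_2$ with $\hat{B}_1,\hat{B}_2\in\cB(S)\cap\cB(S')$ and gap null for $\nu_S$ and $\nu_{S'}$ simultaneously, and invoke (a) only once, on $\hat{B}_1$. What your route buys: each extension is evaluated only through decompositions admissible for its own reference measure, so you never need the fact---left implicit in the paper's phrase ``repeat everything with the same decomposition''---that $\langle\psi|\tilde{\sA}_{S'}(\cdot)\psi\rangle$ is also computed correctly by a decomposition whose exceptional set is merely $\sA_{S'}$-null rather than $\nu_{S'}$-null (justifying that requires monotonicity of the completed measure on $\cM(S')$). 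What the paper's route buys: brevity (no $S'$-side decomposition of $\Delta$ is ever needed) and a reusable principle, namely that under (a) and (b) a $\nu_S$-null Borel subset of $S\cap S'$ is automatically $\sA_{S'}$-null, with no comparison between the two induced measures required. Otherwise the skeletons coincide: reduction to the coherence condition via Proposition \ref{ACM}, the observation that Borel subsets of the closed set $S\cap S'$ lie in $\cB(S)\cap\cB(S')$ (which both proofs need, and which also supplies the small adjustment of intersecting the null cover with $S\cap S'$ that you carry out explicitly and the paper glosses over), and a final squeeze, which you perform at the operator level while the paper works per matrix element and concludes by polarization.
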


\begin{remark} {\em In the rest of the paper we shall use the same symbol $\sA_S$ also for the extension to $\cM(S)$ above denote by $\tilde{\sA}_S$. \hfill $\blacksquare$.}
\end{remark}

 In principle, it would be  possible to define from scratch  a  notion of Cauchy localization for smooth Cauchy surfaces $S$ {\em which are not necessarily spacelike} on the corresponding Borel $\sigma$-algebra $\cB(S)$. As a matter of fact, we shall define such type of families of POVMs $\sT^g$ in Sect.\ref{SECPVMC}.  However Theorem \ref{TONE1} below needs several general results about {\em spacelike} smooth Cauchy surfaces. The generalisation of the results presented  in this section to families of POVMs defined on generic Cauchy surfaces (also non-smooth) will be investigated elsewhere.

\subsection{The general causality condition for spacelike Cauchy localization observables}
We can state a natural generalization of CC. 

\begin{definition}\label{DEFC} {\em  
 If  $S,S'\in \cC_\bM^s$ and $\Delta \in \cM(S)$, its {\bf region of influence on $S'$} is
\beq
\Delta' := (J^+(\Delta) \cup J^-(\Delta)) \cap S' \label{2Delta'}\:.
\eeq 
A spacelike Cauchy localization  $\sA$ satisfies the {\bf general causality condition} ({\bf GCC}) if,  for every $S, S' \in \cC_\bM^s$,
\beq
\sA_{S'}(\Delta') \geq \sA_S(\Delta) \label{CC}
\eeq
when   $\Delta \in \cM(S)$
 satisfies
\beq
\Delta' \in \cM(S')\:. \label{CC2}
\eeq \hfill $\blacksquare$}
\end{definition}

 If $S =\Sigma$ and $S'= \Sigma'$ are spacelike {\em flat} Cauchy surfaces  of Minkowski spacetime,  the completed measures $\overline{\nu_\Sigma}$ and $\overline{\nu_{\Sigma'}}$ are nothing but the Lebesgue measures on $\Sigma$ and $\Sigma'$. In this situation, as already said,  it turns out that \cite{C0} $\Delta' \in \cM(\Sigma')$ whenever $\Delta \in \cM(\Sigma)$. We do not  known if this fact is general. It is however possible to prove the following   fact, whose  proof 
is inspired by analogous ideas and proofs in \cite{C2}.  To this end, we 
observe that $S\in \cC_\bM$ is a {\em Polish space} \cite{Cohn}. It can be proved in various ways, the most economic  way  is to use the diffeomorphism of $S$ and  $\bR^3$ according to Proposition \ref{propS}.
%
 
 \begin{proposition}Consider  $S,S'\in \cC_\bM^s$.  If $\Delta\in   \cB(S)$, then $\Delta' \in  \cM(S')$.
 \end{proposition}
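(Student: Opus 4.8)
The plan is to exhibit $\Delta'$ as an \emph{analytic} (Suslin) subset of the Polish space $S'$ and then to invoke the universal measurability of analytic sets. The only serious ingredient is that last descriptive–set–theoretic fact; everything else is bookkeeping that makes the projection structure of $\Delta'$ visible.

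First I would rewrite the region of influence in a form adapted to convexity. Since $\bM$ is convex, Proposition \ref{JJ} tells us that for a single point $p$ one has $q\in J^+(p)\cup J^-(p)$ if and only if $q-p$ is causal or zero, i.e. $(q-p)\cdot(q-p)\leq 0$. Setting $C:=\{v\in\sV : v\cdot v\leq 0\}$ (the closed cone of causal vectors together with the origin; closed because $v\mapsto v\cdot v$ is continuous), the definition (\ref{2Delta'}) becomes
\[
\Delta' = \{\, q\in S' : \exists\, p\in\Delta,\ (q-p)\cdot(q-p)\leq 0\,\}.
\]

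Next I would realize $\Delta'$ as a Borel projection. As observed just before the statement, $S$ and $S'$ are Polish spaces, hence so is the product $S\times S'$, and both carry their Borel $\sigma$-algebras. Embedding $S,S'\subset\bM$, I consider $R:=\{(p,q)\in S\times S' : (q-p)\cdot(q-p)\leq 0\}$. Since $(p,q)\mapsto (q-p)\cdot(q-p)$ is continuous, $R$ is closed in $S\times S'$, hence Borel. As $\Delta\in\cB(S)$, the cylinder $\Delta\times S'$ is Borel, so $R\cap(\Delta\times S')$ is Borel in $S\times S'$. Letting $\pi:S\times S'\to S'$ be the canonical projection, the displayed characterization gives exactly $\Delta'=\pi\big(R\cap(\Delta\times S')\big)$.

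Finally I would conclude by descriptive set theory. The projection of a Borel subset of a Polish space is an analytic set, so $\Delta'$ is an analytic subset of $S'$; and every analytic subset of a Polish space is universally measurable, i.e. measurable with respect to the completion of any $\sigma$-finite Borel measure \cite{Cohn}. The measure $\nu_{S'}$ of (\ref{muS}) satisfies $\nu_{S'}(A)\leq|A|$, hence is finite on compact sets, and $S'$ is $\sigma$-compact (being diffeomorphic to $\bR^3$ by Proposition \ref{propS}), so $\nu_{S'}$ is $\sigma$-finite. Therefore $\Delta'\in\overline{\cB(S')}^{\nu_{S'}}=\cM(S')$, as claimed. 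The point worth stressing — and the reason the completion $\cM(S')$ rather than $\cB(S')$ is unavoidable in the statement — is precisely that $\Delta'$ need not be Borel: a Borel projection is in general only analytic, and it is the passage to the completion that rescues measurability.
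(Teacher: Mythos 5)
Your proof is correct and follows essentially the same route as the paper's: both exhibit $\Delta'$ as the projection onto $S'$ of a Borel subset of the Polish space $S\times S'$ (your closed set $R$ intersected with the cylinder $\Delta\times S'$ is exactly the paper's preimage $\eta^{-1}(\Delta\times(-\infty,0])$), and both then invoke universal measurability of such projections, the paper via Proposition 8.4.4 of \cite{Cohn}. The only divergence is bookkeeping at the end: since the cited result is stated for \emph{finite} Borel measures, the paper localizes to the balls $B_n$ and patches the resulting completions together to handle the merely $\sigma$-finite measure $\nu_{S'}$, whereas you invoke the $\sigma$-finite form of universal measurability directly — equally valid, since a $\sigma$-finite Borel measure is equivalent to a finite one and equivalent measures yield the same completion of $\cB(S')$.
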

 
 \begin{proof} 
Define the following continuous  function
$$\eta : S \times S' \ni (p,q) \mapsto (p, {\bf g}(p-q,p-q)) \in S \times \bR\:.$$
The set $\Delta\times (-\infty, 0]$ is in $\cB(S \times \bR)$ trivially. Since $\eta$ is Borel measurable (as it is continuous), $$\Delta \times \Delta' = \eta^{-1}(\Delta\times (-\infty, 0])$$
is  in $\cB(S\times S')$.  A this point, we can use Proposition 8.4.4 in \cite{Cohn} obtaining that the projection of the above set onto $S'$ is {\em universally measurable}: it stays in the $\sigma$-algebra obtained by completing $\cB(S')$ with respect to any positive finite Borel measure  on $S'$.
To conclude we prove that $\Delta'$ must therefore belong to $\cM(S')$.  Upon the identification of $S'$ with $\bR^3$ through the global chart defined in (b) of Proposition \ref{propS}, consider the Borel finite measure  $\nu_n(E) = \int_{E} \chi_{B_n} \sqrt{1- |\nabla t_{S'}|^2} d^3x$, where $B_n$ is the open ball of radius $n\in \bN$ and center the origin of $\bR^3$.
$\nu_n|_{\cB(B_n)}$ is  equivalent to the $\bR^3$ Lebesgue measure restricted to $B_n$.
Since $\Delta'\cap B_n \in \overline{\cB(S')}^{\nu_n}$, due to Lemma \ref{LEMMAM0}, it must be  $\Delta'\cap B_n = E_n \cup N_n$ where $E_n \in \cB(B_n) \subset \cB(S')$ and $N_n \subset Z_n$ with $Z_n \in  \cB(B_n) \subset \cB(S')$ and $|Z_n|=0$, that is $\nu_{S'}(Z_n)=0$.
In summary, $\Delta' = \cup_{n\in \bN} E_n \cup N_n = ( \cup_{n\in \bN} E_n) \cup (\cup_{n\in \bN} N_n)$ with $\cup_{n\in \bN} E_n \in \cB(S')$ and $\cup_{n\in \bN} N_n\subset \cup_{n\in \bN} Z_n \in \cB(S')$, where $0\leq \nu_{S'}(\cup_{n\in \bN} Z_n ) \leq \sum_{n} \nu_{S'}(Z_n)=0$, so that $\nu_{S'}(\cup_{n\in \bN} Z_n )=0$. By construction $\Delta' \in \overline{\cB(S')}^{\nu_{S'}}=: \cM(S')$.
\end{proof}

\subsection{The general causality condition is valid for every spacelike Cauchy localization observable}

We pass to prove that GCC is actually valid for every spacelike Cauchy localization. The proof consists of some steps. The general structure of the demonstration enjoys many similarities  with some proofs originally introduced  in \cite{M2} and later generalized in \cite{C}. However, to extend these constructions to generic spacelike Cauchy surfaces,  we shall also take advantage of   some fundamental achievements   by Bernal and S\'anchez  \cite{BS,BS3} here specialized to Cauchy surfaces of  Minkowski spacetime.

\begin{lemma} \label{LEMMADEFCOMP} Consider a Minkowski chart on $\bM$ and consider a pair of spacelike  smooth $3$-dimensional embedded manifolds  $S,S'$ respectively described as the sets 
$$S := \{(x^0=t_S(\vec{x}), \vec{x})\:|\: \vec{x}\in \bR^3 \}\:,\quad  S' := \{(x^0=t_{S'}(\vec{x}), \vec{x})\:|\: \vec{x}\in \bR^3 \}$$
for a pair of smooth functions $t_S : \bR^3 \to \bR$ and  $t_{S'} : \bR^3 \to \bR$ (where necessarily $|\nabla t_S|<1$ and $|\nabla t_S|<1$ everywhere).
If $S\in \cC_\bM^s$ and $t_{S'}(\vec{x})= t_S(\vec{x})$ outside a compact set in $\bR^3$, then $S'\in \cC_\bM^s$  as well.
\end{lemma}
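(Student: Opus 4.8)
The plan is to deduce $S' \in \cC^s_\bM$ from the sandwiching criterion recalled in Remark \ref{RCSu}(d1) (Proposition 5.18 of \cite{BS3}): it suffices to exhibit two disjoint spacelike smooth Cauchy surfaces $S_-, S_+$ with $S_- \subset I^-(S_+)$ and $S' \subset I^+(S_-) \cap I^-(S_+)$, and to check that $S'$ is itself a closed connected spacelike smooth embedded $1$-codimensional submanifold of $\bM$. The spacelike, smooth and embedded character of $S'$ is part of the hypotheses (it is the graph of $t_{S'}$ with $|\nabla t_{S'}|<1$); it is connected because the projection $(x^0,\vec{x})\mapsto \vec{x}$ is a diffeomorphism of $S'$ onto $\bR^3$ by Proposition \ref{propS}(b); and it is closed in $\bM$ because the graph of a continuous function $\bR^3\to\bR$ is closed in $\bR^4$. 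Thus the only substantial task is to produce the two sandwiching Cauchy surfaces.

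For the sandwich I would use vertical time-translates of $S$. Since $t_{S'}-t_S$ is continuous and vanishes outside a compact set, the quantity $c := \sup_{\vec{x}\in\bR^3}|t_{S'}(\vec{x})-t_S(\vec{x})|$ is finite. For $a\in\bR$ let $S_a := \{(t_S(\vec{x})+a,\vec{x}) : \vec{x}\in\bR^3\}$ be the image of $S$ under the translation $p\mapsto p+a\,\partial_{x^0}$; this translation belongs to $\cP_+$ and hence is a time-orientation-preserving isometry, so each $S_a$ is again a spacelike smooth Cauchy surface. I would then fix parameters $a_- < -c$ and $a_+ > c$ and set $S_- := S_{a_-}$, $S_+ := S_{a_+}$. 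These are disjoint, being two distinct graphs over $\bR^3$, and they satisfy $S_- \subset I^-(S_+)$: over each $\vec{x}$ the displacement from the point of $S_-$ to that of $S_+$ is $(a_+-a_-)\partial_{x^0}\in\sV_+$, so the containment follows from Proposition \ref{JJ}.

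The remaining verification is the chronological containment $S' \subset I^+(S_-)\cap I^-(S_+)$, carried out pointwise via the convex description of $I^\pm$ in Proposition \ref{JJ}. Given $p=(t_{S'}(\vec{x}),\vec{x})\in S'$, the displacement from the point of $S_-$ over $\vec{x}$ to $p$ equals $(t_{S'}(\vec{x})-t_S(\vec{x})-a_-)\partial_{x^0}$, whose coefficient is $\geq -c-a_->0$, and the displacement from $p$ to the point of $S_+$ over $\vec{x}$ equals $(t_S(\vec{x})+a_+-t_{S'}(\vec{x}))\partial_{x^0}$, whose coefficient is $\geq a_+-c>0$; both are future-directed timelike vectors in $\sV_+$, so $p\in I^+(S_-)\cap I^-(S_+)$. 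With all hypotheses of Remark \ref{RCSu}(d1) in place, it yields that $S'$ is a Cauchy surface, and being spacelike and smooth it lies in $\cC^s_\bM$.

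I expect the only genuine input to be the Bernal--S\'anchez sandwiching theorem; once it is available the argument is essentially bookkeeping. The one point demanding care is ensuring that the time-translates really are Cauchy surfaces and that the \emph{strict} inequalities hold, so that the connecting vectors are future-directed \emph{timelike} rather than merely causal — this is why I work with $I^\pm$ in place of $J^\pm$ and choose $a_\pm$ strictly beyond $\pm c$.
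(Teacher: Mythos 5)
Your proof is correct and follows essentially the same route as the paper's: both verify that $S'$ is a closed connected spacelike smooth embedded hypersurface and then sandwich it between two vertical time-translates $S_a$ of $S$ (Cauchy surfaces since time translations are $\cP_+$-isometries), invoking Remark \ref{RCSu}(d1). Your only deviation is choosing $a_\pm$ strictly beyond $\pm c$ rather than the paper's $\pm 2T$, which is a slightly cleaner way to guarantee the strict chronological separation (and covers the degenerate case $t_S \equiv t_{S'}$ automatically).
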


\begin{proof}  $S'$ is diffeomorphic to $\bR^3$ through the projection map $S'\ni (x^0, t_{S'}(\vec{x})) \mapsto \vec{x}\in \bR^3$. Therefore $S'$ is connected and closed as it is the preimage of the closed set $\{0\}$
according to the continuous (actually smooth) map $f:  \bR^4 \ni (x^0, \vec{x}) \mapsto  x^0-t_{S'}(\vec{x}) \in \bR$.
  Let us define $T= \max \{ |t_S(\vec{x})- t_{S'}(\vec{x})| \:|\: \vec{x} \in \bR^3\}$. Notice that $T$ is finite because the function is continuous and compactly supported. Since the curves tangent to the $x^0$ axis are timelike, we have that $S' \subset I^+(S_{-2T}) \cap I^-(S_{2T})$, where 
$$S_\tau := \{(x^0=\tau+ t_S(\vec{x}), \vec{x})\:|\: \vec{x}\in \bR^3 \}\:.$$
Similarly, $S_{-2T} \subset I^-(S_{2T})$ and also $S_{\pm 2T}$ are spacelike Cauchy surfaces because they are obtained trough isometries out of the spacelike Cauchy surface $S$. According to (d1) in Remark \ref{RCSu}, $S'$ is a spacelike Cauchy surface. \end{proof}

\begin{proposition}  \label{PFIN1} 
If $\sA$ is a spacelike Cauchy localization,  then (\ref{CC2}) and (\ref{CC}) are true when $\Delta \subset S \in \cC_\bM$ is a spacelike  (embedded) compact smooth $3$-submanifold with boundary.
\end{proposition}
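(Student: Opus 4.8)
The plan is to settle the measurability requirement (\ref{CC2}) first and then to reduce the inequality (\ref{CC}) to a single use of coherence together with normalisation, performed on a well-chosen auxiliary Cauchy surface. For (\ref{CC2}): since $\Delta$ is compact, the sets $J^\pm(\Delta)$ are closed, so $\Delta' = (J^+(\Delta)\cup J^-(\Delta))\cap S'$ is closed, hence $\Delta'\in\cB(S')\subset\cM(S')$; moreover the causal shadow of a compact set on a Cauchy surface of a globally hyperbolic spacetime is compact, so $\Delta'$ is in fact compact (this can also be read off the preceding proposition). Throughout I fix a Minkowski chart realising $S,S',\Delta'$ as graphs over $\bR^3$ in the sense of Proposition \ref{propS}, and I write $K:=\pi(\Delta)$ and $K':=\pi(\Delta')$ for the spatial projections. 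Note $K\subset K'$, because the vertical segment over any $\vec x\in K$ is timelike and therefore meets $S'$ at a point of $J^+(\Delta)\cup J^-(\Delta)$.

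The geometric heart is the construction, for each small $\epsilon>0$, of a surface $\hat S_\epsilon\in\cC^s_\bM$ passing through $\Delta$ and coinciding with $S'$ far away. Concretely I would build a smooth $\tau_\epsilon:\bR^3\to\bR$ with $|\nabla\tau_\epsilon|<1$ everywhere, with $\tau_\epsilon=t_S$ on $K$ (so that $\Delta\subset\hat S_\epsilon:=\{(\tau_\epsilon(\vec x),\vec x)\mid\vec x\in\bR^3\}$) and with $\tau_\epsilon=t_{S'}$ outside the spatial $\epsilon$-neighbourhood $K'_\epsilon$ of $K'$ (so that $\hat S_\epsilon$ contains the part of $S'$ over $\bR^3\setminus K'_\epsilon$). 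Lemma \ref{LEMMADEFCOMP} then guarantees $\hat S_\epsilon\in\cC^s_\bM$, since $\tau_\epsilon$ differs from $t_{S'}$ only on the compact set $K'_\epsilon$.

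With $\hat S_\epsilon$ at hand the inequality is immediate. The two sets $\Delta$ and $S'\cap\{\vec x\notin K'_\epsilon\}$ are disjoint subsets of $\hat S_\epsilon$, as their projections $K$ and $\bR^3\setminus K'_\epsilon$ are disjoint ($K\subset K'\subset K'_\epsilon$). Coherence (\ref{COHERENCE}) gives $\sA_{\hat S_\epsilon}(\Delta)=\sA_S(\Delta)$ and $\sA_{\hat S_\epsilon}(S'\cap\{\vec x\notin K'_\epsilon\})=\sA_{S'}(S'\cap\{\vec x\notin K'_\epsilon\})$, while additivity and normalisation give $\sA_{\hat S_\epsilon}(\Delta)+\sA_{\hat S_\epsilon}(S'\cap\{\vec x\notin K'_\epsilon\})\le\sA_{\hat S_\epsilon}(\hat S_\epsilon)=I$. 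Writing $\Delta'_\epsilon:=S'\cap\{\vec x\in K'_\epsilon\}$ and using $\sA_{S'}(\Delta'_\epsilon)=I-\sA_{S'}(S'\cap\{\vec x\notin K'_\epsilon\})$, this reads $\sA_S(\Delta)\le\sA_{S'}(\Delta'_\epsilon)$. Since $\bigcap_{\epsilon>0}K'_\epsilon=K'$, we have $\Delta'_\epsilon\downarrow\Delta'$, so by $\sigma$-additivity (continuity from above) $\sA_{S'}(\Delta'_\epsilon)\to\sA_{S'}(\Delta')$ in the strong operator topology, and (\ref{CC}) follows.

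The main obstacle is exactly the existence of the spacelike interpolating graph $\tau_\epsilon$, and in particular the impossibility of taking $K'_\epsilon=K'$. The boundary $\partial K'$ of the causal shadow is traced by null rays emanating from $\Delta$, and therefore forces average slope exactly $1$ across the transition annulus $K'\setminus K$, which is incompatible with the strict bound $|\nabla\tau_\epsilon|<1$ needed for spacelikeness; this is already visible in the model case $t_S\equiv 0$, $t_{S'}\equiv T$, $\Delta=\overline{B(0,R)}$, where $K'=\overline{B(0,R+T)}$ and the radial height difference $T$ must be covered over radial distance $T$. Enlarging $K'$ to $K'_\epsilon$ is precisely what creates the room to keep the interpolating slope strictly below the light-cone slope, which is why the limit $\epsilon\downarrow 0$ cannot be avoided and why monotone continuity of the POVM is invoked at the end. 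The remaining care lies in making $\tau_\epsilon$ smooth, matching $t_S$ and $t_{S'}$ to all orders at the two edges of the transition region and strictly spacelike throughout; the Bernal--S\'anchez results recalled in Remark \ref{RCSu}, together with Lemma \ref{LEMMADEFCOMP}, are the natural tools for this.
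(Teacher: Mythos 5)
Your overall strategy coincides with the paper's: build a spacelike Cauchy surface that contains $\Delta$ and coincides with $S'$ outside a compact neighbourhood of $\Delta'$, apply the coherence condition together with additivity and normalisation to get $\sA_S(\Delta)\leq \sA_{S'}(\Delta'_\epsilon)$, and conclude by continuity from above as the neighbourhood shrinks to $\Delta'$. That measure-theoretic part of your argument is correct, including your treatment of (\ref{CC2}) (the paper phrases the inequality as two decompositions of the identity, over the interpolating surface and over $S'$, and takes a limit over a decreasing sequence of open sets $R'_n\downarrow\Delta'$ — the same computation as your rearrangement and your $\epsilon\downarrow 0$ limit).

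The genuine gap is that the existence of the interpolating surface $\hat S_\epsilon$ — which you yourself call ``the geometric heart'' and ``the main obstacle'' — is asserted rather than proved, and the tools you point to do not apply as stated. The Bernal--S\'anchez result (d2) of Remark \ref{RCSu} produces a spacelike Cauchy surface through a \emph{compact} spacelike acausal smooth submanifold with boundary, whereas the set your $\hat S_\epsilon$ must contain, namely $\Delta\cup\bigl(S'\cap \pi^{-1}(\bR^3\setminus K'_\epsilon)\bigr)$, is non-compact; and Lemma \ref{LEMMADEFCOMP} only certifies that an \emph{already constructed} smooth spacelike graph agreeing with a Cauchy surface outside a compact set is itself Cauchy — it does not produce the graph. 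The paper closes exactly this gap with an intermediate step missing from your sketch: it chooses a compact annulus $\overline{dR'_n}\subset S'$ surrounding, but disjoint from, $J^+(\Delta)\cup J^-(\Delta)$; the union $\Delta\cup\overline{dR'_n}$ \emph{is} a compact spacelike acausal smooth $3$-submanifold with boundary, so (d2) yields a spacelike Cauchy surface $S''_n$ containing it; since $S''_n$ and $S'$ both contain the annulus, one can glue $S''_n$ (inside the annulus) to $S'$ (outside it) into a smooth spacelike graph, and only then does Lemma \ref{LEMMADEFCOMP} apply to show the glued surface is a spacelike Cauchy surface. Without this annulus construction — or a full substitute, e.g.\ a Lipschitz-extension-plus-mollification argument exploiting that $S'\setminus\Delta'$ is spacelike separated from $\Delta$ uniformly on compacta, which your proposal also does not supply — the key object of your proof does not exist on paper; your closing remarks about slopes and about matching $t_S$ and $t_{S'}$ to all orders correctly describe the difficulty but do not resolve it.
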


\begin{proof}  We henceforth choose a Minkowski global chart and we describe $S$ and $S'$  through the global chart as in  (b) of Proposition \ref{propS}.  In particular  $\psi' : S' \ni p \mapsto  \vec{x}(p) \in \bR^3$  will be denote the said   global chart on $S'$.\\
As $\Delta$ is compact, it belongs to $\cM(S)$. Furthermore
$\Delta' = (J^+(\Delta) \cup J^-(\Delta)) \cap S' \label{Delta'}$ is compact as well  (Corollary A.5.4 \cite{BGP}), so that it belongs to $\cM(S')$.
Next we consider a sequence of open sets $F'_n \subset S'$ such that (a) $ \partial F'_n \cap \Delta' = \emptyset$,  (b) $\overline{F'_{n+1}}\subset F'_n$, and (c) $\cap_{n\in \bN} F'_n = \Delta'$. This sequence is recursively  constructed by extracting a finite subcovering of a covering of $\Delta'$  made of coordinate open balls $\psi'^{-1}(B_{r_n}(\psi'(p_i)))$ of radius $r_n$, $p_i \in \Delta'$, where $r_n \to 0$ for $n\to +\infty$, and
$r_{n+1} < \mbox{dist}(\Delta', \partial F'_n)$, where $\mbox{dist}(A,B) := \inf_{(x,y)\in A\times B} \mbox{dist}(\psi'(x),\psi'(y))$ for $A,B\subset S'$. The latter distance being the standard distance of couples of points in $\bR^3$ (notice that $ \mbox{dist}(\Delta', \partial F'_n)$  is strictly positive and finite by construction).  The boundary of each $F'_n \subset S'$ is a $C^0$ submanifold which is also smooth up to a zero-measure subset of the boundary of the balls made of part of the intersections of the boundaries of  a finite number of open balls. Each
 $\partial F'_n$  can be therefore slightly locally smoothed  in order to  transform  each $F'_n$ to a corresponding open set  $R'_n$, such that  
$\overline{R'_n}$ is a compact  smooth submanifold with boundary of the spacelike Cauchy surface $S'$ and, as before, $\Delta' \cap  \partial R'_n=\emptyset$,
(b) $\overline{R'_{n+1}}\subset R'_n$, and (c) $\cap_{n\in \bN} R'_n = \Delta'$.
Let us now focus on the further family of open relatively compact sets
$$dR'_n := R'_{n}\setminus \overline{R'_{n+1}}\:.$$
$\overline{dR'_n}$ is a compact smooth submanifold with boundary of $S'$ and  $\overline{dR'_n}$ has no intesection with $J^+(\Delta) \cup J^-(\Delta)$ by construction. Since  $\Delta$ is a smooth submanifold with boundary of the spacelike Cauchy surface $S$, we conclude that  $\Delta \cup \overline{dR_n'}$ is a (non connected) spacelike and acausal compact smooth $3$-submanifold with boundary  of $\bM$. According to (d2) in Remark \ref{RCSu},
 there is a spacelike Cauchy surface $S''_n$ which includes $\Delta \cup \overline{dR_n'}$.
Finally consider the set $$S_n:= \{(x^0 = t_{S_n}(\vec{x}), \vec{x})\:|\: \vec{x}\in \bR^3 \}\:,$$
where  the map $t_{S_n}: \bR^3 \to \bM$ is constructed as follows:
$$t_{S_n}(\vec{x}) := \begin{cases} t_{S''_n}(\vec{x})\quad \mbox{if $\vec{x} \in \psi'(R'_n)$,} \\
 t_{S'}(\vec{x})\quad\:  \mbox{if $\vec{x} \not\in \psi'(R'_n)$.} \end{cases}$$
This map is smooth by construction and $|\nabla t_{S_n}(\vec{x})| <1 $ everywhere, in particular $dx^0 - dt_{S_n} \neq 0$ everywhere,  so that the $S_n$ of $t_{S_n}$ is a spacelike smooth submanifold. This is a spacelike compact deformation of the Cauchy surface $S'$ as it coincides to it outside the compact $\overline{R'}_n$. According to Lemma \ref{LEMMADEFCOMP}, $S_n$ is a spacelike Cauchy surface as well.  \\
To go on, we pass to consider the two normalized $\cH$-POVM $\sA_S$ and $\sA_{S'}$. We can decompose $$I=\sA_{S_n}(S_n)= \sA_{S_n}(\Delta) + \sA_{S_n}\left({\psi''_n}^{-1}\circ\psi'\left(R'_n\right)\setminus\Delta\right) + \sA_{S_n}(S_n \setminus {\psi''_n}^{-1}\circ\psi'\left(R'_n\right))$$
where $\psi''_n : S''_n \to \bR^3$ is the global chart on $S''_n$ according to (b) of Proposition \ref{propS}. Similarly,
$$I=\sA_{S'}(S')= \sA_{S'}(R'_n) + \sA_{S'}(S' \setminus R'_n)\:.$$
We know that $S_n \setminus {\psi''_n}^{-1}\circ\psi'\left(R'_n\right) =S' \setminus R'_n$ because $S'$ and $S_n$ coincide thereon. As a consequence of the coherence condition (\ref{COHERENCE}),
$ \sA_{S'}(S' \setminus R'_n)=\sA_{S_n}({S_n \setminus\psi''_n}^{-1}\circ\psi'( R'_n))$ so that,
$$\sA_{S_n}(\Delta) + \sA_{S_n}({\psi''_n}^{-1}\circ\psi'(R'_n)\setminus \Delta)= \sA_{S'}(R'_n)$$
Using again the coherence condition with $\Delta \subset S \cap S_n$, we have obtained that
$$\sA_{S}(\Delta) + \sA_{S_n}({\psi''_n}^{-1}\circ\psi'(R_n\setminus \Delta))= \sA_{S'}(R'_n)$$
so that, for every $\psi\in \cH$, we achieve the relation between positive finite  measures
$$\langle \psi| \sA_{S}(\Delta)\psi\rangle +\langle \psi| \sA_{S_n}({\psi''_n}^{-1}\circ\psi'(R'_n\setminus \Delta))\psi\rangle= \langle \psi|\sA_{S'}(R'_n)\psi\rangle\:.$$
At this juncture, 
$\langle \psi|\sA_{S'}(R'_n)\psi\rangle <+\infty$,
 $\cap_{n\in \bN} R'_n = \Delta'$, $R'_{n+1}\subset R'_n$,  and external continuity yield
 $$\langle \psi| \sA_{S}(\Delta)\psi\rangle +\inf_{n\in \bN}\langle \psi| \sA_{S_n}({\psi''_n}^{-1}\circ\psi'(R'_n\setminus \Delta))\psi\rangle= \inf_{n\in \bN}\langle \psi|\sA_{S'}(R'_n)\psi\rangle= \langle \psi|\sA_{S'}(\Delta')\psi\rangle\:.$$
As $\inf_{n\in \bN}\langle \psi| \sA_{S_n}({\psi''_n}^{-1}\circ\psi'(R'_n\setminus \Delta))\psi\rangle \geq 0$, we have proved that
 $$\langle \psi| \sA_{S}(\Delta)\psi\rangle \leq  \langle \psi|\sA_{S'}(\Delta')\psi\rangle\:.$$
This is the thesis by arbitrariness of $\psi \in \cH$. \end{proof}

\begin{proposition} \label{PGY}
If $\sA$ is a spacelike Cauchy localization,  then (\ref{CC2}) and (\ref{CC}) are true when $\Delta \subset S \in \cC^s_\bM$ is open.
\end{proposition}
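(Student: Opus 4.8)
The plan is to reduce the open case to the compact case of Proposition \ref{PFIN1} by an interior exhaustion of $\Delta$ together with continuity from below of the finite positive measures $\langle\psi|\sA_S(\cdot)\psi\rangle$. Condition (\ref{CC2}) is immediate: an open set is Borel, so $\Delta'\in\cM(S')$ follows at once from the measurability proposition proved above, applied to $\Delta\in\cB(S)$.

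For the inequality (\ref{CC}), I would first identify $S$ with $\bR^3$ through the global chart $\psi$ of Proposition \ref{propS}(b), so that $\Delta$ corresponds to the open set $U:=\psi(\Delta)\subset\bR^3$. The next step is to exhaust $U$ by compact smooth domains: choosing a smooth proper exhaustion function $\phi:U\to[0,\infty)$ and, by Sard's theorem, an increasing sequence of regular values $c_n\to+\infty$, the sublevel sets $C_n:=\phi^{-1}([0,c_n])$ are compact smooth $3$-submanifolds with boundary satisfying $C_n\subset C_{n+1}$ and $\bigcup_n C_n=U$. Pulling back, $K_n:=\psi^{-1}(C_n)\subset\Delta$ is a compact smooth $3$-submanifold with boundary of $S$, hence of $\bM$, and it is automatically spacelike, being contained in the spacelike surface $S$. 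Therefore Proposition \ref{PFIN1} applies to each $K_n$ and yields $\sA_{S'}(K_n')\geq\sA_S(K_n)$, where $K_n':=(J^+(K_n)\cup J^-(K_n))\cap S'$ is compact and thus lies in $\cM(S')$.

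It then remains to pass to the limit. From $K_n\subset K_{n+1}$ and monotonicity of $J^\pm$ one gets $K_n'\subset K_{n+1}'$, and using the explicit description of causal sets in Proposition \ref{JJ} I would verify that $\bigcup_n K_n'=\Delta'$: if $q\in\Delta'$ then $q\in S'$ and $q\in J^+(p)\cup J^-(p)$ for some $p\in\Delta=\bigcup_n K_n$, so $p\in K_N$ for some $N$ and hence $q\in K_N'$, the reverse inclusion being clear. Likewise $K_n\uparrow\Delta$. Continuity from below of the finite positive measures $\langle\psi|\sA_S(\cdot)\psi\rangle$ and $\langle\psi|\sA_{S'}(\cdot)\psi\rangle$ then gives $\langle\psi|\sA_S(K_n)\psi\rangle\to\langle\psi|\sA_S(\Delta)\psi\rangle$ and $\langle\psi|\sA_{S'}(K_n')\psi\rangle\to\langle\psi|\sA_{S'}(\Delta')\psi\rangle$, so letting $n\to\infty$ in $\langle\psi|\sA_{S'}(K_n')\psi\rangle\geq\langle\psi|\sA_S(K_n)\psi\rangle$ yields $\langle\psi|\sA_{S'}(\Delta')\psi\rangle\geq\langle\psi|\sA_S(\Delta)\psi\rangle$ for all $\psi\in\cH$, which is (\ref{CC}) by arbitrariness of $\psi$.

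The step I expect to be the most delicate is producing the approximating sets $K_n$ as genuine compact smooth submanifolds with boundary contained in $\Delta$, since Proposition \ref{PFIN1} requires exactly this regularity; invoking a proper exhaustion function together with Sard's theorem (rather than, say, unions of closed coordinate balls, which do not directly provide a smooth boundary) is what makes this work cleanly. The remaining points — the identity $\bigcup_n K_n'=\Delta'$ and the measurability of each $K_n'$, which is compact hence in $\cM(S')$ — are routine consequences of Propositions \ref{JJ} and \ref{PFIN1}.
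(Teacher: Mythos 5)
Your proof is correct, but it takes a genuinely different route from the paper's, and the difference is worth recording. For (\ref{CC2}) the paper does not invoke the Borel-measurability proposition as you do; it observes directly that $J^{\pm}(\Delta)$ are open in $\bM$ when $\Delta\subset S$ is open, so that $\Delta'$ is itself open in $S'$ --- slightly more information, but both arguments are valid. For (\ref{CC}) the paper does not exhaust $\Delta$: it applies a Vitali-type covering theorem (Theorem 1.26 in \cite{EG}) to $\psi(\Delta)\subset\bR^3$ to obtain countably many \emph{pairwise disjoint} closed coordinate balls whose union fills $\Delta$ only up to a set of $\overline{\nu_S}$-measure zero, and then it must invoke the absolute-continuity requirement (b) of Definition \ref{GSL}, $\sA_S <\sp< \overline{\nu_S}$, to conclude that the discarded null remainder carries no probability; after that it applies Proposition \ref{PFIN1} to the finite unions $\Delta_N$ and passes to the limit by inner continuity, exactly as you do. Your interior exhaustion by sublevel sets $K_n=\psi^{-1}(\phi^{-1}([0,c_n]))$ of a proper exhaustion function at Sard-regular values genuinely covers all of $\Delta$, so nothing is thrown away and you need only $\sigma$-additivity (continuity from below) plus the coherence condition hidden inside Proposition \ref{PFIN1}. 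This buys something real: hypothesis (b) of Definition \ref{GSL} --- which the paper explicitly flags as playing a ``crucial technical role'' in establishing Theorem \ref{TONE1}, and which enters the paper's chain of reasoning precisely at this proposition --- is not used at all in your argument, so the open-set case (and hence, following the paper's subsequent steps, the compact and general cases) holds without it. What the paper's route buys in exchange is concreteness: the approximants are coordinate balls, matching the ball-based constructions in the proof of Proposition \ref{PFIN1}, and no differential-topological input (exhaustion functions, Sard) is needed. One small correction to your closing remark: disjoint closed balls \emph{do} have smooth boundaries, and finite disjoint unions of them are perfectly good compact smooth submanifolds with boundary; the reason that approach cannot stand alone is not lack of smoothness but the fact that countably many disjoint balls can fill an open set only up to a Lebesgue-null set, which is exactly why the paper needs absolute continuity where you do not.
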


\begin{proof}  Consider $S,S' \in \cC^s_\bM$ and refer to the definition (\ref{2Delta'}) of $\Delta'$.  Generally speaking, if $\Delta \subset S$ is open set, from Prop. \ref{JJ} and the fact that $S$ is a smooth submanifold (a $C^0$ embedded submanifold is enough actually), it is not difficult to prove that
$J^\pm(\Delta)$ are open sets\footnote{This is generally false in other spacetimes!} in $\bM$. Hence  $\Delta' \subset S'$ is open as well. Therefore  (\ref{CC2})  is true in the considered case.  \\
Let us denote by $\psi : S \to \bR^3$ the global chart on $S$ constructed through a function $t_S$ in a Minkowski chart $x^0,\vec{x}$ as prescribed in (b) of Proposition \ref{propS}. We shall use this identification to see the points of $S$ as points of $\bR^3$. This identification enjoys an important property.
The {\em completed} measures $\overline{\nu}_S$ and the standard Lebesgue measure on $\bR^3$ turn out to be {\em equivalent} through the identification $\psi$ as a straightforward consequence of the fact that, in  (\ref{muS}), the map $\bR^3 \ni \vec{x} \mapsto \sqrt{1-|\nabla t_S(\vec{x})|^2}$ is continuous and strictly positive. The impact of this remark relies upon the following known result  (Theorem  1.26 in \cite{EG}).  If $A \subset \bR^n$ is open,  $\delta>0$, there exist a countable collection $\{B_j\}_{j=1,2,\ldots }$  of disjoint (non-empty) closed balls $B_j \subset A$ with diameter less than  $\delta$, such that $$  \left|A \setminus \bigcup_{j\in \bN} B_j\right|  =0\:,$$
the bar denoting the Lebesgue measure on $\bR^3$.
Passing to $S$, it means that (for every $\delta>0$) there exist a countable  family of pairwise disjoint  compact smooth  spacelike submanifolds with boundary $\Delta_j := \psi^{-1}(B_j) \subset \Delta \subset S$, such that 
$$
\overline{\nu}_S \left( \Delta \setminus \bigcup_{j\in \bN} \Delta_j\right) = 0\:.
$$
Therefore, since the positive measure  $\langle \psi|\sA_S(\cdot)\psi \rangle$ is finite and is absolutely continuous with respect to $\nu_S$ according to Prop. \ref{ACM}:
\beq
\langle \psi|\sA_S(\Delta)\psi \rangle = \left\langle \psi \left|\sA_S\left(   \bigcup_{j\in \bN} \Delta_j\right) \right. \psi\right\rangle\:,\label{NEW}
\eeq
if $\psi\in \sH$.
If we define the compact set $\Delta_N := \bigcup_{j=0}^N \Delta_j \subset \Delta$ and $\Delta'_N$ correspondingly, we can apply Proposition \ref{PFIN1} obtaining that, for every $\psi \in \cH$,
\beq
\langle \psi|\sA_{S'}(\Delta'_N) \psi\rangle  \geq \langle \psi|\sA_{S'}(\Delta_N) \psi\rangle\:.
\eeq
As both sequences are non-decreasing, the limit for $N\to +\infty$ exist and, using inner continuity,  (\ref{NEW}), and monotony, we find
$$\langle \psi|\sA_{S'}(\Delta') \psi\rangle  \geq \lim_{N\to +\infty}\langle \psi|\sA_{S'}(\Delta'_N) \psi\rangle  \geq
\lim_{N\to +\infty} \langle \psi|\sA_{S'}(\Delta_N) \psi\rangle =  \langle \psi|\sA_{S'}(\Delta) \psi\rangle\:, \quad \forall \psi \in \cH\:.$$
This is the thesis by arbitrariness of $\psi \in \cH$.\end{proof}

\begin{proposition} \label{PGY2}
If $\sA$ is a spacelike Cauchy localization,  then (\ref{CC2}) and (\ref{CC}) are true when $\Delta \subset S \in \cC_\bM$ is compact.
\end{proposition}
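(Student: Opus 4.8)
The plan is to reduce the general compact case to the open case already settled in Proposition~\ref{PGY}, by an outer approximation combined with the continuity from above of finite measures. (Throughout, $S,S'\in\cC_\bM^s$, as is needed for $\sA_S,\sA_{S'}$ to be defined.)

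First I would dispose of the measurability requirement (\ref{CC2}). Since $\Delta\subset S$ is compact and $S'$ is a spacelike Cauchy surface, the region of influence $\Delta'=(J^+(\Delta)\cup J^-(\Delta))\cap S'$ is compact (Corollary A.5.4 of \cite{BGP}, exactly as invoked in the proof of Proposition~\ref{PFIN1}); being compact it is Borel, so $\Delta'\in\cB(S')\subset\cM(S')$ and (\ref{CC2}) holds. Next, using that $S$ is a locally compact metric space (via the diffeomorphism with $\bR^3$ of Proposition~\ref{propS}) and that $\Delta$ is compact, I would fix a decreasing sequence of relatively compact open sets $U_n\subset S$ with $\Delta\subset U_{n+1}\subset\overline{U_{n+1}}\subset U_n$ and $\bigcap_n\overline{U_n}=\Delta$. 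For each $n$, $U_n$ is open, so Proposition~\ref{PGY} applies and yields $\sA_{S'}(U_n')\geq\sA_S(U_n)$, where $U_n':=(J^+(U_n)\cup J^-(U_n))\cap S'$ is open (hence Borel). Monotonicity of the POVM and $\Delta\subset U_n$ give $\sA_S(U_n)\geq\sA_S(\Delta)$, so that for every $\psi\in\cH$
\[
\langle\psi|\sA_{S'}(U_n')\psi\rangle\;\geq\;\langle\psi|\sA_S(U_n)\psi\rangle\;\geq\;\langle\psi|\sA_S(\Delta)\psi\rangle\,.
\]

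The crux, and the step I expect to be the main obstacle, is to show $\bigcap_n U_n'=\Delta'$, so that the left-hand side can be passed to the limit. The inclusion $\Delta'\subset\bigcap_n U_n'$ is immediate from $\Delta\subset U_n$ and monotonicity of $J^\pm$. For the reverse inclusion I would set $V_n:=\overline{U_n}$ and note $U_n'\subset V_n':=(J^+(V_n)\cup J^-(V_n))\cap S'$. Given $q\in\bigcap_n V_n'$, infinitely many $n$ satisfy, say, $q\in J^+(V_n)$; by Proposition~\ref{JJ}(b) there are $p_n\in V_n$ with $q-p_n\in\overline{\sV_+}$. Since all such $p_n$ lie in the compact set $V_1$, a subsequence converges to some $p$; as the $V_m$ are closed and nested, $p\in\bigcap_m V_m=\Delta$, and closedness of the cone $\overline{\sV_+}$ gives $q-p\in\overline{\sV_+}$, i.e. $q\in J^+(\Delta)$, whence $q\in\Delta'$. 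This proves $\bigcap_n V_n'\subset\Delta'$, and therefore $\bigcap_n U_n'=\bigcap_n V_n'=\Delta'$. This compactness/limit argument, which rests on $\Delta$ being compact and on $\overline{\sV_+}$ being closed, is the delicate point; everything else is a routine adaptation of Proposition~\ref{PGY}.

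Finally, since $\{U_n'\}$ is a decreasing sequence of Borel sets with $\bigcap_n U_n'=\Delta'$ and $\langle\psi|\sA_{S'}(\cdot)\psi\rangle$ is a finite positive measure, continuity from above gives $\langle\psi|\sA_{S'}(U_n')\psi\rangle\to\langle\psi|\sA_{S'}(\Delta')\psi\rangle$. Combining with the displayed inequality,
\[
\langle\psi|\sA_{S'}(\Delta')\psi\rangle=\inf_n\langle\psi|\sA_{S'}(U_n')\psi\rangle\geq\langle\psi|\sA_S(\Delta)\psi\rangle\qquad\forall\psi\in\cH\,.
\]
A standard polarization argument and the arbitrariness of $\psi$ in the complex Hilbert space $\cH$ then yield $\sA_{S'}(\Delta')\geq\sA_S(\Delta)$, which is (\ref{CC}), completing the proof.
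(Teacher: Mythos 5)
Your proof is correct and follows essentially the same route as the paper's: outer approximation of the compact $\Delta$ by a decreasing sequence of open sets, application of Proposition \ref{PGY}, continuity from above of the finite measures $\langle\psi|\sA_{S'}(\cdot)\psi\rangle$, and a compactness argument based on the closedness of the causal cone to identify $\bigcap_n U_n'$ with $\Delta'$. The only cosmetic differences are that the paper constructs its open sets $F_n$ explicitly as finite unions of coordinate balls with radii tending to zero and uses outer continuity also on the $S$ side, where you more simply invoke monotonicity of the POVM.
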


\begin{proof}  (\ref{CC2})  is valid just because,  as already observed in the proof of Proposition \ref{PFIN1}, $\Delta'$ is compact if $\Delta$ is. So we have to establish the validity of
 (\ref{CC}) only.   We refer to a global chart $\psi : S \to \bR^3$  on $S$ constructed out of the map $t_S$ in a Minkowski chart $x^0,\vec{x}$ as in  (b) of Proposition \ref{propS}.     If $\Delta \subset S$ is compact  we can construct a sequence of open sets $F_n \supset \Delta$ such that $F_{n+1} \subset F_n$ and $\cap_{n\in \bN} F_n = \Delta$. $F_n$ is the finite union of coordinate balls of radius $r_n \to 0$ such that $r_{n+1}<\mbox{dist}(\Delta, F_n)$ where we adopted the same notation as in the proof of Prop. \ref{PFIN1}.    In view of Prof \ref{PGY2}, taking account of the external continuity of the involved  measures 
$$\langle \psi| \sA_S(\Delta) \psi \rangle = \lim_{n\to +\infty}\langle \psi| \sA_S(F_n) \psi \rangle \leq \lim_{n\to +\infty}\langle \psi| \sA_{S'}(F'_n) \psi \rangle = \langle \psi| \sA_{S'}(\cap_{n\in \bN} F'_n) \psi \rangle\:.$$
The proof ends if proving that $\cap_{n\in \bN} F'_n= \Delta'$. In, fact, obviously $\cap_{n\in \bN} F_n\supset \Delta'$.
On the other hand, if $p \in \cap_{n\in \bN} F'_n$ then there is a causal segment from $p$ to $q_n\in F_n \subset \overline{F_0}$ which is compact and stays in $S$ because it is closed. As a consequence there is a subsequence $q_{n_k} \to q \in  \overline{F_0}$ for $k\to +\infty$. It is easy to see that 
$q\in \Delta$ (otherwise $q$ would stay at some distance from the compact $\Delta$ and this is not admitted in view of the very construction of the sets $F_{n_k}\ni q_{n_k} \to q$).  The limit $p-q$ of the causal segments $p-q_n$ is still causal or $0$ (as the set of causal vectors and $\{0\}$ is closed). We conclude that $p \in (J^+(\Delta) \cup J^-(\Delta))\cap S = \Delta'$. We have established that 
$\cap_{n\in \bN} F_n\subset \Delta'$, so that $\cap_{n\in \bN} F_n= \Delta'$ concluding the proof.
\end{proof} 

We are in a position to prove the first main result of this work:  spacelike Cauchy localizations always satisfy the 
Cauchy causality requirement in Definition \ref{DEFC}. Thus, in particular, also Castrigiano's causal requirement when restricting to spacelike flat Cauchy surfaces.

\begin{theorem} \label{TONE1} Let $\sA := \{\sA_S\}_{S\in \cC^s_\bM}$ be a spacelike Cauchy localization observable, then it satisfies the general causal condition  in  Definition \ref{DEFC}.\\
More generally, if $S,S' \in \cC_\bM^s$,   $\Delta \in \cM(S)$, and $\psi \in \cH$, then
\beq \langle \psi| \sA_S(\Delta) \psi\rangle \leq \sup \left\{ \langle \psi| \sA_{S'}(K') \psi\rangle\:|\: K \subset \Delta, \:\:K\:  \mbox{compact}\right\}\label{F1}\:,\eeq
even if $\Delta' \not \in \cM(S')$.  (Above $K' := (J^+(K) \cup J^-(K)) \cap S'$ as usual.)
\end{theorem}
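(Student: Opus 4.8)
The plan is to reduce the general statement to the compact case, which has already been settled in Proposition \ref{PGY2}, by invoking inner regularity of finite Borel measures. Propositions \ref{PFIN1}, \ref{PGY} and \ref{PGY2} have carried out the geometric heavy lifting—interpolating spacelike Cauchy surfaces via the Bernal--S\'anchez results—so for an arbitrary $\Delta\in\cM(S)$ only a measure-theoretic approximation remains.

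First I would fix $\psi\in\cH$ and consider the finite positive measure $\mu_\psi:=\langle\psi|\sA_S(\cdot)\psi\rangle$ on $\cM(S)$, whose total mass is $\langle\psi|\sA_S(S)\psi\rangle=\|\psi\|^2<+\infty$ since $\sA_S$ is normalized. As $S$ is diffeomorphic to $\bR^3$ by (b) of Proposition \ref{propS}, it is a Polish space, so every finite Borel measure on $S$ is Radon and in particular inner regular by compacta; this property passes to the completed $\sigma$-algebra $\cM(S)=\overline{\cB(S)}^{\nu_S}$, because any $\Delta\in\cM(S)$ differs from a Borel set by a $\overline{\nu_S}$-null set and $\sA_S <\sp< \overline{\nu_S}$ by (b) of Definition \ref{GSL}. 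Hence
\[
\langle\psi|\sA_S(\Delta)\psi\rangle=\sup\{\langle\psi|\sA_S(K)\psi\rangle\:|\:K\subset\Delta,\ K\ \mbox{compact}\}\:.
\]

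Next, for each compact $K\subset\Delta$ (so $K\in\cB(S)$) Proposition \ref{PGY2} applies and yields $\langle\psi|\sA_S(K)\psi\rangle\leq\langle\psi|\sA_{S'}(K')\psi\rangle$, where $K'$ is compact and therefore lies in $\cM(S')$. Taking the supremum over such $K$ and combining with the display above gives exactly the inequality (\ref{F1}), with no hypothesis on $\Delta'$. To recover GCC I would then additionally assume (\ref{CC2}), i.e. $\Delta'\in\cM(S')$: monotonicity of $J^\pm$ gives $K\subset\Delta\Rightarrow K'\subset\Delta'$, and monotonicity of the positive measure $\langle\psi|\sA_{S'}(\cdot)\psi\rangle$ yields $\langle\psi|\sA_{S'}(K')\psi\rangle\leq\langle\psi|\sA_{S'}(\Delta')\psi\rangle$. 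Feeding this into (\ref{F1}) produces $\langle\psi|\sA_S(\Delta)\psi\rangle\leq\langle\psi|\sA_{S'}(\Delta')\psi\rangle$ for every $\psi\in\cH$, which is precisely $\sA_{S'}(\Delta')\geq\sA_S(\Delta)$, i.e. (\ref{CC}); note that no polarization is needed, since the operator inequality is equivalent to nonnegativity of all diagonal matrix elements.

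The only point that will require genuine care is the inner regularity on the completion $\cM(S)$: one must verify that the approximating compacta, chosen inside a Borel representative of $\Delta$, actually sit inside $\Delta$ and still exhaust $\mu_\psi(\Delta)$, which is exactly where absolute continuity $\sA_S <\sp< \overline{\nu_S}$ is used. The measurability of $K'$ is immediate—it is compact, as already noted in the proof of Proposition \ref{PFIN1}—so the remaining argument is a routine monotone-approximation step rather than a new obstacle; all the substantive difficulty has been absorbed into Propositions \ref{PFIN1}--\ref{PGY2}.
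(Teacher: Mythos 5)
Your proposal is correct and takes essentially the same route as the paper's proof: inner regularity by compacta of the finite measure $\langle\psi|\sA_S(\cdot)\psi\rangle$ on the completed $\sigma$-algebra $\cM(S)$ (with absolute continuity $\sA_S <\sp< \overline{\nu_S}$ doing exactly the job you identify), reduction to the compact case settled in Proposition \ref{PGY2}, then suprema and monotonicity under the extra hypothesis $\Delta'\in\cM(S')$. The only cosmetic deviations are that the paper obtains regularity via Rudin's Theorem 2.18 and Cohn's Proposition 1.59 where you invoke tightness of finite Borel measures on a Polish space, and your remark that no polarization is needed at the end (the operator inequality $\sA_{S'}(\Delta')\geq\sA_S(\Delta)$ being, by the paper's own definition, nonnegativity of all diagonal matrix elements) is a correct minor streamlining of the paper's closing step.
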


\begin{proof}  The positive Borel  measure $\cB(S)\ni \Delta \mapsto \langle \psi |\sA_S(\Delta)\psi \rangle$ is regular (because $S$  is Hausdorff, locally compact  and every open set is a countable union of compacts with finite measure,  according to Theorem 2.18 in \cite{Rudin}).  As a consequence,   the completion measure  $\cM(S)\ni \Delta \mapsto \langle \psi |\sA_S(\Delta)\psi \rangle$ is regular as well (Prop. 1.59 in \cite{Cohn}).
 If $\Delta \in \cM(S)$, internal  regularity yields
  $$ \langle \psi| \sA_S(\Delta) \psi\rangle = \sup \left\{ \langle \psi| \sA_{S}(K) \psi\rangle\:|\:  K \subset \Delta \subset S, \:\: K\:  \mbox{compact}\right\}\:.$$
At this juncture Prop. \ref{PGY2} entails (\ref{F1}).\\
 If we also know that $\Delta' \in \cM(S')$, noticing that the sets $K'$ are compact  and satisfy $K'\subset \Delta'$, but they are not necessarily all compact sets in $S'$ satisfying the latter condition, internal regularity entails
$$ \sup \left\{ \langle \psi| \sA_{S'}(K') \psi\rangle\:|\:  K \subset \Delta \subset S, \:\: K\:  \mbox{compact}\right\} \leq \langle \psi| \sA_{S'}(\Delta') \psi\rangle\:. $$
However  Prop. \ref{PGY2} implies
$$ \sup \left\{ \langle \psi| \sA_{S}(K) \psi\rangle\:|\:  K \subset \Delta \subset S, \:\: K\:  \mbox{compact}\right\} \leq  \sup \left\{ \langle \psi| \sA_{S'}(K') \psi\rangle\:|\:  K \subset \Delta \subset S, \:\: K\:  \mbox{compact}\right\}\:.$$
Therefore   $ \langle \psi| \sA_S(\Delta) \psi\rangle\leq  \langle \psi| \sA_{S'}(\Delta') \psi\rangle$. This is just  (\ref{CC}) due to  arbitrariness of $\psi \in \cH$.  \end{proof}

\begin{corollary}\label{TONE1COR} If a spatial localization observable $\{\sA_\Sigma\}_{\Sigma\in \cC^{sf}_\bM}$ can be extended to a spacelike Cauchy localization observable $\{\sA_S\}_{S\in \cC^s_\bM}$, then the former automatically  satisfies CC in  Definition \ref{DEFC0}.
\end{corollary}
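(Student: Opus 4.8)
The plan is to obtain the corollary as an immediate specialization of Theorem \ref{TONE1}, which already establishes GCC for the full extended family $\{\sA_S\}_{S\in \cC^s_\bM}$. The only genuine work is to check that, when the two Cauchy surfaces involved are \emph{flat}, the hypotheses of Definition \ref{DEFC} are automatically in force, so that the general inequality (\ref{CC}) specializes to Castrigiano's inequality (\ref{CCC}).

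First I would record the two elementary identifications that make the flat case a special instance of the general one. Since $\cC^{sf}_\bM \subset \cC^s_\bM$, every flat spacelike Cauchy surface $\Sigma$ is in particular an element of $\cC^s_\bM$, and by the extension hypothesis the given $\sA_\Sigma$ coincides with the member $\sA_S$ of the extended family for $S=\Sigma$. Moreover, as observed right after Definition \ref{DEFC}, for a flat surface the induced measure $\nu_\Sigma$ is equivalent to the Lebesgue measure, so its completion $\overline{\nu_\Sigma}$ is exactly the Lebesgue measure and $\cM(\Sigma)=\cL(\Sigma)$. Hence a set $\Delta\in\cL(\Sigma)$ as in Definition \ref{DEFC0} is automatically an admissible argument $\Delta\in\cM(\Sigma)$ for GCC.

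Next I would verify the measurability requirement (\ref{CC2}) in this flat setting, which is the one point that is not purely formal. For $\Sigma,\Sigma'\in\cC^{sf}_\bM$ and $\Delta\in\cL(\Sigma)$, the region of influence $\Delta'=(J^+(\Delta)\cup J^-(\Delta))\cap\Sigma'$ satisfies $\Delta'\in\cL(\Sigma')=\cM(\Sigma')$; this is precisely the result of Castrigiano recalled in the text, valid even when $\Delta\notin\cB(\Sigma)$. Thus both hypotheses $\Delta\in\cM(\Sigma)$ and $\Delta'\in\cM(\Sigma')$ of Definition \ref{DEFC} hold.

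With these checks in place, Theorem \ref{TONE1} applies verbatim to the pair $S=\Sigma$, $S'=\Sigma'$ and yields the operator inequality $\sA_{\Sigma'}(\Delta')\geq\sA_\Sigma(\Delta)$. Using once more that the extension agrees with the original family on flat surfaces, this is exactly (\ref{CCC}); since $\Sigma,\Sigma'$ and $\Delta\in\cL(\Sigma)$ were arbitrary, $\{\sA_\Sigma\}_{\Sigma\in\cC^{sf}_\bM}$ satisfies CC. The main, and essentially only, obstacle is the measurability step $\Delta'\in\cM(\Sigma')$, which is not reproved here but imported from \cite{C0}; everything else is a direct unwinding of the definitions and of Theorem \ref{TONE1}.
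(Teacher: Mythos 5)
Your proof is correct and follows essentially the same route as the paper's own argument: restrict to flat surfaces, identify $\cM(\Sigma)$ with $\cL(\Sigma)$, invoke Castrigiano's result from \cite{C0} that $\Delta'\in\cL(\Sigma')$ always holds in the flat case, and then observe that GCC (Theorem \ref{TONE1}) specializes to CC. Your write-up simply makes explicit the steps the paper's brief proof states in passing.
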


\begin{proof} Consider only spacelike flat Cauchy surfaces, noticing that there $\cM(\Sigma)= \cL(\Sigma)$ and that
 $\Delta' \in \cL(\Sigma')$ for every $\Delta \subset  \Sigma$ (even if $\Delta \not \in \cB(S)$!) as established in \cite{C0}. In that case GCC  in Definition \ref{DEFC} boils down to CC in   Definition \ref{DEFC0}.
\end{proof}

\begin{remark}\label{REMLASTC}
{\em In the proof of Theorem \ref{TONE1},  the coherence condition (\ref{COHERENCE}) of Definition \ref{GSL} is of utmost relevance. This is one of the reasons why we inserted that requirement already in  Definition \ref{GSL}.  However, an alternative approach would consist  of removing this condition from the  Definition \ref{GSL}  and to directly add it in the hypotheses of Theorem \ref{TONE1},  proving again  the validity of GCC.
  At this juncture it is worth stressing that GCC  trivially implies the coherence condition \ref{COHERENCE} (since $\Delta= \Delta'$ if $\Delta \subset S \cap S'$).  The conclusion is that, if using from scratch a weaker version of Definition \ref{GSL}, i.e.,  without the requirement of coherence (\ref{COHERENCE}), the general causal condition  (\ref{CC})-(\ref{CC2}) and the coherence condition (\ref{COHERENCE}) would be  {\em equivalent}\footnote{We are grateful to D.P.L. Castrigiano for this observation.}.} $\hfill \blacksquare$
\end{remark}

\section{Massive KG particles, conserved currents}\label{INTER}
In this section, we shall introduce some basic notion and results to construct examples of spatial Cauchy localization observables.
The former section is a recap about the one-particle structure of the massive real Klein-Gordon particle. The latter deals with some general properties of conserved currents and associated exact volume 3-forms. 

\subsection{One-particle Hilbert space of  real massive KG particles in $\bM$}\label{OPS}

According to Section \ref{SECM}, we fix a preferred origin $o\in \bM$, so that the map  $\sV \ni x \mapsto o+x \in \bM^4$ defines a one-to-one correspondence between points of $\bM$ and vectors in $\sV$ (the differential of this map being  $\imath_p$). This identification is very useful when dealing with the Fourier transformation on $\bM$, where the product $p\cdot x$ enters the play. All the theory developed in this work does not depend on the choice of $o$.
With this structure, the active action (\ref{AAP}) of $(y, \Lambda) \in  \cP_+$ on $\bM$ takes the form
$\bM\ni o + x \mapsto o+ y+ \Lambda x \in \bM\:.$
Given a mass constant $m>0$, the {\bf future mass-shell} in $\sV$ is 
$\sV_{m,+}:= \{k\in V_+ \:|\: k\cdot k = -m^2\}$.
For every  Minkowski chart   $x^0,x^1,x^2,x^3$,
\beq   k\in \sV_{m,+} \quad \Leftrightarrow \quad   k \in \sV  \quad \mbox{and} \quad k^0 = \sqrt{\vec{k}^2+m^2} >0\label{ms}\:.\eeq

The {\em one-particle Hilbert space} of a Klein-Gordon particle of mass $m>0$ is isomorphic to $L^2(\bR^3, d^3p)$ upon the choice of a  Minkowski chart $x^0,x^1,x^2,x^3$. Here $\bR^3$, is the space of momenta $\vec{p}$ and the {\em momentum representation} wavefunctions $\phi=\phi(\vec{p})$ are normalized elements of  $L^2(\bR^3, d^3p)$.
To have a completely covariant formulation, one  takes advantage of  the canonical Hilbert space isomorphism  
\beq
F:  L^2(\sV_{m,+}, \mu_m)  \ni \psi \mapsto \phi_\psi \in L^2(\bR^3, d^3k) \quad \mbox{where} \quad \phi_\psi(\vec{k}): = \frac{\psi(k)}{\sqrt{k^0(\vec{k})}} \label{PHI}
\eeq
There,   
\beq
\cH := L^2(\sV_{m,+}, \mu_m)\quad \mbox{with\:  $d\mu_m(k) := \frac{d^3k}{k^0(\vec{k})}$} \label{COVH}
\eeq
is the  (covariant) {\bf one-particle Hilbert space} of a real Klein-Gordon particle with mass $m>0$, where $\mu_m$  is the $\cP_+$-{\bf invariant measure} on the mass shell $\sV_{m,+}$.
Covariance is here manifest because,
  the standard unitary representation of $\cP_+$ on the one-particle space   \beq \cP_+ \ni h \mapsto U_h\in \gB(\cH)\label{UNI}\eeq   takes the {\em equivariant}  form \beq (U_h \psi)(k) = e^{-ik\cdot y_h} \psi(\Lambda_h^{-1}k)\:, \quad \mbox{for every $k\in \sV_{m,+}$, $\psi \in \cH$,  $h= (y_h,\Lambda_h) \in \cP_+$}\:. \label{UNI2}\eeq
For future convenience, we define the dense subspaces   ${\cal D}(\cH)  \subset {\cal S}(\cH)  \subset \cH$
\beq {\cal D}(\cH) := F^{-1}(C_c^\infty(\bR^3)) \quad \mbox{and}\quad   {\cal S}(\cH) := F^{-1}({\cS}(\bR^3)) \:, \label{D} \eeq
where ${\cS}(\bR^3)$ is the usual {\em Schwartz} space on $\bR^3$. 
It easy to prove that, in view of (\ref{UNI2}), the definition of ${\cal D}(\cH)$ and ${\cal S}(\cH)$ do not depend on the choice of the Minkowski chart used to construct $F$.
If $\psi \in \cD(\cH)$, or more generally $\psi \in {\cal S}(\cH)$, the associated (complex) {\bf covariant wavefunction} is
\beq
\varphi_\psi(x) := 
\int_{\sV_{m,+}}  \frac{\psi(p)e^{i p\cdot x} }{(2\pi)^{3/2}} d\mu_m(p)\:. \label{wavePSI}
\eeq
Notice that $\varphi_\psi \in C^\infty(\bM)$,  it is also bounded with all of its derivatives, and it solves the {\bf Klein-Gordon equation} in $\bM$ \beq (\Box - m^2)\varphi_\psi=0\:, \quad \mbox{where $\Box:= \eta^{\mu\nu} \partial_{\mu}\partial_{\nu}$ in every Minkowski chart.}\label{KG}\eeq
Furthermore, the action of $U$  on $\varphi_\psi$ is straightforward and explains the adjective ''covariant'':
\beq
\varphi_{U_h\psi}(x) = \varphi_\psi(h^{-1}x) \quad \forall h\in \cP_+\:, \forall x\in \bM\:.
\eeq
Due to (\ref{wavePSI}),(\ref{KG}) and Proposition \ref{PROPJ}  below applied to the bounded  conserved  smooth current
$$ J^{\psi, \psi'}_\mu:= \overline{\varphi_{\psi}} \partial_{\mu} \varphi_{\psi'} - \varphi_{\psi'} \partial_{\mu} \overline{\varphi_{\psi}}\:,$$
we have that the scalar product of $\cH$ satisfies, for $\psi,\psi' \in {\cal S}(\cH)$,
\beq
\langle \psi|\psi' \rangle =\frac{i}{2} \int_{S} \left(\overline{\varphi_{\psi}}(x) n^\mu_S(x)  \partial_\mu \varphi_{\psi'}(x) - \varphi_{\psi'}(x) n^\mu_S(x) \partial_\mu\overline{\varphi_{\psi}}(x)\right) \: d\nu_S(x)\:,
\eeq
for every spacelike smooth Cauchy surface $S$ in $\bM$. An analogous formula for generic smooth Cauchy surfaces can be established on account of Propositions \ref{PINT} and \ref{PROPJ} below.

Various issues concerning the possibility to directly or indirectly interpret $\varphi_\psi$ as a wavefunction in classical sense  somehow related to notions of spatial localization, and the failure of these expectations accumulated over the years, were discussed in \cite{M2}.  

\subsection{Conserved quantities on generic smooth Cauchy surfaces of $\bM$}
From now on, we assume that $\bM$ is oriented.  Consider a smooth vector field $J$ in $\bM$. We can associate a $3$-form to it\footnote{$\omega^J$ can equivalently be written as the {\em interior product} of $J$ and the volume $4$-form of $(\bM, {\bf g})$.}, {\em where we henceforth take advantage of the summation convention over repeated  Greek indices},
\beq
\omega^J(x) := \frac{1}{3!} J^\delta(x) \epsilon_{\delta \alpha\beta \gamma} dx^\alpha \wedge dx^\beta \wedge dx^\gamma \label{omegaJ}\:.
\eeq
Above  $\epsilon_{\delta \alpha\beta \gamma}$ is the {\bf Levi-Civita (pseudo)tensor} which, in Minkowski coordinates, coincides to the sign of the permutation $(\delta, \alpha,\beta, \gamma)$ of $(0,1,2,3)$, or vanishes in case of  repetitions. 

 Let $S$ be a smooth Cauchy surface determined by the map $t_S:\bR^3 \to \bR$ in Minkowski coordinates $x^0,x^1,x^2,x^3$ where $v_S$ is the future-directed normal vector to $S$ as in (\ref{Vv}). If 
$J \equiv (J^0, \vec{J})$  is either zero or causal and future directed depending on  $x\in \bM$, then we define
\beq
\int_A \omega^J := \int_A (J^0 - \vec{J}\cdot \nabla t_S) d^3x = -\int_{A} J \cdot v_S d^3x \in [0,+\infty] \:,  \quad \forall A \in \cB(S) \label{VVN}
\eeq
 In the last two integrals, $A$ is interpreted as a subset of  $\bR^3$ according to the canonical projection $\bR^4 \ni (x^0, \vec{x}) \mapsto \vec{x} \in \bR^3$ and the  integrals are  well defined  in $[0,+\infty]$ because
\beq
-J \cdot v_S \geq 0\:.
\eeq

\begin{remark} {\em The integral of the form $\omega^J$  as defined by the right-most term  in (\ref{VVN}) is a standard integral  in the sense of measure theory for a generic Borel set $A$. This fact permits to use standard arguments of measure theory.  On the other hand, if $A$ is compact, the definition agrees with the standard integral of smooth $n$-forms on compact sets and we can take advantage of standard results in this context as the Poincar\'e theorem. \hfill $\blacksquare$}
\end{remark}

We have a preliminary proposition whose elementary proof is left to the reader.

\begin{proposition} \label{PINT} If $J$ is a smooth vector field on $\bM$ and $\omega^J$ is defined as in (\ref{omegaJ}), then the following facts are valid (where (\ref{VF}) has been taken into account)
\begin{itemize}
\item[(1)] $J$ is conserved iff $\omega^J$ is closed:
\beq
\partial_\alpha J^\alpha(x)  =0 \quad \Leftrightarrow \quad  d \omega^J(x) =0\:.
\eeq

\item[(2)]
If $S\in \cC_\bM^s$, then
\beq
\int_A \omega^J = -\int_A J\cdot n_S \: d\nu_S \:. \label{FJ}
\eeq
\end{itemize}
\end{proposition}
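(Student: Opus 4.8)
The plan is to treat the two items separately, both by a direct computation in a fixed global Minkowski chart $x^0,x^1,x^2,x^3$, since (\ref{omegaJ}) and (\ref{VVN}) are already expressed in such a chart. For item (1) I would simply differentiate (\ref{omegaJ}). Exterior differentiation of the $3$-form produces the $4$-form
\[
d\omega^J = \frac{1}{3!}\,\partial_\mu J^\delta\,\epsilon_{\delta\alpha\beta\gamma}\, dx^\mu\wedge dx^\alpha\wedge dx^\beta\wedge dx^\gamma ,
\]
and using $dx^\mu\wedge dx^\alpha\wedge dx^\beta\wedge dx^\gamma=\epsilon_{\mu\alpha\beta\gamma}\,dx^0\wedge dx^1\wedge dx^2\wedge dx^3$ turns this into a scalar multiple of the volume form. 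The only genuine input is then the four-dimensional contraction identity: $\sum_{\alpha,\beta,\gamma}\epsilon_{\delta\alpha\beta\gamma}\epsilon_{\mu\alpha\beta\gamma}$ equals $3!$ if $\delta=\mu$ and vanishes otherwise. Feeding this in collapses the sum to
\[
d\omega^J = (\partial_\alpha J^\alpha)\, dx^0\wedge dx^1\wedge dx^2\wedge dx^3 ,
\]
so $d\omega^J=0$ pointwise exactly when $\partial_\alpha J^\alpha=0$, which is both implications of the stated equivalence.

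For item (2) I would start by pulling $\omega^J$ back along the graph parametrization $\vec x\mapsto(t_S(\vec x),\vec x)$ of $S$. Replacing $dx^0$ by $\sum_k \partial_{x^k}t_S\,dx^k$ in (\ref{omegaJ}) and discarding the terms annihilated by $dx^j\wedge dx^j=0$, one collects the surviving coefficient of $dx^1\wedge dx^2\wedge dx^3$, which is $J^0-\vec J\cdot\nabla t_S = -J\cdot v_S$; with the normalization convention (\ref{VF}) this reproduces the middle and right-hand expressions of (\ref{VVN}). It then remains only to rewrite the unnormalized normal $v_S$ through the unit normal $n_S$. On a spacelike Cauchy surface $|\nabla t_S|<1$ everywhere, and comparing the explicit formula for $n_S$ with that for $v_S$ in (\ref{Vv}) gives $v_S=\sqrt{1-|\nabla t_S|^2}\,n_S$. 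Hence
\[
-\int_A J\cdot v_S\, d^3x = -\int_A (J\cdot n_S)\sqrt{1-|\nabla t_S|^2}\, d^3x = -\int_A J\cdot n_S\, d\nu_S ,
\]
the last equality being precisely the definition (\ref{muS}) of the induced measure $\nu_S$. This is exactly (\ref{FJ}).

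I do not expect a substantive obstacle here: item (1) is pure Levi-Civita combinatorics and item (2) is a one-line substitution. The only points demanding any care are the sign bookkeeping in the pullback and the recognition that the factor $\sqrt{1-|\nabla t_S|^2}$ appearing when passing from $v_S$ to $n_S$ is exactly the density of $\nu_S$, so that it cancels cleanly. It is worth noting that the strict inequality $|\nabla t_S|<1$ (i.e. spacelikeness) is what guarantees this square root does not vanish and $n_S$ is well defined, which is why item (2) is stated only for $S\in\cC_\bM^s$; the pullback identity underlying (\ref{VVN}), by contrast, holds for any smooth $J$.
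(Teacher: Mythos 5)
Your proof is correct. Note that the paper does not actually supply a proof of Proposition \ref{PINT} — it is explicitly left to the reader as elementary — so there is nothing to compare against; your argument (the Levi-Civita contraction $\sum_{\alpha\beta\gamma}\epsilon_{\delta\alpha\beta\gamma}\epsilon_{\mu\alpha\beta\gamma}=3!\,\delta_{\delta\mu}$ giving $d\omega^J=(\partial_\alpha J^\alpha)\,dx^0\wedge dx^1\wedge dx^2\wedge dx^3$ for item (1), and the graph pullback together with $v_S=\sqrt{1-|\nabla t_S|^2}\,n_S$ and $d\nu_S=\sqrt{1-|\nabla t_S|^2}\,d^3x$ for item (2)) is exactly the intended computation, and your closing remark correctly identifies why spacelikeness ($|\nabla t_S|<1$) is needed only for item (2).
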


\noindent We pass now to state  a folklore statement which is actually technically very useful. The unexpectedly technical  proof is in the appendix.

\begin{proposition} \label{PROPJ} Let $J$ be a smooth vector field on $\bM$ such that
\begin{itemize}
\item[(1)]  it is {\bf conserved}, i.e.,  $\partial_\alpha J^\alpha =0$ everywhere;
\item[(2)]  if $x\in \bM$, then either $J(x)=0$ or $J(x)$ is causal and future-directed, i.e., $J(x) \in \overline{\sV_+}$;
\item[(3)]  it is {\bf bounded}, i.e., its components in a (thus every)  Minkowski chart are  bounded functions.
\end{itemize}
Then, if $S,S' \in \cC_\bM$,
\beq
\int_S \omega^J  =  \int_{S'} \omega^J  \in [0,+\infty] \label{FIRST}\:.
\eeq
If, e.g. $S'\in \cC^s_\bM$, then  the above identity can be re-written as 
\beq
\int_{S} \omega^J =- \int_{S'} J \cdot n_{S'} d\nu_{S'}   \:. \label{SECOND}
\eeq
\end{proposition}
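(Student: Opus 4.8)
The plan is to reduce everything to the identity (\ref{FIRST}), since (\ref{SECOND}) is then immediate from part (2) of Proposition \ref{PINT}. The conceptual core is that $\omega^J$ is closed (part (1) of Proposition \ref{PINT}, using conservation of $J$), so the flux is a homological invariant; the only genuine work is controlling the non-compactness of the Cauchy surfaces and the possibility that the integrals equal $+\infty$. A decisive structural feature I would exploit throughout is that, by hypothesis (2) and the remark following (\ref{VVN}), the integrand in $\int_A\omega^J = -\int_A J\cdot v_S\,d^3x$ is pointwise non-negative. Hence for every $S\in\cC_\bM$ one has the monotone representation
\[
\int_S \omega^J = \sup\Big\{\, \int_K \omega^J \ \Big|\ K\subset S \text{ compact}\,\Big\},
\]
which lets me work with compact pieces and pass to the (possibly infinite) limit at the end by monotone convergence.

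First I would reduce to comparing two \emph{spacelike} Cauchy surfaces that are nested. Given spacelike $S_1,S_2$, a smoothing of $\max(t_{S_1},t_{S_2})+1$ can be taken with spatial gradient $<1$, so by Proposition \ref{propS} and (d) of Remark \ref{RCSu} it defines a spacelike Cauchy surface $S_3$ lying to the future of both; it then suffices to prove $\int_{S_1}\omega^J=\int_{S_3}\omega^J$ and $\int_{S_2}\omega^J=\int_{S_3}\omega^J$ for the nested pairs. The passage from spacelike to general smooth Cauchy surfaces (needed because the statement allows $|\nabla t_S|=1$) I would handle at the very end by approximating a general graph $t_S$ by spacelike graphs $t_S^\varepsilon$ with $|\nabla t_S^\varepsilon|<1$ and $t_S^\varepsilon\to t_S$, invoking monotone/dominated convergence of the non-negative integrands together with the already-established constancy of the flux over spacelike surfaces.

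The heart of the proof is then the nested spacelike case $t_S\le t_{S'}$. For a compact piece $K\subset S$, its region of influence $K':=J^+(K)\cap S'$ is compact (this is exactly the compactness used in the proof of Proposition \ref{PFIN1}, via Corollary A.5.4 of \cite{BGP}), and the region $D:=J^+(K)\cap J^-(S')$ satisfies $D\subset J^+(K)\cap J^-(K')$, hence has compact closure because causal diamonds between compact sets are compact in a globally hyperbolic spacetime. The boundary $\partial D$ consists of the bottom cap $K$ (past-directed outer normal), the top cap $K'$ (future-directed outer normal), and a lateral piece lying on the future light cone $\partial J^+(K)$. Applying Stokes' theorem to the closed form $\omega^J$ on $D$ gives $\int_K\omega^J = \int_{K'}\omega^J + \int_{\mathrm{lat}}\omega^J$, the caps carrying the upward orientation and the lateral part the outward one. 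The crucial point is that a future-directed causal current cannot cross a future light cone outward, so the lateral outward flux is $\le 0$; therefore $\int_K\omega^J\le \int_{K'}\omega^J\le \int_{S'}\omega^J$, and taking the supremum over $K$ yields $\int_S\omega^J\le\int_{S'}\omega^J$. Running the symmetric argument with past light cones from compact pieces $K'\subset S'$ (where the outward lateral flux is now $\ge 0$) gives the reverse inequality, whence equality.

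The step I expect to be the main obstacle is precisely the control of the behaviour at spatial infinity: the naive spatial-cylinder truncation produces a lateral flux that grows like $R^3$ and carries no sign, so it cannot be discarded. The resolution — and the technical crux I would develop carefully — is to truncate with \emph{null} cones instead of spacelike cylinders, so that the causal, future-directed character of $J$ forces the lateral flux to have a definite sign; combined with the non-negativity of the integrand (hence monotone convergence in $[0,+\infty]$) this converts an uncontrolled boundary term into a one-sided, and ultimately two-sided, inequality. Secondary care is needed to justify the sign of the null-cone flux, to verify the compactness claims (through the region-of-influence compactness already invoked in Proposition \ref{PFIN1}), and to make the spacelike-to-general approximation rigorous in the case where the common value is $+\infty$.
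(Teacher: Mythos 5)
Your proposal is correct in its core mechanism, but it follows a genuinely different route from the paper's proof. The paper never truncates with causal cones: it introduces an auxiliary conserved, timelike, future-directed field $K:=(1+r^2)^{-2}\partial_{x^0}$ with finite flux, proves $\int_S\omega^K=\int_{S'}\omega^K<+\infty$ by a cylinder argument, then sets $Y:=J+K$ and uses hypothesis (3) precisely to obtain a complete global flow for $Y$. The truncating regions are flow cylinders whose lateral walls are unions of integral curves of $Y$, so the lateral flux of $\omega^Y$ vanishes identically; Stokes then gives exact equality of cap fluxes, monotone convergence gives $\int_S\omega^Y=\int_{S'}\omega^Y$, and the finite $K$-flux is subtracted at the end. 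Your causal-diamond truncation replaces ``zero lateral flux by construction'' with ``signed lateral flux by causality''. This buys two real advantages: hypothesis (3) is never used (Stokes is applied only on compact sets), and both inequalities are instances of the classical domain-of-dependence energy argument. What the paper's route buys is smooth lateral boundaries and a treatment of non-spacelike smooth Cauchy surfaces on exactly the same footing as spacelike ones, with no approximation step and no discussion of fluxes through non-smooth null hypersurfaces.

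The genuine gaps are in your two reduction steps, not in the diamond argument itself. First, the closing approximation for non-spacelike $S$ does not work as stated: the fluxes $\int_{S^\varepsilon}\omega^J$ are neither monotone in $\varepsilon$ nor dominated by an integrable function (the only available bound on the integrand is a constant, which is not integrable on $\bR^3$), so only Fatou's lemma applies, and it yields just $\int_S\omega^J\le\lim_\varepsilon\int_{S^\varepsilon}\omega^J$, not the reverse inequality. Moreover the approximants must be known to be Cauchy surfaces for your spacelike result to apply to them, and pointwise $|\nabla t^\varepsilon_S|<1$ does not guarantee this: the upper unit hyperboloid $x^0=\sqrt{1+|\vec{x}|^2}$ is a spacelike graph with $|\nabla t|<1$ everywhere but is not a Cauchy surface (the inextendible timelike curve $s\mapsto(\sinh s,\cosh s,0,0)$ never meets it). Both points can be repaired at once by discarding the approximation and running your diamond argument directly with the general smooth Cauchy surface $S$ as one cap: compactness of $J^{\pm}(K)\cap S$ (Corollary A.5.4 of \cite{BGP}) and non-negativity of the cap integrand $-J\cdot v_S$ hold for every smooth Cauchy surface, so both one-sided inequalities go through unchanged.

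Second, your construction of the intermediate surface $S_3$ is circular as written: (d1) of Remark \ref{RCSu} requires $S_3$ to lie between two already-known spacelike Cauchy surfaces, and no such surface above $S_3$ is at hand (the vertical separation of two spacelike Cauchy surfaces can be unbounded -- take $t_{S_1}=0$ and $t_{S_2}=\tfrac{1}{2}\sqrt{1+|\vec{x}|^2}$ -- so no flat slice works); and, as above, a pointwise gradient bound $<1$ surviving mollification does not by itself make the graph Cauchy. The fix is elementary but must be done by hand: along any inextendible timelike curve parametrized by $x^0$, the function $x^0-\max_i t_{S_i}(\vec{x}(x^0))-c=\min_i\bigl(x^0-t_{S_i}(\vec{x}(x^0))\bigr)-c$ is strictly increasing and diverges to $\pm\infty$, so the Lipschitz graph of $\max(t_{S_1},t_{S_2})+c$ is Cauchy, and the mollified graph, being trapped between two such graphs, inherits the property. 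Finally, your sign claim for the null flux is correct (at smooth points of $\partial J^+(K)$ the outward conormal is $\gor(\ell,\cdot)$ with $\ell$ the future-directed null generator, so the outward flux density is $\ell\cdot J\le 0$), but since $\partial J^+(K)$ has corners at $\partial K$ and, for non-convex $K$, crossing generators, the Stokes theorem on such Lipschitz domains is a nontrivial piece of technical debt -- one the paper's flow-cylinder construction avoids entirely.
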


\begin{proof}   See Appendix \ref{APPENDIXA}. \end{proof}

\section{Spacelike Cauchy localization observables out of causal kernels of massive KG particles}\label{SECPVMC}
We now pass to make contact with  a family of normalized POVMs, for a massive real Klein-Gordon particle, defined on spacelike flat Cauchy surfaces and introduced by Gerlach, Gromes and Petzold
 \cite{P} and  Henning, Wolf \cite{HW}  and, much more recently, rigorously studied by Castrigiano in \cite{C}, proving in particular that these POVMs satisfy CC. Next we pass to extend these POVMS to the full family of smooth Cauchy surfaces showing that, when considering only the spacelike Cauchy surfaces,  they induce corresponding spacelike Cauchy localization observables. Before, we have to recall some basic definitions.

\begin{remark}
{\em We stress that the localization observable  we shall construct is more general than the notion of spacelike Cauchy localization. In fact the POVMs we shall construct are also defined for smooth Cauchy surfaces which are not spacelike. This feature  may have some consequences in the analysis  of the structure of  causal regions of $\bM$ in terms of orthocomplemented  lattices \cite{C0}.  \hfill $\blacksquare$}
\end{remark}

\subsection{A notion of spatial  localization for massive KG particles out of  causal kernels}

Consider a Minkowski chart $x^0,x^1,x^2,x^3$.
We want to introduce an $L^2(\bR^3, d^3p)$  POVM rigorously discussed  in \cite{C} -- there named  POL --   on the rest space at time $x^0$  denoted below by $\bR^3_{x^0}$. 
The POVM  $\sT_{\bR^3_{x^0}}$ defines  the position  in $\bR^3_{x^0}$  of a scalar real Klein Gordon particle of mass $m>0$  -- whose pure  states are defined by functions $\phi=\phi(\vec{p})$  in momentum representation with Hilbert space $L^2(\bR^3, d^3p)$. It satisfies the condition 
\beq
\langle \phi | \sT_{\bR^3_{x^0}}(\Delta) \phi\rangle =\frac{1}{(2\pi)^3} \int_\Delta \int_{\bR^3}\int_{\bR^3} \frac{K(\vec{k},\vec{p})}{2\sqrt{p^0k^0}} e^{i(\vec{p}-\vec{k})\cdot \vec{x}- i(p^0-k^0)x^0} \overline{\phi(\vec{k})}\phi(\vec{p})  d^3p d^3k d^3x\:,  \Delta\in \cB(\bR_{x^0}^3)\:.\label{POVMC}
\eeq
Above,
$\vec{x}:= (x^1,x^2,x^3)$, the components  $k^0$ and  $p^0$  are determined by $\vec{k}, \vec{p} \in \bR^3$ as prescribed in (\ref{ms}) and the vectors $\phi$ stay  in a suitable dense  subspace of  $L^2(\bR^3, d^3p)$.
More specifically, referring to the discussion of Sect 11 \cite{C}:

\begin{definition} \label{DEFCP} {\em Consider a Minkowski chart $x^0,x^1,x^2, x^3$ on $\bM$. A   {\bf POL with causal kernel}\footnote{According to  \cite{C}.} on $\bR^3_{x^0}$  is  a  normalized  $L^2(\bR^3, d^3p)$-POVM  $T^g_{\bR^3_{x^0}}$ such that\footnote{In \cite{C}, a further denominator $1/\sqrt{k^0p^0}$ should take place under $k^0+p^0$ in (\ref{KC}). This factor has been moved in (\ref{POVMC}) here, so that the final formulas of the POVM are actually identical. Obviously, this  does not affect the positivity of that kernel.}  
\begin{itemize}
 \item[(a)]  (\ref{POVMC})  is valid when $\phi \in L^1(\bR^3, dk^3)\cap L^2(\bR^3, dk^3)$;
\item[(b)]  $K :=K_g$ is a {\bf causal kernel}, i.e., it  has the structure
\beq
K_g(\vec{k},\vec{p})  :=(k^0+p^0) g(-k\cdot p)\:,\quad k,p \in \sV_{m,+}\label{KC}
\eeq
for  $g: [m^2,+\infty) \to \bR$ which is continuous, normalized to $g(m^2)=1$ and it is such that $K_g: \bR^3 \times \bR^3 \to \bC$ is {\em positive definite}.   \hfill $\blacksquare$ 
\end{itemize}}
\end{definition}

\begin{remark}  
{\em \begin{itemize}
\item[(1)]  We recall the reader that $K: X\times X \to \bC$ is a {\bf positive definite kernel} if
\beq \sum_{i,j=1}^N \overline{c_i}c_j K(k_i, k_j) \geq 0\:, \quad \forall \{c_j\}_{j=1,\ldots, N}\subset \bC\:,  \forall \{k_j\}_{j=1,\ldots, N}\subset X \:, \forall N=1,2,\ldots\:.\label{POSK}\eeq
Notice that a positive definite kernel is necessarily {\bf Hermitian}: $K(p,q)= \overline{K(q,p)}$.
\item[(2)]  We stress that in (\ref{KC}), $k = (k^0(\vec{k}), \vec{k})$ and $p = (p^0(\vec{p}), \vec{p})$ in accordance with (\ref{ms}), so that we can  see $K_g$ either  as a function on $\bR^3\times \bR^3$ or on $\sV_{m,+}\times \sV_{m,+}$ indifferently. \hfill $\blacksquare$
\end{itemize}}
\end{remark}

\noindent It is possible to prove  (see \cite{C} for details) that $\cP_+$-covariant localization POVMs with causal kernel do exist. In particular, a family of functions $g$ as in the above definition which give rise to corresponding $\cP_+$-covariant POVM with causal kernel is 
\beq
g_r(z) :=   \frac{(2m^2)^r}{(m^2+z)^r}\: \quad z\geq m^2\:, r \geq 3/2\:. \label{gr}
\eeq
Finite convex combinations of functions $g_r$ also define  causal kernels.
   Even some pointwise limits of these convex combinations do the job as discussed in  \cite{P} and \cite{HW}:
$$ g(z) = \left(\frac{(1+c)m^2}{cm^2+z }\right)^n\:,  \quad z \geq m^2 \quad \mbox{for $-1< c \leq 1$ and $\bN \ni n>1$.}$$

 It is convenient to re-write  the POVM $\sT^g_{\bR^3_{x^0}}$ of Definition \ref{DEFCP} into a different form, where (a) the dependence on  a chart disappears, (b) the spacelike flat Cauchy surface normal to a reference frame remains, and (c) some covariance properties explicitly show up. A similar reformulation already appeared in \cite{C}, but we use here a slightly different approach and notation, more useful for our final goals. To this end, as in \cite{C}, we first define a current $j_g(k,p) \in \sV_+$ by means of
\beq
j_g(k,p)  :=\frac{1}{2}(k+p)g(-k\cdot p)\:, \quad p,k \in \sV_{m,+}\:,\label{jkp}
\eeq
so that (\ref{POVMC}) can be rephrased to an equivalent form
\beq
\langle \phi | \sT^g_{\bR^3_{x^0}}(\Delta) \phi\rangle =-\frac{1}{(2\pi)^3} \int_\Delta \int_{\bR^3}\int_{\bR^3} \frac{j_g(p,k) \cdot n}{\sqrt{p^0k^0}} e^{i(p-k)\cdot x} \overline{\phi(\vec{k})}\phi(\vec{p})  d^3p d^3k d\nu_{\bR^3_{x^0}}(x)\:, \quad \Delta\in \cB( \bR^3_{x^0})\:.\label{POVMC2}
\eeq
Above, the Minkowski chart  $x^0,x^1,x^2,x^3$ is adapted to $n:= \partial_{x^0}$. Furthermore, up to a spacetime displacement of the origin of the coordinates, we can always assume that $(0,0,0,0)$ corresponds to the origin $o\in \bM$ initially fixed\footnote{Changes of  the origin $o\in \bM$ can be re-absorbed in the definition of wavefuctions through an obvious unitary transformation.} (beginning  of Section \ref{SECPVMC}). The vector $x\in \sV$  in the exponent in (\ref{POVMC2}) is  such that the point $o+x \in \bM$ has  coordinates $x^0,x^1,x^2,x^3$.
 We might therefore  write $d\mu_{\bR^3_{x^0}}(o+x)$ rather than  $d\mu_{\bR^3_{x^0}}(x)$. However,  this misuse of notation does not produces troubles because the map  $\sV \ni x \mpasto o+v \in \bM$ is one-to-one.
The factor $1/\sqrt{p^0k^0}$ in (\ref{POVMC2})  stems from  the choice of the  the Hilbert space $L^2(\bR^3,d^3p)$.  It can be removed by passing   to covariant one-particle Hilbert space $\cH:=  L^2(\sV_{m,+}, \mu_m)$ (\ref{COVH}). 
With this prescription, if  $\Sigma\in \cC_\bM^{sf}$,  (\ref{POVMC2}) can be rephrased to
$$\langle \psi | \sT^g_\Sigma(\Delta) \psi\rangle =$$
\beq
 - \int_\Delta \int_{\sV_{m,+}}\int_{\sV_{m,+}} \frac{ j_g(p,k)\cdot n_\Sigma}{(2\pi)^3}  e^{i(p-k)\cdot x} \overline{\psi(k)}\psi(p)d\mu_m(p) \mu_m(k)  d\nu_{\Sigma}(x)\:,  \quad \Delta\in \cB(\Sigma)\label{POVMC2'}
\eeq
It is now evident that no choice of a {\em Minkowski chart} enters  (\ref{POVMC2'}) and the only spot where  a {\em Minkowski reference frame} $n_\Sigma$ takes place is when  assigning  the spacelike flat Cauchy surface $\Sigma$, since   $n_\Sigma$ is  the  future directed unit normal vector to $\Sigma$. The vectors $x$ entering the exponential  satisfy  $o+x\in S$, which, in turn, is the integration space of the external integral.

\subsection{Properties of $\sT^g_\Sigma$, for spacelike flat Cauchy surfaces $\Sigma$: Covariance,  causality, no strict localizability, Newton-Wigner, Heisenberg inequality,}
If we fix a function $g$ as in Definition \ref{DEFCP},  the arising spatial localization observable  $\{\sT^g_\Sigma\}_{\Sigma\in \cC^{sf}_\bM}$  is {\em $\cP_+$-covariant}  in the sense  discussed in \cite{C}: If $U$ is the unitary representation of $\cP_+$ in the one-particle space introduced in (\ref{UNI}) and (\ref{UNI2}), it holds
\beq U_h \sT^g_\Sigma(\Delta)U^{-1}_h  = \sT^g_{h\Sigma}(h\Delta)  \:, \quad \forall \Sigma \in \cC^{sf}_\bM\:, \forall \Delta \in \cL(\Sigma)\:, \quad \forall h \in \cP_+\label{ACOVNWT20}\:.\eeq
 Notice that $h\Sigma$ is an analogous spacelike flat  Cauchy surface normal  to $n_{h\Sigma}= \Lambda_hn_\Sigma$ if $h=(y_h,\Lambda_h)$.
We do not enter into the details of these properties because  we shall not use them in the rest of the paper.  The  spatial localization observable
$\{\sT^g_\Sigma\}_{\Sigma\in \cC^{sf}_\bM}$ complies with  the definition of {\em relativistic} spatial localization observable proposed in \cite{M2} (Definition 18 therein).
  Denoting the POVMs extended from $\cB(\Sigma)$  to $\cL(\Sigma)$  with the same symbol $\sT^g_\Sigma$, we can prove the following result.

\begin{theorem}[\cite{C}]  If    $g: [m^2,+\infty) \to \bR$  continuous\footnote{Actually continuity of $g$ already follows from measurability of $g$, see \cite{C} (55) Corollary}, normalized to $g(m^2)=1$ such that the kernel $K_g: \bR^3 \times  \bR^3 \to \bC$ in (\ref{KC}) is positive definite, then spatial localization observable  $\{ \sT^g_\Sigma\}_{\Sigma \in \cC^{sf}_\bM}$
 satisfies CC in Definition \ref{DEFC0}.   
 \end{theorem}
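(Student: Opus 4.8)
The plan is to deduce the statement from the paper's first main result rather than to re-run Castrigiano's original analytic argument: I would show that the flat family $\{\sT^g_\Sigma\}_{\Sigma\in\cC^{sf}_\bM}$ is the restriction to flat surfaces of a genuine spacelike Cauchy localization observable $\{\sT^g_S\}_{S\in\cC^s_\bM}$, and then invoke Corollary \ref{TONE1COR} (equivalently Theorem \ref{TONE1}) to obtain CC automatically.

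First I would attach to each $\psi$ in the dense domain ${\cal S}(\cH)$ the smooth spacetime vector field
\[
J^\psi_\mu(x):=\frac{1}{(2\pi)^3}\int_{\sV_{m,+}}\int_{\sV_{m,+}} j_{g,\mu}(p,k)\,e^{i(p-k)\cdot x}\,\overline{\psi(k)}\,\psi(p)\,d\mu_m(p)\,d\mu_m(k),
\]
with $j_g$ as in (\ref{jkp}), so that (\ref{POVMC2'}) reads $\langle\psi|\sT^g_\Sigma(\Delta)\psi\rangle=\int_\Delta\omega^{J^\psi}$ for flat $\Sigma$. Conservation $\partial^\mu J^\psi_\mu=0$ is immediate, since on the mass shell $(p-k)\cdot j_g(p,k)=\tfrac12(p\cdot p-k\cdot k)\,g(-k\cdot p)=0$; boundedness of the components is a routine estimate from $\psi\in{\cal S}(\cH)$. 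Hence $J^\psi$ satisfies hypotheses (1) and (3) of Proposition \ref{PROPJ}.

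The hard part will be verifying hypothesis (2), i.e. that $J^\psi(x)\in\overline{\sV_+}\cup\{0\}$ at every point $x$; this is exactly where the \emph{positive definiteness} of the causal kernel $K_g$ enters. My route is to upgrade flat-surface positivity into this pointwise geometric statement: since each $\sT^g_\Sigma$ is a POVM, the continuous density $-J^\psi\cdot n_\Sigma$ of the measure $\langle\psi|\sT^g_\Sigma(\cdot)\psi\rangle$ is nonnegative on every flat $\Sigma$. Fixing $x$ and letting $\Sigma$ run over the flat Cauchy surfaces through $x$ realizing every admissible unit normal $n\in\sT_+$, I get $-J^\psi(x)\cdot n\geq 0$ for all future-directed unit timelike $n$; a short computation (testing against $n$ near $\partial_{x^0}$ and against boosted $n$) shows this is equivalent to $J^\psi(x)$ being future-directed causal or zero.

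With all three hypotheses of Proposition \ref{PROPJ} in hand I would define, for $S\in\cC^s_\bM$ and $\Delta\in\cB(S)$, $\langle\psi|\sT^g_S(\Delta)\psi\rangle:=\int_\Delta\omega^{J^\psi}=-\int_\Delta J^\psi\cdot n_S\,d\nu_S$, which is a positive finite measure in $\Delta$ and, after polarization, a sesquilinear form in $\psi$ satisfying $0\leq\langle\psi|\sT^g_S(\Delta)\psi\rangle\leq\langle\psi|\sT^g_S(S)\psi\rangle$. Normalization $\sT^g_S(S)=I$ follows from Proposition \ref{PROPJ}, which gives $\int_S\omega^{J^\psi}=\int_\Sigma\omega^{J^\psi}=\|\psi\|^2$; the resulting bound lets me extend $\sT^g_S(\Delta)$ to a bounded operator with $0\leq\sT^g_S(\Delta)\leq I$ on all of $\cH$, while $\sigma$-additivity and absolute continuity with respect to $\nu_S$ are read directly off the measure $\int_\Delta\omega^{J^\psi}$. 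The coherence condition (\ref{COHERENCE}) is then essentially free: $\langle\psi|\sT^g_S(\Delta)\psi\rangle$ is the integral of the single, surface-independent spacetime $3$-form $\omega^{J^\psi}$ over $\Delta$, hence depends on $\Delta$ alone, and on any $\Delta\subset S\cap S'$ of positive measure the surfaces $S$ and $S'$ are tangent $\nu$-almost everywhere so the two densities agree. Passing to completions via Propositions \ref{ACM} and \ref{PROPCOEES} produces the spacelike Cauchy localization observable $\{\sT^g_S\}_{S\in\cC^s_\bM}$ extending $\{\sT^g_\Sigma\}$, and Corollary \ref{TONE1COR} delivers CC. The only genuinely delicate point in this program is the causality of $J^\psi$ above; everything else is bookkeeping with Proposition \ref{PROPJ} and the measure-theoretic extension results.
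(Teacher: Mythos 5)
Your proposal is correct, but it proves the theorem by a genuinely different route than the paper does: the paper's proof of this statement is a bare citation of Castrigiano's direct flat-slice analysis (Theorem 56 of \cite{C}), whereas you derive CC from the paper's general causality theorem after first building the spacelike Cauchy extension. Concretely, your construction of $\{\sT^g_S\}_{S\in\cC^s_\bM}$ reproduces the paper's own later Theorem \ref{MAIN} (conservation and boundedness of $J_{g,\psi}$ as in Lemma \ref{LEMMA0}, normalization via Proposition \ref{PROPJ} and Lemma \ref{LEMMA3}, the form-to-operator step of Lemma \ref{LEMMAAGG}, completion and coherence via Propositions \ref{ACM} and \ref{PROPCOEES}), after which Corollary \ref{TONE1COR} delivers CC; this is exactly the alternative argument the introduction announces when combining its achievements (1) and (2). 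The one genuinely novel ingredient is your treatment of causality of the current: the paper's Lemma \ref{LEMMA0} cites Theorem 52 of \cite{C} for the fact that $J_{g,\psi}(x)$ is zero or future-directed causal, while you extract it from the hypothesis itself, observing that positivity of the POVM $\sT^g_\Sigma$ on \emph{every} flat $\Sigma$ forces the continuous density to satisfy $-J_{g,\psi}(x)\cdot n\geq 0$ for every $n\in\sT_+$, and that a vector $v$ with $v\cdot n\leq 0$ for all future-directed unit timelike $n$ is future-directed causal or zero (checked by testing $n=(\cosh\beta,\sinh\beta\,\hat e)$ and letting $\beta\to\infty$). That soft argument is sound and buys independence from the analytic core of \cite{C}: the only input from \cite{C} you retain is the existence of the POVMs themselves, which is part of the hypothesis, so there is no circularity, and Theorem \ref{TONE1} is proved in the paper independently of anything in Section \ref{SECPVMC}. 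Two small points to tighten: if you insist on the domain ${\cal S}(\cH)$ rather than $\cD(\cH)$, convergence and boundedness of the defining integrals require $g$ to be bounded, which does follow from positive definiteness (Cauchy--Schwarz for the kernel gives $|g(-k\cdot p)|\leq 2\sqrt{k^0p^0}/(k^0+p^0)\leq 1$) but should be stated, or one should simply work on $\cD(\cH)$ as the paper does; and $\sigma$-additivity of $\Delta\mapsto\sT^g_S(\Delta)$ on all of $\cH$ is not quite ``read off the measure,'' since the density formula holds only on the dense domain --- one needs the monotone operator-limit and Carath\'eodory argument that occupies most of the proof of Theorem \ref{MAIN}, so that step is bookkeeping only in the sense that the paper has already done it.
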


\begin{proof}  Theorem 56 \cite{C}. \end{proof} 

\begin{corollary} 
If $\psi \in \cH$, the localization probability associated to the  spatial localization observable 
$\{ \sT^g_\Sigma\}_{\Sigma \in \cC^{sf}_\bM}$ cannot be  zero outside a bounded set in $\Sigma$. 
\end{corollary}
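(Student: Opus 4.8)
The plan is to derive the statement as a contrapositive application of the second form of Hegerfeldt's theorem (Theorem \ref{HTeo}) combined with the causality property just established for $\{\sT^g_\Sigma\}_{\Sigma\in\cC^{sf}_\bM}$. Recall that Theorem \ref{HTeo} concerns massive Wigner particles ($m>0$), which is precisely the setting of the POL with causal kernel; and recall the hierarchy of causality conditions, namely that CC (Definition \ref{DEFC0}) implies CT (Definition \ref{DEFC0C}), so that failure of CT forces failure of CC. This last implication is exactly the content of the Remark following Theorem \ref{HTeo}.

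First I would argue by contradiction: suppose there were a unit vector $\psi\in\cH$, a spacelike flat Cauchy surface $\Sigma$, and a bounded Borel set $\Delta\subset\Sigma$ such that the detection probability vanishes outside $\Delta$, i.e. $\langle\psi|\sT^g_\Sigma(E)\psi\rangle=0$ whenever $E\cap\Delta=\emptyset$. Since $\Delta$ is bounded it is contained in a ball $B_R$ of some radius $R>0$; consequently the probability of detecting the particle outside $B_r$ vanishes for every $r\ge R$.

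Next I would verify that $\psi$ satisfies the exponential-decay hypothesis of Theorem \ref{HTeo}. Taking $K_2=2m$ (admissible since $c=\hbar=1$, so that $K_2\ge 2mc/\hbar$) and $K_1=e^{2mR}$, the bound $K_1e^{-K_2 r}$ dominates the probability of detection outside $B_r$ for all $r>0$: for $r\ge R$ the left-hand side is strictly positive while that probability is $0$, and for $0<r<R$ one has $K_1e^{-K_2r}=e^{2m(R-r)}\ge 1$, which dominates any probability. Hence the hypothesis of Theorem \ref{HTeo} holds, so the theorem asserts that $\{\sT^g_\Sigma\}_{\Sigma\in\cC^{sf}_\bM}$ fails to satisfy CT, whence by the Remark it also fails to satisfy CC.

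This contradicts the immediately preceding theorem (Theorem 56 of \cite{C}), which states precisely that $\{\sT^g_\Sigma\}_{\Sigma\in\cC^{sf}_\bM}$ satisfies CC. Therefore no such state and bounded set can coexist, which is the assertion. I do not anticipate a genuine obstacle here; the only points requiring care are the clean verification of the exponential bound with the explicit constants and the observation that a probability distribution supported in a bounded set is \emph{a fortiori} a case of the rapid-decay hypothesis of Theorem \ref{HTeo}, so that the stronger statement of the corollary follows.
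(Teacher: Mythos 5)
Your proof is correct and takes essentially the same route as the paper's: the paper also argues that a state localized in a bounded set would trigger Hegerfeldt's Theorem \ref{HTeo}, making CT and hence CC fail, contradicting Theorem 56 of \cite{C}. Your only addition is the explicit verification (with constants $K_1=e^{2mR}$, $K_2=2m$) that bounded support implies the exponential-decay hypothesis, which the paper leaves implicit.
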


\begin{proof}
If localized states as above exist   CT  would fail as a consequence of Hegerfeldt’s theorem on relativistic time
evolution (Theorem \ref{HTeo}) and thus also CC would be false. (See \cite{M2} for a discussion on this point.)
\end{proof}

 \begin{remark} {\em In spite of this obstruction, it is possible to show \cite{C} that probability distributions localized in bounded sets can be arbitrarily well approximated by probability distributions arising by suitable sequences of state $\psi_n$.  In the case of
causal kernels it is an important open problem whether there are {\em point localized} sequences of states, i.e., roughly speaking whether the system
can be localized within a bounded region as accurately as desired. This
problem is discussed in \cite{C} sec. 18 Discussion.} \hfill $\blacksquare$
\end{remark}
 
There is a interesting  relation between the {\em first moment} of the  POVM $\sT^g_\Sigma$ on a spacelike flat Cauchy surface $\Sigma$ and the {\em Newton-Wigner}  selfadjoint operators $N_\Sigma^1,N_\Sigma^2,N_\Sigma^3$ \cite{M2} associated to a Minkowski chart $x^0,x^1,x^2,x^3$ such that the slice $x^0=0$ coincides with $\Sigma$.  Similarly to  Thm 26 in \cite{M2}, one sees that  the following result is valid. Below 
$\Delta_\psi  x^a_{\sT^g_\Sigma}$ denotes the standard deviation for the coordinate $x^a$ referred to  the probability distribution $\langle \psi| \sT^g_\Sigma(\cdot) \psi\rangle$ constructed out of the POVM $\sT^g_\Sigma$ in the state defined by the unit vector $\psi$.

\begin{theorem}\label{1MNW1}  Suppose that $g$ in the definition of $\sT^g_\Sigma$ is real, bounded, and smooth, then the following is true.
\begin{itemize}
\item[(a)]   The $a$-first moment of $\sT^g_\Sigma$ is defined for every  $\psi\in \cD(\cH)$ with $||\psi||=1$ and 
\beq \int_{\Sigma} x^a \langle \psi| \sT^g_\Sigma(d^3x) \psi\rangle =  \langle \psi| N_\Sigma^a \psi\rangle \:,\quad \mbox{where $a=1,2,3$.}
\label{NTN}\eeq
In particular, $N^a_\Sigma$  is  the unique selfadjoint operator in $\cH$  which satisfies the indenty above.
\item[(b)] The {\bf Heisenberg inequality} turns out to be corrected as, for $\psi \in \cD(\cH)$,
$$\Delta_\psi  x_{\sT^g_\Sigma}^a \Delta_\psi P_a  \geq \frac{\hbar}{2} \sqrt{1+4 (\Delta_\psi P_a)^2 \langle \psi|\sK^{\sT^g_\Sigma}_{a} \psi\rangle}\:, \quad a=1,2,3\:.$$
$\sK^{\sT^g_\Sigma}_{a}\in \gB(\cH)$ is a selfadjoint  operator,  which is a (spectral)  function of the four  momentum observable $P$ with the form (\ref{kappa}), such that $\sK^{\sT^g_\Sigma}_{a}\geq 0$.
\item[(c)] If $g$ is of the form (\ref{gr}), or convex combinations of them, then (a) and (b)  also hold  for $\psi \in {\cal S}(\cH)$.
\end{itemize}
\end{theorem}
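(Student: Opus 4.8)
The plan is to compute the first and second moments of the probability measure $\cB(\Sigma)\ni\Delta\mapsto\langle\psi|\sT^g_\Sigma(\Delta)\psi\rangle$ directly from the explicit covariant expression (\ref{POVMC2'}), reducing each of them to the expectation value of an explicit operator in the momentum (covariant) representation. I choose the Minkowski chart so that $\Sigma$ is the slice $x^0=0$ and $n_\Sigma=\partial_{x^0}$; then $-j_g(p,k)\cdot n_\Sigma=j_g^0(p,k)=\tfrac12(k^0+p^0)g(-k\cdot p)$, and on $\Sigma$ the exponent reduces to $(\vec{p}-\vec{k})\cdot\vec{x}$. For the first moment I would write $x^a e^{i(\vec{p}-\vec{k})\cdot\vec{x}}=-i\partial_{p^a}e^{i(\vec{p}-\vec{k})\cdot\vec{x}}$, integrate by parts in $\vec{p}$ (the boundary terms vanishing because $\psi\in\cD(\cH)$ has compactly supported momentum wavefunction), and then perform the $\vec{x}$-integration to produce $(2\pi)^3\delta^3(\vec{p}-\vec{k})$, which collapses the double mass-shell integral to a single integral on the diagonal $k=p$.

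The diagonal evaluation is where the structure of the causal kernel enters decisively. Since $-k\cdot p=m^2$ and $g(m^2)=1$ at $k=p$, one gets $j_g^0(p,p)=p^0$, so the leading term reproduces the norm; moreover $\partial_{p^a}(-k\cdot p)|_{k=p}=0$, which annihilates the contribution of $g'$ and makes the answer independent of the detailed shape of $g$. Collecting the surviving terms (those coming from $\partial_{p^a}\psi$, from $\partial_{p^a}p^0$, and from the covariant measure $d\mu_m=d^3p/p^0$) yields exactly the quadratic form of the Newton--Wigner operator $N^a_\Sigma$, establishing (\ref{NTN}). Uniqueness is then immediate: if a selfadjoint operator reproduces the first moment as a quadratic form on the core $\cD(\cH)$, polarization fixes all its matrix elements there, hence the operator, so it must coincide with $N^a_\Sigma$.

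For (b) the same mechanism is applied to the second moment $X^{(2)a}:=\int_\Sigma (x^a)^2\,\langle\psi|\sT^g_\Sigma(d^3x)\psi\rangle$, using $(x^a)^2 e^{i(\vec{p}-\vec{k})\cdot\vec{x}}=-\partial_{p^a}^2 e^{i(\vec{p}-\vec{k})\cdot\vec{x}}$ and integrating by parts twice. The outcome should take the form $X^{(2)a}=(N^a_\Sigma)^2+\sK^{\sT^g_\Sigma}_a$, where the operator $\sK^{\sT^g_\Sigma}_a$, because the second derivative now also hits $g$ on the diagonal, is a genuine multiplication (spectral) function of the four-momentum $P$ of the form (\ref{kappa}); here the hypotheses that $g$ be real, bounded and smooth are used, smoothness to make the diagonal derivatives of $g$ meaningful and reality to ensure $\sK^{\sT^g_\Sigma}_a$ is selfadjoint. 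I would prove $\sK^{\sT^g_\Sigma}_a\geq 0$ either directly from that explicit momentum-space expression, or more conceptually from the general POVM inequality $\int(x^a)^2\,d\sA\geq\bigl(\int x^a\,d\sA\bigr)^2$ (a consequence of Cauchy--Schwarz / Naimark dilation), which forces the correction to be nonnegative. Writing $(\Delta_\psi x^a_{\sT^g_\Sigma})^2=(\Delta_\psi N^a_\Sigma)^2+\langle\psi|\sK^{\sT^g_\Sigma}_a\psi\rangle$, inserting the canonical Heisenberg bound $\Delta_\psi N^a_\Sigma\,\Delta_\psi P_a\geq\hbar/2$ (valid since $N^a_\Sigma$ and $P_a$ are canonically conjugate), and multiplying through by $(\Delta_\psi P_a)^2$ gives the stated corrected inequality after taking the square root.

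Part (c) amounts to enlarging the admissible domain from $\cD(\cH)$ to $\cS(\cH)$, and the main obstacle of the whole argument lives here, together with the rigorous justification of the interchanges of integration and of the integrations by parts. Every manipulation above is licit on $\cD(\cH)$ because compact momentum support trivially controls all integrals and kills the boundary terms; for $\psi\in\cS(\cH)$ both the convergence of the moment integrals and the vanishing of the boundary terms must be re-examined, and this hinges on the growth of $g$ and of its first two derivatives. For the explicit family $g=g_r$ in (\ref{gr}) (and their convex combinations) $g$ is bounded with controlled derivatives, so that $j_g$ and the integrand of $X^{(2)a}$ stay Schwartz-class in the relevant variables and dominated convergence applies, which yields (c). I therefore expect the genuinely delicate points to be the diagonal bookkeeping for the second moment (tracking every term so as to certify the exact form and the positivity of $\sK^{\sT^g_\Sigma}_a$) and the Schwartz-domain convergence estimates, rather than the overall strategy, which parallels Thm 26 of \cite{M2}.
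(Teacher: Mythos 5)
Your proposal is correct and, in substance, it follows the same route as the paper: the paper merely packages your momentum-space moment computation into a general statement, Proposition \ref{PROPVAR}, valid for any POVM of the form (\ref{NPOS}) whose positive-definite kernel has unit diagonal, proves it with tempered-distribution Fourier calculus rather than your delta-function collapse, and then inserts the kernel $K_{\sT^g}(\vec{q},\vec{p})=(q^0+p^0)g(\vec{q}\cdot\vec{p}-q^0p^0)/(2\sqrt{q^0p^0})$. Your two diagonal identities ($g(m^2)=1$ and $\partial_{p^a}(-k\cdot p)|_{k=p}=0$, which kills the $g'$ contribution) are the kernel-specific version of the paper's observation that $\partial_{p_a}K_\sA|_{\vec{q}=\vec{p}}=0$ for a real Hermitian kernel with unit diagonal; the uniqueness argument (polarization on a core plus maximal symmetry of selfadjoint operators), the passage from the variance decomposition to the modified Heisenberg inequality, and the treatment of (c) via polynomial growth of $g_r$ and all its derivatives coincide with the paper's.

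The one genuine divergence is the positivity of $\sK^{\sT^g_\Sigma}_a$, and there your argument needs a correction. The paper proves $\left.\partial_{q_a}\partial_{p_a}K_\sA(\vec{q},\vec{p})\right|_{\vec{q}=\vec{p}}\geq 0$ pointwise by testing positive definiteness (\ref{POSKI}) against derivatives of mollifiers; this is where the hypothesis (\ref{POSK}) actually enters. Your ``conceptual'' alternative is sound only in its operator (Naimark-dilation) form: writing $\sA(\cdot)=PE(\cdot)|_{\cH}$ for a PVM $E$ on a dilation space, setting $\hat{X}=\int x^a\,dE$, and using (a) with polarization to identify $P\hat{X}\psi=N^a_\Sigma\psi$ on $\cD(\cH)$, one gets $\int_\Sigma (x^a)^2\langle\psi|\sT^g_\Sigma(d^3x)\psi\rangle=\Vert\hat{X}\psi\Vert^2\geq\Vert P\hat{X}\psi\Vert^2=\Vert N^a_\Sigma\psi\Vert^2$, hence $\langle\psi|\sK^{\sT^g_\Sigma}_a\psi\rangle\geq 0$. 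By contrast, the scalar Cauchy--Schwarz inequality you cite in the same breath (nonnegativity of the variance of the probability distribution) only yields $\langle\psi|\sK^{\sT^g_\Sigma}_a\psi\rangle\geq -(\Delta_\psi N^a_\Sigma)^2$ and does not force positivity. Likewise, proving positivity ``directly from the explicit momentum-space expression'' is not viable as stated: the diagonal mixed derivative works out to a combination containing $g'(m^2)$ (the first derivatives of $-k\cdot p$ vanish on the diagonal, but the mixed second derivative does not), and its sign is not controlled by reality, boundedness and smoothness of $g$ alone; some appeal to positive definiteness of $K_g$ --- the paper's mollifier argument or the dilation above --- is unavoidable.
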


\noindent{\em Proof}. See Appendix  \ref{APPENDIXA}. \hfill $\Box$\\

 Notice that the left-hand side of (\ref{NTN}) does not depend on $g$.

\begin{remark}\label{NAICOM} {\em
\begin{itemize}
\item[(1)] If also the identity $$ \int_{\Sigma} (x^a)^2 \langle \psi| \sT^g_{\Sigma}(d^3x) \psi \rangle =  \langle \psi| (N^a_\Sigma)^2\psi\rangle\: \quad \mbox{(false!)},$$ 
 were valid one could apply a known theorem by Naimark about the decomposition of maximally symmetric operators (here $N^a_\Sigma$) in terms of POVMs (see Theorem 23  in \cite{DM} and the discussion about it) obtaining  $\sT^g_\Sigma= \sQ_{\Sigma}$. This is obviously false and it is also reflected by  the appearance of the  term $ \langle \psi|\sK^{\sT^g_\Sigma}_{a} \psi\rangle$ in the modified  Heisenberg inequality.
\item[(2)]   If $U^{(n)}_t$ is the unitary time evolutor corresponding to the time evolution along $n$ in $\bM$,
  it is  easy to see that the Heisenberg evolution  $U^{(n)}_t N_\Sigma^a U^{(n)\dagger}_t$  of $N_\Sigma^a$ on the right-hand side of (\ref{NTN}) equals  the integral on the left-hand side over  the correspondingly temporally translated time slice $\Sigma_t$. As already observed in \cite{M2}, this  fact implies that the worldline $\bR \ni t \mpasto (t,  \int_{\Sigma_t} \vec{x} \langle \psi| \sT^g_{\Sigma_t}(d^3x) \psi\rangle)$ is {\em timelike} (Corollary 14 in \cite{M2}) as expected by massive particles. \hfill $\blacksquare$
 \end{itemize}} 
 \end{remark}

\subsection{A spacelike Cauchy localization  observable  $\sT^g= \{\sT^g_S\}_{S\in \cC_\bM^s}$ for a  massive KG particle}
Let us focus again on the POVM $T_{g, \bR^3_{x^0}}$ satisfying  (\ref{POVMC}).
The equivalent form (\ref{POVMC2'}) of  (\ref{POVMC})  is actually already prompt to be generalized to any smooth Cauchy surface $S$.
Heuristically, if   $S\subset \bM$ is a spacelike Cauchy surface defined by  $x^0=t_S(\vec{x})$, we expect that the current $j_{g}$ also  defines a  normalized POVM  whose  expectation value on $\Delta \in \cB(S)$ is:
\beq
\langle \psi | \sT^g_S(\Delta) \psi\rangle =- \int_\Delta \int_{\sV_{m,+}}\int_{\sV_{m,+}}\sp \sp \sp \frac{j_g(p,k)  \cdot n_S(x)}{(2\pi)^3} e^{i(p-k)\cdot x} \overline{\psi(k)}\psi(p) d\mu_m(p)  d\mu_m(k) d\nu_S(x)\:,  \quad \Delta\in \cB(S) \label{POVMS'}
\eeq
when $\psi$ belongs to a suitable subspace of $\cH$ and where $n_S$ is the normal future-directed unit vector to $S$.
In case $S$ is smooth but not spacelike, we can  expect the version in terms of forms be valid
\beq
\langle \psi | \sT^g_S(\Delta) \psi\rangle =\int_\Delta  \omega^{J_{g,\psi}}\:, \quad \Delta\in \cB(S) \label{POVMS''}\:,
\eeq
where
\beq J_{g,\psi}(x) := \frac{1}{2(2\pi)^3}  \int_{\sV_{m,+}}\int_{\sV_{m,+}} (k+p)g(-k\cdot p)  e^{i(p-k)\cdot x} \overline{\psi(k)}\psi(p)  d\mu_m(p)d\mu_m(k)\:.\label{POVMS''a} \eeq
As we shall see shortly, this current is conserved: $\partial_\mu J_{g, \psi}^\mu =0$, so that $d  \omega^{J_{g,\phi}}=0$ and that it is zero or causal and future directed at each point of $\bM$.

To extend the definition of $\sT^g_S$ to every smooth Cauchy surface of $\bM$, we cannot directly  follow the  approach of \cite{C} -- based on a smart  decompositions of $\sT(\Delta)$ in terms of isometries and the standard PVM of the position observable on $\bR^3$ --  because of the appearance of  various spurious  terms containing $t_S(\vec{x})$ and $\nabla t_S(\vec{x})$ in the expression of $\omega^{J_{g,\psi}}$.  Nevertheless, the wanted extension is feasible through  a more indirect way.

There are various lemmata we shall exploit to achieve the wanted result.

\begin{lemma}\label{LEMMA0}
Consider   $g: [m^2,+\infty) \to \bR$  continuous, normalized to $g(m^2)=1$ such that the kernel $K_g: \bR^3 \times  \bR^3 \to \bC$ in (\ref{KC}) is positive definite. \\
If $\psi \in  \cD(\cH)$  then  the  current $J_{g,\psi}$ defined in (\ref{POVMS''a}) is smooth, bounded, conserved, and zero or causal and future-directed at any  point of $\bM$.
\end{lemma}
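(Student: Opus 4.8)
The plan is to treat the four assertions in turn, reducing the first three to routine estimates that exploit the compact momentum support of $\psi\in\cD(\cH)$, and to isolate the causal/future-directed claim as the genuine content. Writing the components as
\[
J^\mu_{g,\psi}(x)=\frac{1}{2(2\pi)^3}\int_{\sV_{m,+}}\int_{\sV_{m,+}} (k^\mu+p^\mu)\, g(-k\cdot p)\, e^{i(p-k)\cdot x}\,\overline{\psi(k)}\psi(p)\, d\mu_m(p)\,d\mu_m(k),
\]
I first observe that, since $\psi\in\cD(\cH)=F^{-1}(C_c^\infty(\bR^3))$ has compact support on $\sV_{m,+}$, the double integral runs over a fixed compact set on which $g(-k\cdot p)$ and the components $(k^\mu+p^\mu)$ are continuous and bounded. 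Boundedness of $J_{g,\psi}$ then follows by taking the modulus under the integral sign, the bound being uniform in $x$ because $|e^{i(p-k)\cdot x}|=1$. For smoothness I would differentiate under the integral sign: each application of $\partial_{x^{\mu_1}}\cdots\partial_{x^{\mu_r}}$ pulls out a factor $\prod_j i(p-k)_{\mu_j}$, a polynomial in the components of $k,p$ which is again bounded on the compact support, so the standard theorem on differentiation of parameter integrals applies and gives $J_{g,\psi}\in C^\infty(\bM)$. Reality of $J_{g,\psi}$ is seen at once by conjugating and relabelling $k\leftrightarrow p$.

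For conservation I would compute $\partial_\mu J^\mu_{g,\psi}$ by moving the derivative onto the exponential, using $\partial_\mu e^{i(p-k)\cdot x}=i(p-k)_\mu e^{i(p-k)\cdot x}$. This produces the scalar factor $(k+p)\cdot(p-k)=p\cdot p-k\cdot k$, which vanishes identically because both $k$ and $p$ lie on the mass shell, $k\cdot k=p\cdot p=-m^2$. Hence the integrand is identically zero and $\partial_\mu J^\mu_{g,\psi}=0$ everywhere; conservation is thus a direct consequence of the mass-shell constraint.

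The core of the lemma is the causal/future-directed property, and this is where the positive definiteness of $K_g$ enters. I would use the elementary characterisation that a vector $v\in\sV$ lies in $\overline{\sV_+}$ (i.e. is zero or causal and future-directed) if and only if $v\cdot n\le 0$ for every future-directed timelike $n$. Fixing such an $n$ and a Minkowski chart adapted to it, so that $-(k+p)\cdot n=k^0+p^0$, and introducing for fixed $x$ the function $\chi_x(k):=e^{ik\cdot x}\psi(k)$, the identity $e^{i(p-k)\cdot x}\overline{\psi(k)}\psi(p)=\overline{\chi_x(k)}\,\chi_x(p)$ recasts $-J_{g,\psi}(x)\cdot n$ as
\[
-J_{g,\psi}(x)\cdot n=\frac{1}{2(2\pi)^3}\int_{\sV_{m,+}}\int_{\sV_{m,+}} K_g(\vec k,\vec p)\,\overline{\chi_x(k)}\,\chi_x(p)\, d\mu_m(p)\,d\mu_m(k),
\]
with $K_g(\vec k,\vec p)=(k^0+p^0)g(-k\cdot p)$. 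Setting $F(\vec k):=\chi_x(k)/k^0$ and using $d\mu_m(k)=d^3k/k^0$, this becomes $\int\int K_g(\vec k,\vec p)\overline{F(\vec k)}F(\vec p)\,d^3k\,d^3p$ with $F$ continuous and compactly supported. Approximating by Riemann sums over a common partition $\{\vec k_i\}$ with cell volumes $\Delta_i$ and setting $c_i:=F(\vec k_i)\Delta_i$ turns each sum into $\sum_{i,j}\overline{c_i}c_j K_g(\vec k_i,\vec k_j)\ge 0$ by (\ref{POSK}); uniform continuity of $K_g$ and $F$ on the compact support guarantees convergence to the integral. Hence $-J_{g,\psi}(x)\cdot n\ge 0$ for every future timelike $n$, and by continuity in $n$ for every $n\in\overline{\sV_+}$, so $J_{g,\psi}(x)\in\overline{\sV_+}$.

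I expect the main obstacle to be precisely this last step: justifying the transition from the algebraic definition of a positive definite kernel (finite sums) to the integrated inequality. The delicate points are ensuring that a \emph{single} partition is used for both integration variables so that the Riemann sum is genuinely of the form appearing in (\ref{POSK}), and controlling the convergence uniformly, which is where the compact support of $\psi$ (hence of $\chi_x$ and $F$) together with the continuity of $g$ are essential. Everything else—smoothness, boundedness, and conservation—is mechanical once the compact-support and mass-shell observations are in place.
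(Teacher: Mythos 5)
Your treatment of smoothness, boundedness, and conservation coincides with the paper's own proof: differentiation under the integral sign justified by the compact momentum support of $\psi\in\cD(\cH)$, plus the mass-shell identity $(k+p)\cdot(p-k)=0$. Where you genuinely diverge is on the causal/future-directed claim: the paper does not prove it at all, but cites Theorem 52 of \cite{C}, whereas you give a self-contained argument (dual characterization of $\overline{\sV_+}$ via $v\cdot n\leq 0$ for all future-directed timelike $n$; rewriting $-J_{g,\psi}(x)\cdot n$ as a kernel pairing against $\overline{F}F$ with $F$ continuous and compactly supported; then passing from finite sums to the integral). Note that this last step is precisely the implication (b)$\Rightarrow$(c) of Proposition \ref{PropK}, which the paper proves in Appendix B for other purposes; you could simply invoke it instead of redoing the Riemann sums, though your Riemann-sum version of that argument is correct as stated.

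There is one point in your causality argument that must be addressed before it is complete. The hypothesis of the lemma is that the kernel $K_g(\vec k,\vec p)=(k^0+p^0)g(-k\cdot p)$ is positive definite, where $k^0,p^0$ are mass-shell energies in \emph{one fixed} Minkowski chart. Your argument fixes an arbitrary future-directed timelike $n$, passes to a chart adapted to it, and then uses positive definiteness of $\bigl(E_n(k)+E_n(p)\bigr)g(-k\cdot p)$, i.e.\ of the kernel formed in the \emph{new} chart; this is not literally what is assumed. The gap is filled by a Lorentz-covariance remark: normalizing $n$ and choosing $\Lambda\in\cL_+$ with $\Lambda e_0=n$, one has $E_n(k)=E_{e_0}(\Lambda^{-1}k)$ and $-k\cdot p=-(\Lambda^{-1}k)\cdot(\Lambda^{-1}p)$, so the kernel in the adapted chart evaluated at $(k,p)$ equals the original kernel at $(\Lambda^{-1}k,\Lambda^{-1}p)$; since $\Lambda^{-1}$ maps $\sV_{m,+}$ bijectively onto itself, the finite-sum condition (\ref{POSK}) for the boosted kernel follows from that for the original one (and rescaling $n$ by a positive factor only rescales the kernel, which is harmless). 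With that one-line addition your proof of the causality property is sound, and it has the merit of making the lemma self-contained where the paper outsources its key part to the literature.
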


\begin{proof}   Fix a Minkowski chart $x^0,x^1,x^2,x^3$.  Let us take $\phi \in  C_c^\infty(\bR^3) \subset L^2(\bR^3, d^3k)$  and let us consider  the vector field in $\bM$ defined in (\ref{POVMS''a}), now in equivalent terms of $\phi$ instead of $\psi$ according to (\ref{PHI}),
$$J_{g,\phi}^\mu(x) :=  \int_{\bR^3}\spa \int_{\bR^3} \spa\frac{(p^\mu+k^\mu) }{2(2\pi)^3\sqrt{k^0p^0}} g(-k\cdot p) e^{i(p-k)\cdot x} \overline{\phi(\vec{k})}\phi(\vec{p}) d^3p d^3k \:.  $$
With the said hypotheses on $\phi$, a direct use of the Lebesgue  dominate convergence theorem permit us  to pass the $x^\nu$ derivatives of every order under the integration symbol proving that $J$ is bounded and smooth. However, the first derivative  yields
$$\partial_\mu J_{g,\phi}^\mu(x) = i\int_{\bR^3}\spa \int_{\bR^3} \spa\frac{(p^\mu+k^\mu)(p_\mu-k_\mu) }{2(2\pi)^3\sqrt{k^0p^0}} g(-k\cdot p) e^{i(p-k)\cdot x} \overline{\phi(\vec{k})}\phi(\vec{p}) d^3p d^3k =0$$
since $(p^\mu+k^\mu)(p_\mu-k_\mu)  = -m^2+m^2 +k^\mu p_\mu - p^\mu k_\mu =0$. 
Finally, as established in \cite{C}, $J(x)$ is either zero or causal and future-directed for every $x\in \bM$  (see Theorem 52 in  \cite{C} and (c) in its  proof). \end{proof}

\begin{lemma} \label{LEMMA3}
Consider  $g$ as in Lemma \ref{LEMMA0}.
If $S$ is a smooth Cauchy surface, it holds
\beq \int_{S}  \omega^{J_{g,\psi}} =\langle \psi|\psi \rangle\:. \label{NT0}\eeq
for every $\psi\in \cD(\cH)$.
\end{lemma}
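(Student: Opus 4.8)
The plan is to reduce the statement for an arbitrary smooth Cauchy surface to the already-understood flat case, exploiting the conservation of the current so that the integral does not actually depend on $S$.

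First I would invoke Lemma \ref{LEMMA0}: for $\psi\in\cD(\cH)$ the current $J_{g,\psi}$ of (\ref{POVMS''a}) is smooth, bounded, conserved, and everywhere zero or causal and future-directed. These are precisely hypotheses (1)--(3) of Proposition \ref{PROPJ}, so that proposition applies and gives that the quantity $\int_S\omega^{J_{g,\psi}}\in[0,+\infty]$ is \emph{the same for every} $S\in\cC_\bM$. Consequently it suffices to evaluate the integral on one conveniently chosen Cauchy surface; I would take a flat spacelike one, say the slice $\Sigma:=\bR^3_{x^0=0}$ of a Minkowski chart adapted to a reference frame $n$, so that $n_\Sigma=n=\partial_{x^0}$.

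On $\Sigma$ I would use (\ref{FJ}) of Proposition \ref{PINT} (equivalently (\ref{SECOND})) to rewrite $\int_\Sigma\omega^{J_{g,\psi}}=-\int_\Sigma J_{g,\psi}\cdot n_\Sigma\,d\nu_\Sigma$. Substituting the explicit expression (\ref{POVMS''a}) and recalling the definition (\ref{jkp}) of $j_g$, so that $\tfrac{1}{2}(k+p)g(-k\cdot p)=j_g(k,p)=j_g(p,k)$, the right-hand side becomes exactly the double-momentum integral appearing in (\ref{POVMC2'}) evaluated at $\Delta=\Sigma$. For $\psi\in\cD(\cH)$ the associated momentum wavefunction lies in $C_c^\infty(\bR^3)\subset L^1(\bR^3,d^3k)\cap L^2(\bR^3,d^3k)$, so condition (a) of Definition \ref{DEFCP} is met and (\ref{POVMC2'}) is legitimately applicable. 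Hence $\int_\Sigma\omega^{J_{g,\psi}}=\langle\psi|\sT^g_\Sigma(\Sigma)\psi\rangle$, and since $\sT^g_\Sigma$ is a normalized POVM (Definition \ref{DEFCP}) we have $\sT^g_\Sigma(\Sigma)=I$, whence $\int_\Sigma\omega^{J_{g,\psi}}=\langle\psi|\psi\rangle$. Combined with the surface-independence from the first step, this yields (\ref{NT0}) for every $S\in\cC_\bM$.

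The main obstacle is essentially dispatched in advance by the conservation law: the genuinely nontrivial analytic work (convergence, positivity, and the normalization of the momentum integrals) lives entirely in the flat case and is imported from \cite{C} via (\ref{POVMC2'}); the curved-surface difficulty evaporates once Proposition \ref{PROPJ} is in hand. What remains to check carefully is only the bookkeeping — that the hypotheses of Proposition \ref{PROPJ} are literally satisfied (guaranteed by Lemma \ref{LEMMA0}) and that the constants $1/2$, $(2\pi)^{-3}$, and the overall sign match precisely between (\ref{POVMS''a}) and (\ref{POVMC2'}), the symmetry $j_g(k,p)=j_g(p,k)$ rendering the ordering of the momentum variables irrelevant. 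Finiteness of $\int_S\omega^{J_{g,\psi}}$ is then automatic a posteriori, as it equals the finite number $\langle\psi|\psi\rangle$ computed on $\Sigma$.
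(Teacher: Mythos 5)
Your proposal is correct and follows essentially the same route as the paper's own proof: invoke Lemma \ref{LEMMA0} to satisfy the hypotheses of Proposition \ref{PROPJ}, reduce the integral to the flat slice $x^0=0$, and identify the resulting momentum-space integral with $\langle\psi|\sT^g_\Sigma(\Sigma)\psi\rangle=\langle\psi|\psi\rangle$, the normalization of the POL with causal kernel being imported from \cite{C} exactly as in the paper. The only cosmetic difference is that you route the flat-slice identification through (\ref{POVMC2'}) and Definition \ref{DEFCP} explicitly, while the paper writes out the triple integral and cites \cite{C} directly; the substance is identical.
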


\begin{proof}  In view of Lemma \ref{LEMMA0} , since $J_{g,\psi}$ satisfies all required hypotheses,
we can apply Proposition \ref{PROPJ} for $S'$ given by the time-slice at $x^0=0$,  taking (\ref{VVN}) into account and finally  obtaining that 
$$\int_{S}  \omega^{J_{g,\phi}} =  \int_{S'}  \omega^{J_{g\phi}} =\int_{\bR^3} \spa\int_{\bR^3}\spa \int_{\bR^3} \spa\frac{(p^0+k^0)}{2(2\pi)^3\sqrt{k^0p^0}} g(-k\cdot p) e^{i(\vec{p}-\vec{k})\cdot \vec{x}} \overline{\phi(\vec{k})}\phi(\vec{p})  d^3p d^3k d^3x = \langle \phi|\phi \rangle_{L^2(\bR^3)}\:,$$
which is the thesis.
The last identity is due to the fact that, as established in \cite{C}, the last integral is nothing but $\langle \phi| \sT_{\bR^3_0}(\bR^3) \phi \rangle$ where $\cB(\bR^3)\ni \Delta \mapsto \sT_{\bR^3_0}(\Delta)$ is a normalized POVM in $\bR^3_{x^0=0}$, because is a POVM defined as in Definition \ref{DEFCP}.\end{proof}

\begin{lemma} \label{LEMMAAGG} Let $\cK$ be a complex Hilbert space and  $D\subset \cK$ a dense subspace.  Consider a 
Hermitian  form
 $\Lambda: D \times D \to \bC$ such that $|\Lambda (x,x)| \leq C||x||^2$ for some constant $C\geq 0$ and $\Lambda(x,x) \geq 0$ for every $x\in D$.
Then a unique operator $A \in \gB(\cK)$ exists such that 
$\Lambda(x, x) = \langle x | A x \rangle$ for all $x \in D$.
Furthermore  $||A|| \leq C$ and  $A\geq 0$.
\end{lemma}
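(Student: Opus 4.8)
The plan is to promote the two diagonal hypotheses into a single global bound on $\Lambda$ and then represent the resulting bounded sesquilinear form by an operator via Riesz. First I would note that, since $\Lambda$ is Hermitian and $\Lambda(x,x)\ge 0$ for all $x\in D$, the form is positive semidefinite, hence obeys the Cauchy--Schwarz inequality: expanding $\Lambda(x+\lambda y,x+\lambda y)\ge 0$, choosing the phase of $\lambda\in\bC$ so that $\lambda\Lambda(x,y)$ is real and negative, and imposing nonnegativity of the resulting real quadratic in $|\lambda|$ (discriminant $\le 0$, with the degenerate case $\Lambda(y,y)=0$ forcing $\Lambda(x,y)=0$ directly by letting $|\lambda|\to+\infty$) yields $|\Lambda(x,y)|^2\le\Lambda(x,x)\,\Lambda(y,y)$ for all $x,y\in D$. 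Because $\Lambda(x,x)=|\Lambda(x,x)|\le C\|x\|^2$, this gives
\[
|\Lambda(x,y)|^2 \le \Lambda(x,x)\,\Lambda(y,y) \le C^2\|x\|^2\|y\|^2,
\]
so $|\Lambda(x,y)|\le C\|x\|\,\|y\|$ on $D\times D$.

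Next, since $D$ is dense and the estimate above (together with sesquilinearity) makes $\Lambda$ jointly continuous, $\Lambda$ extends uniquely to a bounded sesquilinear form $\tilde\Lambda$ on $\cK\times\cK$ satisfying $|\tilde\Lambda(x,y)|\le C\|x\|\,\|y\|$. For each fixed $y\in\cK$ the map $x\mapsto\overline{\tilde\Lambda(x,y)}$ is a bounded linear functional of norm $\le C\|y\|$, so the Riesz representation theorem (with the convention that $\langle\cdot|\cdot\rangle$ is linear in the right entry) provides a unique vector, which I call $Ay$, with $\tilde\Lambda(x,y)=\langle x|Ay\rangle$ for all $x\in\cK$. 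Uniqueness of the Riesz vector forces $y\mapsto Ay$ to be linear, and $\|Ay\|\le C\|y\|$ gives $A\in\gB(\cK)$ with $\|A\|\le C$; in particular $\langle x|Ax\rangle=\Lambda(x,x)$ for every $x\in D$.

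Finally I would check positivity and uniqueness. For $x\in D$ one has $\langle x|Ax\rangle=\Lambda(x,x)\ge 0$; approximating an arbitrary $y\in\cK$ by a sequence in $D$ and using that $x\mapsto\langle x|Ax\rangle$ is continuous (valid since $A$ is bounded) gives $\langle y|Ay\rangle\ge 0$, i.e.\ $A\ge 0$. For uniqueness, if $A'\in\gB(\cK)$ also satisfies $\langle x|A'x\rangle=\Lambda(x,x)$ on $D$, then $\langle x|(A-A')x\rangle=0$ for all $x\in D$; since $\cK$ is complex, the polarization identity recovers the full sesquilinear form from its diagonal, so $\langle x|(A-A')y\rangle=0$ for all $x,y\in D$, and density of $D$ together with boundedness of $A-A'$ forces $A=A'$.

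I expect no deep obstacle here: this is a standard form-representation result, and the only points requiring care are the degenerate case $\Lambda(y,y)=0$ in the Cauchy--Schwarz step (which must be treated separately) and the consistent use of the paper's convention that the inner product is linear in its right entry, which fixes whether $A$ sits in the left or right slot throughout.
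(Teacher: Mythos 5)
Your proposal is correct and follows essentially the same route as the paper's proof: Cauchy--Schwarz for the positive Hermitian form gives the off-diagonal bound $|\Lambda(x,y)|\leq C\|x\|\,\|y\|$, Riesz representation produces the bounded operator, and positivity and uniqueness follow by density, continuity, and polarization. The only differences are cosmetic — you extend the form to $\cK\times\cK$ and apply Riesz directly in the second slot, whereas the paper applies Riesz with the first argument fixed to get an auxiliary operator $A'$ and sets $A:=A'^*$ — and your explicit treatment of the degenerate case in Cauchy--Schwarz fills in a step the paper merely cites.
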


\begin{proof}  The Cauchy-Schwartz inequality implies $|\Lambda (x,y)| \leq C ||x||\: ||y||$ if $x,y \in D$.
 Thus $\Lambda$ is continuous  and, 
since $D\subset \cH$ is dense,  $\Lambda$ continuously extends to a Hermitian form on $\cH \times \cH$
 satisfying the same bound as above, as the reader easily proves.
The proof ends by direct application of Corollary after Theorem II.4 in \cite{RS}. \end{proof}
%
%

We are in a position to prove that a  normalized POVM induced by the operators $\sT^g_S(\Delta)$ exists on every smooth  Cauchy surface of $\bM$ and satisfies the natural coherence requirement (\ref{COHERENCE}) also in that case. 

\begin{theorem} \label{MAIN}  Let $S\in \cC_\bM$ and suppose that $g: [m^2,+\infty) \to \bR$ is  continuous,  $g(m^2)=1$ and  the kernel  in (\ref{KC}) is positive definite.  Then,
 \begin{itemize}
\item[(a)] there is a unique   $\cH$-POVM, still indicated by $\{\sT^g_S(\Delta)\}_{\Delta \in \cB(S)}$, which  satisfies (\ref{POVMS''})  when $\psi \in {\cal D}(\cH)$:
$$
\langle \psi | \sT^g_S(\Delta) \psi\rangle =\int_\Delta  \omega^{J_{g,\psi}}\:, \quad \Delta\in \cB(S) \:,
$$
so that, $\sT^g_S < \sp < \nu_S$ if $S \in \cC_\bM^s$ due to (\ref{FJ}).

\item[(b)] $\sT^g_S$ is normalized: $\sT^g_S(S)=I$; 
\item[(c)] $\sT^g_S$ satisfies Definition \ref{DEFCP} (i.e., it is a POL with causal kernel according to  \cite{C}) when $S=\Sigma$ is a time slice of a Minkowski coordinate  system;

\item[(d)] if $\Delta \in S\cap S'$ is a Borel set, where $S'\in \cC_\bM$, then
$$\sT^g_S(\Delta) =\sT^g_{S'}(\Delta)\:.$$
\end{itemize}
The family $\sT^g := \{\sT^g_S\}_{S\in \cC^s_\bM}$, where we extend each  POVMs to $\cM(S)$ according to Proposition \ref{ACM}, is a spacelike Cauchy localization  observable  in the sense of Definition \ref{GSL}.
\end{theorem}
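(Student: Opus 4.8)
The plan is to build each effect $\sT^g_S(\Delta)$ from the sesquilinear form associated with the current, using Lemmas \ref{LEMMA0}, \ref{LEMMA3} and \ref{LEMMAAGG} as the engine, and then to read off the POVM axioms, normalization, matching with the POL, and coherence. First I would fix $S\in\cC_\bM$ and, for $\psi,\psi'\in\cD(\cH)$, polarize the current (\ref{POVMS''a}) into
$$J_g^{\psi,\psi'} := \frac{1}{2(2\pi)^3}\int_{\sV_{m,+}}\int_{\sV_{m,+}}(k+p)\,g(-k\cdot p)\,e^{i(p-k)\cdot x}\,\overline{\psi(k)}\psi'(p)\,d\mu_m(p)\,d\mu_m(k),$$
so that $J_g^{\psi,\psi}=J_{g,\psi}$. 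Since $g$ is real and $-k\cdot p$ is symmetric, relabelling $k\leftrightarrow p$ gives $\overline{J_g^{\psi,\psi'}}=J_g^{\psi',\psi}$, so that for each $\Delta\in\cB(S)$ the map $\Lambda_\Delta(\psi,\psi'):=\int_\Delta\omega^{J_g^{\psi,\psi'}}$ is a Hermitian sesquilinear form on $\cD(\cH)$. By Lemma \ref{LEMMA0} the current $J_{g,\psi}$ is zero or future-directed causal, hence the integrand of $\int_\Delta\omega^{J_{g,\psi}}$ is pointwise non-negative, giving $\Lambda_\Delta(\psi,\psi)\ge0$; monotonicity over $\Delta\subseteq S$ together with Lemma \ref{LEMMA3} yields $0\le\Lambda_\Delta(\psi,\psi)\le\int_S\omega^{J_{g,\psi}}=\|\psi\|^2$. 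Lemma \ref{LEMMAAGG} (with $C=1$) then furnishes a unique $\sT^g_S(\Delta)\in\gB(\cH)$ with $0\le\sT^g_S(\Delta)\le I$ and $\langle\psi|\sT^g_S(\Delta)\psi\rangle=\Lambda_\Delta(\psi,\psi)$ for all $\psi\in\cD(\cH)$, establishing existence and uniqueness in (a).

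To upgrade $\{\sT^g_S(\Delta)\}_{\Delta\in\cB(S)}$ to a POVM I would note that, for fixed $\psi\in\cD(\cH)$, the map $\cB(S)\ni\Delta\mapsto\int_\Delta\omega^{J_{g,\psi}}$ is a genuine finite positive measure, being the integral of the non-negative density $-J_{g,\psi}\cdot v_S$ against $d^3x$. Thus finite additivity of $\sT^g_S$ holds on the dense $\cD(\cH)$, and if $\Delta_n\uparrow\Delta$ the positive operators $\sT^g_S(\Delta_n)$ are monotone and bounded by $I$, hence converge strongly to an operator agreeing with $\sT^g_S(\Delta)$ on $\cD(\cH)$ and therefore equal to it; this gives $\sigma$-additivity in the strong topology. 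For spacelike $S$, (\ref{FJ}) rewrites the density as $-J_{g,\psi}\cdot n_S$ integrated against $\nu_S$, so $\nu_S(\Delta)=0$ forces $\langle\psi|\sT^g_S(\Delta)\psi\rangle=0$ on the dense $\cD(\cH)$; positivity of $\sT^g_S(\Delta)$ and continuity then give $\sT^g_S(\Delta)=0$, i.e. $\sT^g_S\ll\nu_S$, finishing (a). Normalization (b) is immediate from Lemma \ref{LEMMA3}, since $\langle\psi|\sT^g_S(S)\psi\rangle=\|\psi\|^2$ on the dense $\cD(\cH)$ forces $\sT^g_S(S)=I$. For (c), specializing (\ref{POVMS''}) to a flat slice $S=\Sigma$ (where $n_S=n$ is constant and $t_S$ is affine) reduces it exactly to (\ref{POVMC2'}), the covariant rewriting of (\ref{POVMC}); since $F(\cD(\cH))=C_c^\infty(\bR^3)\subset L^1\cap L^2$, the uniqueness from (a) identifies $\sT^g_\Sigma$ with the POL of Definition \ref{DEFCP}.

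The step I expect to demand the most care is the \emph{coherence} property (d). I would exploit that both $S$ and $S'$ are graphs $x^0=t_S(\vec x)$ and $x^0=t_{S'}(\vec x)$ over the common $\bR^3$ (Proposition \ref{propS}) and that, by (\ref{VVN}), $\langle\psi|\sT^g_S(\Delta)\psi\rangle=\int_{\vec\Delta}(J^0-\vec J\cdot\nabla t_S)\,d^3x$, where $\vec\Delta$ denotes the projection of $\Delta$. If $\Delta\subseteq S\cap S'$ then $t_S=t_{S'}$ on $\vec\Delta$, and at every Lebesgue-density point of $\vec\Delta$ the gradients satisfy $\nabla t_S=\nabla t_{S'}$; hence the two integrands coincide $d^3x$-almost everywhere on $\vec\Delta$ and the integrals agree (both vanish when $|\vec\Delta|=0$). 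This yields $\sT^g_S(\Delta)=\sT^g_{S'}(\Delta)$ on $\cD(\cH)$, hence everywhere, which is (d) and in particular the coherence (\ref{COHERENCE}) on Borel sets.

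Finally, having verified coherence on $\cB(S)\cap\cB(S')$ together with absolute continuity $\sT^g_S\ll\nu_S$ for spacelike $S$, I would invoke Proposition \ref{PROPCOEES}: the unique absolutely continuous extensions to $\cM(S)$ granted by Proposition \ref{ACM} form a spacelike Cauchy localization observable in the sense of Definition \ref{GSL}, which is the concluding assertion. The whole argument is thus an assembly of the three lemmas and the two extension propositions, with the genuinely non-routine point being the almost-everywhere agreement of the two graph-gradients underlying (d).
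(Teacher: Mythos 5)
Your proof is correct and rests on the same pillars as the paper's (Lemmas \ref{LEMMA0}, \ref{LEMMA3}, \ref{LEMMAAGG}, then Propositions \ref{ACM} and \ref{PROPCOEES}), but it organizes the measure-theoretic core differently, and more economically. The paper applies Lemma \ref{LEMMAAGG} only to \emph{bounded} Borel sets, builds from these effects a $\sigma$-additive premeasure on the ring $\cF(S)$ of bounded Borel sets, invokes the Carath\'eodory extension theorem, and only then recovers the effects for unbounded $\Delta$ as strong limits of monotone operator sequences. You instead observe that the diagonal bound $0\le\Lambda_\Delta(\psi,\psi)\le\|\psi\|^2$ holds for \emph{every} $\Delta\in\cB(S)$ --- by monotonicity of the integral of the non-negative density together with Lemma \ref{LEMMA3} --- so Lemma \ref{LEMMAAGG} produces all the effects at once and Carath\'eodory is never needed; $\sigma$-additivity is then obtained by the same monotone-operator-limit device (Proposition 3.76 of \cite{M}) that the paper also uses. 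This is a genuine streamlining. Your treatment of (d) is also more explicit than the paper's: since the integral in (\ref{VVN}) is defined through the graph parametrization, the equality of the two integrands on $\Delta\subset S\cap S'$ really does require $\nabla t_S=\nabla t_{S'}$ almost everywhere on the projection of $\Delta$, which you correctly extract from the Lebesgue density theorem, whereas the paper declares this step immediate.

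One point you should make explicit: for \emph{unbounded} $\Delta$ the off-diagonal value $\Lambda_\Delta(\psi,\psi')=\int_\Delta\omega^{J^{\psi,\psi'}_g}$ must be shown to be finite before Lemma \ref{LEMMAAGG} can be applied, since that lemma requires a Hermitian form defined on all of $\cD(\cH)\times\cD(\cH)$, and the polarized current is not causal, so (\ref{VVN}) only assigns a meaning in $[0,+\infty]$ to integrals of causal currents. This is a one-line fix: by sesquilinearity, $J^{\psi,\psi'}_g$ is a finite complex-linear combination of the diagonal currents $J_{g,\chi}$ with $\chi=\psi\pm\psi',\ \psi\pm i\psi'$, each of whose coordinate densities is non-negative with finite total integral equal to $\|\chi\|^2$ by Lemma \ref{LEMMA3}; hence the integral converges absolutely on every Borel set. (This is precisely the convergence issue that motivates the paper's restriction to bounded sets in the first step.) With that remark added, your argument is complete.
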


\begin{proof}
Item (c) is a trivial consequence of (a),(b), and \cite{C}, so we prove (a), (b), and (d).  The last statement is evident in view of the items (a), (b), (d) and Def. \ref{GSL}\\
We prove (a) and (b) together. Let   $x^0=t_S(\vec{x})$ be the map which defines $S$ by identifying it with the spatial $\bR^3$ in a given   Minkowski system of coordinates $x^0, \vec{x}$ according to Proposition \ref{propS}. To make easier the notation we shall omit $^g$ in $\sT^g_S$.   
If $\Delta \in \cB(\bR^3)$ is bounded, so that the integral below is defined,  consider the Hermitian form 
$$\Lambda_\Delta (\psi,\psi')$$ $$ := \sp\int_\Delta \spa \int_{\bR^3}  \spa \int_{\bR^3} \spa\frac{(p^0+k^0) - (\vec{p}+ \vec{k})\cdot \nabla t_S(\vec{x})}{2(2\pi)^3\sqrt{k^0p^0}} g(-k\cdot p) e^{i(\vec{p}-\vec{k})\cdot \vec{x}-i(p^0-k^0)t_S(\vec{x})}\overline{\phi_\psi}(\vec{k}) \phi_{\psi'}(\vec{p}) d^3p d^3k  d^3x$$
for $\psi,\psi' \in \cD(\cH)$ and $\phi_\psi, \phi_{\psi'}$ are the corresponding elements in the isomorphic  Hilbert space $L^2(\bR^3,d^3p)$. Since (use (\ref{VVN}))
$$\Lambda_\Delta(\psi,\psi) = \int_\Delta \omega^{J_{g,\psi}}\:,$$
we conclude that 
\beq 0\leq \Lambda_\Delta(\psi,\psi) \leq \langle \psi|\psi \rangle\:. \label{INEQ}\eeq
The former inequality arises from the last identity in (\ref{VVN}) when remembering that $J_{g,\psi}$ is zero or causal and future directed for Lemma \ref{LEMMA0}. The latter inequality in (\ref{INEQ}) is consequence of the positivity of the integrand $ \omega^{J_{g,\psi}}$ in coordinates (again for (\ref{VVN})) and of Lemma \ref{LEMMA3}.
We can finally apply Lemma \ref{LEMMAAGG} proving that there exists $\sT_S(\Delta) \in \gB(\cH)$ such that $0\leq T_S(\Delta)\leq I$
and \beq \langle \psi|T_S(\Delta) \psi\rangle = \int_\Delta \omega^{J_{g,\psi}}\:. \label{INTFI}\eeq
Eq.(\ref{POVMS''}) is therefore satisfied for $\Delta$ bounded.
To go on  we define $\cF(S)$ as the ring of Borel sets of $S$ which are bounded in $\bR^3$.  It is clear that the $\sigma$-algebra generated by $\cF(S)$ is $\cB(S)$ itself.
For the elements  $\Delta \in \cF(S)$  we can use the positive operator\footnote{This positivity property is by no means obvious and it may fail when trying to define localization POVMs in terms of objects which are {\em classically} positive like energy \cite{PRD}. See Remark \ref{remarkPRD} below.}   $\sT_S(\Delta) : \cH \to \cH$
bounded by $I$ defined above.
We want to prove that, if $\psi\in \cH$, the map $$\nu_\psi : \cF(S) \ni \Delta \mapsto \langle \psi |\sT_S(\Delta) \psi \rangle \in [0, ||\psi||^2]\; \quad \psi \in \cH$$ is a $\sigma$-additive premeasure on $\cF(S)$. We stress that  $\sigma$-additivity is guaranteed by the monotone convergence theorem referred to the integral in (\ref{INTFI}) -- using the fact  that the integrand  is positive. However this argument works {\em only when $\psi \in \cD(\cH)$} since  (\ref{INTFI}) is given for this type of functions. On the other hand, the very structure of  (\ref{INTFI}) implies (simple)  additivity of $\nu_\psi$  for $\psi \in \cH$ just by the continuity of $\sT_S(\Delta)$.  To prove $\sigma$-additivity for the general $L^2$ case, consider  $\Delta \in \cF(S)$ which is the countable union of pairwise-disjoint  sets $\Delta_n \in \cF(S)$.  If $\psi \in \cH$, since all operators $\sT_S(\Delta_n)$ are positive and additivity holds, we conclude that
$$\sT_N := \sum_{n=0}^N \sT_S(\Delta_n) \geq 0 \quad \mbox{as well as} \quad \sT_N \leq \sT_{N+1} \leq I\:.$$
The latter inequality arises from $\sT_N = \sT_S(\cup_{n=0}^N \Delta_n)\leq I$. A known result on increasing sequences of positive operators (Proposition 3.76 \cite{M}) proves that there exist a bounded  operator $P: \cH \to \cH$ such that $0\leq P \leq I$ and
\beq \langle \psi |\sT_N\psi \rangle \to \langle \psi| P \psi \rangle \quad \mbox{as $N\to +\infty$ for all $\psi \in \cH$}\label{LIM}\eeq
However, $\sigma$-additivity for $\psi \in \cD(\cH)$ guarantees that
$$\sum_{n=0}^{N} \nu_\psi(\Delta_n)= \langle \psi |\sT_N\psi \rangle \to \langle \psi| \sT_S(\Delta) \psi \rangle \quad \mbox{as $N\to +\infty$ for all $\psi \in \cD(\cH)$}\:.$$
Comparing the found limits, we conclude that  
$ \langle \psi| \sT_S(\Delta) \psi \rangle  =  \langle \psi| P \psi \rangle $ for every $\psi\in \cD(\cH)$. By continuity of $P$ and $\sT_S(\Delta)$, this identity extends to $\psi \in \cD(\cH)$ which, in turn, yields $\sT_S(\Delta)= P$ since  the Hilbert space is complex. In summary, (\ref{LIM})
can be rephrased to
$$\sum_{n\in \bN} \langle \phi | \sT_S(\Delta_n) \phi \rangle = \langle \phi | \sT_S(\cup_{n\in \bN}\Delta_n) \phi \rangle\quad \mbox{for all $\phi \in \cH$,}$$
that is the wanted $\sigma$-additivity property for $\nu_\phi$.
 As a consequence of the Carath\'eodory extension theorem, there is a positive $\sigma$-additive  measure
$\overline{\nu_\phi} : \cB(S) \to  [0,+\infty]$ which extends $\nu_\phi$, for every given $\phi \in L^2(\bR^3, d^3k)$. This measure is unique because $S$ is countable union of sets in $\cF(S)$ thus with  premeasure $\nu_\phi$ 
 finite.  $\overline{\nu_\phi}$ is  finite by inner continuity: If $\Delta_n \in \cF(S)$ satisfy $\Delta_n \subset \Delta_{n+1}$ and $\cup_{n\in \bN} \Delta_n = S$, we have 
 $$\overline{\nu_\phi}(S) = \sup_{n \in \bN} \overline{\nu_\phi}(\Delta_n)
 =  \sup_{n \in \bN} \nu_\phi(\Delta_n) \leq ||\phi||^2$$
Now we use $\nu_\phi$ to extend the definition of $\sT_S(\Delta)$ to the case of $\Delta \in \cB(S)$ without the constraint $\Delta \in \cF(S)$.\\
Consider $\Delta \in \cB(S)$ and a sequence $\Delta_n \in\cF(S)$ such that $\Delta_{n} \subset \Delta_{n+1} \subset \Delta$ and $\cup_{n\in \bN} \Delta_n = \Delta$. For instance, $\Delta_n = \Delta \cap B_n(0)$, where $B_n(0)$ is the open ball of radius $n$ centered at the origin of $\bR^3\equiv S$.  By construction, using additivity, $0\leq \sT_S(\Delta_n) \leq \sT_S(\Delta_{n+1})\leq I$. Therefore, again for Proposition 3.76 \cite{M}, there exists a bounded everywhere defined operator \beq \sT_S(\Delta) := s\mbox{-}\lim_{n\to +\infty} \sT_S(\Delta_n) \label{LIMSS}\eeq such that $0\leq \sT_S(\Delta) \leq I$. With an argument strictly similar to the one used before, this operator satisfies
$$\overline{\nu_\phi}(\Delta) = \langle \phi | \sT_S(\Delta)\phi\rangle \quad \mbox{for every $\phi\in\cH$.}$$
This identity proves both that $\sT_S(\Delta)$ extends the definition for $\Delta \in \cF(S)$ and that $\sT_S(\Delta)$ does not depend on the used sequence of sets $\Delta_n\in \cF(S)$ to define it. 
Now fix  $\Delta \in \cB(S)$ with $\Delta$ unbounded. Taking advantage of a sequence $\Delta_n := \Delta \cap B_n(0)$, according to (\ref{LIMSS}) and (\ref{INTFI}), 
we can prove  that (\ref{POVMS''}) is valid also if $\Delta$ unbounded for $\psi \in \cD(\cH)$:
$$\langle \psi| \sT_S(\Delta) \psi \rangle =  \lim_{n\to +\infty} \langle \psi|\sT_S(B_n(0)\cap \Delta) \psi\rangle 
=\lim_{n\to +\infty}\int_{B_n(0)\cap \Delta}  \omega^{J_{g,\psi}} = \int_{\Delta}  \omega^{J_{g,\psi}}\:.$$
In the last limit,  we exploited the  monotone convergence theorem (the integrand being positive). 
Finally, let us consider the case $\Delta = S\equiv \bR^3$. Again, if $\psi \in \cD(\cH)$
$$\langle \psi| \sT_S(S) \psi \rangle =  \lim_{n\to +\infty} \langle \psi|\sT_S(B_n(0)) \psi\rangle 
=\lim_{n\to +\infty}\int_{B_n(0)}  \omega^{J_{g,\psi}} = \int_{S}  \omega^{J_{g,\psi}} =\langle\psi|\psi\rangle$$
The last limit, due to the monotone convergence theorem (the integrand being positive) coincides to the integral on the whole $S$ which is $\langle\psi|\psi\rangle$ according to  Lemma \ref{LEMMA3}. Hence $\langle\psi|\sT_S(S)-I)\psi \rangle=0$ for every $\psi \in \cD(\cH)$. The standard density and continuity argument (in a complex Hilbert space) permits to conclude that $\sT_S(S) =I$.
This result also proves  (b) if $\sT_S$ satisfies the other requirements of a POVM. To end the proof, it is sufficient to prove that $\{\sT_S(\Delta)\}_{\Delta \in \cB(S)}$ is a POVM in all cases. It is equivalent to prove that  $\{\langle \psi|\sT_S(\Delta)\phi\rangle \}_{\Delta \in \cB(S)}$ is a complex measure for every choice of $\phi,\psi \in \cD(\cH)$. The only fact to be proved is that $\cB(\bR^3) \ni \Delta \mapsto \langle \psi|\sT_S(\Delta)\phi\rangle \in \bC$ is $\sigma$-additive. This fact immediately arise by  $\sigma$-additivity  of the positive Borel measure $\cB(S) \ni \Delta \mapsto \langle \chi |\sT_S(\Delta)\chi\rangle \in [0, \||\chi||^2]$, by choosing $\chi = \psi\pm \phi$ and $\chi = \psi \pm i\phi$ and taking advantage of the polarization identity.\\
The proof of (d) is trivial due to (a), polarization,  and an obvious density argument.\\
 The last statement of the thesis  immediately arises from Proposition \ref{PROPCOEES}.
\end{proof} 

\section{Spacelike Cauchy localization observables out of the stress-energy tensor of massive KG particles}\label{SECMMM}
In the final Sect 7 of \cite{M2}, a second type of spatial localization  observable was introduced by  generalizing an idea by D. Terno \cite{T} also analysed   in the first part of \cite{M2}. These POVMs denoted by $\sM^{n'}_{n,t}$ (with $n, n'\in \sT_+$) were constructed on spacelike flat Cauchy surfaces out of the stress energy tensor of the Klein Gordon field. 

The goal of the remaining part of this section is to prove that this different notion of POVM can be defined on spacelike Cauchy surfaces of $\bM$, giving rise to a Cauchy  localization  according to Def. \ref{GSL}.
  We expect that  the construction can be  generalized to any static (or stationary) globally hyperbolic  spacetime referring to the Hadamard static vacuum, since the theoretical construction does not depend on the use of Fourier transform, at least at heuristic level.  This conjecture will be analysed elsewhere.

\subsection{The POVM $\sM^n_{\Sigma}$ for massive KG particles}
In Section 7 of \cite{M2}, extending a notion introduced by D. Terno \cite{T},  a family of POVMs was introduced on all spacelike flat Cauchy surfaces $\Sigma$ of $\bM$ for a common choice of a reference frame $n\in \sT_+$. (If $n=n_\Sigma$ one obtains the very notion introduced by Terno, studied and made rigorous in the first part of \cite{M2} that is a special case of the following discussion. A rigorous proof of CT for that observable appears in \cite{M2}).

At the level of 2nd quantization of the massive real Klein-Gordon field, the considered  POVM is formally defined as follows on  $\Sigma \in \cC^{sf}_\bM$ (we readapt the notation to the  choices of the present work) 
\beq\label{Ternoinformal}
\sM^n_{\Sigma}(\Delta):= \frac{1}{\sqrt{H_n}}P_1 \int_\Delta :\spa\hat{T}_{\mu\nu}\spa:\spa(x) n^\mu n_\Sigma^\nu \:  d \nu_\Sigma(x) P_1\frac{1}{\sqrt{H_n}} \:, \quad \Delta \in \cL(\Sigma)\:,
\eeq
where $P_1 : \gF_+({\cal H}) \to {\cal H}$ is the orthogonal projector onto the one-particle space of the symmetric Fock space $\gF_+({\cal H}_m)$ constructed upon the Minkowski vacuum state with the Hilbert space ${\cal H}$ defined as in (\ref{COVH}) as the one-particle subspace. $:\spa \hat{T}_{\mu\nu}\spa:\spa(x)$ is the {\em normally ordered} stress energy tensor operator.  $H_n$ is the Hamiltonian operator of the quantum field in the reference frame $n\in \sT_+$.
$\Sigma$ is a {\em flat} spacelike Cauchy surface orthogonal with constant normal unit vector  $n_\Sigma\in \sT_+$.

The overall idea at the basis of \cite{T} and \cite{M2} is that a physical procedure to detect a particle in a region of a flat Cauchy surface may exploit  the energy of the particle.  However,  we have many ways to  synchronize a net of detectors and,  as discussed in \cite{M2}:  $\langle \psi|\sM^n_{\Sigma}(\Delta)\psi\rangle$ accounts for  the probability to find a particle of state $\psi$ in $\Delta \subset \Sigma$ {\em using a net of detectors which are (a) at rest in $n$ but (b) synchronized on  $\Sigma$}.  This possibility naturally arises from the observation \cite{M2} that 
\beq P_1 \int_\Delta :\spa\hat{T}_{\mu\nu}\spa:\spa(x) n^\mu n_\Sigma^\nu \:  d \nu_\Sigma(x) P_1 \geq 0,\label{TNAIVE}\eeq
 for {\em every} choice of $n,n_\Sigma\in \sT_+$ and $\Delta \in \cL(\Sigma)$,
even if positivity fails  when removing the one-particle space projectors $P_1$.

\begin{remark}\label{remarkPRD} {\em A  similar local positivity property {\em does not hold} when dealing with massive Dirac  particles, even if $n=n_\Sigma$ \cite{PRD}: A localization POVM cannot be constructed for this type of fermions in terms of energy on a given rest space of a Minkowski reference frame according to  the approach  of \cite{T,M2}. The natural physical object, useful to this goal,  is instead the fermionic  current operator $:\spa\hat{J}^\mu \spa: \spa(x)$ \cite{C0}.}  \hfill $\blacksquare$
\end{remark}

\noindent The rigorous  definition of the normalized  POVM $\sM^n_{\Sigma}$ in $\cH$
corresponding to the formal object (\ref{Ternoinformal})
 was given in Thm 37 of \cite{M2} in terms of a kinematic deformation of the  PVM 
$\sQ_\Sigma(\Delta)$
of the {\em Newton-Wigner position operator} \cite{M2} (see (\ref{PNW}) below) on the spacelike flat Cauchy surface $\Sigma$:
$$
\sM^n_{\Sigma}(\Delta) :=\frac{1}{2}\left( \sqrt{\frac{H_{n_\Sigma}}{H_{n}}}\sQ_{n_\Sigma}(\Delta)
 \sqrt{\frac{H_{n}}{H_{n_\Sigma}}} +  \sqrt{\frac{H_{n}}{H_{n_\Sigma}}} \sQ_{n_\Sigma}(\Delta) \sqrt{\frac{H_{n_\Sigma}}{H_{n}}}\right)
$$
\beq 
-\frac{n\cdot n_\Sigma}{2}  \sqrt{\frac{H_{n_\Sigma}}{H_{n}}}\left(\eta^{\mu\nu}\frac{P_{\mu}}{H_{n_\Sigma}}  \sQ_{n_\Sigma}(\Delta) \frac{P_{\nu}}{H_{n_\Sigma}}  + \frac{m}{H_{n_\Sigma}} \sQ_{n_\Sigma}(\Delta)  \frac{m}{H_{n_\Sigma}} \right)\sqrt{\frac{H_{n_\Sigma}}{H_{n}}}\:.\label{MPOVM}
\eeq
Above 
$\Delta \in \cL(\Sigma)$, $H_r := -P \cdot r$ (for $r\in \sT_+$) 
is the Hamiltonian  in the Minkowski reference frame $r$. 
The various everywhere-defined {\em bounded}  composite operators $H_n/H_{n_\Sigma}$,  $P_{\nu}/H_{n_\Sigma}$ etc. are defined in terms of the joint spectral measure of the four momentum operator $P^\mu$ and standard spectral calculus. The components $P^\mu$ are referred to a Minkowski chart adapted to $n$.

\subsection{Properties of  $\sM_\Sigma^n$ for spacelike flat Cauchy surfaces $\Sigma$: Covariance,  causality, no strict localizability, Newton-Wigner, Heisenberg inequality,}
Though its clear  from (\ref{MPOVM})  that $\sM^n_{\Sigma}(\Delta)\in \gB(\cH)$ and that $\sM^n_{\Sigma}(\Sigma)=I$ (notice that $ \sQ_{n_\Sigma}(\Sigma)=I$
and $P_\mu P^\mu + m^2I=0$),  it  is not evident that $\sM^n_{\Sigma}(\Delta)\geq 0$, nor the connection between (\ref{MPOVM}) and (\ref{Ternoinformal}) seems  straightforward. We spend  this section  about these issues  because the discussion will turn out useful  when we shall generalize $\sM^n_\Sigma$ to generally curved spacelike Cauchy surfaces $S$.

From Eq.(17) in \cite{M2}, we know  that, if $\Sigma\in \cC^{sf}_\bM$,  \beq \langle \psi|\sQ_{\Sigma}(\Delta) \psi\rangle =  \int_{\sV_{m,+}}\sp\sp \overline{\psi(p)}\int_{\Delta}   \int_{\sV_{m,+}}  \sp\sp \frac{e^{i(q-p)\cdot x}}{(2\pi)^3} \sqrt{E_{n_\Sigma}(p) E_{n_\Sigma}(q)}\:\psi(q)   d\mu_m(q) 
d\nu_{\Sigma}(x)  d\mu_m(p)\:,\quad \psi \in {\cal S}(\cH)\:.\label{PNW}\eeq
We adopted the notation
 \beq E_r(p) := -r\cdot p\:, \quad \mbox{for every $r\in \sT_+$ and $p\in \sV$.} \label{EMP}\eeq
($E_r(p)$ is nothing but the component $p^0$ of $p\in \sV$ in any Minkowski chart adapted to $r$.) 
 At this juncture, 
it is not difficult to see from (\ref{MPOVM})  that,  if $\psi \in {\cal D}(\cH)$ or more generally $\psi \in {\cal S}(\cH) (\supset  {\cal D}(\cH))$ as used in \cite{M2}, then
$$\langle \psi|\sM^n_{\Sigma}(\Delta)\psi\rangle =$$
\beq  \int_{\sV_{m,+}} \sp\sp\overline{\psi(p)}\int_{\Delta}   \int_{\sV_{m,+}}  \sp\sp \frac{e^{i(q-p)\cdot x}}{(2\pi)^3} \frac{E_n(p)E_{n_\Sigma}(q)+ E_n(q) E_{n_\Sigma}(p) - n\cdot n_\Sigma(p\cdot  q + m^2)}{2 \sqrt{ E_n(q) E_n(p)}} \psi(q)   d\mu_m(q) d\nu_{\Sigma}(x)  d\mu_m(p)\:,\label{MM}\eeq
Conversely, since  ${\cal D}(\cH)$ and ${\cal S}{\cal \Sigma}(\cH)$ are dense in $\cH$ and $\sM^n_{\Sigma}(\Delta) \in \gB(\cH)$, identity (\ref{MM})  completely determines $\sM^n_{\Sigma}(\Delta)$ by polarization and continuity.

First of all,  we prove that   (\ref{MM}) can be written into an equivalent form already used in \cite{M2}, which eventually  leads to both  the requested positivity condition
and the relation with  (\ref{Ternoinformal}).\\

\begin{lemma}\label{LEMMASWAP} If $\psi\in {\cal S}(\cH)$, the right-hand side of  (\ref{MM}) can be equivalently written with the first two integrals interchanged:
\beq  \int_{\Delta} \sp \int_{\sV_{m,+}}   \int_{\sV_{m,+}}  \sp\sp \frac{e^{i(q-p)\cdot x}}{(2\pi)^3} \frac{E_n(p)E_{n_\Sigma}(q)+ E_n(q) E_{n_\Sigma}(p) - n\cdot n_\Sigma(p\cdot  q + m^2)}{2 \sqrt{ E_n(q) E_n(p)}} \overline{\psi(p)}\psi(q)   d\mu_m(q)  d\mu_m(p)  d\nu_{\Sigma}(x)\label{MM2} \eeq
\end{lemma}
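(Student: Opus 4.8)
The plan is to reduce the claimed interchange of the first two integrations to the Fubini--Tonelli theorem. The only genuine difficulty is that $\Delta$ may be \emph{unbounded}: the oscillatory factor $e^{i(q-p)\cdot x}$ has modulus $1$, so without first carrying out the momentum integrations the integrand exhibits no decay in $x$ and the crude product-measure estimate over $\Delta\times\sV_{m,+}\times\sV_{m,+}$ diverges. I would therefore treat bounded $\Delta$ by direct Fubini and then pass to arbitrary $\Delta$ by exhaustion, the essential input being an integrability statement on $\Sigma$.

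First I would record a factorization of the kernel. Writing $E_n(p)=-n\cdot p$, $E_{n_\Sigma}(q)=-n_\Sigma\cdot q$ as linear forms in the components of $p,q$, and $p\cdot q=\eta_{\mu\nu}p^\mu q^\nu$, the numerator is a finite sum of terms $a(p)\,b(q)$ with $a,b$ smooth and polynomially bounded; dividing by $2\sqrt{E_n(p)}\sqrt{E_n(q)}$ preserves this structure, so
\[
\frac{E_n(p)E_{n_\Sigma}(q)+E_n(q)E_{n_\Sigma}(p)-n\cdot n_\Sigma(p\cdot q+m^2)}{2\sqrt{E_n(q)E_n(p)}}=\sum_{\alpha}f_\alpha(p)\,g_\alpha(q),
\]
a finite sum with each $f_\alpha,g_\alpha$ smooth and of at most polynomial growth. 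Since multiplication by such a factor preserves ${\cal S}(\bR^3)$, for $\psi\in{\cal S}(\cH)$ the profiles $f_\alpha\overline{\phi_\psi}$ and $g_\alpha\phi_\psi$ remain rapidly decreasing. For \emph{bounded} $\Delta$ the interchange is then immediate: because $|e^{i(q-p)\cdot x}|=1$ and $|\Delta|<\infty$, the integrand of (\ref{MM}) is absolutely integrable on $\Delta\times\sV_{m,+}\times\sV_{m,+}$ (a polynomial in $|\vec p|,|\vec q|$ against the Schwartz densities $|\phi_\psi(\vec k)|,|\phi_\psi(\vec p)|$), so Fubini allows the three integrations to be reordered freely and (\ref{MM}) equals (\ref{MM2}).

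To reach arbitrary (possibly unbounded) $\Delta$, I would exhaust it by $\Delta_N:=\Delta\cap B_N$, with $B_N$ the coordinate ball of radius $N$. Working in a Minkowski chart adapted to $n_\Sigma$, so that $\Sigma$ is a slice $x^0=\tau$ and $d\nu_\Sigma=d^3x$, set
\[
G(\vec x):=\int_{\sV_{m,+}}\int_{\sV_{m,+}}\frac{e^{i(q-p)\cdot x}}{(2\pi)^3}\Big(\textstyle\sum_\alpha f_\alpha(p)g_\alpha(q)\Big)\overline{\psi(p)}\psi(q)\,d\mu_m(q)\,d\mu_m(p).
\]
By the factorization $G=\frac{1}{(2\pi)^3}\sum_\alpha C_\alpha B_\alpha$ with $B_\alpha(\vec x)=\int_{\sV_{m,+}}g_\alpha(q)\psi(q)e^{iq\cdot x}d\mu_m(q)$ and similarly $C_\alpha$. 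The main obstacle is to show $G\in L^1(\Sigma,d\nu_\Sigma)$. On the slice $x^0=\tau$, rewriting $d\mu_m(q)=d^3q/q^0$ and $\psi(q)=\sqrt{q^0}\,\phi_\psi(\vec q)$, each $B_\alpha$ becomes the ordinary $\bR^3$-Fourier transform of $\vec q\mapsto q^{0\,-1/2}g_\alpha(q)\phi_\psi(\vec q)$ up to a smooth bounded modulating factor in $\vec q$; this profile is Schwartz (the Schwartz $\phi_\psi$ times smooth polynomially bounded factors), so $B_\alpha$, and likewise $C_\alpha$, is Schwartz in $\vec x$. Hence $G$ is Schwartz and, in particular, $\nu_\Sigma$-integrable over all of $\Sigma$.

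With $G\in L^1(\Sigma,d\nu_\Sigma)$ the limit is routine. For each $N$ the bounded case gives (\ref{MM})$_{\Delta_N}=$ (\ref{MM2})$_{\Delta_N}$, where the left side equals $\langle\psi|\sM^n_\Sigma(\Delta_N)\psi\rangle$. As $\Delta_N\uparrow\Delta$, the left side tends to $\langle\psi|\sM^n_\Sigma(\Delta)\psi\rangle$ by continuity from below of the finite positive measure $\langle\psi|\sM^n_\Sigma(\cdot)\psi\rangle$ (a normalized POVM), while the right side $\int_{\Delta_N}G\,d\nu_\Sigma\to\int_\Delta G\,d\nu_\Sigma$ by dominated convergence with dominator $|G|\,\mathbf 1_\Delta\in L^1$. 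Equating the two limits yields (\ref{MM})$=$(\ref{MM2}) for the given $\Delta$, which is the assertion. The only delicate ingredient, as anticipated, is the Schwartz/$L^1$ property of $G$ on $\Sigma$, which is exactly what supplies the decay in $x$ that the unreordered integral lacks.
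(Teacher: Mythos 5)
Your kernel factorization into finitely many products $f_\alpha(p)g_\alpha(q)$ of smooth, polynomially bounded functions, the Fubini argument for bounded $\Delta$, and the proof that $G$ is Schwartz on $\Sigma$ (hence $\nu_\Sigma$-integrable) are all correct. The genuine problem sits in your limiting step for unbounded $\Delta$: you justify convergence of the left-hand side along $\Delta_N\uparrow\Delta$ by ``continuity from below of the finite positive measure $\langle\psi|\sM^n_\Sigma(\cdot)\psi\rangle$ (a normalized POVM)''. At the point where Lemma \ref{LEMMASWAP} is stated, positivity of $\sM^n_\Sigma(\Delta)$ is precisely what is \emph{not} yet available: the paper says explicitly that ``it is not evident that $\sM^n_{\Sigma}(\Delta)\geq 0$'', and the statement that $\{\sM^n_\Sigma(\Delta)\}_{\Delta\in\cL(\Sigma)}$ is a normalized POVM is Theorem \ref{MAIN2}, whose proof rests on this very lemma. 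As written, your argument therefore assumes a fact whose proof requires the lemma you are proving.

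The gap is repairable without changing your strategy. From (\ref{MPOVM}), $\Delta\mapsto\langle\psi|\sM^n_\Sigma(\Delta)\psi\rangle$ is a finite linear combination of terms $\langle A\psi|\sQ_{n_\Sigma}(\Delta)B\psi\rangle$ with $A,B$ bounded operators, hence a $\sigma$-additive \emph{complex} measure by the corresponding property of the PVM $\sQ_{n_\Sigma}$; continuity from below holds for complex measures, and positivity is never needed. Alternatively, you can avoid the operator altogether: with your factorization, the right-hand side of (\ref{MM}) equals $\frac{1}{(2\pi)^3}\sum_\alpha\int_{\sV_{m,+}}\overline{\psi(p)}f_\alpha(p)\bigl(\int_\Delta e^{-ip\cdot x}B_\alpha(\vec x)\,d\nu_\Sigma(x)\bigr)d\mu_m(p)$ with each $B_\alpha$ Schwartz, so dominated convergence in $p$ (dominator $\sum_\alpha|f_\alpha(p)\overline{\psi(p)}|\,\|B_\alpha\|_{L^1(\Sigma)}$, integrable against $d\mu_m$) gives the convergence of the left-hand side along $\Delta_N\uparrow\Delta$ directly. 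It is worth noting that the paper's own proof sidesteps the bounded/unbounded dichotomy entirely: after the same type of factorization it recognizes the $q$-integral as a Fourier transform $\hat g$, observes that $\chi_\Delta\hat g\in L^1\cap L^2$ for \emph{any} measurable $\Delta$, and applies the Plancherel theorem to move the $p$-integration past the $x$-integration in one stroke, with no exhaustion and no reference to the operator $\sM^n_\Sigma$.
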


\begin{proof}  See Appendix \ref{APPENDIXA}. \end{proof}

\noindent To go no, if $\psi \in \cD(\cH)$ (or more generally, $\psi \in {\cal S}(\cH)$)  and $n\in \sT_+$, we define the solution of the Klein-Gordon equation 
\beq
\Phi^\psi_n(x) := 
\int_{\sV_{m,+}}  \frac{\psi(p)e^{i p\cdot x} }{(2\pi)^{3/2}\sqrt{E_n(p)}} d\mu_m(p)\:. \label{wavePHI}
\eeq
$\Phi^\psi_n=\Phi^\psi_n(x)$ is smooth on $\bM$ and bounded with all of its derivatives. More strongly $\Phi_n^\psi \in \cS(\Sigma)$ for every flat $\Sigma\in \cC^{sf}_\bM$, where $\cS(\Sigma)$ is the usual Schwartz space on $\bR^3\equiv \Sigma$ referring to any Minkowski  chart adapted to $n_\Sigma$.
We stress that (\ref{wavePHI}) is {\em not} the standard covariant  Klein-Gordon wavefuntion $\varphi_\psi$ (\ref{wavePSI}) associated to a state $\psi \in \cH$, since the integrand above includes a further ''anomalous'' factor $E_p^{-1/2}(p)$. This latter can be traced back to the factors $H_n^{1/2}$ in (\ref{TNAIVE}).
\begin{definition} {\em  If $\psi \in \cD(\cH)$ and $n\in \sT_+$, the  $n$-{\bf normalized stress energy operator} is, in components of a Minkowski chart,
$$T^{\psi}_{\mu\nu}(x)_n := \frac{1}{2}\left(\partial_\mu \overline{\Phi^\psi_n(x)}\partial_\nu\Phi^\psi_n(x) 
 +\partial_\mu \Phi^\psi_n(x)\partial_\nu\overline{\Phi^\psi_n(x)}\right)$$ \beq - \frac{1}{2}\eta_{\mu\nu} \left( \partial^\alpha \overline{\Phi^\psi_n(x)} \partial_\alpha \Phi^\psi_n(x) + m^2 \overline{\Phi^\psi_n(x)} \Phi^\psi_n(x) \right)\:,\label{TTT}
\eeq
and 
the associated current
\beq
J^{\psi\mu}_{n}(x) := n^\nu T^{\psi \mu}_{\nu}(x)_n \:.\label{53}
\eeq
\hfill $\blacksquare$}
\end{definition}

\begin{lemma} \label{LEMMALAST} With the said definitions, for $\psi \in \cD(\cH)$ and $n\in \sT_+$,
\begin{itemize}
\item[(a)] $J^{\psi}_{n}$ is smooth, belongs  to  $\cS(\Sigma)$ for every $\Sigma \in \cC^{sf}_\bM$, and  is bounded;
\item[(b)] $J^{\psi}_{n}$ is conserved;
\item[(c)]  $J^{\psi}_{n}(x)$ is either zero or causal and past-directed if $x\in \bM$, so that 
\beq
J_n^{\psi}(x) \cdot r \geq 0 \quad \forall x\in \bM\;, \forall r \in \sT_+\:. \label{POSJ}
\eeq
\end{itemize}
\end{lemma}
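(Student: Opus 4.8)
The plan is to lean entirely on the properties of $\Phi^\psi_n$ already recorded right after its definition — that it is smooth on $\bM$, bounded together with all its derivatives, lies in $\cS(\Sigma)$ for every flat $\Sigma\in\cC^{sf}_\bM$, and solves the Klein-Gordon equation $(\Box-m^2)\Phi^\psi_n=0$ (together with its conjugate) — and to read off the three claims from the explicit shape of $J^{\psi\mu}_n = n^\nu T^{\psi\mu}_\nu(x)_n$. Since $T^\psi_{\mu\nu}(x)_n$ is bilinear in $\Phi^\psi_n$, $\overline{\Phi^\psi_n}$ and their first derivatives, with the constant coefficients $n^\nu$ and $\eta_{\mu\nu}$, part (a) will be immediate: smoothness, global boundedness with all derivatives, and membership in $\cS(\Sigma)$ are all preserved under taking products, sums, and derivatives, and $\cS(\Sigma)$ is an algebra stable under these operations.

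For part (b) I would carry out the standard divergence computation. Abbreviating $\Phi\equiv\Phi^\psi_n$ and writing the tensor with raised indices as $T^{\psi\mu\nu}_n=\tfrac12(\partial^\mu\overline\Phi\,\partial^\nu\Phi+\partial^\mu\Phi\,\partial^\nu\overline\Phi)-\tfrac12\eta^{\mu\nu}(\partial^\alpha\overline\Phi\,\partial_\alpha\Phi+m^2\overline\Phi\Phi)$, one differentiates and collects terms. The four second-derivative terms cancel pairwise after relabelling the dummy index $\alpha$ and using that partial derivatives commute, leaving $\partial_\mu T^{\psi\mu\nu}_n=\tfrac12(\Box\overline\Phi\,\partial^\nu\Phi+\Box\Phi\,\partial^\nu\overline\Phi)-\tfrac12 m^2(\partial^\nu\overline\Phi\,\Phi+\overline\Phi\,\partial^\nu\Phi)$, which vanishes upon inserting $\Box\Phi=m^2\Phi$ and its conjugate. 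As $n^\nu$ is constant, $\partial_\mu J^{\psi\mu}_n=n_\nu\,\partial_\mu T^{\psi\mu\nu}_n=0$.

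The heart of the argument, and the step I expect to be the real obstacle, is part (c): the dominant energy condition, i.e. that $J^\psi_n$ is causal and past-directed. My plan is first to split $\Phi^\psi_n=\Phi_1+i\Phi_2$ into real and imaginary parts and to observe, using that the displayed $T^\psi_{\mu\nu}(x)_n$ is manifestly real (so that $\tfrac12(\partial_\mu\overline\Phi\,\partial_\nu\Phi+\partial_\mu\Phi\,\partial_\nu\overline\Phi)=\mathrm{Re}(\partial_\mu\overline\Phi\,\partial_\nu\Phi)$ and the imaginary cross terms drop), that the tensor decomposes additively $T^\psi_{\mu\nu}(x)_n=T_{\mu\nu}[\Phi_1]+T_{\mu\nu}[\Phi_2]$ into two ordinary real-scalar stress tensors. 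It then suffices to prove the claim for a single real field $\phi$, because the set of future-directed causal vectors together with $0$ is a convex cone, hence closed under addition, and $-J^\psi_n$ is a sum of the corresponding real-field contributions. For a single real $\phi$ I would work in a Minkowski chart adapted to $n$, where $n^\mu=(1,\vec 0)$, so that $-J^\mu=T^{\mu0}$ is the usual energy-momentum density. Setting $a:=(\partial_0\phi)^2$, $b:=|\nabla\phi|^2$, $c:=m^2\phi^2$, a direct computation gives the energy density $T^{00}=\tfrac12(a+b+c)\geq 0$ and the momentum density with $|\vec P|^2=ab$, whence $(T^{00})^2-|\vec P|^2=\tfrac14(a+b+c)^2-ab\geq\tfrac14(a-b)^2\geq 0$. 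Thus $-J$ has non-negative time component and is causal, hence future-directed causal or zero, so $J^\psi_n$ is past-directed causal or zero; finally $J^\psi_n\cdot m\geq 0$ for $m\in\sT_+$ follows since a future-directed causal vector has non-positive inner product with $m$. The one point to watch throughout is the index and sign bookkeeping forced by the $(-,+,+,+)$ signature, in particular the identification $-J^\mu=T^{\mu0}$ and the translation of ``$J\cdot m\geq 0$'' into ``past-directed''.
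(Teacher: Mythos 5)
Your proof is correct, and for parts (a) and (b) it is essentially the paper's own argument written out in full: the paper dismisses (a) as a trivial consequence of the definition and attributes (b) to the Klein--Gordon equation satisfied by $\Phi^\psi_n$, which is exactly the divergence computation you perform. (For (a), the one point worth making explicit is that the first derivatives $\partial_\mu\Phi^\psi_n$ are again of the form (\ref{wavePHI}) with $\psi(p)$ replaced by $\pm i p_\mu\psi(p)\in\cD(\cH)$, so they inherit smoothness, boundedness and the $\cS(\Sigma)$ property; your appeal to the algebra stability of $\cS(\Sigma)$ then finishes the job.) The genuine difference is in (c): the paper proves nothing there and simply cites Proposition 30 of \cite{M2}, whereas you give a self-contained argument --- split $\Phi^\psi_n$ into real and imaginary parts so that $T^{\psi}_{\mu\nu}(x)_n$ decomposes into two ordinary real-scalar stress tensors; for a single real field, in a chart adapted to $n$ (where indeed $-J^\mu=T^{\mu 0}$, a sign you track correctly against the $(-,+,+,+)$ signature and the definition $J^{\psi\mu}_n=n^\nu T^{\psi\mu}_{\ \nu}$), verify the pointwise dominant energy condition $T^{00}=\frac{1}{2}(a+b+c)\geq 0$ and $(T^{00})^2-|\vec{P}|^2=\frac{1}{4}(a+b+c)^2-ab\geq\frac{1}{4}(a-b)^2\geq 0$; then sum the two contributions using convexity of the closed causal cone, and pass from ``$-J$ future-directed causal or zero'' to (\ref{POSJ}) via the standard inner-product criterion. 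All of these steps are sound, and your conventions are consistent with the way (\ref{POSJ}) is used in the proof of Theorem \ref{MAIN2}. As for what each route buys: the paper's citation keeps the lemma short and defers to \cite{M2}, where the property was originally established, while your computation makes the lemma self-contained and exposes that (c) is nothing more than the classical dominant energy condition for the Klein--Gordon field, obtained by elementary pointwise algebra.
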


\begin{proof}  (a) is a trivial consequence of the definition. (b) Follows from the Klein Gordon equation which is satisfied by $\Phi^\psi_n$. (c) was established in Prop. 30 \cite{M2}.\end{proof} 

We are prompt to prove that $\sM^n_{\Sigma}$ is a normalized POVM when $\Sigma$ is a spacelike flat Cauchy surface. In particular $\sM^n_\Sigma(\Delta)\geq 0$.

\begin{theorem}\label{MAIN2} If $\Sigma\subset \bM$ is a spacelike flat Cauchy surface, the family of operators (\ref{MPOVM}), when $\Delta \in \cL(\Sigma)$, defines a normalized $\cH$-POVM. Furthermore, if $\psi \in \cD(\cH)$ (or more generally $\psi \in  {\cal S}(\cH)$),
\beq \langle \psi|\sM^n_{\Sigma}(\Delta)\psi\rangle = \int_\Delta  T^{\psi}_{\mu\nu}(x)_n  n^\mu n^\nu_\Sigma d\nu_\Sigma(x)\:,  \quad  \Sigma \in \cC^{sf}_\bM\:. \label{Ternoinformal2}\eeq
\end{theorem}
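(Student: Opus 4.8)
The plan is to reduce the whole statement to the momentum-space expectation value (\ref{MM}), exploiting that the theorem really has two logically separate contents. The first is the integral representation (\ref{Ternoinformal2}), which is a Fourier-analytic identity. The second is that $\{\sM^n_\Sigma(\Delta)\}_{\Delta\in\cL(\Sigma)}$ satisfies all the axioms of a normalized $\cH$-POVM. For the latter, boundedness of each $\sM^n_\Sigma(\Delta)$ and the normalization $\sM^n_\Sigma(\Sigma)=I$ are already transparent from the algebraic form (\ref{MPOVM}) (as noted just before the statement, using $\sQ_{n_\Sigma}(\Sigma)=I$ and $P_\mu P^\mu+m^2I=0$), while $\sigma$-additivity will be inherited from the spectral measure $\sQ_{n_\Sigma}$. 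The only genuinely nontrivial axiom is positivity $\sM^n_\Sigma(\Delta)\geq 0$, which I will extract as a corollary of (\ref{Ternoinformal2}) together with Lemma \ref{LEMMALAST}.

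First I would establish (\ref{Ternoinformal2}) for $\psi\in\cD(\cH)$ (and then, verbatim, for $\psi\in{\cal S}(\cH)$). Starting from (\ref{MM}) I apply Lemma \ref{LEMMASWAP} to bring the spatial integral $\int_\Delta d\nu_\Sigma$ to the outside, obtaining (\ref{MM2}); it then suffices to check that, for fixed $x$, the inner double momentum integral equals $T^\psi_{\mu\nu}(x)_n\,n^\mu n^\nu_\Sigma$. This is a direct computation: inserting the definition (\ref{wavePHI}) of $\Phi^\psi_n$ into (\ref{TTT}) and differentiating under the integral sign, $n^\mu\partial_\mu$ acting on $e^{ip\cdot x}$ pulls down $i(n\cdot p)=-iE_n(p)$ by (\ref{EMP}), the operator $n_\Sigma^\mu\partial_\mu$ pulls down $-iE_{n_\Sigma}(p)$, the contraction $\partial^\alpha\overline{\Phi^\psi_n}\,\partial_\alpha\Phi^\psi_n$ produces $p\cdot q$, and $m^2\overline{\Phi^\psi_n}\Phi^\psi_n$ produces $m^2$. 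Collecting the four contributions, the factors $E_n(p)^{-1/2}$ and $E_n(q)^{-1/2}$ from (\ref{wavePHI}) combine into the denominator $2\sqrt{E_n(p)E_n(q)}$, and the numerator reproduces exactly $E_n(p)E_{n_\Sigma}(q)+E_n(q)E_{n_\Sigma}(p)-n\cdot n_\Sigma(p\cdot q+m^2)$, i.e. the kernel appearing in (\ref{MM2}). This identifies the inner integral with $T^\psi_{\mu\nu}(x)_n\,n^\mu n^\nu_\Sigma$ and yields (\ref{Ternoinformal2}). The interchange of differentiation and integration, and the use of Lemma \ref{LEMMASWAP}, are licensed by the rapid decay of the integrands on $\cD(\cH)$ and ${\cal S}(\cH)$.

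Then I would deduce the POVM axioms. By (\ref{53}) and the symmetry of $T^\psi_{\mu\nu}(x)_n$ one has the pointwise identity
\[ T^\psi_{\mu\nu}(x)_n\,n^\mu n^\nu_\Sigma = J^\psi_n(x)\cdot n_\Sigma . \]
Since $n_\Sigma\in\sT_+$, Lemma \ref{LEMMALAST}(c), i.e. (\ref{POSJ}), gives $J^\psi_n(x)\cdot n_\Sigma\geq 0$ for every $x$, so the integrand in (\ref{Ternoinformal2}) is nonnegative and $\langle\psi|\sM^n_\Sigma(\Delta)\psi\rangle\geq 0$ for all $\psi$ in the dense subspace $\cD(\cH)$; as $\sM^n_\Sigma(\Delta)\in\gB(\cH)$, continuity extends this to all $\psi\in\cH$, whence $\sM^n_\Sigma(\Delta)\geq 0$. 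The same positive-integrand remark gives monotonicity $\sM^n_\Sigma(\Delta)\leq\sM^n_\Sigma(\Sigma)=I$. For $\sigma$-additivity I would observe that (\ref{MPOVM}) exhibits $\sM^n_\Sigma(\Delta)$ as a finite sum of terms $X_i\,\sQ_{n_\Sigma}(\Delta)\,Y_i$ with fixed $X_i,Y_i\in\gB(\cH)$; since the PVM $\sQ_{n_\Sigma}$ is $\sigma$-additive in the strong operator topology, each such term, and hence $\sM^n_\Sigma$, is $\sigma$-additive in the weak operator topology, so $\Delta\mapsto\langle\psi|\sM^n_\Sigma(\Delta)\phi\rangle$ is a complex measure for all $\psi,\phi\in\cH$. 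Together with the already-noted boundedness and normalization, this gives all the POVM axioms.

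The computational heart, and the only delicate point, is the Fourier identification in the second step. The bookkeeping must carefully track the ``anomalous'' factor $E_n(p)^{-1/2}$ in (\ref{wavePHI}) that distinguishes $\Phi^\psi_n$ from the ordinary covariant wavefunction $\varphi_\psi$ of (\ref{wavePSI}): it is precisely this factor which manufactures the denominator $\sqrt{E_n(p)E_n(q)}$ and makes the quadratic form built from $T^\psi_{\mu\nu}(x)_n$ close up to the kernel in (\ref{MM2}). One must also justify the differentiation under the integral sign and the Fubini swap of Lemma \ref{LEMMASWAP}, both controlled by the Schwartz/compact-support decay. By contrast, positivity is \emph{not} an obstacle at this stage, since the hard analytic fact $J^\psi_n\cdot m\geq 0$ for all $m\in\sT_+$ has already been isolated in Lemma \ref{LEMMALAST}(c) (Prop.~30 of \cite{M2}).
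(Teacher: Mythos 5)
Your proposal is correct and follows essentially the same route as the paper's proof: apply Lemma \ref{LEMMASWAP} to pass from (\ref{MM}) to (\ref{MM2}), expand (\ref{TTT}) via (\ref{wavePHI}) to identify the momentum-space kernel with $T^{\psi}_{\mu\nu}(x)_n n^\mu n^\nu_\Sigma = J^\psi_n\cdot n_\Sigma$, invoke (\ref{POSJ}) for positivity, extend by density and continuity, and obtain normalization and weak $\sigma$-additivity from the algebraic form (\ref{MPOVM}) and the PVM $\sQ_{n_\Sigma}$. The only difference is presentational (you establish (\ref{Ternoinformal2}) first and deduce positivity as a corollary, while the paper proves positivity and notes the identity emerges along the way), which changes nothing mathematically.
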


\begin{proof} 
First of all  $\langle \psi|\sM^n_{\Sigma}(\Delta)\psi\rangle \geq 0$  if $\psi\in {\cal S}(\cH)$. Indeed,  from Lemma \ref{LEMMASWAP} and expanding (\ref{TTT}) as prescribed in (\ref{wavePHI}), we find
$$\langle \psi|\sM^n_{\Sigma}(\Delta)\psi\rangle =$$
$$\int_{\Delta} \sp \int_{\sV_{m,+}}   \int_{\sV_{m,+}}  \sp\sp \frac{e^{i(q-p)\cdot x}}{(2\pi)^3} \frac{E_n(p)E_{n_\Sigma}(q)+ E_n(q) E_{n_\Sigma}(p) -
  n\cdot n_\Sigma(p\cdot  q + m^2)}{2 \sqrt{ E_n(q) E_n(p)}} \overline{\psi(p)}\psi(q)   d\mu_m(q)  d\mu_m(p)  d\nu_{\Sigma}(x) $$
$$= \int_\Delta J^\psi_n\cdot n_\Sigma\: d\nu_\Sigma(x) = \int_\Delta  n^\nu T^{\psi \mu}_{\nu}(x)_n\cdot n_{\Sigma\mu}\: d\nu_\Sigma(x) \geq 0
$$ 
on account of (\ref{POSJ}) and where the integral is finite because $J_n^{\psi} \in \cS(\Sigma)$.
(The found identity also establishes (\ref{Ternoinformal2}).)
 As a consequence $\sM^n_{\Sigma}(\Delta) \geq 0$ because, as already observed, $\langle \psi|\sM^n_{\Sigma}(\Delta)\psi\rangle$ is the limit of analogous matrix elements with $\psi \in \cD(\cH)$ or ${\cal S}(\cH)$. Normalization of the POVM has been already discussed beforehand, and   (\ref{MPOVM}) itself implies weak $\sigma$-additivity from the analogous property  of $\sQ_\Sigma$.
\end{proof} 

 Regarding the connection between  (\ref{MPOVM}) and (\ref{Ternoinformal}),  from  $E_p^{-1/2}(p)$ in (\ref{wavePHI}) and the expression  (\ref{TTT}) of the stress energy tensor, it is not difficult to see that (\ref{Ternoinformal2}) is nothing but the matrix element of 
 (\ref{Ternoinformal}) with respect a one-particle state $\psi \in \cH$.

  Referring to the spatial localization observable $\{\sM^n_\Sigma\}_{n\in \sT_+, \Sigma\in \cC^{sf}_\bM}$, the $\cP_+$-covariance relations analogous to (\ref{ACOVNWT20}) are valid (Thm 37 in \cite{M2})
\beq 
 U_{h}  \sM^n_{\Sigma}(\Delta) U_{h}^{-1} = \sM^{\Lambda_h n}_{h\Sigma}(h\Delta) \:, \quad \forall \Sigma \in \cC^{sf}_\bM\:, \forall \Delta \in \cL(\Sigma)\:, \quad \forall h=(v_h, \Lambda_h) \in \cP_+\label{ACOVNWT2}\:.
\eeq
The validity of these relations is actually required in the spirit of the  very definition of {\em relativistic} spatial localization observable assumed in \cite{M2} (Definition 18 therein). And, in fact,  $\{\sM^n_\Sigma\}_{n\in \sT_+, \Sigma\in \cC^{sf}_\bM}$ satisfies that definition in a broader sense, due to the presence of the further specification\footnote{If $n=n_\Sigma$, giving rise to the {\em Terno spatial localization observable} studied in first part of \cite{M2}, it fully satisfies that definition} $n\in \sT_+$.\\

Let us pass to discuss causality properties.

\begin{theorem}[\cite{M2}]  For a given choice of $n\in \sT_+$, the spatial localization observable  $\{\sM^{n}_{\Sigma}\}_{\Sigma\in \cC^{sf}_\bM}$ satisfies the CC in Definition \ref{DEFC0}.
\end{theorem}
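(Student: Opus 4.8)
Rather than re-deriving CC directly (the route taken in \cite{M2}), I would reduce it to the master theorem of Section \ref{SECCP}. The plan is to exhibit $\{\sM^n_\Sigma\}_{\Sigma\in\cC^{sf}_\bM}$ as the flat restriction of a bona fide spacelike Cauchy localization observable $\{\sM^n_S\}_{S\in\cC^s_\bM}$ (this is the substance of Theorem \ref{MAIN3}), whereupon Corollary \ref{TONE1COR} immediately yields CC. The whole problem thus reduces to constructing the extension and verifying the requirements of Definition \ref{GSL}, and the construction can be carried out by transcribing, almost verbatim, the proof of Theorem \ref{MAIN}, with the causal current $J_{g,\psi}$ replaced by the stress-energy current $J^\psi_n$ of (\ref{53}).

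\textbf{Construction of the operators.} By Lemma \ref{LEMMALAST}, $J^\psi_n$ is smooth, bounded, conserved, and at each point zero or causal past-directed; hence $-J^\psi_n$ is future-directed and meets the hypotheses of Proposition \ref{PROPJ}. Fix $S\in\cC^s_\bM$ with future-directed unit normal $n_S$ and a bounded $\Delta\in\cB(S)$, and set
\[
\Lambda^S_\Delta(\psi,\psi) := \int_\Delta J^\psi_n\cdot n_S\, d\nu_S\,,\qquad \psi\in\cD(\cH)\,,
\]
extended to a sesquilinear form by polarization (legitimate since $J^\psi_n$ is quadratic in the linear map $\psi\mapsto\Phi^\psi_n$). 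Pointwise positivity $J^\psi_n\cdot n_S\ge 0$ from (\ref{POSJ}) gives $\Lambda^S_\Delta(\psi,\psi)\ge 0$, while Proposition \ref{PROPJ} applied to $-J^\psi_n$ furnishes the surface-independent upper bound
\[
\Lambda^S_\Delta(\psi,\psi)\le \int_S J^\psi_n\cdot n_S\, d\nu_S = \int_\Sigma J^\psi_n\cdot n_\Sigma\, d\nu_\Sigma = \langle\psi|\sM^n_\Sigma(\Sigma)\psi\rangle = \langle\psi|\psi\rangle\,,
\]
where the two middle identities use (\ref{Ternoinformal2}) on an arbitrary flat slice $\Sigma$ together with the normalization $\sM^n_\Sigma(\Sigma)=I$. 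Lemma \ref{LEMMAAGG} then produces a unique $\sM^n_S(\Delta)\in\gB(\cH)$ with $0\le\sM^n_S(\Delta)\le I$ representing the form.

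\textbf{Upgrade, coherence, conclusion.} The passage from bounded $\Delta$ to a normalized $\cH$-POVM on all of $\cB(S)$ --- $\sigma$-additivity via monotone convergence on the ring of bounded Borel sets, Carath\'eodory extension, and $\sM^n_S(S)=I$ --- is identical to the corresponding steps in the proof of Theorem \ref{MAIN}, so I would simply reproduce them. Absolute continuity $\sM^n_S\ll\nu_S$ is immediate, the defining integral vanishing on $\nu_S$-null sets. For coherence (\ref{COHERENCE}) I would note that for bounded $\Delta\subset S\cap S'$ both $\langle\psi|\sM^n_S(\Delta)\psi\rangle$ and $\langle\psi|\sM^n_{S'}(\Delta)\psi\rangle$ equal $\int_\Delta\omega^{-J^\psi_n}$, the integral of one fixed $3$-form over $\Delta$, which depends only on $\Delta$ and the ambient orientation; the unbounded case follows by the usual increasing limits. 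Since on flat slices the construction reproduces $\sM^n_\Sigma$ by (\ref{Ternoinformal2}), Proposition \ref{PROPCOEES} (with the completion of Proposition \ref{ACM}) certifies that $\{\sM^n_S\}_{S\in\cC^s_\bM}$ is a spacelike Cauchy localization observable extending $\{\sM^n_\Sigma\}$, and Corollary \ref{TONE1COR} delivers CC.

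\textbf{Main obstacle.} I expect the only real difficulty to be the upper bound $\Lambda^S_\Delta(\psi,\psi)\le\langle\psi|\psi\rangle$ on curved $S$: it is the sole place where the global structure enters, resting on combining the surface-independence of $\int_S\omega^{-J^\psi_n}$ (Proposition \ref{PROPJ}) with the causal sign of $J^\psi_n$, all while the nontrivial Jacobian $\sqrt{1-|\nabla t_S|^2}$ sits inside $d\nu_S$. A secondary, purely bookkeeping, nuisance is that the explicit integral representation is available only on the dense domain $\cD(\cH)$, so each statement about general $\psi\in\cH$ must be obtained by the same density-and-continuity argument used throughout the proof of Theorem \ref{MAIN}.
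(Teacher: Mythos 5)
Your proposal is correct, but it is not the paper's proof: the paper disposes of this theorem with a one-line citation to Theorem 39 of \cite{M2}, where CC is proved \emph{directly} for the flat-slice POVMs $\sM^n_\Sigma$, with no recourse to curved Cauchy surfaces. What you propose is instead the reduction that the paper advertises in its introduction (achievements (1) and (2)) but does not actually invoke at this spot: construct the spacelike Cauchy localization observable $\{\sM^n_S\}_{S\in\cC^s_\bM}$ of Theorem \ref{MAIN3} and feed it to Corollary \ref{TONE1COR}. This route is sound and, crucially, non-circular: the proof of Theorem \ref{MAIN3} rests only on Lemma \ref{LEMMALAST}, Proposition \ref{PROPJ}, Theorem \ref{MAIN2} (positivity and normalization on flat slices), Lemma \ref{LEMMAAGG} and Propositions \ref{ACM}, \ref{PROPCOEES}, none of which uses CC, and Theorem \ref{TONE1} is likewise independent of it. The trade-offs are clear. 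The citation keeps the statement available early --- it is used immediately afterwards, before Theorem \ref{MAIN3} is even stated, to exclude compactly supported probability distributions --- and it credits the original, logically independent argument of \cite{M2}; your route makes the result self-contained within this paper and exhibits CC as an automatic by-product of mere extendibility, at the price of reproducing the entire construction of Theorem \ref{MAIN3}, which your sketch does essentially verbatim (your only genuinely new content is the observation that Theorem \ref{MAIN3} plus Corollary \ref{TONE1COR} implies the claim). One detail is actually handled better in your write-up than in the paper: by Lemma \ref{LEMMALAST} the current $J^\psi_n$ of (\ref{53}) is zero or causal and \emph{past}-directed, so it is $-J^\psi_n$ that meets the hypotheses of Proposition \ref{PROPJ} and whose $3$-form has the nonnegative coordinate expression (\ref{VVN}); the paper's own proof of Theorem \ref{MAIN3} asserts, citing that same lemma, that $J^{\psi\psi}_n$ is future-directed and integrates $\omega^{J^{\psi\psi'}_n}$ rather than $\omega^{-J^{\psi\psi'}_n}$, a harmless but real sign slip that your formulation, working with $\int_\Delta J^\psi_n\cdot n_S\,d\nu_S\geq 0$ via (\ref{POSJ}), avoids.
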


\begin{proof}  Theorem 39 \cite{M2}.\end{proof}

\begin{corollary}  If $\psi \in \cH$, the localization probability associated to the  spatial localization observable 
$\{\sM^{n}_{\Sigma}\}_{\Sigma\in \cC^{sf}_\bM}$ cannot be  zero outside a bounded set in $\Sigma$. 
\end{corollary}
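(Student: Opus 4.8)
The plan is to proceed by contradiction, reproducing the strategy of the analogous corollary already proved for $\{\sT^g_\Sigma\}_{\Sigma\in\cC^{sf}_\bM}$. First I would assume that there is a unit vector $\psi\in\cH$ and some $\Sigma\in\cC^{sf}_\bM$ for which the probability measure $\cL(\Sigma)\ni\Delta\mapsto\langle\psi|\sM^n_\Sigma(\Delta)\psi\rangle$ is concentrated in a bounded region $\Delta_0\subset\Sigma$, i.e.\ vanishes on the complement of $\Delta_0$. The aim is to show that such a state would violate the causal time evolution condition CT of Definition \ref{DEFC0C}, contradicting the causality already secured for this observable.

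The decisive point is that a probability distribution with bounded support satisfies, trivially and \emph{a fortiori}, the exponential-decay hypothesis of Theorem \ref{HTeo}. Indeed, if $\Delta_0$ lies inside a ball of radius $R$, then the probability of detecting the particle outside a ball of radius $r$ is exactly $0$ for every $r\geq R$ and is at most $1$ for $0<r<R$. Taking the critical exponent $K_2=2mc/\hbar$ and $K_1=e^{K_2R}$, one has $K_1e^{-K_2r}\geq e^{K_2(R-r)}\geq 1$ whenever $r\leq R$, while $K_1e^{-K_2r}>0$ bounds the vanishing probability for $r>R$; hence the bound $K_1e^{-K_2r}$ holds for all $r>0$ with admissible constants. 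Theorem \ref{HTeo} then applies and forces $\{\sM^n_\Sigma\}_{\Sigma\in\cC^{sf}_\bM}$ to fail CT.

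This is precisely the contradiction I would exploit. The theorem immediately preceding this corollary (Theorem 39 of \cite{M2}) asserts that $\{\sM^n_\Sigma\}_{\Sigma\in\cC^{sf}_\bM}$ satisfies CC in the sense of Definition \ref{DEFC0}, and CC implies CT because CT is the strictly weaker condition obtained by restricting CC to pairs of rest spaces of a common inertial observer. Thus CT does hold for our observable, in direct conflict with the conclusion of Theorem \ref{HTeo}. Therefore no unit vector $\psi$ with bounded-support localization probability can exist, which is exactly the assertion of the corollary.

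The only mildly delicate point---scarcely an obstacle---is the bookkeeping verifying that bounded support really implies the stated exponential bound for \emph{all} $r>0$, including the small-$r$ regime, and that the threshold $K_2\geq 2mc/\hbar$ can be met. Everything else reduces to invoking the two quoted theorems, so I expect the argument to occupy only a couple of lines, essentially identical to the proof already given for $\sT^g$.
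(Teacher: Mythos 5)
Your proof is correct and is essentially the paper's own argument: the paper likewise argues by contradiction, noting that a state with bounded-support localization probability would force a failure of CT via Hegerfeldt's Theorem \ref{HTeo}, hence a failure of CC, contradicting the causality theorem (Theorem 39 of \cite{M2}) stated just before the corollary. Your explicit check that bounded support yields the exponential-decay hypothesis with admissible constants $K_1=e^{K_2R}$, $K_2=2mc/\hbar$ is a detail the paper leaves implicit, and it is carried out correctly.
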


\begin{proof}
If states producing localized probability distributions  as above exist   CT  would fail as a consequence of Hegerfeldt’s theorem on relativistic time
evolution (Theorem \ref{HTeo}) and thus also CC would be false. (See \cite{M2}) for a discussion on this point.)
\end{proof}

 \begin{remark} {\em   In spite of this obstruction, it is possible to show  \cite{M2} that probability distributions localized in bounded sets can be arbitrarily well approximated by probability distributions arising by suitable sequences of state $\psi_k$
  when $n=n_\Sigma$. However it is not difficult to generalize this result to $n\neq n_\Sigma$.} \hfill $\blacksquare$
\end{remark}

  As  already observed for the POVMs $\sT^g_\Sigma$, there is again  the interesting  relation between the {\em first moment} of the  POVM $\sM^n_\Sigma$ on a spacelike flat Cauchy surface $\Sigma$ and the {\em Newton-Wigner}  selfadjoint operators $N_\Sigma^1,N_\Sigma^2,N_\Sigma^3$ \cite{M2} associated to a Minkowski chart $x^0,x^1,x^2,x^3$ such that the slice $x^0=0$ coincides with $\Sigma$.  Generalizing  Thm 26 in \cite{M2}, one sees that  the following theorem is true. Below, 
$\Delta_\psi  x^a_{\sM^n_\Sigma}$ denotes the standard deviation for the coordinate $x^a$ referred to  the probability distribution $\langle \psi| \sM^n_\Sigma(\cdot) \psi\rangle$ constructed out of the POVM $\sM^n_\Sigma$ in the state defined by the unit vector $\psi$.

\begin{theorem} \label{1MNW2} Referring to the POVM $\sM^n_\Sigma$ for $\Sigma \in \cC^{sf}_\bM$, the following facts are true.
\begin{itemize}
\item[(a)]  The  $a$-first moment of $\sM^n_\Sigma$ is defined for  every $\psi\in {\cal S}(\cH)$ with $||\psi||=1$ and 
\beq \int_{\Sigma} x^a \langle \psi| \sM^n_\Sigma(d^3x) \psi\rangle =  \langle \psi| N_\Sigma^a \psi\rangle\quad \mbox{for $a=1,2,3$.}\label{NMN} \eeq
In particular, $N^a_\Sigma$  is  the unique selfadjoint operator in $\cH$  which satisfies the indenty above.
\item[(b)] The {\bf Heisenberg inequality} turns out to be corrected as,  for $\psi\in {\cal S}(\cH)$,
$$\Delta_\psi  x_{\sM^n_\Sigma}^a \Delta_\psi P_a  \geq \frac{\hbar}{2} \sqrt{1+4 (\Delta_\psi P_a)^2 \langle \psi|\sK^{\sM^n_{\Sigma}}_{a} \psi\rangle}\:, \quad a=1,2,3 $$
where $\sK^{\sM^n_{\Sigma}}_{a}\in \gB(\cH)$ is a selfadjoint  operator  which is a (spectral)  function of the four  momentum observable $P$ with the form (\ref{kappa}) and $\sK^{\sM^n_{\Sigma}}_{a}\geq 0$.
\end{itemize}
\end{theorem}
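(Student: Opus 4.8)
The plan is to follow the route of Theorem~\ref{1MNW1} and Thm~26 of \cite{M2}, reducing both moments of $\sM^n_\Sigma$ to the Newton--Wigner PVM $\sQ_\Sigma$, whose first moment is already $N^a_\Sigma$ by (\ref{PNW}). I would work on the slice $x^0=0$, so that $n_\Sigma=\partial_{x^0}$, $\nu_\Sigma=d^3x$, and $(q-p)\cdot x=(\vec q-\vec p)\cdot\vec x$, and start from the matrix element (\ref{MM}); Lemma~\ref{LEMMASWAP} lets me interchange the spatial and momentum integrations, which is what makes the manipulations below legitimate for $\psi\in{\cal S}(\cH)$. Writing $K_n(p,q)$ for the kernel
\[
K_n(p,q):=\frac{E_n(p)E_{n_\Sigma}(q)+E_n(q)E_{n_\Sigma}(p)-n\cdot n_\Sigma\,(p\cdot q+m^2)}{2\sqrt{E_n(p)E_n(q)}}\,,
\]
the decisive observation is that on the diagonal $q=p$ one has $p\cdot p+m^2=0$, so the entire $n$-dependent term disappears and $K_n(p,p)=E_{n_\Sigma}(p)$, exactly the diagonal value of the $\sQ_\Sigma$-kernel.

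For part (a) I would insert the factor $x^a$ and use $x^a e^{i(\vec q-\vec p)\cdot\vec x}=-i\partial_{q^a}e^{i(\vec q-\vec p)\cdot\vec x}$, so that $\int_\Sigma x^a(\cdots)\,d^3x$ produces $-i\,(2\pi)^3\partial_{q^a}\delta^3(\vec q-\vec p)$. Integrating by parts in $\vec q$ (boundary terms vanishing since $\psi\in{\cal S}(\cH)$) collapses the $q$-integral and leaves the value and the first $q$-derivative of $K_n(p,q)\,\psi(q)$ at $q=p$. Here the key computation is that the $n$-dependent term not only vanishes on the diagonal but has vanishing first $q$-derivative there as well, because $\partial_{q^a}(p\cdot q)|_{q=p}=-p^0\,p^a/p^0+p^a=0$; moreover the surviving part of $\partial_{q^a}K_n|_{q=p}$ equals $\tfrac12\partial_{p^a}E_{n_\Sigma}(p)$, which coincides with $\partial_{q^a}\sqrt{E_{n_\Sigma}(p)E_{n_\Sigma}(q)}|_{q=p}$. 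Thus $K_n$ matches the NW kernel to first order on the diagonal, the first moment reproduces $\langle\psi|N^a_\Sigma\psi\rangle$ independently of $n$ (mirroring the $g$-independence noted after (\ref{NTN})), and uniqueness of the self-adjoint operator satisfying (\ref{NMN}) follows since two self-adjoint operators agreeing in expectation on the dense invariant core ${\cal S}(\cH)$, on which $N^a_\Sigma$ is essentially self-adjoint, must coincide.

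For part (b) the scheme is identical but with $(x^a)^2 e^{i(\vec q-\vec p)\cdot\vec x}=-\partial_{q^a}^2 e^{i(\vec q-\vec p)\cdot\vec x}$ and a double integration by parts. The terms in which both derivatives strike $\psi$ assemble $\langle\psi|(N^a_\Sigma)^2\psi\rangle$, while the remaining terms assemble a bounded momentum-multiplication operator, i.e. a spectral function of $P$; the crucial new input is that the $n$-dependent term now contributes through its nonzero second diagonal derivative $\partial_{q^a}^2(p\cdot q)|_{q=p}=-1+(p^a)^2/(p^0)^2$, which is precisely what makes the correction $n$-dependent and identifies it with the operator $\sK^{\sM^n_\Sigma}_a$ of the form (\ref{kappa}). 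This gives $\int_\Sigma (x^a)^2\langle\psi|\sM^n_\Sigma(d^3x)\psi\rangle=\langle\psi|(N^a_\Sigma)^2\psi\rangle+\langle\psi|\sK^{\sM^n_\Sigma}_a\psi\rangle$. Positivity $\sK^{\sM^n_\Sigma}_a\geq 0$ I would obtain abstractly from the Naimark dilation $\sM^n_\Sigma(\Delta)=P_1E(\Delta)P_1$: with $\hat X_a=\int x^a\,dE$ one has $P_1\hat X_a^2P_1-(P_1\hat X_aP_1)^2=((I-P_1)\hat X_aP_1)^\dagger(I-P_1)\hat X_aP_1\geq 0$, which is exactly the obstruction recorded in Remark~\ref{NAICOM}. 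Writing $(\Delta_\psi x^a_{\sM^n_\Sigma})^2=(\Delta_\psi N^a_\Sigma)^2+\langle\psi|\sK^{\sM^n_\Sigma}_a\psi\rangle$ and invoking the canonical relation $[N^a_\Sigma,P_a]=i\hbar$ with the Robertson inequality $\Delta_\psi N^a_\Sigma\,\Delta_\psi P_a\geq\hbar/2$, I would multiply by $(\Delta_\psi P_a)^2$ and take the square root to reach the stated bound.

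The main obstacle I anticipate is in part (b): the second-moment computation must be organised so that the remainder left after extracting $(N^a_\Sigma)^2$ is genuinely a function of $P$ alone and coincides with the prescribed form (\ref{kappa}) — this is where the on-shell second derivatives of $E_n,E_{n_\Sigma}$ and the factor $n\cdot n_\Sigma$ must conspire into a diagonal momentum operator — together with the technical justification of interchanging the unbounded weights $x^a,(x^a)^2$ with the triple integral and carrying out the integrations by parts, which is precisely why the moments are claimed only on ${\cal S}(\cH)$ and why Lemma~\ref{LEMMASWAP} is invoked at the start. The positivity of $\sK^{\sM^n_\Sigma}_a$ itself is not the difficulty, being supplied abstractly by the dilation argument above.
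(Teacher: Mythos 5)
Your proposal is correct, and for the moment identities it follows essentially the paper's route: the paper simply packages your integration-by-parts/diagonal-jet calculus once and for all as Proposition \ref{PROPVAR} (rewriting the kernel in the $L^2(\bR^3,d^3p)$ representation, where the kernel has unit diagonal, so that your observation that $K_n$ matches the Newton--Wigner kernel to first order on the diagonal becomes the statement that all first diagonal derivatives of a real, symmetric, unit-diagonal kernel vanish), and then verifies that $K_{\sM^n_\Sigma}$ is real, smooth, of polynomial growth, positive definite (by Theorem \ref{MAIN2}), with bounded diagonal second derivatives; your computations $\partial_{q^a}(p\cdot q)|_{q=p}=0$ and $\partial^2_{q^a}(p\cdot q)|_{q=p}=-1+(p^a)^2/(p^0)^2$ are exactly the concrete content of that proposition for this kernel, and your uniqueness argument for $N^a_\Sigma$ (core plus maximal symmetry) coincides with the paper's. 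The genuine divergence is the positivity of $\sK^{\sM^n_\Sigma}_a$: the paper proves the pointwise inequality $\partial_{q_a}\partial_{p_a}K_\sA(\vec q,\vec p)|_{\vec q=\vec p}\geq 0$ directly from positive-definiteness of the kernel via a mollifier argument, so the multiplication operator (\ref{kappa}) is manifestly positive and bounded, whereas you obtain form positivity on ${\cal S}(\cH)$ from a Naimark dilation, writing the correction as $\|(I-P_1)\hat X_a P_1\psi\|^2$. Your argument is sound and conceptually attractive (it is the obstruction of Remark \ref{NAICOM2} made quantitative), but two cautions are needed: (i) the dilation must be the \emph{abstract} Naimark dilation of the POVM $\sM^n_\Sigma$ --- the concrete second-quantized formula (\ref{Ternoinformal}), despite the suggestive symbol $P_1$, is not of the form $P_1E(\Delta)P_1$ with $E$ a PVM, since the smeared stress-energy operators are neither projections nor mutually commuting --- together with the domain remark that finiteness of the second moment places $\psi$ in $D(\hat X_a)$; (ii) the theorem asserts $\sK^{\sM^n_\Sigma}_a\in\gB(\cH)$ and $\sK^{\sM^n_\Sigma}_a\geq 0$ as an operator of the explicit form (\ref{kappa}), so you still need the identification and boundedness of the diagonal second-derivative function (routine, as you anticipate), after which your form positivity on the dense core upgrades to operator positivity by continuity of the multiplying function. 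What each route buys: the paper's kernel/mollifier argument stays entirely inside the one-particle space and yields the stronger pointwise statement, reusable in settings like Proposition \ref{PROPVAR}(e); your dilation argument is shorter and structurally explains why a genuine POVM, as opposed to a PVM, necessarily carries a strictly corrected Heisenberg inequality.
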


\begin{proof} See Appendix \ref{APPENDIXA}. \end{proof}

\noindent Observe  that the result in (\ref{NMN})  does not depend on $n$.

\begin{remark}\label{NAICOM2} {\em
\begin{itemize}
\item[(1)] If also the identity $$ \int_{\Sigma} (x^a)^2 \langle \psi| \sM^n_{\Sigma}(d^3x) \psi \rangle =  \langle \psi| (N^a_\Sigma)^2\psi\rangle\quad \mbox{(false!)}\:,$$ 
 were valid one could apply a known theorem by Naimark about the decomposition of maximally symmetric operators (here $N^a_\Sigma$) in terms of POVMs (see Theorem 23  in \cite{DM} and the discussion about it) obtaining  $\sM^n_\Sigma= \sQ_{\Sigma}$. This is obviously false and it is also reflected by  the presence of the  term $ \langle \psi|\sK^{\sM^n_\Sigma}_{a} \psi\rangle$ in the modified  Heisenberg inequality.
\item[(2)] If $U^{(n)}_t$ is the unitary time evolutor corresponding to the time evolution along $n$ in the spacetime $\bM$,
  it is  easy to see that the Heisenberg evolution  $U^{(n)}_t N_\Sigma^a U^{(n)\dagger}_t$  of $N_\Sigma^a$ on the right-hand side of (\ref{NMN}) equals the integral on the left-hand side over the correspondingly temporally translated time slice $\Sigma_t$. As already observed in \cite{M2}, this  fact implies that the worldline $\bR \ni t \mpasto (t,  \int_{\Sigma_t} \vec{x} \langle \psi| \sM^n_{\Sigma_t}(d^3x) \psi\rangle)$ is {\em timelike} (Corollary 14 in \cite{M2}) as expected by massive particles. \hfill $\blacksquare$
 \end{itemize}} 
 \end{remark}

\subsection{A spacelike Cauchy localization observable $\sM^n= \{\sM^n_S\}_{S\in \cC_\bM^s}$
 for massive KG particles.}
We are in a position to prove that a  normalized POVM  $\sM^n_S$ exists on every spacelike Cauchy surface $S$ of $\bM$. We shall also obtain  that the elements of the  POVMs  do not depend on the Cauchy surface they belong to. In other words 
we have a spacelike Cauchy localization  $\sM^n$.

\begin{theorem} \label{MAIN3}   Consider $S\in \cC^s_\bM$  and $n\in \sT_+$. Then,
 \begin{itemize}
\item[(a)] there is a unique   $\cH$-POVM, still indicated by $\{\sM^g_S(\Delta)\}_{\Delta \in \cB(S)}$, which  satisfies (\ref{Ternoinformal2})  also for generic spacelike Cauchys surface $S$  when $\psi \in {\cal D}(\cH)$:
\beq  \langle\psi|\sM^n_{S}(\Delta)\psi\rangle = \int_\Delta  T^{\psi}_{\mu\nu}(x)_n  n^\mu n^\nu_S d\nu_S(x)\:,\quad \Delta \in \cB(S)\:,  \label{SNF}\eeq
so that, $\sM^n_S < \sp < \nu_S$.
\item[(b)] $\sM^n_S$ is normalized: $\sM^n_S(S)=I$; 
\item[(c)] $\sM^n_\Sigma$ satisfies (\ref{MPOVM}) when $\Sigma \in \cC_\bM^{sf}$;

\item[(d)] if $\Delta \in S\cap S'$ is a Borel set, where $S\in \cC^s_\bM$, then
$$\sM^n_S(\Delta) =\sM^n_{S'}(\Delta)\:.$$
\end{itemize}
The family $\sM^n := \{\sM^n_S\}_{S\in \cC^s_\bM}$,  where we extend each  POVMs to $\cM(S)$  according to Proposition \ref{ACM}, is a spacelike Cauchy localization  observable  according to Definition \ref{GSL}.
\end{theorem}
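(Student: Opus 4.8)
The plan is to mirror, almost verbatim, the strategy used to establish Theorem \ref{MAIN}, substituting the causal-kernel current $J_{g,\psi}$ there with the stress-energy current $J^\psi_n$ of Lemma \ref{LEMMALAST}. The one structural difference is a sign: by Lemma \ref{LEMMALAST}, $J^\psi_n$ is zero or causal and \emph{past}-directed, so I would work with its reversal $\tilde{J}^\psi_n := -J^\psi_n$, which is conserved, smooth, bounded, and zero or causal future-directed. Because $\omega^J$ is linear in $J$, the identity (\ref{FJ}) gives, for every spacelike $S$ and $\Delta\subset S$, $\int_\Delta \omega^{\tilde J^\psi_n} = -\int_\Delta J^\psi_n\cdot n_S\, d\nu_S = \int_\Delta T^\psi_{\mu\nu}(x)_n n^\mu n^\nu_S\, d\nu_S$, so the target formula (\ref{SNF}) is exactly $\langle\psi|\sM^n_S(\Delta)\psi\rangle = \int_\Delta \omega^{\tilde J^\psi_n}$, the same shape as in Theorem \ref{MAIN}.

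First I would record the analogue of Lemma \ref{LEMMA3}: since $\tilde J^\psi_n$ satisfies the hypotheses of Proposition \ref{PROPJ}, the quantity $\int_S \omega^{\tilde J^\psi_n}$ is independent of $S\in\cC_\bM$, and evaluating it on the flat slice $x^0=0$ and invoking the normalization $\sM^n_\Sigma(\Sigma)=I$ already proved in Theorem \ref{MAIN2} yields $\int_S \omega^{\tilde J^\psi_n} = \langle\psi|\psi\rangle$ for all $\psi\in\cD(\cH)$. Next, for $\Delta\in\cB(S)$ bounded in the $\bR^3$ identification of Proposition \ref{propS}, I would introduce the Hermitian sesquilinear form $\Lambda_\Delta(\psi,\psi') := \int_\Delta T^{\psi,\psi'}_{\mu\nu}(x)_n n^\mu n^\nu_S\, d\nu_S$, obtained by polarizing the stress-energy density of (\ref{TTT}). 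By (\ref{POSJ}) one has $0\leq \Lambda_\Delta(\psi,\psi) = \int_\Delta\omega^{\tilde J^\psi_n} \leq \int_S\omega^{\tilde J^\psi_n} = \langle\psi|\psi\rangle$, so Lemma \ref{LEMMAAGG} produces a unique $\sM^n_S(\Delta)\in\gB(\cH)$ with $0\leq \sM^n_S(\Delta)\leq I$ and $\langle\psi|\sM^n_S(\Delta)\psi\rangle = \int_\Delta\omega^{\tilde J^\psi_n}$.

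From here the argument is identical to the one in Theorem \ref{MAIN}: on the ring of bounded Borel sets the set function $\nu_\psi(\Delta):=\langle\psi|\sM^n_S(\Delta)\psi\rangle$ is a $\sigma$-additive premeasure (monotone convergence for $\psi\in\cD(\cH)$, then extended to all of $\cH$ through increasing sequences of positive operators and Proposition 3.76 \cite{M}); Carath\'eodory extension and strong limits over exhausting balls define $\sM^n_S(\Delta)$ for every $\Delta\in\cB(S)$; choosing $\chi=\psi\pm\phi$, $\psi\pm i\phi$ and polarizing upgrades $\sigma$-additivity of the diagonal measures to weak $\sigma$-additivity of all matrix elements, giving a genuine $\cH$-POVM and item (a), with $\sM^n_S <\sp< \nu_S$ immediate from (\ref{FJ}). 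Taking $\Delta=S$ gives $\langle\psi|\sM^n_S(S)\psi\rangle = \langle\psi|\psi\rangle$ on $\cD(\cH)$, hence $\sM^n_S(S)=I$ by density and continuity, which is (b). Item (c) follows since on a flat $\Sigma$ formula (\ref{SNF}) is precisely (\ref{Ternoinformal2}), which by Theorem \ref{MAIN2} is the matrix element of (\ref{MPOVM}), and the coincidence extends from $\cD(\cH)$ to all of $\cH$ by continuity. For the coherence (d) I would note that for $\Delta\subset S\cap S'$ bounded, both $\langle\psi|\sM^n_S(\Delta)\psi\rangle$ and $\langle\psi|\sM^n_{S'}(\Delta)\psi\rangle$ equal the integral $\int_\Delta\omega^{\tilde J^\psi_n}$ of one \emph{fixed} $3$-form over the \emph{same} subset, hence agree; the unbounded case follows by the same monotone-limit argument, and polarization removes the restriction to diagonal elements. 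The final statement is then immediate from Proposition \ref{PROPCOEES}, the extension to $\cM(S)$ being furnished by Proposition \ref{ACM}.

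The main obstacle is not conceptual but bookkeeping: every measure-theoretic step of Theorem \ref{MAIN} must be checked to go through with the new current, and the only genuinely new points to verify are (i) the sign convention that makes $\tilde J^\psi_n$ future-directed so that $\int_\Delta\omega^{\tilde J^\psi_n}\geq 0$, and (ii) that the polarized stress-energy form $\Lambda_\Delta$ is Hermitian and bounded so that Lemma \ref{LEMMAAGG} applies, both guaranteed by Lemma \ref{LEMMALAST}. Unlike $\sT^g$, this POVM is defined only on \emph{spacelike} Cauchy surfaces, since $n_S$ and $\nu_S$ require a non-degenerate induced metric; this is why the statement restricts to $\cC^s_\bM$ throughout.
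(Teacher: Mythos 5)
Your proposal is correct and follows essentially the same route as the paper: polarize the stress-energy density into a Hermitian form on bounded Borel sets, bound it between $0$ and $\langle\psi|\psi\rangle$ via Proposition \ref{PROPJ} together with the flat-slice normalization from Theorem \ref{MAIN2}, apply Lemma \ref{LEMMAAGG} to get the effects, and then run verbatim the measure-theoretic extension argument of Theorem \ref{MAIN}, concluding with Propositions \ref{ACM} and \ref{PROPCOEES}. Your explicit reversal $\tilde J^\psi_n := -J^\psi_n$ actually treats the orientation of the current more carefully than the paper's own proof (which calls $J^\psi_n$ ``future directed'' in conflict with its Lemma \ref{LEMMALAST}(c)); the only slip is the middle term of your chain, which should read $\int_\Delta \omega^{\tilde J^\psi_n} = +\int_\Delta J^\psi_n\cdot n_S\, d\nu_S$, and this is precisely what equals $\int_\Delta T^{\psi}_{\mu\nu}(x)_n\, n^\mu n^\nu_S\, d\nu_S \geq 0$.
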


\begin{proof} Item (c) is a trivial consequence of (a) and (b), so we prove (a), (b), and (d).\\
Let   $x^0=t_S(\vec{x})$ be the map which defines $S$ by identifying it with the spatial $\bR^3$ in a given   Minkowski system of coordinates $x^0, \vec{x}$ according to Proposition \ref{propS}.  
If $\Delta \in \cB(\bR^3)$ is bounded consider the Hermitian form 
$$\Lambda(\psi,\psi') := \int_\Delta \omega^{J^{\psi\psi'}_n}\:,$$
where, taking (\ref{wavePHI}) into account,
$$T^{\psi\psi'}_{\mu\nu}(x)_n := \frac{1}{2}\left(\partial_\mu \overline{\Phi^\psi_n(x)}\partial_\nu\Phi^{\psi'}_n(x) 
 +\partial_\mu \Phi^{\psi'}_n(x)\partial_\nu\overline{\Phi^{\psi}_n(x)}\right)$$ \beq - \frac{1}{2}\eta_{\mu\nu} \left( \partial^\alpha \overline{\Phi^\psi_n(x)} \partial_\alpha \Phi^{\psi'}_n(x) + m^2 \overline{\Phi^\psi_n(x)} \Phi^{\psi'}_n(x) \right)\:,\label{TTT2}
\eeq
and
\beq
J^{\psi\psi'\mu}_{n}(x) := n^\nu T^{\psi \psi' \:\mu}_{\nu}(x)_n \:,\label{53F}
\eeq
As $\Delta$ is bounded, the integral is well defined. Since (use (\ref{VVN}))
we conclude that 
\beq 0\leq \Lambda(\psi,\psi) \leq \langle \psi|\psi \rangle\:. \label{INEQ2}\eeq
The former inequality arises from the last identity in (\ref{VVN}) {\em when observing  that (see (\ref{53})) $J^{\psi\psi}_n= J^{\psi}_n$}  is zero or causal and future directed as established in Lemma \ref{LEMMALAST}. The latter inequality in (\ref{INEQ2}) is consequence of the positivity of the integrand $ \omega^{J^{\psi}_n}$ in coordinates (again for (\ref{VVN})) and (\ref{POSJ}) which first of all  imply (we write  $J^{\psi}_n= J^{\psi\psi}_n$)
$$\int_\Delta \omega^{J^{\psi}_n} \leq \int_S \omega^{J^{\psi}_n}\:.$$
On the other hand
$$ \int_S \omega^{J^{\psi}_n} = \langle \psi|\psi\rangle$$
because we can apply Proposition \ref{PINT} (since $J^{\psi}_n$ is also smooth,  bounded, and conserved according to Lemma \ref{LEMMALAST}) choosing a spacelike flat surface $S'$, obtaining 
\beq \int_S \omega^{J^{\psi}_n} =  \int_{S'} \omega^{J^{\psi}_n}  =\langle \psi|\psi\rangle\:.\label{CNORMN}\eeq
The last identity is due to the fact that the last integral is nothing but $\langle \psi|\sM^n_{S'}(S')\psi\rangle = \langle\psi| \psi\rangle$ due to Theorem \ref{MAIN2}.\\
We can finally apply Lemma \ref{LEMMAAGG} proving that there exists $\sM^n_S(\Delta) \in \gB(\cH)$ such that $0\leq M^n_S(\Delta)\leq I$
and \beq \langle \psi|M^n_S(\Delta) \psi\rangle = \int_\Delta \omega^{J^{\psi}_n} = \int_\Delta  T^{\psi}_{\mu\nu}(x)_n  n^\mu n^\nu_S d\nu_S(x)\:,\label{INTFI2}\eeq so that (\ref{SNF}) is satisfied for $\Delta$ bounded.
From this point on, the proof is identical to the one of Theorem \ref{MAIN}. In particular, the normalization condition (b) follows now from (\ref{CNORMN}).  The last statement of the thesis  immediately arises from Proposition \ref{PROPCOEES}.
\end{proof} 
\section{Moments of $\sT^g_\Sigma$ and  $\sM^n_\Sigma$,  Newton-Wigner operator, and Heisenberg inequality}\label{NWApp}
We addresss the reader to the  discussion and references in \cite{M2} about the Newton-Wigner observables for a massive  spinless particle. 

Referring to the representation $L^2(\bR^3, d^3p)$ (see Sect. \ref{OPS})  of the one-particle space $\cH$ through the Hilbert space  isomorphism (\ref{PHI}) $F:\cH \to L^2(\bR^3, d^3p)$, the joint PVM of the selfadjoint  Newton Wigner operators $N^1,N^2,N^3$ for a  massive spinless particle, associated  to the coordinates $\vec{x}=(x^1,x^2,x^3)$ on $\bR^3$ in a Minkowski frame $x^0,x^1,x^2,x^3$ at time $x^0=0$, is given by
\beq
(F\sQ(\Delta)F^{-1}\phi)(\vec{p}) := \int_\Delta \frac{e^{i(\vec{q}- \vec{p})\cdot \vec{x}}}{(2\pi)^3} \phi(\vec{q}) d^3q\:, \quad \phi \in\cS(\bR^3) \quad \mbox{or $C_c^\infty(\bR^3)$}\:, \quad \Delta \in \cB(\bR^3)\:.
\eeq 
It  extends by continuity to the whole Hilbert space. 
Notice that ${\cal S}(\cH) = F(\cS(\bR^3))$ and  $\cD(\cH) = F(C_c^\infty(\bR^3))$ are invariant spaces and  cores for each $N^a$ and thereon they are unitarily equivalent  to the respective differential operator $FN^a|_{\cS(\bR^3)}F^{-1} = i \frac{\partial}{\partial p_a}$.

Theorems \ref{1MNW1} and \ref{1MNW2} are subcases of the following result.

\begin{proposition}\label{PROPVAR}  Consider a normalized POVM on $\bR^3$ satisfying (with $\phi_\psi := F\psi$)
\beq \langle \psi|\sA(\Delta) \psi\rangle = \int_{\Delta}d^3x\int_{\bR^3}\int_{\bR^3}  d^3p d^3q \frac{e^{i(\vec{q}-\vec{p})\cdot \vec{x}}}{(2\pi)^3}\overline{\phi_\psi(\vec{p})}K_\sA(\vec{p}, \vec{q}) \phi_\psi(\vec{q})\:,  \quad \psi \in \cD(\cH)\:, \:\Delta \in \cB(\bR^3)\label{NPOS}\eeq
where $K_\sA$ is a positive definite kernel which satisfies $K_\sA(\vec{p}, \vec{p})=1$ for every $\vec{p} \in \bR^3$.
\begin{itemize} 
\item[(a)] If $K_\sA$ is  real and smooth, then \beq \int_{\bR^3} x^a \langle \psi| \sA(d^3x)\psi \rangle =  
\langle \psi |N^a \psi \rangle\quad \mbox{for $a=1,2,3$ and $\psi \in \cD(\cH)$, with $||\psi||=1$}\:,\label{IDA}\eeq
where $N^a$ is the unique selfadjoint operator in $\cH$ which satisfies the identity above.\\
More generally, if $\alpha:= (\alpha_1,\alpha_2,\alpha_3)$ is a multi index, so that $x^\alpha := (x^1)^{\alpha_1} (x^2)^{\alpha_2} (x^3)^{\alpha_3}$,
$$\int_{\mathbb{R}^3}(x^1)^{\alpha_1}(x^2)^{\alpha_2} (x^3)^{\alpha_3}  \langle\psi|A(d^3x)\psi\rangle=\langle \psi|(N^1)^{\alpha_1}(N^2)^{\alpha_2}(N^3)^{\alpha_3}\psi\rangle $$ \beq +\sum_{1<|\beta|, \beta\leq \alpha}\binom{\alpha}{\beta}\int_{\mathbb{R}^3}\left(i^{|\alpha|-|\beta|}\frac{\partial^{|\alpha|-|\beta|}\overline{\phi_\psi(\vec{p})}}{\partial p^{\alpha-\beta}}\right)\left.\left(i^{|\beta|}\frac{\partial^{|\beta|}K_\sA(\vec{p},\vec{q})}{\partial p^\beta}\right)\right|_{\vec{q}=\vec{p}}\phi_\psi(\vec{p})d^3p.\label{GENM}\eeq
\item[(b)]  If $K_\sA$  is real, smooth,  and has polynomial growth with all of its  derivatives of any order, then 
(\ref{NPOS}), (\ref{IDA}), (\ref{GENM}) are also valid for  $\psi \in  {\cal S}(\cH)$.
\item[(c)]  If $K_\sA$ is as in (b),  a formula for the second moment holds for $\psi \in {\cal S}(\cH)$ with $||\psi||=1$
\beq \int_{\bR^3} (x^a)^2 \langle \psi| \sA(d^3x)\psi \rangle  =  
\langle \psi |(N^a)^2 \psi \rangle  +  \langle \psi|\sK^{\sA}_{a} \psi\rangle\quad \mbox{for $a=1,2,3$}\:. \label{IDA33}\eeq
 $\sK^{\sA}_{a} \geq 0$  is the  (generally unbounded)  multiplicative  selfadjoint operator
\beq \left(\sK^{\sA}_{a} \psi\right)(p)   :=  \sqrt{p^0}
 \left( \left.\frac{\partial}{\partial q_a}\frac{\partial}{\partial p_a}K_\sA(\vec{q}, \vec{p})\right|_{\vec{p}=\vec{q}} \right)\frac{\psi(p)}{\sqrt{p^0}}\:, \quad \psi \in D(\sK^{\sA}_{a}) \label{kappa}\:,\eeq
with $\sK^{\sA}_{a}\in \gB(\cH)$ if  $\bR^3 \ni \vec{p}\mapsto \left.\frac{\partial}{\partial q_a}\frac{\partial}{\partial p_a}K_\sA(\vec{q}, \vec{p})\right|_{\vec{p}=\vec{q}} \in \bR$ is bounded.\\
\item[(d)]   If $K_\sA$ is as in (b),  a {\bf modified Heisenberg inequality} holds for $\psi \in {\cal S}(\cH)$ with $||\psi||=1$  (restoring Plank's constant),
\beq
\Delta_\psi  x_\sA^a \Delta_\psi P_a  \geq \frac{\hbar}{2} \sqrt{1+4 (\Delta_\psi P_a)^2 \langle \psi|\sK^{\sA}_{a} \psi\rangle}\:, \quad a=1,2,3\:.
\eeq
Above, $\Delta_\psi x_\sA^a$ is the standard deviation of the probability distribution $\langle \psi|\sA(\cdot)\psi\rangle$,  $\Delta_\psi P_a$ is the standard deviation of the probability distribution of th $a$-component of the momentum observable in the state $\psi$.
\item[(e)]  Suppose that  $K_\sA$ is not necessarily  real but is  {\bf finite} according to \cite{C}. In other words, it has the form, for $N< +\infty$  suitable measurable functions $u_j$
$$K_\sA(\vec{p}, \vec{q}) = \sum_{j=1}^N \overline{u_j(\vec{p})} u_j(\vec{q})\:.$$ In addition,  assume that the $N$  functions $u_j : \bR^3 \to \bC$ are smooth with polynomial growth with all of their derivatives of any order. Then the standard Heisenberg inequality holds in any cases  for  $\psi \in {\cal S}(\cH)$ with $||\psi||=1$
$$\Delta_\psi  x_\sA^a \Delta_\psi P_a  \geq  \frac{\hbar}{2}\:.$$
\end{itemize}
\end{proposition}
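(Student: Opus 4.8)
The plan is to realize $\sA$ as the compression of a genuine joint position PVM on a larger Hilbert space --- a Naimark dilation tailored to the finite structure of $K_\sA$ --- and then to invoke the ordinary Robertson--Heisenberg inequality there. I would work throughout in the representation $L^2(\bR^3, d^3p)$ via $F$, writing $\phi_\psi := F\psi$, and set $\mathcal{K} := L^2(\bR^3, d^3p)\otimes\bC^N \cong L^2(\bR^3,d^3p;\bC^N)$. Define the linear map $V: L^2(\bR^3,d^3p)\to\mathcal{K}$ by $(V\phi)_j := u_j\phi$, $j=1,\dots,N$. The normalization hypothesis $K_\sA(\vec p,\vec p)=\sum_{j=1}^N|u_j(\vec p)|^2=1$ shows at once that $V$ is an isometry, since $\|V\phi\|_{\mathcal{K}}^2=\int_{\bR^3}\big(\sum_j|u_j|^2\big)|\phi|^2\,d^3p=\|\phi\|^2$.

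Next I would introduce on $\mathcal{K}$ the joint position PVM $\tilde{\sQ}$ acting diagonally on the $\bC^N$ components, each copy carrying the Fourier position PVM of $L^2(\bR^3,d^3p)$, with associated selfadjoint operators $\tilde N^a=i\partial_{p_a}\otimes I_N$ and $\tilde P_a=p_a\,\cdot\,\otimes I_N$, satisfying the canonical relation $[\tilde N^a,\tilde P_a]=i\hbar I$. Inserting the expansion $K_\sA=\sum_j\overline{u_j}u_j$ into (\ref{NPOS}) and comparing, component by component on $u_j\phi$, with the kernel-$1$ formula defining $\sQ$, one obtains the dilation identity $\sA(\Delta)=V^*\tilde{\sQ}(\Delta)V$ for every $\Delta\in\cB(\bR^3)$, understood as an identity of matrix elements on $\cS(\cH)$. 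Here the validity of (\ref{NPOS}) on $\cS(\cH)$ is exactly guaranteed by part (b), whose hypotheses follow from the smoothness and polynomial-growth assumptions on the $u_j$.

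From the dilation identity, both standard deviations entering the claimed bound become spreads of honest selfadjoint operators in the single state $V\phi_\psi\in\mathcal{K}$. Since $\langle\psi|\sA(\Delta)\psi\rangle=\langle V\phi_\psi|\tilde{\sQ}(\Delta)V\phi_\psi\rangle$, the position distribution read off the POVM coincides with the spectral distribution of $\tilde N^a$ in the state $V\phi_\psi$, whence $\Delta_\psi x^a_\sA=\Delta_{V\phi_\psi}\tilde N^a$; note this holds irrespective of whether the first moment of $\sA$ equals $N^a$ (for complex $K_\sA$ it need not). A direct computation using $\sum_j|u_j|^2=1$ gives $\langle V\phi_\psi|\tilde P_a^{\,k}V\phi_\psi\rangle=\langle\phi_\psi|P_a^{\,k}\phi_\psi\rangle$ for $k=1,2$, so that $\Delta_\psi P_a=\Delta_{V\phi_\psi}\tilde P_a$. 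Applying Robertson's inequality to the canonical pair $(\tilde N^a,\tilde P_a)$ in the state $V\phi_\psi$ yields $\Delta_{V\phi_\psi}\tilde N^a\,\Delta_{V\phi_\psi}\tilde P_a\geq\tfrac{\hbar}{2}$, which is precisely the asserted bound after the two identifications.

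The one point requiring care is domain bookkeeping: I must ensure $V\phi_\psi$ lies in the domains of $\tilde N^a$, $\tilde P_a$, and their products, so that the second moments are finite and Robertson's inequality legitimately applies. This is exactly where the hypothesis that each $u_j$ is smooth with polynomial growth together with all its derivatives is used: if $\psi\in\cS(\cH)$ then $\phi_\psi\in\cS(\bR^3)$, and multiplication by such $u_j$ preserves the Schwartz space, so $V\phi_\psi\in\cS(\bR^3;\bC^N)$, which sits inside every relevant domain. I expect this dilation step --- verifying the compression identity $\sA=V^*\tilde{\sQ}V$ and the domain conditions --- to be the main obstacle, since once it is in place the uncertainty relation is entirely classical and no delicate estimate remains.
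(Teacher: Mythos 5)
Your proposal, as it stands, proves only item (e) of the proposition. That is the genuine gap: items (a)--(d) --- the identification of the first moment with the Newton--Wigner operator together with its uniqueness as a selfadjoint operator, the general moment formula (\ref{GENM}), the extension from $\cD(\cH)$ to $\cS(\cH)$, the second-moment formula (\ref{IDA33}) involving the operator $\sK^{\sA}_{a}$ of (\ref{kappa}) (including the proof that $\sK^{\sA}_{a}\geq 0$), and the modified Heisenberg inequality (d) --- are never addressed, and none of them can be reached by your dilation argument. The reason is structural: a general real positive-definite kernel admits no finite decomposition $K_\sA=\sum_{j=1}^N \overline{u_j}u_j$, so there is no finite-rank isometry $V$ to compress through, and indeed the conclusion of (d) is genuinely different from a bare Robertson bound --- it carries the extra term $\langle\psi|\sK^{\sA}_{a}\psi\rangle$, which quantifies exactly the failure of $\sA$ to be a compression of a PVM whose second moments match those of $N^a$. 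The paper obtains (a)--(c) by a Fourier/distributional computation (writing the $\alpha$-th moment as $i^{|\alpha|}\partial^\alpha G(0)$ for $G(\vec u)=\int F(\vec u-\vec v,\vec v)\,d^3v$, then Leibniz-expanding and using $K_\sA(\vec p,\vec p)=1$), proves positivity of $\sK^{\sA}_{a}$ by a mollifier argument exploiting positive-definiteness of the kernel, settles uniqueness of $N^a$ by a core-plus-maximal-symmetry argument, and then derives (d) from (c) together with the ordinary Heisenberg inequality for the NW operators. None of these ideas appears in your proposal, so as a proof of the stated proposition it is incomplete.

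For item (e) itself your argument is correct and is essentially the paper's own proof in tidier language: the paper's vector $\Phi_\psi:=N^{-1/2}(\phi_\psi u_1,\ldots,\phi_\psi u_N)\in\oplus_{j=1}^N\cH$ is your $V\phi_\psi$, its norm identity is your isometry statement, and the paper's componentwise Cauchy--Schwarz/commutator estimate with $(i\partial_{p_a}-c)$ and $(p_a-d)$ is Robertson's inequality in the dilation space; your formulation via the compression identity $\sA(\Delta)=V^*\tilde{\sQ}(\Delta)V$ is arguably cleaner and makes the logic more transparent. Two small points of care: first, you invoke part (b) to extend (\ref{NPOS}) to $\cS(\cH)$, but the hypotheses of (b) include that $K_\sA$ is \emph{real}, which fails in case (e); the extension still holds (the density/dominated-convergence argument never uses realness), but you should prove it directly rather than cite (b). Second, your remark that the first moment of $\sA$ need not equal $N^a$ for complex kernels is exactly right and is why your identification $\Delta_\psi x^a_\sA=\Delta_{V\phi_\psi}\tilde N^a$ --- which bypasses any moment formula --- is the correct move there.
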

\begin{proof}   See Appendix \ref{APPENDIXA} \end{proof}

\section{Discussion}\label{DISC} 
The major achievement of this work (Theorem \ref{TONE1}) is the result that, if we use POVMs to describe the probability of spatial localization of quantum systems in a sufficiently general way,  then {\em spatial localization}, in the sense of  a {\em spacelike Cauchy localization observable} Definition \ref{GSL},  {\em implies causality}, in the terms of our {\em general causal condition} GCC in Definition \ref{DEFC}. The physical postulates  at the ground of this implication, encapsulated in the notion  of spacelike Cauchy localization observable,  are the following ones. \begin{itemize}\item[(1)] It is supposed that every {\em spacelike} Cauchy surface can be used to localize the system. In other words, if we fill a spacelike Cauchy surface with a net of detectors, we must  find somewhere the quantum system on that 3-space. \item[(2)]  If, for a given spacelike Cauchy localization observable, i.e., for a specific type of detectors,  a pair of Cauchy surface  coincide in a region, then  they share the same detectors therein.
\item[(3)] There is no chance to detect the quantum system  in a spatial region  with zero measure on a spacelike Cauchy surface.
\end{itemize}

An observation about the need for  condition (2), we named {\em coherence condition}, is important. As we have seen in  Remark (\ref{REMLASTC}),   condition (2) can be removed from the definition of  spacelike Cauchy localization observable and inserted as a further  hypothesis  of  Theorem \ref{TONE1}. Within  this scheme, the coherence condition would be recovered as a corollary of  Theorem \ref{TONE1}. In other words, if (1) and (3) are valid for a family of POVMs, then the {\em coherence condition} is {\em equivalent} to GCC.

We also proved, as a second achievement in double  form (Theorems  \ref{MAIN}  and \ref{MAIN3}):  The above general notions of localization exist at theoretical level. In fact, 
we presented two of them for massive Klein-Gordon particles. In the second case the localization observable was constructed in terms of physical quantities of the system (its energy). There is no evident obstruction to construct similar spacelike Cauchy localization observables for other types of particles like fermions considered in \cite{C0}.  The basic ingredient to construct these observables is a conserved causal current constructed out of the state of the system. It seems plausible that spacelike Cauchy localization observables can be built up also in a more general spacetime (referring to the one-particle structure of quantum field Gaussian states) provided the spacetime is  globally hyperbolic. This is because the central technology to produce the former achievement, some technical  results about Cauchy surfaces \cite{BS,BS3}  are at disposal in generic globally hyperbolic spacetimes. On the other hand the explicit  structure of spacelike Cauchy localization observable relies upon notions, like conserved currents,  which can be generalized
to every globally hyperbolic spacetime.  The only mathematically delicate issue which deserves attention is the fact that the region of influence $\Delta'$ of a Borel set $\Delta$ is again Borel or in a natural completion of that $\sigma$-algebra.

When coming back to flat Cauchy surfaces, i.e., rest frames of inertial observers, the resulting spatial localization observables show interesting features.   Since causality is not violated by the distribution of probability of a Klein-Gordon  massive particle, the no-go Hegerfeldt theorem is made  harmless. There is a price to pay however:   no strictly localized (in bounded spatial sets) probability distributions are permitted. Another  interesting fact, already evident in \cite{M2}, is that the Newton-Wigner operators insist to play some role in this much less naive picture, in spite of the fact that  the Hegerfeldt no go results seemed to have ruled out them long time ago.  Even if they no longer  represent observables, they account for the {\em timelike} spacetime evolution of the first moment of a massive particle.

A widely open issue is the relation between the constructed POVMs and their decompositions in terms of Kraus operators or quantum instruments. Related to this issue is the fact that the effects of the presented POVMs do not commute even when are localized in causally separated sets. This  is an urgent problem when analysing all the construction from  the perspective of the local operator algebras theory.  These outstanding problems will be investigated elsewhere.

\section*{Acknowledgments} V.M. is  very grateful to D.P.L.Castrigiano  for various remarks, suggestions, and discussions about several issues appearing  in this paper. 
We thank  N.Pinamonti and   M.S\'anchez for helpful discussions on some technical problems, and S. Lill and   M. Reddiger 
for pointing out some relevant literature.
This work has been written within the activities of INdAM-GNFM

\appendix

\section{Proof of some propositions}\label{APPENDIXA}
\noindent {\bf Proof of Proposition \ref{propS}}.  Every coordinate curve $\bR \ni x^0 \mapsto (x^0, \vec{x})\in\bR\times  \bR^3 \equiv \bM$ is smooth timelike and thus it intersects  $S$ exactly once for a corresponding value $x^0= t_S(\vec{x})$, therefore it defines a map $\bR^3 \ni \vec{x} \mapsto t_S(\vec{x}) \in \bR$. In the rest of  
 the proof write $t$ in place of $t_S$ for shortness.  By definition $S \equiv \{(t(\vec{x}), \vec{x})\:|\: \vec{x} \in \bR^3 \}$. Since $S$ is an embedded co-dimension $1$ submanifold, it must be locally described as the locus $f(x^0,x^1,x^2,x^3)=0$ of a smooth map with $\gor(df, df) \leq 0$ because the $g$-normal vectors to $df$ are spacelike or lightlike and thus $df$ is timelike or lightlike. It must be in particular $\partial_{x^0}f \neq 0$. As a consequence of the implicit function theorem, in a neighborhood of every $p\in S$, we can represent $S$ as a smooth map $x^0=t'(\vec{x})$. Therefore $t(\vec{x}) = t'(\vec{x})$
is locally smooth and thus smooth. Finally, as $S$ is the locus of $g(x^0,x^1,x^2,x^3)=0$ for $g(x^0,x^1,x^2,x^3) = x^0 - t(\vec{x})$, a normal co-vector to $S$ is $v=dx^0 - \sum_{k=1}^3 \frac{\partial t}{\partial x^k} dx^k$. If $S$ is spacelike, then its normal vector is timelike,  written as $\gor(v,v)<0$, is equivalent to $|\nabla t| <1$. 
In the other cases $v$ can also be lightlike $\gor(v,v) \leq 0$ which  is equivalent to $|\nabla t| \leq 1$.\\
The final statement is now obvious, since the wanted diffeomorphism is the identity map $\bR^3\ni p \mapsto p \in  \bR^3$ when adopting on $S$ the coordinates $\vec{x}$. \hfill $\Box$\\

\noindent {\bf Proof of Proposition \ref{ACM}}. 
We refer to  \cite{Cohn} for the definition of completion of a $\sigma$-algebra and a positive mesure on it.
We divide the main proof into some further lemmata.

\begin{lemma}\label{LEMMAM0} Let $\mu\colon \cB \rightarrow\overline{\mathbb{R}}_+$ be a positive measure on a $\sigma$-algebra $\cB$ on $X$, $\overline{\cB}^\mu$ be the completion of $\cB$ with respect to $\mu$, and let us denote by $\overline{\mu}$ the completion measure on  $\overline{\cB}^\mu$.\\ Then
$\Delta \in \overline{\cB}^\mu$ if and only if $\Delta = \Delta_0 \cup N$ with $\Delta_0 \in \cB$ and $N \subset N_0 \in \cB$ with $\mu(N_0)=0$. Furthermore, for every decomposition of $\Delta$ as above,  $\overline{\mu}(\Delta)= \mu(\Delta_0)$.
\end{lemma}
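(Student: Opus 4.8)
The plan is to identify the completion $\overline{\cB}^\mu$ explicitly with the family
$$\cG := \{\Delta_0 \cup N \mid \Delta_0 \in \cB,\ N \subset N_0 \in \cB \text{ with } \mu(N_0)=0\}\:,$$
and to check that the prescribed number is the completion measure. First I would record the two trivial inclusions that make $\cG$ a natural candidate: taking $N=\emptyset$ shows $\cB \subset \cG$, and taking $\Delta_0 = \emptyset$ shows that $\cG$ contains every subset of a $\cB$-measurable $\mu$-null set. Since the completion adopted in \cite{Cohn} is the smallest $\sigma$-algebra containing $\cB$ together with all such negligible sets, it then suffices to prove that $\cG$ is itself a $\sigma$-algebra: minimality gives $\overline{\cB}^\mu \subset \cG$, while the reverse inclusion $\cG \subset \overline{\cB}^\mu$ is automatic because each element $\Delta_0 \cup N$ belongs to any $\sigma$-algebra that contains $\cB$ and the negligible sets.

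The verification that $\cG$ is a $\sigma$-algebra is where the only real care is needed. Closure under countable unions is routine: if $\Delta_j = \Delta_0^{(j)} \cup N_j$ with $N_j \subset N_0^{(j)}$ and $\mu(N_0^{(j)})=0$, then $\bigcup_j \Delta_j = \big(\bigcup_j \Delta_0^{(j)}\big) \cup \big(\bigcup_j N_j\big)$, the first set lying in $\cB$ and the second being contained in $\bigcup_j N_0^{(j)} \in \cB$, which has measure zero by countable subadditivity. The delicate point, which I expect to be the main obstacle, is closure under complementation; I would handle it through the sandwich $\Delta_0 \subset \Delta \subset \Delta_0 \cup N_0$. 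Complementing yields $(\Delta_0 \cup N_0)^c \subset \Delta^c \subset \Delta_0^c$, hence
$$\Delta^c = (\Delta_0 \cup N_0)^c \cup \big(\Delta^c \cap (\Delta_0 \cup N_0)\big)\:,$$
where the first term is in $\cB$ and the second satisfies $\Delta^c \cap (\Delta_0 \cup N_0) \subset \Delta_0^c \cap (\Delta_0 \cup N_0) = \Delta_0^c \cap N_0 \subset N_0$, so it is negligible. Thus $\Delta^c \in \cG$, completing the $\sigma$-algebra check and the identification $\cG = \overline{\cB}^\mu$.

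Finally I would establish well-definedness of the value $\mu(\Delta_0)$ and its identification with $\overline{\mu}(\Delta)$. Suppose $\Delta = \Delta_0 \cup N = \Delta_0' \cup N'$ with $N \subset N_0$ and $N' \subset N_0'$, both $N_0,N_0'$ of $\mu$-measure zero. Since $\Delta_0 \subset \Delta$, we get $\Delta_0 \setminus \Delta_0' \subset \Delta \setminus \Delta_0' \subset N' \subset N_0'$; as $\Delta_0 \setminus \Delta_0' \in \cB$ and $\mu(N_0')=0$, monotonicity forces $\mu(\Delta_0 \setminus \Delta_0')=0$, and by symmetry $\mu(\Delta_0' \setminus \Delta_0)=0$. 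Hence $\mu(\Delta_0) = \mu(\Delta_0 \cap \Delta_0') = \mu(\Delta_0')$, so the assignment $\Delta \mapsto \mu(\Delta_0)$ is independent of the decomposition. That this common value coincides with $\overline{\mu}(\Delta)$ is then immediate from the construction of the completion measure in \cite{Cohn}, which assigns to a set of $\overline{\cB}^\mu$ the $\mu$-value of any $\cB$-set differing from it by a negligible set. Once the closure-under-complementation step is secured, every remaining assertion reduces to monotonicity and subadditivity of $\mu$, so no further difficulty is anticipated.
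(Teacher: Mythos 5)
Your proof is correct. The paper itself disposes of this lemma with ``Direct inspection,'' and your argument supplies exactly that routine verification: the check that $\cG$ is a $\sigma$-algebra (with the complementation step handled correctly via the sandwich $\Delta_0 \subset \Delta \subset \Delta_0 \cup N_0$), the well-definedness of the assignment $\Delta \mapsto \mu(\Delta_0)$, and its identification with $\overline{\mu}(\Delta)$ are all sound.
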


\begin{proof} Direct inspection. \end{proof}

\begin{lemma} \label{LEM} Let $\mu\colon \cB \rightarrow\overline{\mathbb{R}}_+$ be a positive measure on a $\sigma$-algebra $\cB$ on $X$ and $\cL :=
\overline{\cB}^\mu$ be the completion of $\cB$ with respect to $\mu$. Let $\nu\colon\cB\to \overline{\bR_+}$ be a positive   measure such that $\nu <\sp <\mu$. Then the completion $\overline{\cB}^\nu$ of $\cB$ respect to $\mu$ satisfies
$\cL \subset \overline{\cB}^\nu\:,$
 and there exists a unique positive  measure $\tilde{\nu}$ which extends $\nu$ on $\cL$ and $\tilde{\nu}<\sp<\overline{\mu}$, where $\overline{\mu}$ is the completion  of $\mu$ on $\cL$. Finally $\tilde{\nu}(X)= \nu(X)$.
\end{lemma}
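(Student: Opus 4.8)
The plan is to construct $\tilde{\nu}$ explicitly from the structural description of completions furnished by Lemma \ref{LEMMAM0}, letting the absolute continuity $\nu <\sp< \mu$ carry the argument. First I would dispose of the inclusion $\cL \subset \overline{\cB}^\nu$. By Lemma \ref{LEMMAM0} any $\Delta \in \cL = \overline{\cB}^\mu$ decomposes as $\Delta = \Delta_0 \cup N$ with $\Delta_0 \in \cB$ and $N \subset N_0 \in \cB$, $\mu(N_0)=0$. Since $\nu <\sp< \mu$, we have $\nu(N_0)=0$, so $N$ is contained in a $\cB$-set of $\nu$-measure zero; the same characterization (Lemma \ref{LEMMAM0}) applied to $\nu$ then gives $\Delta \in \overline{\cB}^\nu$. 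This is immediate and uses only monotonicity together with $\nu <\sp< \mu$.

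Next I would define $\tilde{\nu}(\Delta) := \nu(\Delta_0)$ for a decomposition $\Delta = \Delta_0 \cup N$ as above. The key step, which I expect to be the only genuinely delicate point, is \emph{well-definedness}: a set in $\cL$ admits many such decompositions, and I must show $\nu(\Delta_0)$ is independent of the choice. Given $\Delta = \Delta_0 \cup N = \Delta_0' \cup N'$ with $N \subset N_0$, $N' \subset N_0'$ and $\mu(N_0)=\mu(N_0')=0$, the differences $\Delta_0 \setminus \Delta_0'$ and $\Delta_0' \setminus \Delta_0$ both lie in $\cB$ and are contained in $N_0 \cup N_0'$, a $\mu$-null and hence (by $\nu <\sp< \mu$) a $\nu$-null set. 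Therefore $\nu(\Delta_0 \setminus \Delta_0') = \nu(\Delta_0' \setminus \Delta_0) = 0$, whence $\nu(\Delta_0) = \nu(\Delta_0 \cap \Delta_0') = \nu(\Delta_0')$. It is precisely absolute continuity that forces the $\cB$-parts of two decompositions to agree up to a $\nu$-null set, so this is where the hypothesis is essential.

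The remaining verifications are routine. Taking $N = \emptyset$ shows $\tilde{\nu}$ extends $\nu$, and in particular $\tilde{\nu}(X) = \nu(X)$ since $X \in \cB$. For $\sigma$-additivity on $\cL$ I would reduce a countable disjoint family $\{\Delta^{(j)}\}$ to its $\cB$-parts $\{\Delta_0^{(j)}\}$, noting that the union of the associated null sets is again null, so that $\cup_j \Delta^{(j)}$ has $\cup_j \Delta_0^{(j)}$ as admissible $\cB$-part and $\tilde{\nu}$ inherits countable additivity from $\nu$. Absolute continuity $\tilde{\nu} <\sp< \overline{\mu}$ follows because $\overline{\mu}(\Delta) = \mu(\Delta_0)$ by Lemma \ref{LEMMAM0}, so $\overline{\mu}(\Delta)=0$ gives $\mu(\Delta_0)=0$, hence $\nu(\Delta_0)=\tilde{\nu}(\Delta)=0$. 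Finally, for uniqueness, if $\nu'$ on $\cL$ extends $\nu$ and satisfies $\nu' <\sp< \overline{\mu}$, then for $\Delta = \Delta_0 \cup N$ the set $\Delta \setminus \Delta_0 \subset N_0$ has $\overline{\mu}(\Delta\setminus\Delta_0)=0$, so $\nu'(\Delta \setminus \Delta_0)=0$ and $\nu'(\Delta) = \nu'(\Delta_0) = \nu(\Delta_0) = \tilde{\nu}(\Delta)$, proving $\nu' = \tilde{\nu}$.
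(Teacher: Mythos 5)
Your proof is correct, but it executes both halves differently from the paper. For existence, the paper does not build $\tilde{\nu}$ by hand: having shown $\cL \subset \overline{\cB}^\nu$ (by the same null-set argument you give), it simply sets $\tilde{\nu} := \overline{\nu}|_{\cL}$, the restriction to $\cL$ of the completion measure of $\nu$ on $\overline{\cB}^\nu$; well-definedness and $\sigma$-additivity are then automatic, so the verifications occupying most of your second and third paragraphs disappear. (Your formula $\tilde{\nu}(\Delta) := \nu(\Delta_0)$ defines the same measure, as one sees from Lemma \ref{LEMMAM0} applied to $\nu$, since $N$ lies inside a $\mu$-null, hence $\nu$-null, set of $\cB$.) For uniqueness, the paper proves something stronger than you do: using only sub-additivity, monotonicity, the extension property and $\nu <\sp< \mu$, it shows that \emph{any} positive measure on $\cL$ extending $\nu$ --- with no absolute-continuity assumption at all --- must take the value $\nu(\Delta_0)$ on $\Delta = \Delta_0 \cup N$, and that any such extension is then automatically absolutely continuous with respect to $\overline{\mu}$. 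Your argument instead invokes the hypothesis $\nu' <\sp< \overline{\mu}$ to annihilate $\nu'(\Delta\setminus\Delta_0)$; this suffices for the lemma as stated, but it only gives uniqueness within the absolutely continuous class, whereas the paper's version shows the absolute-continuity clause is not needed to single out $\tilde{\nu}$. What your route buys is a fully explicit, self-contained construction resting on nothing beyond Lemma \ref{LEMMAM0}; the paper's route buys brevity (by outsourcing $\sigma$-additivity to the completion machinery) and a marginally stronger uniqueness statement.
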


\begin{proof}
(Uniqueness) By Lemma \ref{LEMMAM0},  if  $\Delta\in\cL$ we can decompose  $\Delta=\Delta'\cup N$ where $\Delta\in\cB$ and $N\subset \tilde{N}\in\cB$ is such that $\mu(\tilde{N})=0$. Suppose that  $\tilde{\nu}$ extends $\nu$ on $\cL$. Sub-additivity and monotony yield
$$ \tilde{\nu}(\Delta)\leq\tilde{\nu}(\Delta')+\tilde{\nu}(N)\leq\tilde{\nu}(\Delta')+\tilde{\nu}(\tilde{N})=\nu(\Delta')+\nu(\tilde{N})=\nu(\Delta')$$
where, in the last step, we used $\nu< \sp<\mu$. By monotony $\tilde{\nu}(\Delta)\geq\tilde{\nu}(\Delta')=\nu(\Delta')$, therefore $\tilde{\nu}(\Delta)=\nu(\Delta')$. In summary,  an extension has to be absolutely continuous with respect to $\overline{\mu}$. Indeed $\overline{\mu}(\Delta)=0$ implies $\mu(\Delta')=0$ and therefore $\tilde{\nu}(\Delta)=\nu(\Delta')=0$.\\
(Existence)
Let $\overline{\cB}^\nu$ be the completion of $\cB$ with respect to $\nu$ and $\overline{\nu}$ the completion  of $\nu$ on $\overline{\cB}^\nu$. Notice that $\cL\subset\overline{\cB}^\nu$, indeed $\mu(\tilde{N})=0$ implies $\nu(\tilde{N})=0$. Therefore $\tilde{\nu}:=\overline{\nu}\left|\right._\mathcal{L}$ is a well defined finite  measure which extends $\nu$ on $\mathcal{L}$.  
Since $X \in \cB \cap\overline{\cB}^\nu$, we finally have  $\tilde{\nu}(X)= \nu(X)$.
\end{proof}

\noindent In the following, if $\sE : \cB \to \gB(\cH)$ is a $\cH$-POVM and $\psi,\phi \in \cH$, $\mu^\sE_{\psi,\phi}(\Delta) := \langle \psi |\sE(\Delta) \phi \rangle$
for $\Delta \in \cB$. Furthermore  $\mu^\sE_{\psi}(\Delta) :=  \mu^\sE_{\psi,\psi}(\Delta)$.

\begin{lemma}\label{LEM2}
Let $\sA : \cB\to  \gB(\cH)$ be a normalized $\cH$-POVM and $\mu:\cB\to \overline{\mathbb{R}}_+$ be a positive measure on a $\sigma$-algebra $\cB$ such that, for every  $\psi\in\sH$, $\mu^\sA_\psi<\sp<\mu$. Then there exists a unique normalized $\cH$-POVM $\tilde{\sA}$ on $\cL :=
\overline{\cB}^\mu$  which extends $\sA$ and such that for every $\mu^{\tilde{A}}_\psi<\sp <\overline{\mu}$.
\end{lemma}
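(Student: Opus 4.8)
The plan is to reduce the whole statement to the scalar extension result already isolated in Lemma \ref{LEM}, applied to the positive finite measures $\mu^\sA_\psi$, and then to reassemble the extended scalar measures into an operator-valued object. First I would record that, for each $\psi\in\cH$, the set function $\mu^\sA_\psi:\cB\to\overline{\bR_+}$ is a positive measure with $\mu^\sA_\psi(X)=\langle\psi|\sA(X)\psi\rangle=\langle\psi|\psi\rangle<+\infty$ (using the normalization $\sA(X)=I$), and $\mu^\sA_\psi<\sp<\mu$ by hypothesis. Hence Lemma \ref{LEM} furnishes a unique positive measure $\tilde{\mu}_\psi$ on $\cL=\overline{\cB}^\mu$ which extends $\mu^\sA_\psi$, satisfies $\tilde{\mu}_\psi<\sp<\overline{\mu}$, and obeys $\tilde{\mu}_\psi(X)=\mu^\sA_\psi(X)=\|\psi\|^2$.

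The decisive simplification comes from Lemma \ref{LEMMAM0}: every $\Delta\in\cL$ admits a decomposition $\Delta=\Delta_0\cup N$ with $\Delta_0\in\cB$ and $N\subset N_0\in\cB$, $\mu(N_0)=0$, and then $\tilde{\mu}_\psi(\Delta)=\mu^\sA_\psi(\Delta_0)=\langle\psi|\sA(\Delta_0)\psi\rangle$. Crucially, the decomposition depends only on $\Delta$ (it is a $\mu$-decomposition, and $\mu(N_0)=0$ forces $\mu^\sA_\psi(N_0)=0$ for \emph{every} $\psi$ by absolute continuity), so the same $\Delta_0$ serves all $\psi$ at once. I would therefore simply set $\tilde{\sA}(\Delta):=\sA(\Delta_0)$. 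Well-posedness is the one point needing care: if $\Delta=\Delta_0\cup N=\Delta_0'\cup N'$ are two admissible decompositions, then $\Delta_0\triangle\Delta_0'\subset N_0\cup N_0'$ is $\mu$-null, whence $\mu^\sA_\psi(\Delta_0)=\mu^\sA_\psi(\Delta_0')$ for all $\psi$ because $\mu^\sA_\psi<\sp<\mu$; equality of all diagonal matrix elements $\langle\psi|\sA(\Delta_0)\psi\rangle=\langle\psi|\sA(\Delta_0')\psi\rangle$ then gives $\sA(\Delta_0)=\sA(\Delta_0')$ since $\cH$ is complex.

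With $\tilde{\sA}(\Delta)=\sA(\Delta_0)$ in hand, all remaining properties follow almost mechanically. Positivity $0\leq\tilde{\sA}(\Delta)\leq I$ and normalization $\tilde{\sA}(X)=\sA(X)=I$ are immediate, and the extension property $\tilde{\sA}|_{\cB}=\sA$ is obtained by taking $N=\emptyset$. For the measure property I would note $\langle\psi|\tilde{\sA}(\cdot)\psi\rangle=\tilde{\mu}_\psi$, which is $\sigma$-additive by Lemma \ref{LEM}, and then deduce $\sigma$-additivity of $\Delta\mapsto\langle\psi|\tilde{\sA}(\Delta)\phi\rangle$ through the polarization identity $\langle\psi|\tilde{\sA}(\Delta)\phi\rangle=\tfrac14\sum_{k=0}^{3}i^{-k}\,\tilde{\mu}_{\psi+i^k\phi}(\Delta)$, each summand being a genuine measure; this also confirms $\tilde{\sA}(\Delta)\in\gB(\cH)$ represents a bounded sesquilinear form. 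Absolute continuity $\mu^{\tilde{\sA}}_\psi=\tilde{\mu}_\psi<\sp<\overline{\mu}$ is exactly what Lemma \ref{LEM} delivers. Finally, uniqueness: any competing normalized POVM $\tilde{\sA}'$ on $\cL$ extending $\sA$ with $\mu^{\tilde{\sA}'}_\psi<\sp<\overline{\mu}$ produces, for each $\psi$, a positive measure $\mu^{\tilde{\sA}'}_\psi$ on $\cL$ extending $\mu^\sA_\psi$ and absolutely continuous with respect to $\overline{\mu}$; the uniqueness clause of Lemma \ref{LEM} forces $\mu^{\tilde{\sA}'}_\psi=\tilde{\mu}_\psi=\mu^{\tilde{\sA}}_\psi$, and equality of all diagonal matrix elements gives $\tilde{\sA}'=\tilde{\sA}$ in the complex Hilbert space.

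The only genuinely delicate steps are the well-definedness of $\tilde{\sA}(\Delta)$ under the choice of decomposition and the repeated passage from equality of the quadratic forms $\langle\psi|\cdot\,\psi\rangle$ to equality of operators; both rest on $\cH$ being complex (polarization). Everything else is bookkeeping on top of Lemmas \ref{LEMMAM0} and \ref{LEM}. In particular no Radon--Nikodym argument is needed at the operator level, since the scalar content has already been quarantined in Lemma \ref{LEM}; this is precisely why the present lemma then feeds directly into Proposition \ref{ACM} with the choice $\mu=\nu_S$.
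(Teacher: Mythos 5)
Your proof is correct, and its skeleton is the same as the paper's: both reduce everything to Lemma \ref{LEM} applied to the scalar measures $\mu^\sA_\psi$, and both exploit the decomposition $\Delta=\Delta_0\cup N$ of Lemma \ref{LEMMAM0}, which depends only on $\mu$ and $\Delta$ and hence works simultaneously for all $\psi$. Where you genuinely diverge is in how the operator is produced: the paper polarizes the extended scalar measures into a sesquilinear form $\widetilde{\mu^\sA_{\phi,\psi}}(\Delta)$, notes that this form equals $\mu^\sA_{\phi,\psi}(\Delta_0)$ and is therefore bounded, and then invokes the Riesz lemma to manufacture $\tilde\sA(\Delta)$; you instead define $\tilde\sA(\Delta):=\sA(\Delta_0)$ outright and discharge the well-posedness issue by showing that $\Delta_0\,\triangle\,\Delta_0'$ is $\mu$-null for any two admissible decompositions, then polarizing. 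Your route is the more economical one: boundedness, positivity and $\tilde\sA(\Delta)\leq I$ are manifest, no Riesz representation is needed, and the identity $\tilde\sA(\Delta)=\sA(\Delta_0)$ — which in the paper is only implicit in the equality of the sesquilinear forms — is made explicit. You also spell out the uniqueness claim (via the uniqueness clause of Lemma \ref{LEM} plus polarization in a complex Hilbert space), which the paper's proof of this lemma leaves essentially tacit, dealing with it only in the proof of Proposition \ref{ACM}. The one delicate point of your construction, the independence of the choice of $\Delta_0$, is exactly the point you identified and your justification of it is sound.
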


\begin{proof}
Taking advantage of Lemma \ref{LEM}, for every  $\psi\in\sH$, we can extend $\mu^\sA_\psi$ to $\widetilde{\mu^\sA_\psi}$ on $\cL$. Then, we define 
\begin{equation*}
	\widetilde{\mu^\sA_{\phi,\psi}}(\Delta):= 	\widetilde{\mu^\sA_{\phi+\psi}}(\Delta)-	\widetilde{\mu^\sA_{\phi-\psi}}(\Delta)-	\widetilde{\mu^\sA_{\phi-i\psi}}(\Delta)+	\widetilde{\mu^\sA_{\phi+i\psi}}(\Delta)\:.
\end{equation*}
For a fixed $\Delta \in \cL$ -- $\Delta = \Delta' \cup N$ as in the proof of  Lemma \ref{LEM} -- the map $\cH\times \cH \ni (\phi,\psi)\mapsto \widetilde{\mu^\sA_{\phi,\psi}}(\Delta)$ is  sesquilinear and continuous since $\widetilde{\mu^\sA_{\phi,\psi}}(\Delta)=	{\mu^\sA_{\phi,\psi}}(\Delta')$. As a consequence, for  $\Delta\in \cL$, the map 
$\cH \ni \psi\mapsto \widetilde{\mu^A_{\phi,\psi}}(\Delta) \in \mathbb{C}$
 is a linear and continuous functional with norm bounded $||\psi||$. A straightforward use of  the Riesz lemma gives that there is a unique operator $\tilde{A}(\Delta) : \cH \to \cH$
 such that $\langle \psi|\tilde{A}(\Delta) \phi \rangle = \widetilde{\mu^A_{\phi,\psi}}(\Delta)$. $\tilde{A}(\Delta)$is  positive because $\mu^{\tilde{A}}_\psi$ is positive. Bu construction $\cL \ni \Delta \mapsto \tilde{A}(\Delta)\in \cB(\cH)$
  defines a normalized POVM. Normalization arises from $\langle \phi |  \tilde{\sA}(S) \psi \rangle =   \widetilde{\mu^\sA_{\phi,\psi}}(S) =
 \mu^A_{\phi,\psi}(S) = \langle \phi|\psi \rangle$.
  Finally, $\mu^{\tilde{\sA}}_\psi=\widetilde{\mu^\sA_\psi}< \sp<\overline{\mu}$. 
\end{proof}

\noindent We can now conclude the proof of the main proposition.
 If  $S\in\cC^s_{\mathbb{M}}$, set $\cB=\mathscr{B}(S)$, $\sA:= \sA_S: \cB(S) \to \gB(\cH)$ and $\mu=\nu_S$ in Lemma \ref{LEM2}. The map $\mathscr{M}(S)\ni\Delta\mapsto\tilde{A}(\Delta)\in \gB(\cH)$ defines the wanted extension.  It is unique because every extension of the considered type is completely determined by the positive measure $\widetilde{\mu^\sA_{\psi}}$ which is uniquely determined by $\mu^\sA_{\psi}$, thus by $\sA$ itself.  \hfill $\Box$\\

\noindent{\bf Proof of Proposition \ref{PROPCOEES}}. We have only to prove that the extended POVMs satisfy the coherence condition (\ref{COHERENCE}). Take $\psi \in \cH$. If $\Delta \in \cM(S)$ then, according to the construction in the proof of Proposition \ref{ACM}, there exists a decomposition $\Delta = \Delta_0 \cup N$ with $\Delta_0\in \cB(S)$ and $N\subset N_0 \in \cB(S)$,  and finally
$\langle\psi|\sA_S(N_0)\psi \rangle=0$, $\langle \psi|\tilde{\sA}_S(\Delta)\psi \rangle = \langle \psi|\sA_S(\Delta_0)\psi \rangle$.  These last two identities do not depend on the decomposition of $\Delta$ as $\Delta_0\cup N$. Since $\Delta_0,N,N_0 \subset \Delta \subset S\cap S'$, we can repeat everything for $\langle \psi|\sA_{S'}(\Delta)\psi \rangle$ with the same decomposition of $\Delta$, proving that 
$$\langle \psi|\sA_{S'}(N_0)\psi \rangle = \langle \psi|\sA_{S}(N_0)\psi \rangle=0\:, \quad \langle \psi|\tilde{\sA}_{S'}(\Delta)\psi \rangle=\langle \psi|\sA_{S'}(\Delta_0)\psi \rangle=\langle \psi|\sA_{S}(\Delta_0)\psi \rangle=\langle \psi|\tilde{\sA}_{S}(\Delta)\psi \rangle$$ where we used the fact that $\sA$ satisfies the coherence condition on the Borel sets $\Delta_0, N_0\subset S\cap S'$ by hypothesis. In particular 
$\langle \psi|\tilde{\sA}_{S'}(\Delta)\psi \rangle=\langle \psi|\tilde{\sA}_{S}(\Delta)\psi \rangle$.
Arbitrariness of $\psi$, using polarization and the fact that $\cH$ is complex, proves the validity of the coherence condition for the extended POVMs ending the proof. \hfill $\Box$\\


\noindent {\bf Proof of Proposition \ref{PROPJ}}. Notice that (\ref{SECOND}) is nothing but (\ref{FIRST}) when exploiting (\ref{FJ}).
  Evidently, it is sufficient to establish the thesis when $S'$ is a Minkowski time slice and $S$ is a generic smooth  Cauchy surface.
From now on, we consider a Minkowski coordinate frame $x^0,x^1,x^3,x^4$ and describe $S$ in terms of  a smooth function $x^0=t(\vec{x})$, taking $S'$ as the time-slice $x^0=0$.   Define the auxiliary vector field
$$K := \frac{1}{(1+ r^2)^2} \partial_{x^0}\:,\quad \mbox{where $r:= \sqrt{\sum_{k=1}^3  (
x^k)^2}$}\:.$$
By construction $K$ is conserved, everywhere non-vanishing, timelike, and future-directed. Finally,
\beq  -K\cdot  n_{S'} =  \frac{1}{(1+ r^2)^2} > 0 \quad \mbox{and} \quad  -\int_{S'}  K\cdot  n_{S'} d\nu_{S'}  = \int_{S'} \omega^K < +\infty\:.\label{CONV}\eeq
Define $B'_R$ as the open ball in $S'\equiv \bR^3$ centered at the origin and with finite radius $R>0$. Correspondingly $B_R \subset S$ is the open set in $S$ defined by $x^0= t(\vec{x})$, $\vec{x} \in B'_R$.  Let $T_R\subset \bM$ be the closed (compact) cylinder with bases $\overline{B_R}$ and $\overline{B'_R}$ and lateral surface parallel to $K$ (i.e. to $\partial_{x^0}$). 
We are explicitly assuming that the closures of  $B'_R$ and $B_R$ have no intersection for now and we shall treat later the case where the two surfaces intersect.
As $K$ is conserved, $d\omega^K=0$ for (1) Proposition \ref{PINT}. Furthermore $\partial T_R$ is an orientable  smooth embedded submanifold up to zero-masure subsets, we can apply the  Poincar\'e  theorem, proving that $\int_{\partial T_R} \omega^K  =\int_{T_R} d\omega^K = 0$. Since the lateral surface gives no contribution as easily arises from (\ref{omegaJ}) using the fact that $J$ is parallel to that surface, the identity boils down to
$$\int_{B'_R}  \omega^K  = \int_{B_R}  \omega^K \:.$$ 
This identity is valid also when $B'_R$ and $B_R$ have intersection (just because $S$ and $S'$ do and $B'_R$ passes through that intersection). In that case, as $R$ is finite so that $t(\vec{x})$ is bounded for  $\vec{x} \in B'_R \subset S' \equiv \bR^3$, we can move $S'$ parallelly to $\partial_{x^0}$ till to another $x^0$-slice  $S''$ in the past of $S'$, in order that $B_R$ and $B''_R$ (the projection of $B'_R$ onto $S''$) do not meet. By construction, using the above argument once again for the relevant embedded submanifolds with orientable boundary, since now the bases do not touch each other,
$$\int_{B''_R}  \omega^K = \int_{B'_R}   \omega^K \:,$$ 
and also 
$$\int_{B''_R}  \omega^K= \int_{B_R}   \omega^K \:.$$ 
In summary 
$$\int_{B'_R}   \omega^K = \int_{B_R}   \omega^K \:.$$ 
is valid even if $B_R$ and $B'_R$ have intersection.\\
Taking $R\to +\infty$, a direct application of the monotone convergence theorem (as the integrals can be written as Lebesgue integrals in $\bR^3$ of non negative functions according to (\ref{VVN}) proves that
\beq 0\leq \int_{S'}  \omega^K   = \int_{S}  \omega^K  < +\infty\:,\label{IDINT}\eeq
where we also used (\ref{CONV}).\\
To go on, define $Y:= J+K$ noticing that this smooth conserved vector field is everywhere timelike and future directed. Since it is also bounded, its flow is global and defines a smooth one-parameter group of diffeomorphisms  $\{\phi^{Y}_{s}\}_{s\in \bR}$ of $\bM$.
Finally, the integral curves of $Y$ are future directed and inextendible. In fact, suppose that $\gamma(s)\to q\in \bM$ for $s\to +\infty$, i.e., the curve is not future inextendible (the same argument applies for $s\to -\infty)$. If $Y^k(q)\neq 0$ for some $k=0,1,2,3$, then it happens  in a connected neighborhood $U$ of $q$ where the sign of $Y^k= \frac{dx^k}{ds}$ is therefore fixed. There we can use $z=x^k$ to parametrize $\gamma$ obtaining $s(z) = \int_0^{z} \frac{dx^k}{Y^k(\gamma(x^k))}$. However, since $s$ varies till $+\infty$ when $\gamma$ approaches $q$ in $U$, then $Y^k(\gamma(x^k))$ must vanish as $x^k \to x^k(q)$ contrarily to the hypothesis. Hence all components of $Y$ must vanish at $q$ but this was exclued {\em a priori}.\\
As a consequences of these properties, the map $$\Phi : \bR \times S' \equiv \bR \times \bR^3  \ni (s,p) \mapsto \phi^Y_s(p) \in \bM$$ turns out to be a diffeomorphism as well.
In fact, $\Phi$ is injective because $\phi^Y_s(p)=\phi^Y_{s'}(p')$ implies $p = \phi^Y_{s'-s}(p')$ but this would mean that the integral curve of $Y$ through $p$ touches twice $S'$, and this is not possible unless $s=s'$ and thus $p=p'$; surjectivity of $\Phi$ follows from the fact that, if $q\in \bM$ there is an integral line of $Y$ through $q$ at $s=0$ and, it being timelike and inextendible, it must intersect $S'$ at some $q'$ for $s=T_q$ because $S'$ is a Cauchy surface. Therefore $\Phi(-T_q,q')=q$; finally $d\Phi \neq 0$ everywhere, because (a) $d\Phi(s,p) e_0 = Y(\phi^Y_s(p))$ and   ($e_0, e_1,e_2,e_3$ denoting the canonical  basis of $\bR\times \bR^3 \equiv \bR \times S'$)  (b) $d\Phi(s,p) e_k = d\phi^Y_s(p) e_k$ defines a basis of the tangent space at $\phi^Y_s(p)$ of the embedded 3D submanifold  $\phi^Y_s(S')$\footnote{Indeed,   $\phi^Y_s : \bM \to \bM$ is a diffeomorphism and thus it transforms the embedded submanifold $S'\equiv \bR^3$ transverse to $Y|_{S'}$ into the embedded submanifold $\phi^Y_s(S')$ transverse to $d\phi^Y|_{S'} Y= Y|_{\phi_s^Y(S')}$, in particular, $d\phi^Y_s$ bijectively transforms the tangent spaces of these manifolds accordingly.} that is transverse to $Y(\phi^Y_s(p)) = d\phi^Y_s Y(p)$ by construction. In summary, $d\Phi(s,p)$ sends the canonical  basis of $\bR^4 = \bR\times \bR^3 \equiv \bR \times S'$ to a basis of $T_{\phi_s(p)}\bM$ and it is therefore bijective.
The situation is identical to the  one of the vector $K$, with the only difference that  we have replaced  $K$ for $Y$ and  $x^0$ for the global parameter $s$.  Now  $s,x^1,x^2,x^3$ -- where the latter three coordinates are taken on $S'$ -- define a global chart on $\bM$.  In these coordinates, the action of $\phi^Y$ is trivial $\phi_t^Y :  \bR^4 \ni (s,x^1,x^2,x^3) \mapsto (s+t, x^1,x^2,x^3) \in \bR^4$. \\ Taking advantage of the diffeomorphism  $\Phi$, one easily proves the following facts whose details are left to the reader.  If $B'_R \subset S'\equiv \bR^3$ is as before an open ball of radius $R>0$, $\Gamma_R := \Phi(\bR \times \overline{B'_R})$ is a smooth 4-manifold with orientable boundary embedded in $\bM$. The boundary $\partial \Gamma_R$ is made of the integral linees of $Y$ exiting $\partial B'_R$.  We extract from $\Gamma_R$ the compact cylinder $T_R$ with bases given by the closures of $B'_R$ and  the closure of the smooth embedded submanifold
 $$B_R := S\cap \Gamma_R = S \cap \{ \phi_t(B'_R) \:|\: t \in \bR \}$$
(which is not a ball in general now!).
 As before we  assume that  the closures of $B_R$ and $B'_R$ have no intersection and the case of non empty intersection can be treated exactly as before in the representation $\bR \times \bR^3$ where $s\in \bR$ and $\bR^3 \equiv S'$.
Notice that $B_R \subset B_{R'} \subset S$ if $R< R'$ by construction and also 
\beq \cup_{R>0} B_R = S\:.\label{UBR} \eeq
This identity is valid because, for every $p\in S$, there is $R'>0$ such that $p \in B_{R'}$. (In fact, as before, there is an integral curve of $Y$ passing through $p$ and, since this curve is timelike and inextendible, there must be $p'\in S'$ such that $\phi^Y_s(p')=p$ for some $s$ because $S'$ is Cauchy. Since the union of the sets $B'_R$ covers $S'$, then $p'\in B'_{R'}$ for some $R'>0$ sufficiently large and so  $p\in B_{R'}$.)
  $\partial T_R$ is an orientable  smooth embedded submanifold up to zero-measure sets and we can apply   the  Poincar\'e theorem as before, obtaining
$$ \int_{B'_R}  \omega^Y  = \int_{B_R}  \omega^Y \:,$$
where we have disregarded the contribution of the lateral surface because it is made of integral lines of $Y$ itself and no contribution to the boundary integral  arises according to (\ref{omegaJ}).
Taking the limit for $R\to +\infty$ on both sides, the monotone convergence theorem (taking positivity of integrands into account
according to (2) Proposition \ref{PINT}
and  (\ref{UBR})) yields
$$\int_{S'}  \omega^{K+ J}  = \int_{S}  \omega^{K+ J}\:. $$
Namely,
$$\int_{S'}  \omega^{K} + \omega^J  = \int_{S}  \omega^{K} + \omega^J \:. $$
On the ground of the validity of  (\ref{IDINT}), this identity boils down to
$$0\leq \int_{S'}  \omega^J  = \int_{S} \omega^J   \leq  +\infty\:.$$
In particular both integrals converge or diverge simultaneously. \hfill $\Box$\\

\noindent {\bf Proof of Lemma \ref{LEMMASWAP}}.  Passing to a Minkowski chart adapted to $n_\Sigma$, the right-hand side of (\ref{MM}) is a finite linear combination of integrals of the form
$$ I:= \int_{\bR^3} d^3p \overline{f(\vec{p})}\int_{\Delta}  d^3x  \int_{\bR^3} d^3q\frac{e^{i(\vec{q}-\vec{p})\cdot \vec{x}}}{(2\pi)^3}  g(\vec{q}) 
=  \int_{\bR^3} d^3p \overline{f(\vec{p})} \int_{\bR^3} d^3x  \chi_\Delta(\vec{x}) \int_{\bR^3} d^3q\frac{e^{i(\vec{q}-\vec{p})\cdot \vec{x}}}{(2\pi)^3} g(\vec{q})$$
where $f,g \in {\cal \Sigma}(\bR^3)$ and the exponentials containing $x^0$ are embodied in these functions.  Therfore, proving the thesis for $I$ is enough for ending the proof of the theorem. We have,
$$I =  \int_{\bR^3} d^3p \overline{f(\vec{p})}\int_{\bR^3}  d^3x \chi_\Delta(\vec{x}) \frac{e^{-i\vec{p} \cdot \vec{x}}}{(2\pi)^{3/2}} \int_{\bR^3} d^3q \frac{e^{i\vec{q} \cdot \vec{x}}}{(2\pi)^{3/2}} g(\vec{q})$$
$$
= \int_{\bR^3} d^3p \overline{f(\vec{p})} \int_{\bR^3} d^3x  \frac{e^{-i\vec{p} \cdot \vec{x}}}{(2\pi)^{3/2}} \chi_\Delta(\vec{x}) \hat{g}(\vec{x}) 
= \int_{\bR^3} d^3x  \overline{\hat{f}(\vec{x})}  \chi_\Delta(\vec{x})\hat{g}(\vec{x})\:. $$
Above, $\chi_\Delta \cdot  \hat{g}\in L^1(\bR^3, d^3x) \cap L^2(\bR^3, d^3x)$ because $\chi_\Delta$ is bounded  and  $ \hat{g}$ Schwartz, and in the last  identity we exploited the Plancherel theorem in $L^2$. From the very definition of Fourier transform, we finally have
$$ I= \int_{\bR^3}   d^3x \chi_\Delta(\vec{x}) \int_{\bR^3} d^3p  \int_{\bR} d^3q \frac{e^{i(\vec{q}-\vec{p})\cdot \vec{x}}}{(2\pi)^3} \overline{f(\vec{p})} g(\vec{q})  = \int_{\Delta}   d^3x \int_{\bR^3} d^3p  \int_{\bR} d^3q \frac{e^{i(\vec{q}-\vec{p})\cdot \vec{x}}}{(2\pi)^3} \overline{f(\vec{p})} g(\vec{q}) \:. $$
Last identity concludes the proof. \hfill $\Box$\\

\noindent {\bf Proof of Proposition \ref{PROPVAR}}
(a) and (b). We adopt  throughout the easier notation  $\phi := \phi_\psi$. We start by observing that the function $\bR^6 \ni (\vec{p}, \vec{q})  \mapsto F(\vec{p}, \vec{q}) :=\overline{\phi(\vec{p})}K_\sA(\vec{p}, \vec{q}) \phi(\vec{q})$ is  $\cS(\bR^6)$ in both cases (a) and (b), and 
thus $$\bR^3 \ni \vec{x} \mapsto f(\vec{x}):= \int_{\bR^3}\int_{\bR^3}  d^3p d^3q \frac{e^{i(\vec{q}-\vec{p})\cdot \vec{x}}}{(2\pi)^3} F(\vec{p}, \vec{q}) $$ is in $\cS(\bR^3)$ as it can be proved by direct inspection.  Notice that, in the case (b), 
(\ref{NPOS}) must be valid also for $\phi \in \cS(\bR^3)$ since, for a fixed $\phi \in \cS(\bR^3)$, there is a sequence of functions $C_c^\infty(\bR^3) \ni \phi_n \to \phi$ pointwise and $|\phi_n|\leq \phi$. Hence we can take the limit on both sides of (\ref{NPOS})  by using continuity of $\sA(\Delta)$ on the left-hand side and the dominated convergence theorem on the right-hand side.\\
We pass to compute the various moments of the probability measure  $\langle \psi|\sA(\cdot) \psi \rangle$ taking advantage of the theory of Fourier transformation of distributions in $\cS'(\bR^n)$.
If $\alpha:= (\alpha_1,\alpha_2,\alpha_3)$ is a multi index, so that $x^\alpha := (x^1)^{\alpha_1} (x^2)^{\alpha_2} (x^3)^{\alpha_3}$, we can write
	\begin{equation*}
		\int_{\mathbb{R}^3}x^{\alpha} \langle\psi|A(d^3x)\psi\rangle=\int_{\mathbb{R}^3}x^{\alpha} f(x)\frac{d^3x}{(2\pi)^{\frac{3}{2}}}=\left\langle\frac{1}{(2\pi)^{\frac{3}{2}}},M^{\alpha} f\right\rangle
	\end{equation*}\\
where $M^\alpha$ is the multiplicative operator with $x^\alpha$ and $f \in \cS(\bR^3)$ is defined above.
Changing variables to $\vec{u}:=\vec{q}-\vec{p}$ and $\vec{v}:=\vec{q}$ we have,
\begin{equation*}
	f(\vec{x})=\frac{1}{(2\pi)^{\frac{3}{2}}}\int_{\mathbb{R}^3} e^{i \vec{u}\cdot \vec{x}}\left(\int_{\mathbb{R}^3}F(\vec{u}-\vec{v},\vec{v})d^3v\right) d^3u
\end{equation*}
with is the  inverse Fourier transform, always in $\cS(\bR^3)$, of 
$$G(\vec{u}):=\frac{1}{(2\pi)^{\frac{3}{2}}}\int_{\mathbb{R}^3} e^{-i \vec{u}\cdot \vec{x}}f(\vec{x}) d^3x
=\int_{\mathbb{R}^3}F(\vec{u}-\vec{v},\vec{v})d^3v
\:.$$
Therefore $$\int_{\mathbb{R}^3}x^{\alpha} \langle\psi|A(d^3x)\psi\rangle=\left\langle\mathcal{F}^{-1}\delta_0, i^{|\alpha|}\mathcal{F}^{-1}\partial^{\alpha} G\right\rangle=i^{|\alpha|}\left\langle\delta_0,\partial^{\alpha}G\right\rangle=i^{|\alpha|}\partial^\alpha G(0)\:,$$
$$ \partial^\alpha G (0)=\int_{\mathbb{R}^3}\frac{\partial^{|\alpha|}}{\partial p^\alpha}F(\vec{v}-\vec{u},\vec{v})|_{\vec{u}= 0}d^3 v=\sum_{\beta\leq\alpha}\binom{\alpha}{\beta}\int_{\mathbb{R}^3}\frac{\partial^{|\alpha|-|\beta|}\overline{\phi}}{\partial p^{\alpha-\beta}}(\vec{p})\frac{\partial^{|\beta|}K_\sA}{\partial p^\beta}|_{\vec{q}=\vec{p}} \phi(\vec{p} ) d^3p.$$\\
To go on, defining $l(\vec{p},\vec{q}):=(\vec{p},\vec{p})$, since $K_\sA(\vec{p}, \vec{p})=1$, we have that $0=\partial^{\alpha}(K_\sA\circ l)=(\partial^\alpha K_\sA)\circ l$ for $|\alpha|=1$. Therefore we obtain the general expression for the $\alpha$-th moment (\ref{GENM}),
$$\int_{\mathbb{R}^3}x^{\alpha} \langle\psi|A(d^3x)\psi\rangle=\langle \psi|N^\alpha\psi\rangle+\sum_{1<|\beta|, \beta\leq \alpha}\binom{\alpha}{\beta}\int_{\mathbb{R}^3}\left(i^{|\alpha|-|\beta|}\frac{\partial^{|\alpha|-|\beta|}\overline{\phi}}{\partial p^{\alpha-\beta}}\right)(\vec{p})\left(i^{|\beta|}\frac{\partial^{|\beta|}K_\sA}{\partial p^\beta}\right)|_{\vec{q}=\vec{p}}\phi(\vec{p})d^3p.$$ 
In particular, for $|\alpha|=1$, we find (\ref{IDA})
$$\int_{\mathbb{R}^3}x^a \langle\psi|A(d^3x)\psi\rangle=\langle \psi|N^a\psi\rangle \quad a=1,2,3\:.$$
Every other selfadjoint operator $B$ in $\cH$ which satisfies the identity above must also satisfy  (from polarization and density)
$$B|_{\cD(\cH)}= N^a|_{\cD(\cH)}$$
(also replacing $\cD(\cH)$ for ${\cal S}(\cH)$). Since $\cD(\cH)$ (resp. ${\cal S}(\cH)$) is a core for $N^a$:
$$N^a = \overline{N^a|_{\cD(\cH)}}=  \overline{B|_{\cD(\cH)}} \subset \overline{B}= B\:.$$
Since  a selfadjoint operator (here $N^a$) is maximally symmetric, it must hold
$N^a = B$.
This ends the proof of (a) and (b).\\
We can pass to prove (c) and (d).
For $|\alpha|=2$, the formula above for the $\alpha$-th moment yields
$$\int_{\mathbb{R}^3}x^\alpha\langle\psi|A(d^3x)\psi\rangle=\langle \psi|N^\alpha\psi\rangle+\int_{\mathbb{R}^3}\overline{\phi}(\vec{p})\left.\left(i^{2}\frac{\partial^{|\alpha|}K_\sA}{\partial p^\alpha}\right)\right|_{\vec{q}-\vec{p}}\phi(\vec{p})d^3p.$$ 
However, taking advantage of $\frac{\partial^{|\alpha|}K_\sA}{\partial p^\alpha}\circ l=-\frac{\partial^{|\alpha|}K_\sA}{\partial q^{\alpha_1}\partial p^{\alpha_2}}\circ l$ for $|\alpha|=2$ and when only a component of $\alpha$ does not vanish,  we can rephrase the found result  as 
$$\int_{\bR^3} (x^a)^2 \langle \psi| \sA(d^3x)\psi \rangle   =\langle \psi| (N^a)^2 \psi\rangle +  \int_{\bR^3} d^3p    \overline{\phi(\vec{p})} \phi(\vec{q})  \frac{\partial }{\partial q_a}  \frac{\partial }{\partial p_a} \left. K_\sA(\vec{q}, \vec{p})\right|_{\vec{p}= \vec{q}} \:. $$
 The multiplicative operator $\sK^{\sA}_a$  defined in (\ref{kappa}), here viewed in $L^2(\bR^3, d^3p)$, defined by the function $\left. \frac{\partial }{\partial p_a}\frac{\partial }{\partial q_a} K_\sA(\vec{p}, \vec{q})\right|_{\vec{p}= \vec{q}}$ is a selfadjoint operator which is a spectral function of the four  momentum operator $P$ as this function is real  (see e.g. \cite{M}). The  domain of the operator trivially includes $C_c^\infty(\bR^3)$, but  also $\cS(\bR^3)$ when $K_\sA(\vec{p}, \vec{q})$ has polynomial growth with all of its derivatives. Evidently $\sK^{\sA}_a$  is also bounded, and thus in $\gB(\cH)$, if the function $\left. \frac{\partial }{\partial p_a}\frac{\partial }{\partial q_a} K_\sA(\vec{p}, \vec{q})\right|_{\vec{p}= \vec{q}}$ is bounded. 
Finally the operator $\sK^{\sA}_a$ is positive just because
 $$  \left. \frac{\partial }{\partial p_a}\frac{\partial }{\partial q_a} K_\sA(\vec{p}, \vec{q})\right|_{\vec{p}= \vec{q}} \geq 0\:.$$
 This can be proved as follows. Consider a real smooth mollificator (with a common compact support) such that $\psi_{n}(\vec{x}-\vec{x_0}) \to \delta(\vec{x}-\vec{x}_0)$  weakly for $n\to +\infty$.  As $K_\sA$ is positive definite, we have $$0 \leq \int_{\bR^3}  d^3p\int_{\bR^3} d^3q \partial_{p^a}\psi_{n}(\vec{p}-\vec{k})  \partial_{q^a}\psi_{n}(\vec{q}-\vec{k})K_\sA(\vec{p}, \vec{q}) \to  \left. \frac{\partial }{\partial p_a}\frac{\partial }{\partial q_a} K_\sA(\vec{p}, \vec{q})\right|_{\vec{p}= \vec{q} =\vec{k}}$$ uniformly  as a function of $\vec{k}$ because $K_\sA$ is continuous  (Proposition 4.21 \cite{Brezis}) so that the resulting function of $\vec{k}$ is non-negative as well. In summary,
 $$(\Delta_\psi x^a)^2:= \int_{\bR^3} (x^a)^2 \langle \psi| \sA(d^3x)\psi \rangle  - \left(\int_{\bR^3} x^a \langle \psi| \sA(d^3x)\psi \rangle  \right)^2$$ $$=\langle \psi | (N^a)^2 \psi \rangle -  \langle \psi |N^a \psi \rangle^2 + \langle \psi|\sK^{\sA}_{a} \psi\rangle = (\Delta_\psi N^a)^2+   \langle \psi|\sK^{\sA}_{a} \psi\rangle\:.$$ 	
  Multiplying both sides of the found identity  with $(\Delta_\psi P_a)^2$ and taking the standard Heisenberg inequality into account we get the thesis:
 $$(\Delta_\psi x_\sA^a \Delta_\psi P_a)^2 \geq \frac{\hbar^2}{4} + \hbar^2 (\Delta_\psi P_a)^2 \langle \psi|\sK^{\sA}_{a} \psi\rangle$$
 where the second factor $\hbar^2$ arises when restoring the Planck constant in the exponentials $e^{i (\vec{q}-\vec{p})\cdot {\vec x}/\hbar}$ so that 
 the representation of $N^a$ on $\cS(\bR^3)$ in the $L^2(\bR^3,d^3p)$ space is $i\hbar \frac{\partial}{\partial p_a}$.\\
(e) $\cH_N := \oplus_{j=1}^N \cH$ is the complex Hilbert space of vectors of  measurable complex valued functions $\Psi:= (\psi_1,\ldots, \psi_N)$ equipped with the Hilbert space structure arising from the Hermitian scalar product
$$\langle \Psi|\Psi' \rangle_N := \sum_{j=1}^N \int_{\bR^3} \overline{\psi_j(\vec{p})} \psi'_j(\vec{p}) d^3p\:.$$
If $\psi\in {\cal S}(\cH)$, define $\Phi_\psi := N^{-1/2}(\phi_\psi u_1, \ldots, \phi_\psi u_N)$. Each component is in $\cS(\bR^3)$ in our hypotheses. 
Finally, using $K_\sA(\vec{p},\vec{p}) =1$
\beq
\langle \Phi_\psi| \Phi_\psi \rangle_N =\frac{1}{N} \sum_{j=1}^N \int_{\bR^3} \overline{\phi_\psi(\vec{p})} \overline{u_j(\vec{p})}u_j(\vec{p}) \phi_\psi(\vec{p})d^3p = \frac{1}{N} \sum_{j=1}^N \int_{\bR^3} \overline{\phi_\psi(\vec{p})} K_\sA(\vec{p},\vec{p}) \phi_\psi(\vec{p})d^3p = \langle \psi|\psi\rangle\:.\label{SAFE}
\eeq
With the same procedure as in the proof of (c) and (d), one sees that, for every constant $c\in \bR$
$$\int_{\bR^3} (x^a- c)^2 \langle \psi| \sA(d^3x)\psi \rangle   =   \frac{1}{N}\sum_{j=1}^N \int_{\bR^3} \overline{\phi_\psi(\vec{p}) u_j(\vec{p})} (i\partial_{p_a} - c)^2 \phi_\psi(\vec{p})u_j(\vec{p})d^3p $$ $$= \langle \Phi_\psi | (i\partial_{p_a} - c)^2 \Phi_\psi\rangle_N
= \langle  (i\partial_{p_a} - c)\Phi_\psi | (i\partial_{p_a} - c) \Phi_\psi\rangle_N=
|| (i\partial_{p_a} - c) \Phi_\psi||^2\:.$$
where we have used the fact that  $i\partial_{p_a}-c$, acting componentwise,  is symmetric on $\oplus_{j=1}^N \cS(\bR^3)$.
Similarly
$$\langle \psi | (P_a-bI)^2\psi\rangle =    \frac{1}{N}\sum_{j=1}^N \int_{\bR^3} \overline{\phi_\psi(\vec{p}) u_j(\vec{p})} (p_a - b)^2 \phi_\psi(\vec{p})u_j(\vec{p})d^3p= || (p_a- d) \Phi_\psi||^2\:.$$
Where  the multiplicative operator $p_a-d$, acting componentwise, is symmetric on $\oplus_{j=1}^N \cS(\bR^3)$.
Cauchy-Schwarz inequality and (\ref{SAFE}) yield
$$|| (i\partial_{p_a} - c) \Phi_\psi||\: || (p_a- d) \Phi_\psi|| \geq
|\langle  (i\partial_{p_a} - c)\Phi_\psi |(p_a- d)\Phi_\psi \rangle_N|    = 
|\langle \Phi_\psi | (i\partial_{p_a} - c)(p_a- d)\Phi_\psi \rangle_N|$$
$$\geq | Im   \langle \Phi_\psi | (i\partial_{p_a} - c)(p_a- d) \Phi_\psi \rangle_N|= \frac{1}{2}| \langle \Phi_\psi | [i\partial_{p_a}, p_a]\Phi_\psi \rangle_N| = \frac{1}{2} \langle \Phi_\psi|\Phi_\psi \rangle_N = \frac{1}{2}\langle \psi|\psi\rangle\:.$$ Restoring $\hbar$, we have found that, for $||\psi||=1$
$$\sqrt{\int_{\bR^3} (x^a- c)^2 \langle \psi| \sA(d^3x)\psi \rangle  }\:  \sqrt{\langle \psi | (P_a-bI)^2\psi\rangle}  \geq \frac{\hbar}{2}$$
This is the thesis when replacing $c$
for $\int_{\bR^3} x^a \langle \psi| \sA(d^3x)\psi \rangle $ and 
 and $b$ for  $\langle \psi |P_a \psi \rangle$.  \hfill $\Box$\\

\noindent {\bf Proof of Theorems \ref{1MNW1} and \ref{1MNW2}}.  Those theorems are easy consequences of Proposition \ref{PROPVAR}.
 Both $\sM^n_\Sigma(\Delta)$ and $\sT^g_\Sigma(\Delta)$, for $g$  real and  smooth,  have the expression (\ref{NPOS}) if $\Sigma$ coincides with the slice at $x^0=0$ of a Minkowski chart (see (\ref{POVMC2}) and (\ref{MM}) for $x^0=0$). In details,
$$K_{\sT^g_\Sigma}(\vec{q}, \vec{p}): = \frac{(q^0+p^0)g(\vec{q}\cdot \vec{p}-q^0p^0)}{2\sqrt{q^0p^0}}$$
and
$$K_{\sM^n_{\Sigma}}(\vec{q}, \vec{p}) :=  \frac{p\cdot n  \: q\cdot n_\Sigma+ q\cdot n\: p \cdot n_\Sigma - n\cdot n_\Sigma(p\cdot  q + m^2)}{2 q\cdot n\: p\cdot n}\:. $$
These kernels are positive definite, respectively, in view of Def. \ref{DEFCP} and Thm. \ref{MAIN2}. \\ 
The kernels $K_{\sM^n_{\Sigma}}(\vec{q}, \vec{p})$, and $K_{\sT^g_{\Sigma}}(\vec{q}, \vec{p})$  when $g$ has the form (\ref{gr}) and convex combinations of these functions, have  polynomial growth with all of their derivatives, so the case (b) of the proposition applies to these cases.\\
It is not difficult to prove in particular  that the functions of $\vec{p}$ given by  $\left.\frac{\partial}{\partial q_a}\frac{\partial}{\partial p_a}K_{\sM^n_{\Sigma}}(\vec{q}, \vec{p})\right|_{\vec{p}=\vec{q}}$ 
and $\left.\frac{\partial}{\partial q_a}\frac{\partial}{\partial p_a}K_{\sT^g_{\Sigma}}(\vec{q}, \vec{p})\right|_{\vec{p}=\vec{q}}$
are  bounded (in the second case for every smooth $g$, it is not necessary the form (\ref{gr}))  and therefore the corresponding operator in (\ref{kappa}) is  a positive  selfadjoint operator in $\gB(\cH)$.   \hfill $\Box$

\section{Conditions for positive-definite kernels on $\bR^n$}
We prove here a useful property of positive definite kernels.

\begin{proposition} \label{PropK}  Let $K : \bR^n\times \bR^n \to \bC$  be continuous, the following facts are equivalent. 
\begin{itemize}
\item[(a)] Every $N\times N$ matrix $[K(k_i,k_j)]_{i,j=1,\ldots, N}$ for every choice of $k_1,\ldots, k_N \in \bR^n$ and $N=1,2,\ldots$ are positive semidefinite  (in particular Hermitian); 
\item[(b)] $K$ is positive definite according to (\ref{POSK});  
$$ \sum_{i,j=1}^N \overline{c_i}c_j K(k_i, k_j) \geq 0\:, \quad \forall \{c_j\}_{j=1,\ldots, N}\subset \bC\:,  \forall \{k_j\}_{j=1,\ldots, N}\subset X \:, \forall N=1,2,\ldots\:; \quad \quad \:\: (\ref{POSK})$$
\item[(c)] the bilinear functional induced by $K$ on $C_c(\bR^n)$ is positive:
\beq
\int_{\bR^n\times \bR^n} \overline{f(p)} K(p,q) f(q) d^npd^nq \geq 0\:, \quad \forall f \in C_c(\bR^n) \label{POSKI}\:.
\eeq
\end{itemize}
\end{proposition}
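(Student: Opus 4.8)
The plan is to dispose of the trivial equivalence first and then close the loop through the integral condition. Conditions (a) and (b) are merely two phrasings of one fact: a matrix $M=[K(k_i,k_j)]_{i,j=1,\dots,N}$ is positive semidefinite precisely when $\sum_{i,j}\overline{c_i}c_j M_{ij}\ge 0$ for all $c\in\bC^N$, and this condition automatically forces $M$ to be Hermitian (matching the remark after (\ref{POSK})). So (a)$\Leftrightarrow$(b) needs no argument and I would simply record it. The genuine content is the interplay with (\ref{POSKI}), which I would obtain by proving (b)$\Rightarrow$(c) and (c)$\Rightarrow$(b).

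For (b)$\Rightarrow$(c), fix $f\in C_c(\bR^n)$ and approximate the double integral in (\ref{POSKI}) by Riemann sums. Partitioning a box containing $\mathrm{supp}\,f$ into cells with centres $p_i$ and volumes $\Delta_i>0$, the sum $\sum_{i,j}\overline{f(p_i)}f(p_j)K(p_i,p_j)\Delta_i\Delta_j$ equals $\sum_{i,j}\overline{c_i}c_j K(p_i,p_j)$ with $c_i:=f(p_i)\Delta_i$ (real and positive volumes making the identification exact), hence is nonnegative by (b). Since $(p,q)\mapsto\overline{f(p)}K(p,q)f(q)$ is continuous with compact support, these Riemann sums converge to the integral in (\ref{POSKI}), which is therefore $\ge 0$.

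For (c)$\Rightarrow$(b) — the one direction where continuity of $K$ is essential, and the step I expect to be the main obstacle — I would regularize the discrete data by mollification. Given $k_1,\dots,k_N\in\bR^n$ and $c_1,\dots,c_N\in\bC$, pick a nonnegative $\phi\in C_c(\bR^n)$ with $\int\phi=1$ supported in the unit ball, set $\phi_\varepsilon(x):=\varepsilon^{-n}\phi(x/\varepsilon)$, and define $f_\varepsilon:=\sum_i c_i\,\phi_\varepsilon(\cdot-k_i)\in C_c(\bR^n)$. Applying (\ref{POSKI}) to $f_\varepsilon$ and expanding (using $\overline{\phi_\varepsilon}=\phi_\varepsilon$) gives
\[
0\le \sum_{i,j}\overline{c_i}c_j\int_{\bR^n\times\bR^n}\phi_\varepsilon(p-k_i)\,K(p,q)\,\phi_\varepsilon(q-k_j)\,d^np\,d^nq .
\]
Each of the finitely many inner integrals is a double mollification of the continuous function $K$ concentrated near $(k_i,k_j)$, so it converges to $K(k_i,k_j)$ as $\varepsilon\to 0^+$; passing the finite sum through the limit yields $\sum_{i,j}\overline{c_i}c_j K(k_i,k_j)\ge 0$, which is (b).

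The only delicate point is this last convergence, and I would justify $\int_{\bR^n\times\bR^n}\phi_\varepsilon(p-k_i)K(p,q)\phi_\varepsilon(q-k_j)\,d^np\,d^nq\to K(k_i,k_j)$ from the uniform continuity of $K$ on a compact neighbourhood of $(k_i,k_j)$ together with $\int\phi_\varepsilon=1$ and $\mathrm{supp}\,\phi_\varepsilon\to\{0\}$. Everything else is bookkeeping, and no measure-theoretic subtlety beyond the assumed continuity of $K$ enters.
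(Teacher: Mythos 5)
Your proof is correct and takes essentially the same route as the paper's: (a)$\Leftrightarrow$(b) recorded as a restatement, (b)$\Rightarrow$(c) via Riemann-sum approximation of the double integral using uniform continuity of the compactly supported integrand, and (c)$\Rightarrow$(b) by testing (\ref{POSKI}) against a sum of mollified bumps centred at the $k_i$ and letting the mollification parameter tend to zero. If anything your bookkeeping is slightly tighter than the paper's: you coordinate the Riemann-sum sample points through the product structure of the cells (which is what actually makes the sum manifestly nonnegative), and your mollifier normalization $\varepsilon^{-n}\phi(\cdot/\varepsilon)$ is the correct $n$-dimensional scaling, where the paper's factor $L$ in $f_L:=\sum_k c_k L\, h(L(p-p_k))$ should read $L^n$.
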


\begin{proof}  Evidently (a) and (b)  are equivalent, so we only prove that (b) and (c) are equivalent. We start by establishing  that $K$ positive definite implies (\ref{POSKI}).  Define a compactly supported continuous function on $\bR^{2n}$ as $g(p,q)=\overline{f}(p)K(p,q)f(q)$ and let $Q\subset\mathbb{R}^{2 n}$ be an open, $2n$-square such that $supp (g)\subset Q$. Then,  for any $N\in\mathbb{N}$ there exist a family of $N^{2n}$ pairwise-disjoint open $2n$-squares  $Q_i^N\subset Q$, $i=1,\ldots, N^{2n}$,  with common (Lebesgue) measure  $\left|Q_i^N\right|=\frac{\left|Q\right|}{N^{2n}}$ and
$ \overline{Q}=\bigcup_{i=1}^{N^{2n}}\overline{Q_i^{N}}$.
Choosing $x_i^N\in Q_i^N$ with $i=1,\ldots,N^{2n}$ for every given  $N$, we can define the family of  simple functions
$ g_N(x)=\sum_{i=1}^{N^{2n}}g(x_i^N)\chi_{Q_i^N}(x)\:.$
Since $diam(Q_i^N)$=$\sqrt{2 d}\frac{\left|Q\right|^{\frac{1}{2 d}}}{N}$,  uniform continuity of $g$ on the compact $\overline{Q}$ implies that for  $\epsilon>0$ there is $N_{\epsilon}$ such that
$\left|g(x_i^N)-g(x)\right|<\epsilon \text{ if }x\in Q_i^N$  and   $N>N_{\epsilon}$ and for $i=1,\ldots, N^{2n}$. Therefore, if $N>N_{\epsilon}$ and $x\in Q$, 
$\left|g_N(x)-g(x)\right|\leq\sum_{i=1}^{N^{2n}}\left|g(x_i^N)-g(x)\right|\chi_{Q_i^N}(x)<\epsilon$, so that 
 $g_N$ converges to $g$ on $Q$ uniformly and also in  $L^1$ since $|Q|< +\infty$. The proof is over because
$$\int_{\bR^n\times \bR^n}\sp\sp  \sp\sp \overline{f(p)} K(p,q) f(q) d^npd^nq = \int_{Q}gd^npd^nq = \lim_{N\to +\infty}\int_{Q}g_Nd^npd^nq $$ $$=
 \lim_{N\to +\infty} \sum_{i=1}^{N^{2n}}\overline{f}(p_i^N)K(p_i^N,q_i^N)f(q_i^N)  \frac{|Q_N|}{N^{2n}}\geq 0$$
We eventually prove that (\ref{POSKI}) entails that the continuous kernel $K$ is positive definite.
To this end, we can choose $c_1,\ldots, c_N \in \bC$, $p_1,\ldots, p_N \in \bR^n$ and a $L\in (1,+\infty)$ parametrized family of functions
$$f_L := \sum_{k=1}^N c_k L h(L (p-p_k)) \in C_c(\bR^n)$$
where  $h: \bR^n \to [0, +\infty)$ is continuous, compactly supported,  $h(0)=1$, and $\int_{\bR^n} h d^nx =1$. 
In this case, it is easy to prove that $$ \int_{\bR^n\times \bR^n} \overline{f_L(p)}  g(p,q) f_L(q)  d^npd^nq  \to \sum_{k,h=1}^N \overline{c_k} c_h g(p_k,p_h)\quad \mbox{as $L \to +\infty$}$$ for every continuous function $g: \bR^n \times \bR^n \to \bC$. Replacing  the function $f$ for $f_L$ in  (\ref{POSKI}), the limit as $L \to +\infty$ implies (\ref{POSK}), ending the proof.
\end{proof}

\noindent{\bf Competing Interests Declaration}.
All authors certify that they have no affiliations with or involvement in any organization or entity with any financial interest or non-financial interest in the subject matter or materials discussed in this manuscript.\\

\noindent{\bf Data Availability Declaration}.  All data generated or analyzed during this study are contained in this document.\\

\end{document}